\declaretheorem[name=Theorem, parent=section]{theorem}
\declaretheorem[name=Corollary, sibling=theorem]{corollary}
\declaretheorem[name=Lemma, sibling=theorem]{lemma}
\declaretheorem[name=Definition, sibling=theorem, style=definition]{definition}
\declaretheorem[name=Remark, sibling=theorem, style=definition]{remark}
\declaretheorem[name=Problem, sibling=theorem, style=definition]{problem}
\declaretheorem[name=Example, sibling=theorem, style=definition]{example}
\DeclareMathOperator{\supp}{supp}
\DeclareMathOperator{\subrank}{Q}
\DeclareMathOperator{\slicerank}{SR}
\DeclareMathAccent{\wtilde}{\mathord}{largesymbols}{"65}
\DeclareMathOperator{\asympsubrank}{\underaccent{\wtilde}{Q}}
\DeclareMathOperator{\asympslicerank}{\underaccent{\wtilde}{SR}}
\DeclareMathOperator{\asympsupslicerank}{{SR}\!\wtilde{\phantom{X}\!\!}}
\DeclareMathOperator{\asymprank}{\underaccent{\wtilde}{R}}
\DeclareMathOperator{\bordersubrank}{\underline{Q}}
\DeclareMathOperator{\tensorrank}{R}
\newcommand{\FF}{\mathbb{F}}
\newcommand{\RR}{\mathbb{R}}
\newcommand{\CC}{\mathbb{C}}
\newcommand{\NN}{\mathbb{N}}
\newcommand{\ZZ}{\mathbb{Z}}
\newcommand{\eps}{\varepsilon}
\DeclareMathOperator{\GL}{GL}
\DeclareMathOperator{\linspan}{span}
\DeclareMathOperator{\rank}{rank} \DeclareMathOperator{\col}{colspan}
\DeclareMathOperator{\maxrank}{max-rank}
\DeclareMathOperator{\minrank}{min-rank}
\DeclareMathOperator{\minrnk}{min-rank}
\DeclareMathOperator{\mincov}{min-cov}
\DeclareMathOperator{\minsupp}{min-supp}
\DeclareMathOperator{\maxsupp}{max-supp}
\newcommand{\id}{\mathrm{Id}}
  \newcommand{\beq}{\begin{equation}}
  \newcommand{\eeq}{\end{equation}}
  \newcommand{\beqn}{\begin{equation*}}
  \newcommand{\eeqn}{\end{equation*}}
  \newcommand{\beqr}{\begin{eqnarray}}
  \newcommand{\eeqr}{\end{eqnarray}}
  \newcommand{\beqrn}{\begin{eqnarray*}}
  \newcommand{\eeqrn}{\end{eqnarray*}}
\begin{document}

\begin{frontmatter}[classification=text]

\title{Discreteness of Asymptotic Tensor Ranks} 

\author[briet]{Jop Briët\thanks{Supported by the Dutch Research Council (NWO) as part of the NETWORKS programme (Grant No.~024.002.003).}}
\author[christandl]{Matthias Christandl\thanks{Supported by the European Research Council (ERC Grant Agreement No.~81876), VILLUM FONDEN via the QMATH Centre of Excellence (Grant No.~10059) and the Novo Nordisk Foundation (Grant NNF20OC0059939 ``Quantum for Life'').}}
\author[leigh]{Itai Leigh\thanks{Supported by Erasmus+ International Credit Mobility grant, the Deutsch foundation, Israel Science Foundation (grant number 514/20) and the Len Blavatnik and the Blavatnik Family foundation.}}
\author[shpilka]{Amir Shpilka\thanks{Supported by Israel Science Foundation (grant number 514/20) and the Len Blavatnik and the Blavatnik Family foundation.}}
\author[zuiddam]{Jeroen Zuiddam\thanks{Supported by NWO Veni grant VI.Veni.212.284.}}

\begin{abstract}
    Tensor parameters that are amortized or regularized over large tensor powers, often called ``asymptotic'' tensor parameters, play a central role in several areas including algebraic complexity theory (constructing fast matrix multiplication algorithms), quantum information (entanglement cost and distillable entanglement), and additive combinatorics (bounds on cap sets, sunflower-free sets, etc.). Examples are the asymptotic tensor rank, asymptotic slice rank and asymptotic subrank.
    Recent works (Costa--Dalai, Blatter--Draisma--Rupniewski, Christandl--Gesmundo--Zuiddam) have investigated notions of discreteness (no accumulation points) or ``gaps'' in the values of such tensor parameters. 
    
    We prove a general discreteness theorem for asymptotic tensor parameters of order-three tensors and use this to prove that (1) over any finite field (and in fact any finite set of coefficients in any field), the asymptotic subrank and the asymptotic slice rank have no accumulation points, and (2) over the complex numbers, the asymptotic slice rank has no accumulation points.

    Central to our approach are two new general lower bounds on the asymptotic subrank of tensors, which measures how much a tensor can be diagonalized. The first lower bound says that the asymptotic subrank of any concise three-tensor is at least the cube root of the smallest dimension. The second lower bound says that any concise three-tensor that is ``narrow enough'' (has one dimension much smaller than the other two) has maximal asymptotic subrank.
%
\end{abstract}
\end{frontmatter}

\tableofcontents

\newpage
\section{Introduction}

Tensor parameters that are amortized or regularized over large tensor powers, often called ``asymptotic'' tensor parameters, play a central role in several areas of theoretical computer science including 
algebraic complexity theory (constructing fast matrix multiplication algorithms \cite{burgisser1997algebraic, blaser2013fast}, and barriers for such constructions \cite{alman2018limits,v017a002}), 
quantum information (entanglement cost and distillable entanglement~\cite{watrous_2018,brandao2016mathematics,vrana2015transformation}), 
and additive combinatorics (bounds on cap sets \cite{tao}, sunflower-free sets \cite{naslund_sawin_2017}, etc.). %
Examples are the asymptotic tensor rank (famous for its connection to the matrix multiplication exponent), the asymptotic subrank, and the asymptotic slice rank. %
These asymptotic tensor parameters are of the form $\underaccent{\wtilde}{F}(T) = \lim_{n\to \infty} F(T^{\boxtimes n})^{1/n}$ for some integer-valued function~$F$ (e.g.~tensor rank, subrank or slice rank), where $\boxtimes$ denotes the Kronecker product of tensors. 
The computation of these parameters, which in some applications is the main goal and in others is done to bound other parameters of interest, has turned out to be very challenging, and many questions about them are open despite much interest.

The fundamental question whether the matrix-multiplication exponent~$\omega$ equals~2 is closely related to the question whether asymptotic tensor rank is integral-valued. Contrary to matrix rank, some asymptotic tensor parameters may indeed take non-integral values.
For instance, the asymptotic slice rank of the $W$-tensor (which appears in the study of sunflower-free sets) equals $2^{h(1/3)} \approx 1.889$ \cite{strassen1991degeneration}, 
where $h$ is the binary entropy function, and the asymptotic slice rank of the cap set tensor (which appears in the study of arithmetic progression-free sets or cap sets) is known to be the non-integral value $\approx 2.755$ over the finite field $\FF_3$ \cite{MR3583358, tao, kleinberg2018growth, strassen1991degeneration}. 

This raises the fundamental question, for a given function $F$,
    what values $\underaccent{\wtilde}{F}(T)$ can take when varying $T$ over all tensors of order three with arbitrary dimensions (over any fixed field). 
    More generally, what is the structure (geometric, algebraic, topological, etc) of the set of values 
    \[
    \{\underaccent{\wtilde}{F}(T) : T \in \FF^{n_1} \otimes \FF^{n_2} \otimes \FF^{n_3}, n_1, n_2, n_3 \in \NN\}?
    \]
    Does $\underaccent{\wtilde}{F}(T)$ have accumulation points, that is, are there non-trivial sequences 
    of tensors $T_1, T_2, \ldots$ such that~$\underaccent{\wtilde}{F}(T_i)$ converges?
    Or is it discrete? 
    What ``gaps'' are there between the possible values? Even when $\FF$ is a finite field, the answers to these questions are a priori not clear at all.

In this paper we prove \emph{discreteness} of asymptotic tensor rank, asymptotic subrank and asymptotic slice rank in several regimes. This means that the values of each of these parameters have no accumulation points.
In fact, the proof of discreteness of asymptotic tensor rank (over any finite field or finite coefficient set in any field) follows from a simple argument. Using that same simple argument, together with several new results about tensors, we obtain discreteness for the other parameters. 
In particular, as a core ingredient, we prove a new result about diagonalizability of tensors. This comes in the form of a lower bound on the asymptotic subrank that relies only on the dimensions of the tensor (as opposed to the well-known laser method for fast matrix multiplication, which relies on much more information about the tensor). As another core ingredient we prove new results about matrix subspaces and their max-rank and min-rank. In particular, we prove that the max-ranks of the matrix spaces obtained by slicing a tensor in the three different directions are strongly related, in such a way that at least two of them must be large.

Our discreteness results show that there is a surprising rigidity in the asymptotic behaviour of tensors.
The discreteness of the above parameters gives rise to the phenomenon of ``rounding'' or ``boosting'' (upper or lower) bounds on them to the next possible value (although making this effective requires more knowledge of the possible values than just knowing discreteness). 
The discreteness of asymptotic tensor rank\footnote{Here we work in the regime where we fix the field to be any finite field, or we use any finite set of coefficients in any field.} implies, for instance, that the asymptotic rank of the matrix multiplication tensor is bounded away from (or equal to) the asymptotic tensor rank of any other tensor. In particular, the matrix multiplication exponent $\omega$ is an isolated point among the exponents of all bilinear maps.
If such a tensor is ``close enough'' to the matrix multiplication tensor, its exponent must ``snap'' to $\omega$. Similar statements hold for the other asymptotic parameters. In the context of combinatorial applications, this may moreover lead to limitations for the asymptotic slice rank to improve on existing results.

Before stating our results in detail, we will first discuss the various asymptotic ranks, their applications and context.


\subsubsection*{Matrix multiplication and asymptotic tensor rank}
Determining the computational complexity of matrix multiplication is a fundamental problem in algebraic complexity theory.  This complexity is controlled by the famous matrix multi\-pli\-cat\-ion exponent $\omega$, which is defined as the infimum over nonnegative real numbers~$\tau$ such that any two $n\times n$ matrices can be multiplied using $\mathcal{O}(n^{\tau})$ arithmetic operations \cite{burgisser1997algebraic,blaser2013fast}. The naive matrix multiplication algorithm gives the upper bound $\omega \leq 3$. In 1969, Strassen proved that $\omega \leq 2.81$  \cite{strassen1969gaussian}. Since then, using many different techniques, the best upper bound has been brought down to $\omega \leq 2.371552$ \cite{le2014powers, DBLP:conf/soda/AlmanW21, duan2023faster, williams2023new}. There is a tantalizing possibility (and many have conjectured) that $\omega = 2$, and routes have been proposed that aim to prove this \cite{cohn2003group, cohn2005group,cohn2013fast,MR3631613,blasiak2022matrix}. It is just as intriguing to consider the possibility that~$\omega > 2$ and $\omega$ giving rise to a new fundamental constant, and there has been much work on this lower bound direction as well \cite{DBLP:conf/stoc/BurgisserI11, MR3081636}.

Not only can we currently not determine the value of $\omega$, or decide whether $\omega = 2$ or~$\omega > 2$, there is a much more relaxed problem that we cannot solve. Indeed $\omega$ is naturally described in terms of tensors as the logarithm of the asymptotic tensor rank of the matrix multiplication tensor $\langle 2,2,2\rangle \in \FF^4 \otimes \FF^4 \otimes \FF^4$, that is $\omega = \log \asymprank(\langle 2,2,2\rangle)$, and thus $\omega > 2$ is equivalent to $\asymprank(\langle 2,2,2\rangle) > 4$. The following much more relaxed problem is open:

\begin{problem}[{\cite[Open Problem 15.5]{burgisser1997algebraic}}]\label{prob1}
    Prove that there is a tensor $T \in \FF^n \otimes \FF^n \otimes \FF^n$ with $\asymprank(T) > n$.
\end{problem}

It is possible that for every  
tensor $T \in \FF^n \otimes \FF^n \otimes \FF^n$ we have $\asymprank(T) \leq n$ (which would in particular imply $\omega=2$) and that the image of $\asymprank$ over all tensors is simply $\NN$. This naturally leads to the (very general) question:
    What is the structure (geometric, algebraic, topological, etc) of the set
    \[
    S = \{\asymprank(T) : T \in \FF^{n_1} \otimes \FF^{n_2} \otimes \FF^{n_3}, n_1, n_2, n_3 \in \NN\}?
    \]
Is there anything we can prove about $S$ without resolving \autoref{prob1} or determining~$\omega$? Not much is known. 

One known structural result is that $S$ is closed under applying any univariate polynomial with non-negative integer coefficients \cite[Theorem 4.8]{wigderson2022asymptotic}. This statement applies in fact more generally, and in particular also to asymptotic subrank and asymptotic slice rank. This thus says that $S$ has ``many'' elements. 

Our discreteness result says that $S$ does not have ``too many'' elements, and for asymptotic tensor rank over a finite field\footnote{Or in fact over any finite set of coefficients coming from an arbitrary field, see \autoref{th:asympran-disc}.} the proof is surprisingly simple, 
and uses some basic definitions that will play an important role throughout. Let~$\FF$ be any field. Let $T \in \FF^{n_1} \otimes \FF^{n_2} \otimes \FF^{n_3}$ be an order-three tensor over $\FF$ with dimensions $(n_1, n_2, n_3)$. 
The \emph{flattenings} of $T$ are the matrices in $(\FF^{n_1} \otimes \FF^{n_2}) \otimes \FF^{n_3}$, $\FF^{n_1} \otimes (\FF^{n_2} \otimes \FF^{n_3})$ and $\FF^{n_2} \otimes (\FF^{n_1} \otimes \FF^{n_3})$, obtained by naturally grouping the tensor legs of~$T$. 
We say two tensors $S \in \FF^{n_1} \otimes \FF^{n_2} \otimes \FF^{n_3}$ and $T \in \FF^{m_1} \otimes \FF^{m_2} \otimes \FF^{m_3}$ are \emph{equivalent} if there are linear maps $A_\ell : \FF^{n_\ell} \to \FF^{m_\ell}$ such that $(A_1 \otimes A_2 \otimes A_3) S = T$ and there are linear maps $B_\ell : \FF^{m_\ell} \to \FF^{n_\ell}$ such that $(B_1 \otimes B_2 \otimes B_3) S = T$.
A tensor $T \in \FF^{n_1} \otimes \FF^{n_2} \otimes \FF^{n_3}$ is called \emph{concise} 
if the three flattenings obtained by grouping two of the three tensor legs together, in the three possible ways, each have maximal rank. This means essentially that $T$ cannot be embedded into a smaller tensor space. Every tensor is equivalent to a concise tensor \cite[Section 14.3]{burgisser1997algebraic} (see also \autoref{lem:equiv-concise}).

Here is the proof that the asymptotic rank over finite fields is discrete:
Let $T_1, T_2, \ldots$ be any sequence of tensors such that $T_i \in \FF^{a_i} \otimes \FF^{b_i} \otimes \FF^{c_i}$ and such that $\{\asymprank(T_i) : i \in \NN\}$ is infinite. 
We may assume that every~$T_i$ is concise 
. Then $\asymprank(T_i) \geq \max \{a_i, b_i, c_i\}$. Since by assumption $\FF$ is finite, there are only finitely many tensors per format $a_i \times b_i \times c_i$, so the set of triples $\{(a_i, b_i, c_i) : i \in \NN\}$ is infinite. In particular, $\max \{a_i, b_i, c_i\}$ is unbounded, and so $\asymprank(T_i)$ is unbounded and cannot converge, which proves the claim. 

While this argument is very simple, a much more subtle argument and new technical results will be needed to deal with the other tensor parameters that we consider.

\subsubsection*{Asymptotic subrank and asymptotic slice rank}

Besides tensor rank, there are many other notions of rank of a tensor that play a role in applications, for instance the subrank, slice rank,  analytic rank~\cite{MR2773103,MR3964143,10.1215/00127094-2022-0086}, geometric rank~\cite{KopMosZui:GeomRankSubrankMaMu}, and G-stable rank~\cite{MR4471036}. We will focus here on the asymptotic subrank and asymptotic slice rank. The subrank was introduced by Strassen \cite{MR882307} in the study of matrix multiplication algorithms. The subrank $\subrank(T)$ is the size of the largest diagonal tensor that can be obtained from $T$ by taking linear combinations of the slices in the three different directions (i.e.~``Gaussian elimination'' for tensors). The slice rank was introduced by Tao \cite{tao} to give a tensor proof of the cap set problem after the first proof of Gijswijt and Ellenberg~\cite{MR3583358}. The slice rank $\slicerank(T)$ is the smallest number of tensors with flattening rank one whose sum is $T$. Tao proved that $\subrank(T) \leq \slicerank(T)$.
Recent works have shown that analytic rank, geometric rank \cite{DBLP:conf/stoc/CohenM21,10.1215/00127094-2022-0086}, and G-stable rank~\cite{MR4471036} are all equal to slice rank, up to a multiplicative constant. These results imply that the asymptotic versions of these parameters are all equal to the asymptotic slice rank, warranting our focus on it.

\emph{Slice rank method in combinatorics.} The proof of the longstanding cap set problem \cite{tao, sawin} (and other related results \cite{naslund_sawin_2017}) can be thought of as upper bounding the independence number of powers of a hypergraph, by constructing a tensor that ``fits'' on the hypergraph and then computing the slice rank of the powers of the tensor, that is, the asymptotic slice rank $\asympslicerank(T)$. Knowing (the structure of) the set
\[
S = \{\asympslicerank(T) : T \in \FF^{n_1} \otimes \FF^{n_2} \otimes \FF^{n_3}, n_1, n_2, n_3 \in \NN\}
\]
thus gives information on what bounds one can prove on the size of combinatorial objects using the slice rank method.
One of our main results is that asymptotic slice rank is discrete, not only over every finite field, but even over the complex numbers. The latter crucially requires a result from \cite{MR4495838} that characterizes asymptotic slice rank in terms of representation-theoretic objects called moment polytopes. In fact it is known by now that the four smallest values in $S$ are $0, 1, 1.889..., 2, 2.686...$ (see next section) and our result says that also the larger values are discrete.

\emph{Matrix multiplication barriers.} Besides the aforementioned combinatorial problems, the asymptotic subrank and asymptotic slice rank appear in several ``barrier results'' for matrix multiplication algorithms \cite{alman2018limits,v017a002,MR3631613}. Matrix multiplication algorithms are usually constructed by a reduction of matrix multiplication to another bilinear map, and these barrier results say what properties that intermediate bilinear map must have to obtain certain upper bounds on $\omega$, or to reach $\omega = 2$. These properties can be phrased in terms of asymptotic subrank or asymptotic slice rank. In particular, the barrier of \cite{v017a002} states that to reach $\omega = 2$, an intermediate tensor $T$ must have $\asymprank(T) = \asympsubrank(T)$, which has led to further research to find tensors with large asymptotic subrank \cite{DBLP:conf/mfcs/BlaserL20}. Our discreteness result says that asymptotic subrank is discrete over every finite field. We do not get this result over the complex numbers because the analogous representation-theoretic ingredient from above is missing here. Intriguingly, it is possible that 
\[
S = \{\asympsubrank(T) : T \in \FF^{n_1} \otimes \FF^{n_2} \otimes \FF^{n_3}, n_1, n_2, n_3 \in \NN\}
\]
equals the analogous set for asymptotic slice rank, that is, the following is open:
\begin{problem}
    Prove that asymptotic slice rank equals asymptotic subrank.
\end{problem}
We do as a side-result prove a new relation between asymptotic subrank and asymptotic slice rank (which we will discuss in more detail in the next section).

\subsubsection*{Previous work on discreteness of asymptotic ranks}
Several works, among which some very recent ones, have investigated notions of discreteness in the values of tensor parameters.

Strassen \cite[Lemma~3.7]{strassen1988asymptotic} proved that for any $k$-tensor $T$ over any field, the asymptotic subrank (and, as a consequence of his method, also the asymptotic slice rank) of~$T$ is equal to $0$, equal to $1$, or at least $2^{2/k}$. This result established the first ``gaps'' in asymptotic tensor parameters.
Costa and Dalai \cite{DBLP:journals/jcta/CostaD21} proved that, for any $k$-tensor $T$ over any field, the asymptotic slice rank of $T$ is equal to $0$, equal to $1$ or at least $2^{h(1/k)}$ where~$h$ is the binary entropy function\footnote{The binary entropy function is defined for $p \in (0,1)$ by $h(p) = - p\log_2 p - (1-p) \log_2 (1-p)$ and $h(0) = h(1) = 0$.}.
Christandl, Gesmundo and Zuiddam \cite{cgz} extended the result of Costa and Dalai by proving that, for any $k$-tensor $T$ over any field, the asymptotic subrank and asymptotic partition rank of $T$ are equal to~$0$, equal to~$1$ or at least~$2^{h(1/k)}$ (which is a tight bound). Additionally, they prove that for any $3$-tensor $T$ over any field, the asymptotic subrank and asymptotic slice rank of~$T$ are equal to~$0$, equal to $1$, equal to~$2^{h(1/3)}\approx 1.889$ or at least $2$. Gesmundo and Zuiddam~\cite{gesmundo2023gap} extended this result by proving that the next possible value after 2 is $\approx 2.686$.

Blatter, Draisma and Rupniewski \cite{bdr} proved that for any function on $k$-tensors, over any finite field, that is normalized and monotone the set of values that this function takes is well-ordered. The asymptotic (sub)rank and asymptotic slice rank\footnote{Over finite fields, since we do not know whether the limit $\lim_{n\to\infty}\slicerank(T^{\boxtimes n})^{1/n}$ exists in general, when we say asymptotic slice rank we will mean $\liminf_{n\to\infty} \slicerank(T^{\boxtimes n})^{1/n}$.} %
are examples of such functions. This means that the values of any such function do not have accumulation points ``from above'', but leaves open the possibility that there are accumulation points ``from below''.

Christandl, Vrana and Zuiddam \cite{MR4495838} proved that the asymptotic slice rank over the complex numbers takes only finitely many values on tensors of any fixed format, %
and thus only countably infinite many values in general. This is done by characterizing the asymptotic slice rank as an optimization over the moment polytope of the tensor and using the result that there are only finitely many such polytopes per tensor format.

Blatter, Draisma and Rupniewski \cite{https://doi.org/10.48550/arxiv.2212.12219} proved that for any ``algebraic'' tensor invariant over the complex numbers the related asymptotic parameter takes only countably many values. This implies in particular that the asymptotic (sub)rank, asymptotic slice rank, asymptotic geometric rank, and asymptotic partition rank (all over the complex numbers) take only countably many values. 

\subsubsection*{New results}

In this paper:
\begin{itemize}
    \item We prove two general lower bounds on the asymptotic subrank of concise 
    tensors that depend only on the dimensions of the tensor. The first says that the asymptotic subrank of any concise tensor is at least the cube root of its smallest dimension. The second says that the asymptotic subrank of any ``narrow enough'' tensor (meaning that one dimension is much smaller than the others) is maximal.
    \item We use those lower bounds to prove that over any finite set of coefficients in any field the asymptotic subrank has no accumulation points (i.e.,~is discrete). We moreover prove that over any finite set of coefficients and over the complex numbers the asymptotic slice rank has no accumulation points. A much simpler argument gives that the asymptotic rank is discrete over any finite set of coefficients. 
    \item As a core ingredient for the above, we prove optimal relations among the maximal rank of any matrix in the span of the slices of a tensor, when considering slicings in the three different directions.
    \item With similar techniques, we prove an upper bound on the difference between asymptotic subrank and asymptotic slice rank.
\end{itemize}

\subsection{Discreteness of asymptotic tensor parameters}

We will now discuss our results in more detail. We begin with some basic definitions (which we expand on in \autoref{subsec:prelim}).
Let $\FF$ be any field. Let $T \in \FF^{n_1} \otimes \FF^{n_2} \otimes \FF^{n_3}$ be an order-three tensor over $\FF$ with dimensions $(n_1, n_2, n_3)$. The subrank of $T$, denoted by $\subrank(T)$, is the largest number~$r$ such that there are linear maps $L_\ell : \FF^{n_\ell} \to \FF^r$ such that $(L_1 \otimes L_2\otimes L_3)T = \sum_{i=1}^r e_i \otimes e_i \otimes e_i$. In other words, the subrank measures how much a tensor can be diagonalized. 
The flattenings of $T$ are the elements in $(\FF^{n_1} \otimes \FF^{n_2}) \otimes \FF^{n_3}$, $\FF^{n_1} \otimes (\FF^{n_2} \otimes \FF^{n_3})$ and $\FF^{n_2} \otimes (\FF^{n_1} \otimes \FF^{n_3})$, obtained by naturally grouping the tensor legs of~$T$. 
We say $T$ has slice rank one if at least one of the flattenings has rank one. The slice rank of $T$, denoted by $\slicerank(T)$ is the smallest number $r$ such that~$T$ is a sum of $r$ tensors with slice rank one.
The asymptotic subrank of $T$ is defined as $\asympsubrank(T) = \lim_{n \to \infty} \subrank(T^{\boxtimes n})^{1/n}$ where $\boxtimes$ is the Kronecker product on tensors. The limit exists and equals the supremum by Fekete's lemma, since $\subrank$ is super-multiplicative. The asymptotic slice rank of $T$ we define as $\asympslicerank(T) = \liminf_{n\to\infty} \slicerank(T^{\boxtimes n})^{1/n}$. (Over the complex numbers, it is known that the liminf can be replaced by a limit. Over other fields, however this is generally not known.) 


%

%

\subsubsection{Discreteness of 
asymptotic slice rank and asymptotic subrank}
We prove discreteness (no accumulation points) for the asymptotic slice rank and asymptotic subrank, in several regimes, as follows.


\begin{theorem}\label{th:discr-finite}
    Over any finite set of coefficients in any field, the asymptotic subrank and the asymptotic slice rank each have no accumulation points.
\end{theorem}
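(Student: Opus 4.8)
The plan is to mimic the simple argument sketched in the introduction for asymptotic tensor rank, but now using the two new lower bounds on asymptotic subrank (cube-root of the smallest dimension, and maximality for narrow enough tensors) to control how the relevant parameters grow with the dimensions. Fix a finite coefficient set $A$ in a field $\FF$. Suppose toward a contradiction that there is a sequence of tensors $T_i$, with entries in $A$, of format $a_i \times b_i \times c_i$, with (say) $a_i \le b_i \le c_i$, such that the values $\asympsubrank(T_i)$ (resp.\ $\asympslicerank(T_i)$) converge to some limit $\ell$ while taking infinitely many distinct values. Passing to a subsequence, we may assume every $T_i$ is concise (a tensor that is not concise fits in a strictly smaller format, so we can replace it by its concise core, which does not change either asymptotic parameter and only decreases dimensions). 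Since $A$ is finite, for each fixed format there are only finitely many tensors, so the multiset of formats $\{(a_i,b_i,c_i)\}$ is infinite; in particular either the smallest dimension $a_i$ is unbounded, or $a_i$ is bounded and then $b_i$ (hence also $c_i$) is unbounded along a subsequence.

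First I would handle the case where $a_i \to \infty$ along a subsequence. Here the cube-root lower bound gives $\asympsubrank(T_i) \ge a_i^{1/3} \to \infty$, and since $\asympsubrank \le \asympslicerank$ (the asymptotic version of Tao's inequality $\subrank \le \slicerank$, together with super-/sub-multiplicativity), also $\asympslicerank(T_i) \to \infty$. Either way the sequence of values is unbounded, contradicting convergence to a finite $\ell$. So from now on $a_i \le K$ for some constant $K$, and we may assume $a_i = a$ is constant (pass to a subsequence), while $b_i \le c_i$ and $c_i \to \infty$. Now the second lower bound — any concise tensor that is narrow enough has maximal asymptotic subrank — applies once $c_i$ is sufficiently large relative to $a$ (or relative to $a$ and $b_i$; I need to check exactly which narrowness hypothesis the theorem uses and verify it is met here, possibly after a further subsequence where $b_i$ is also constant). ``Maximal'' asymptotic subrank for a concise tensor of format $a \times b_i \times c_i$ means $\asympsubrank(T_i) = a$ (the minimum dimension, which is the trivial upper bound on subrank and on asymptotic subrank). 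But then $\asympsubrank(T_i) = a$ for all large $i$, so the asymptotic subrank takes only finitely many values along the sequence — contradicting the assumption of infinitely many distinct values. For asymptotic slice rank, in this narrow regime we likewise get $\asympslicerank(T_i) = a$, because $a = \asympsubrank(T_i) \le \asympslicerank(T_i) \le \min\{a,b_i,c_i\} = a$, where the last inequality is the standard bound $\slicerank(T) \le \min\{n_1,n_2,n_3\}$ passed to the asymptotic setting. Hence the slice rank values are also eventually constant, again a contradiction.

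The case analysis above reduces everything to: (i) smallest dimension unbounded $\Rightarrow$ use the cube-root bound; (ii) smallest dimension bounded $\Rightarrow$ after stabilizing the two small dimensions, the tensor becomes narrow $\Rightarrow$ use the maximality bound; in both cases the parameter is eventually forced to be unbounded or eventually constant, so it cannot have an accumulation point. The main obstacle I anticipate is the bookkeeping in case (ii): I must be careful that the ``narrow enough'' hypothesis of the second lower bound is genuinely satisfied — it asks for one dimension to be much smaller than the \emph{other two}, so I need both $a$ and $b_i$ to be small compared to $c_i$. If $b_i$ is not bounded, stabilizing $a$ is not enough, and I would need to first rule out the subcase $a$ bounded, $b_i \to \infty$, $c_i \to \infty$ by some intermediate argument — for instance, if $b_i/c_i$ stays bounded away from $0$ the tensor is ``fat'' rather than narrow, and one would want another lower bound (or the cube-root bound applied to $b_i$, noting $\asympsubrank(T_i) \ge \asympsubrank$ of any concise restriction to format $a \times b_i \times b_i$, or simply that conciseness forces $c_i \le a b_i$ so $c_i$ unbounded with $a$ fixed already forces $b_i$ unbounded and one can still invoke narrowness after passing to a subsequence where $b_i = o(c_i)$). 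Sorting out exactly which subsequence extraction makes the narrowness hypothesis hold, and confirming that conciseness is preserved under the restrictions used, is the delicate part; everything else is the same two-line pigeonhole as in the asymptotic-rank argument.
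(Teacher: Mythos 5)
Your proposal is correct and follows essentially the same route as the paper: the paper packages your two-case argument (smallest dimension unbounded $\Rightarrow$ cube-root bound of \autoref{th:balanced}; smallest dimension stabilized $\Rightarrow$ narrow-tensor theorem forces the value to equal the smallest dimension, squeezed between $\asympsubrank$ and the trivial upper bound $\min_i n_i$) into the general criterion \autoref{gen-no-acc-simple}, whose finiteness condition is verified exactly by your per-format pigeonhole over the finite coefficient set, with the concise replacement taken as a subtensor so that coefficients stay in the finite set. The obstacle you anticipate in case (ii) is moot: \autoref{th:narrow-intro} only requires the third dimension to be a constant $c$ and $\max\{n_1,n_2\}>N(c)$ (not both large dimensions comparable), which holds automatically once the smallest dimension is stabilized and the formats are infinite (conciseness giving $c_i\leq a\,b_i$ as you note), so no further subsequence extraction is needed.
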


\autoref{th:discr-finite} improves the result of Blatter, Draisma and Rupniewski~\cite{bdr} that the asymptotic subrank and asymptotic slice rank over any finite field have no accumulation points ``from above'', that is, have well-ordered sets of values. Indeed our result rules out all accumulation points (so also those ``from below''). We obtain the statement of \autoref{th:discr-finite}  also for asymptotic rank, but with a much shorter proof (\autoref{th:asympran-disc}).

\begin{theorem}\label{th:discr-complex}
    Over the complex numbers, the asymptotic slice rank has no accumulation points.
\end{theorem}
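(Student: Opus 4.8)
The plan is to combine the new lower bounds on asymptotic subrank (cube-root lower bound, and maximality for narrow-enough tensors) with the representation-theoretic finiteness result of \cite{MR4495838}, via the same ``conciseness + unbounded dimension'' skeleton used for asymptotic rank. Suppose toward a contradiction that $x$ is an accumulation point: there is a sequence of tensors $T_i \in \mathbb{C}^{a_i}\otimes\mathbb{C}^{b_i}\otimes\mathbb{C}^{c_i}$ with $\asympslicerank(T_i)$ pairwise distinct and converging to $x$. Without loss of generality each $T_i$ is concise (replacing $T_i$ by its concise core only shrinks the format and does not change the asymptotic slice rank), and we may order each format so that $a_i \le b_i \le c_i$. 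The crucial dichotomy to set up is on the growth of the smallest dimension $a_i$.

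First I would handle the case where $a_i$ is bounded, say $a_i \le d$ along a subsequence. Here the moment-polytope characterization of \cite{MR4495838} enters: for tensors of a fixed format there are only finitely many moment polytopes, and the asymptotic slice rank is a maximum of a linear functional over (points of) such a polytope, hence takes only finitely many values per format. If in addition $b_i$ is bounded we are immediately done — finitely many formats, finitely many values, no accumulation. If $b_i$ (and hence $c_i$) is unbounded while $a_i \le d$, then $T_i$ becomes ``narrow enough'' in the sense of the second new lower bound: once $b_i, c_i$ are sufficiently large relative to $a_i \le d$, a concise tensor has maximal asymptotic subrank $a_i$, and since $\asympsubrank \le \asympslicerank \le a_i$ for a concise tensor of smallest dimension $a_i$, we get $\asympslicerank(T_i) = a_i$. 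But $a_i$ ranges over finitely many values $\{1,\dots,d\}$, so $\asympslicerank(T_i)$ takes finitely many values on this subsequence — again no accumulation. So in the bounded-$a_i$ regime the sequence cannot have pairwise distinct values accumulating.

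Second I would handle the case where $a_i \to \infty$. Then by the cube-root lower bound, $\asympsubrank(T_i) \ge a_i^{1/3} \to \infty$, and since $\asympslicerank(T_i) \ge \asympsubrank(T_i)$, the values $\asympslicerank(T_i)$ are unbounded and cannot converge, contradiction. Combining the two cases: no accumulation point can exist, which is the theorem.

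The main obstacle — and the place where the argument is genuinely subtle rather than bookkeeping — is the interface between the ``narrow'' lower bound and the need for it to kick in \emph{uniformly}: the second new lower bound presumably says $\asympsubrank$ is maximal once $b_i$ (and $c_i$) exceed some threshold depending on $a_i$, so in the subcase $a_i \le d$, $b_i$ unbounded, I must pass to a subsequence on which $a_i$ is \emph{constant} (possible since $a_i \in \{1,\dots,d\}$) and then $b_i, c_i \to \infty$, ensuring the threshold is eventually met. Tracking this, and making sure the conciseness reduction is compatible with all three regimes simultaneously (so that the subsequence extraction does not destroy conciseness or reorder formats inconsistently), is the delicate part; the rest follows the simple template already exhibited for asymptotic rank in the introduction.
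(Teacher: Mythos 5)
Your proposal is correct and follows essentially the same route as the paper: the paper proves an abstract discreteness criterion (invariance under equivalence, lower bound by $\asympsubrank$, finitely many values per format via the moment-polytope result of \cite{MR4495838}, upper bound by $\min_i n_i$) and then runs exactly your case analysis --- concise reduction, the narrow-tensor theorem when the smallest dimension stays bounded, and the cube-root lower bound when it grows. The only difference is presentational: you inline the argument for asymptotic slice rank over $\CC$, while the paper factors it through \autoref{th:gen-no-acc} and verifies the finiteness condition separately.
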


\autoref{th:discr-complex} improves the result of Blatter, Draisma and Rupniewski~\cite{https://doi.org/10.48550/arxiv.2212.12219} and Christandl, Vrana and Zuiddam \cite{MR4495838} that the asymptotic slice rank over the complex numbers takes only countably many values. Our result is indeed strictly stronger, as countable sets may a priori have accumulation points. %

Our results shed light on the recent results of Costa and Dalai \cite{DBLP:journals/jcta/CostaD21} and Christandl, Gesmundo and Zuiddam \cite{cgz} that found gaps between the smallest values of the asymptotic slice rank and asymptotic subrank, and answers positively (in some regimes) the question stated in \cite{cgz} asking whether the values will always be ``gapped''.

Besides the above, we prove discreteness theorems over arbitrary fields for two sub-classes of all tensors. Namely we consider the class of oblique tensors, which are the tensors whose support in some basis is an antichain, and the tight tensors, whose support in some basis can be characterized by algebraic equation (details in \autoref{sec:discreteness}). 
The set of tight tensors is a strict subset of the set of oblique tensors, which is a strict subset of all tensors. 
Both classes originate in the work of Strassen \cite{MR882307,strassen1988asymptotic,strassen1991degeneration}. Examples of tight tensors include the well-known matrix multiplication tensors. Tight tensors also play a central role in the laser method of Strassen to construct fast matrix multiplication algorithms. Our discreteness theorems in these regimes are as follows.

\begin{theorem}
    Over any field, on tight tensors,  the asymptotic subrank and asymptotic slice rank (which are equal) have no accumulation points.
\end{theorem}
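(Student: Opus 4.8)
The plan is to show that for every real $M>0$ the set $S_M=\{\asympsubrank(T): T\text{ tight},\ \asympsubrank(T)\le M\}$ is finite; since $\asympslicerank=\asympsubrank$ on tight tensors, this gives the theorem for both parameters (an accumulation point would have to lie in some $S_M$ and force that $S_M$ to be infinite). Two inputs are used as black boxes. The first is Strassen's characterization of tight tensors: if $T$ is tight with support $\Phi$ in a basis witnessing tightness, then $\asympsubrank(T)=2^{\theta(\Phi)}$, where $\theta(\Phi)=\max_{P}\min_{i\in\{1,2,3\}}H(P_i)$, the maximum ranging over probability distributions $P$ on $\Phi$ with marginals $P_1,P_2,P_3$; in particular $\asympsubrank(T)$ depends only on $\Phi$ up to relabeling the three index sets. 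The second input consists of the two new lower bounds of this paper: for concise $T\in\FF^{n_1}\otimes\FF^{n_2}\otimes\FF^{n_3}$ we have $\asympsubrank(T)\ge(\min_i n_i)^{1/3}$, and if the smallest dimension is ``narrow enough'' relative to the other two, then $\asympsubrank(T)$ is maximal, i.e.\ equal to $\min_i n_i$.

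First I would reduce to concise tensors. If $T$ is tight, then its concise core $T'$ satisfies $\asympsubrank(T')=\asympsubrank(T)$ and is again tight: since the map $a$ in the tightness condition is injective, each slice $\sum_i t_{ij_0k_0}\,e_i$ of the first flattening is a scalar multiple of the single basis vector $e_i$ with $a_i=-(b_{j_0}+c_{k_0})$, when such an $i$ exists; hence the image of the first flattening is precisely the coordinate subspace spanned by those first-leg indices that occur in $\Phi$, and likewise for the other two legs. Passing to the core therefore only deletes zero slices, and restricting $a,b,c$ to the surviving indices still witnesses tightness. So we may assume $T$ is tight and concise, with dimensions ordered $n_1\le n_2\le n_3$.

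Now suppose $\asympsubrank(T)\le M$. The cube-root bound gives $n_1\le M^3=:C$. Let $\beta(n_2,n_3)$ be the threshold from the narrowness bound — a function nondecreasing in each argument with $\beta(n_2,n_3)\to\infty$ as $n_2+n_3\to\infty$ — so that $n_1\le\beta(n_2,n_3)$ forces $\asympsubrank(T)=n_1$. If $n_1\le\beta(n_2,n_3)$, then $\asympsubrank(T)=n_1\in\{1,\dots,C\}$, one of finitely many values. Otherwise $\beta(n_2,n_3)<n_1\le C$, and the growth of $\beta$ bounds both $n_2$ and $n_3$ by a constant $C'=C'(M)$; then $\Phi=\supp(T)$ is one of the finitely many tight subsets of a box of bounded size, so by the characterization above $\asympsubrank(T)=2^{\theta(\Phi)}$ takes one of finitely many values. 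Hence $S_M$ is finite.

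The step I expect to be the crux is making the two new lower bounds interlock: one needs the narrowness threshold to be a genuine function of the two larger dimensions that grows without bound, so that its failure — together with the bound $n_1\le M^3$ coming from the cube-root bound — forces all three dimensions to be bounded. Everything else is routine: the ``core of a tight tensor is tight'' lemma is short, the number of tight supports of bounded format is finite, and Strassen's formula enters only through the weaker consequence that the asymptotic subrank of a tight tensor is a function of its support alone.
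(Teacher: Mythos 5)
Your proof is correct and follows essentially the same route as the paper: reduce to concise tight tensors, use the cube-root bound to bound the smallest dimension, use the narrow-tensor theorem to either pin the value to an integer or force all dimensions to be bounded, and then invoke Strassen's tight-support characterization to get finitely many values per bounded format. The paper merely packages these same ingredients into a general discreteness criterion (its Theorem \ref{th:gen-no-acc}), whereas you run the argument directly via the finiteness of each $S_M$, which is an equivalent formulation.
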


\begin{theorem}
    Over any field, on oblique tensors, the asymptotic slice rank has no accumulation points.
\end{theorem}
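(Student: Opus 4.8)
The plan is to reduce the oblique case to the finite-coefficient case already handled by \autoref{th:discr-finite}. The key structural input, due to Strassen's theory of tensors with combinatorial support, is that the asymptotic slice rank of an oblique tensor depends only on its support: if $T$ is oblique with support (in a suitable basis) an antichain $\Phi \subseteq [n_1]\times[n_2]\times[n_3]$, then $\asympslicerank(T) = \asympslicerank(T_\Phi)$, where $T_\Phi := \sum_{a \in \Phi} e_{a_1}\otimes e_{a_2}\otimes e_{a_3}$ is the combinatorial tensor on $\Phi$. First I would prove $\asympslicerank(T_\Phi) \le \asympslicerank(T)$ using the combinatorial degeneration theorem (since $\Phi$ is an antichain, it is a combinatorial degeneration of itself, so $T_\Phi$ is a degeneration of $T$) together with monotonicity of slice rank under degeneration. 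For the reverse inequality I would invoke the upper support functionals $\zeta^\theta$: these are monotone and multiplicative, on oblique tensors they depend only on the support (antichain supports cannot be compressed by a change of basis), and on oblique tensors their minimum over $\theta$ in the simplex equals the asymptotic slice rank; since $T_\Phi$ has support $\Phi$ as well, $\asympslicerank(T) = \min_\theta \zeta^\theta(T) = \min_\theta \zeta^\theta(T_\Phi) = \asympslicerank(T_\Phi)$.

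Granting this, the theorem is immediate. The set of values $\{\asympslicerank(T) : T \text{ oblique}\}$ is contained in the set of values $\{\asympslicerank(T_\Phi) : \Phi \text{ an antichain}\}$, which is contained in the set of values of $\asympslicerank$ over all tensors with coefficients in the finite set $\{0,1\}$. By \autoref{th:discr-finite} this last set has no accumulation points, and a subset of a set of reals with no accumulation points again has no accumulation points (an accumulation point of the subset would be an accumulation point of the larger set). Hence $\asympslicerank$ has no accumulation points on oblique tensors.

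The main obstacle is exactly the structural reduction $\asympslicerank(T) = \asympslicerank(T_{\supp(T)})$, and in particular the upper bound $\asympslicerank(T) \le \asympslicerank(T_{\supp(T)})$: this is genuinely false for general tensors (the cap-set tensor has strictly smaller asymptotic slice rank than the combinatorial tensor on its support), so the obliqueness hypothesis must be used in an essential way, precisely through Strassen's support-functional machinery. An alternative, more self-contained route would mimic the proof of \autoref{th:discr-finite}: after passing to the concise core (which preserves obliqueness, asymptotic subrank, and asymptotic slice rank), a sequence of oblique tensors on which $\asympslicerank$ takes infinitely many distinct values must have unbounded formats, since in any bounded box there are only finitely many antichains and hence finitely many possible values of $\asympslicerank$ on oblique tensors of that format; one then splits into the case where the smallest dimension is unbounded (apply the cube-root lower bound on asymptotic subrank, forcing $\asympslicerank \to \infty$) and the case where the smallest dimension stays bounded while the format is unbounded (conciseness forces the two larger dimensions to both grow, so the tensors become narrow enough, the second lower bound forces maximal asymptotic subrank, and $\asympsubrank \le \asympslicerank \le \min$ sandwiches $\asympslicerank$ to the bounded smallest dimension). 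Either way, the reduction of the asymptotic slice rank of an oblique tensor to combinatorial/support data is the crucial ingredient.
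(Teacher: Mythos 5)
Your proposal is correct and essentially the same as the paper's: both rest on the fact that the asymptotic slice rank of an oblique tensor depends only on its antichain support (\cite[Proposition 4]{sawin}) together with the general discreteness machinery, since your reduction to \autoref{th:discr-finite} with coefficient set $\{0,1\}$ and your alternative sketch are just repackagings of \autoref{th:gen-no-acc} (the paper verifies its finiteness condition per format directly via that same support characterization). One minor caution: your degeneration justification of $\asympslicerank(T_\Phi)\leq\asympslicerank(T)$ is not valid as stated, because combinatorial degeneration restricts the support while keeping the original coefficients and so does not yield the $0/1$ tensor $T_\Phi$; this is harmless, though, since the support characterization already gives both inequalities at once.
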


\subsubsection{General discreteness theorem}

We prove the above discreteness theorems as an application of a general discreteness theorem that we discuss now. This general theorem gives discreteness for real-valued tensor parameters that satisfy several conditions. 

\begin{theorem}\label{gen-no-acc-simple}
    Let $\FF$ be any field. Let $\mathcal{C}$ be any subset of tensors over~$\FF$ such that for every $T \in \mathcal{C}$ there is an $S \in \mathcal{C}$ that is concise and equivalent to $T$.
    Let~$f : \mathcal{C} \to \RR_{\geq0}$ be any function such that
    \begin{enumerate}[\upshape (i)]
    \item $f$ does not change under equivalence of tensors
    \item\label{lb-cond-intro} $f(T) \geq \asympsubrank(T)$ for every $T \in \mathcal{C}$
    \item\label{fin-cond-intro} For every $n_1, n_2, n_3\in \NN$,  $f$ takes finitely many values on $\mathcal{C} \cap (\FF^{n_1} \otimes \FF^{n_2} \otimes \FF^{n_3})$ 
    \item For every $n_1, n_2, n_3\in \NN$, $f(T) \leq \min_\ell n_\ell$ for every $T \in \mathcal{C} \cap (\FF^{n_1} \otimes \FF^{n_2} \otimes \FF^{n_3})$.
    \end{enumerate}
    Then the set of values that $f$ takes on $\mathcal{C}$ has no accumulation points.
\end{theorem}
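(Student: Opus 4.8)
The plan is to argue by contradiction: suppose the set of values $V = f(\mathcal{C})$ has an accumulation point $v \in \RR_{\geq 0}$. Then there is a sequence of tensors $T_1, T_2, \ldots \in \mathcal{C}$ with $f(T_i) \to v$ and with the values $f(T_i)$ pairwise distinct (in particular infinitely many distinct values). By hypothesis, for each $i$ we may replace $T_i$ by an equivalent concise $S_i \in \mathcal{C}$; by condition (i) this does not change the value $f(S_i) = f(T_i)$, so without loss of generality each $T_i$ is concise, say $T_i \in \FF^{a_i} \otimes \FF^{b_i} \otimes \FF^{c_i}$ with $a_i \le b_i \le c_i$ after reordering the legs.

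The first key step is to show that $a_i = \min\{a_i,b_i,c_i\}$ is bounded along the sequence. Indeed, by condition (iv) we have $f(T_i) \le a_i$, but this is the wrong direction; the right tool is the lower bound. I would invoke the first general lower bound on asymptotic subrank mentioned in the introduction — that a concise three-tensor has asymptotic subrank at least the cube root of its smallest dimension — together with condition (ii): $f(T_i) \ge \asympsubrank(T_i) \ge a_i^{1/3}$. Since $f(T_i) \to v$ is a convergent (hence bounded) sequence, we conclude $a_i \le f(T_i)^3$ is bounded, say $a_i \le N$ for all $i$.

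The second key step handles the other two dimensions. If $b_i$ (equivalently $c_i$) were also bounded along a subsequence, then along that subsequence all three dimensions lie in a finite set, so the tensors lie in finitely many spaces $\FF^{n_1}\otimes\FF^{n_2}\otimes\FF^{n_3}$; by condition (iii), $f$ takes only finitely many values there, contradicting that the $f(T_i)$ are pairwise distinct. Hence $b_i \to \infty$ along the whole sequence (after passing to a subsequence we may assume $b_i$ strictly increasing). Now the smallest dimension $a_i \le N$ is bounded while the other two dimensions tend to infinity, so for $i$ large enough $T_i$ is ``narrow enough'' in the sense required by the second general lower bound, which then forces $\asympsubrank(T_i)$ to be maximal, i.e.\ equal to $\min_i n_i = a_i$. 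Combined with conditions (ii) and (iv) this gives $a_i \le \asympsubrank(T_i) \le f(T_i) \le a_i$, so $f(T_i) = a_i$ for all large $i$. But $a_i$ is a bounded sequence of positive integers, hence takes only finitely many values, again contradicting that the $f(T_i)$ are pairwise distinct.

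The main obstacle — and the place where one must be careful about what ``narrow enough'' means — is the second step: one needs the quantitative threshold from the narrow-tensor lower bound to be of the form ``$\max$ of the two large dimensions is at least some function of the small dimension,'' so that fixing $a_i \le N$ and sending $b_i \to \infty$ eventually satisfies it uniformly. Since $a_i$ ranges over only finitely many values $1, \ldots, N$, it suffices that for each fixed small dimension the threshold is finite, which is exactly what that lemma provides; one then takes $i$ large enough that $b_i$ exceeds the maximum of these finitely many thresholds. Everything else is bookkeeping with subsequences and the pigeonhole principle via condition (iii).
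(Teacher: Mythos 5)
Your proposal is correct and follows essentially the same route as the paper's proof: reduce to concise tensors via condition (i), use the cube-root lower bound on asymptotic subrank of concise tensors (with condition (ii)) to control the smallest dimension, use condition (iii) plus the conciseness constraints on formats to force the other dimensions to grow, and then apply the narrow-tensor theorem together with conditions (ii) and (iv) to pin $f$ down to the (bounded, integral) smallest dimension. The only difference is organizational—you argue by contradiction from an accumulation point and deduce boundedness of the smallest dimension from convergence, while the paper splits directly into the bounded/unbounded cases—so the substance is identical.
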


We make some remarks on condition~(\ref{fin-cond-intro}) of \autoref{gen-no-acc-simple}.
When applying \autoref{gen-no-acc-simple} to a given real-valued function $f$ on tensors, it will depend very much on the regime we are working in whether condition~(\ref{fin-cond-intro}) is trivial or not. In particular, over finite fields, condition~(\ref{fin-cond-intro}) is trivial, because then $\FF^{n_1} \otimes \FF^{n_2} \otimes \FF^{n_3}$ contains only finitely many elements. 
Another trivial situation is when~$f$ takes only integral values, as there are only finitely many integers between 0 and $\min_\ell n_\ell$ (in this case the conclusion of the theorem is also trivial). 
However, when $f$ is not integral (say $f$ is the asymptotic slice rank or asymptotic subrank) and the field is not finite (say $\FF$ is the complex numbers) condition~(\ref{fin-cond-intro}) can be very non-trivial. 
For instance, our application of \autoref{gen-no-acc-simple} to asymptotic slice rank over the complex numbers (leading to \autoref{th:discr-complex}) relies on the representation-theoretic characterization of this parameter in terms of moment polytopes \cite{MR4495838} and a result from invariant theory that there are only finitely many such polytopes per choice of~$(n_1, n_2, n_3)$.

Our proof of \autoref{gen-no-acc-simple} relies mainly on new lower bounds on the asymptotic subrank~$\asympsubrank(T)$ of concise tensors $T$, which we will discuss next. Intuitively, these lower bounds will ensure (using condition~(\ref{lb-cond-intro})) that for any infinite sequence of tensors $T_i$ the value of~$f(T_i)$ gets ``pushed up'' so much that it either cannot converge, or eventually becomes constant (when $\min_\ell n_\ell$ is bounded).

\subsection{Lower bounds on asymptotic subrank}

Having discussed our discreteness theorems, we will now discuss two results that are central in the proof of the discreteness theorems, and of independent interest. These results are about lower bounds on the asymptotic subrank. 

The general goal of these results, in the context of the proof of the general discreteness theorem, is to establish that if we have a sequence of tensors $T_i \in \FF^{a_i} \otimes \FF^{b_i} \otimes \FF^{c_i}$ for $i \in \NN$ such that the set of triples $\{(a_i, b_i, c_i) : i \in \NN\}$ is infinite, then the asymptotic subrank of these tensors must be either unbounded, or eventually constant and integral. We will discuss this in detail in the proof of the general discreteness theorem.



\subsubsection{Concise tensors have large asymptotic subrank}

For concise tensors~$T$ we prove a cube-root lower bound on the asymptotic subrank $\asympsubrank(T)$ in terms of the smallest dimension of the tensor.

\begin{theorem}\label{th:balanced}
Let $T \in \FF^{n_1} \otimes \FF^{n_2} \otimes \FF^{n_3}$ be concise. Then %
$\asympsubrank(T) \geq (\min_\ell n_\ell)^{1/3}$.
\end{theorem}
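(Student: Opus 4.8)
The plan is to reduce the statement to two elementary restriction facts plus the paper's forthcoming lower bound on products of maximal slice‑ranks of concise tensors. Throughout, for $i\in\{1,2,3\}$ let $a_i=\maxrank_i(T)$ denote the maximum rank attained by a matrix in the linear span of the slices of $T$ in the $i$-th direction (equivalently, $a_i=\max_{\alpha}\rank(T(\alpha))$ over functionals $\alpha$ on the $i$-th tensor factor), and write $m=\min_i n_i$.

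The first step is a structural claim valid for \emph{any} order-three tensor $S$ (no conciseness needed): $S^{\boxtimes 3}$ restricts to the matrix multiplication tensor $\langle a_1,a_2,a_3\rangle$, where the $a_i$ are the maximal slice-ranks of $S$. To see this, for each direction $i$ fix a functional $\alpha_i$ witnessing $a_i$ and two surjections of rank $a_i$ that turn the corresponding rank-$a_i$ slice of $S$ into the identity matrix $\id_{a_i}$; then, applying to the three tensor factors of $S^{\boxtimes3}$ the maps assembled from ``copy $1$ in direction $1$'', ``copy $2$ in direction $2$'', and ``copy $3$ in direction $3$'', and bookkeeping indices, one obtains exactly $\langle a_1,a_2,a_3\rangle$ (up to the obvious symmetries of matrix multiplication). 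The only thing to check here is that the target dimensions $a_j$ fit inside each $n_k$, which holds because a slice in one direction is a matrix whose two dimensions are the other two $n_k$'s.

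The second step is the observation that $\langle a,b,c\rangle$ with $a\le b\le c$ restricts to the diagonal tensor $\langle ab\rangle$: first restrict the last parameter down to get $\langle a,b,b\rangle$, and then restrict each of the three tensor factors to the span of the basis vectors indexed by the triples $\Gamma=\{(i,j,\,i+j \bmod b): i\in[a],\ j\in[b]\}$, whose three pairwise coordinate projections are all injective; the restricted tensor is diagonal of size $|\Gamma|=ab$. Combining Steps~1 and~2, and writing $a_{(1)}\le a_{(2)}\le a_{(3)}$ for the sorted maximal slice-ranks of $T$, we get $\subrank(T^{\boxtimes3})\ge a_{(1)}a_{(2)}$; since $\subrank$ is super-multiplicative under $\boxtimes$, Fekete's lemma gives $\asympsubrank(T)\ge \subrank(T^{\boxtimes3})^{1/3}\ge (a_{(1)}a_{(2)})^{1/3}$.

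The final step brings in conciseness through the paper's bound that for a concise order-three tensor the product of any two of its maximal slice-ranks is at least $\min_i n_i$; in particular $a_{(1)}a_{(2)}\ge m$, whence $\asympsubrank(T)\ge m^{1/3}$. I expect essentially all of the difficulty to sit in that last input: Steps~1 and~2 are routine restriction arguments, so the real obstacle is showing that a concise tensor cannot have small max-rank in two directions simultaneously — i.e.\ that if all slices in one direction have rank $\le a$ and all slices in another direction have rank $\le b$ then $ab\ge\min_i n_i$. The natural route is to study the matrix subspace $\{T(\alpha):\alpha\}$ obtained from each slicing (whose dimension equals the corresponding $n_i$ by conciseness) as a space of low-rank matrices, and to play the three such spaces against one another; making this quantitative, and optimal, is where the new results on max-rank and min-rank of matrix subspaces are needed.
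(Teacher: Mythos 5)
Your Step 1 is exactly the paper's \autoref{lem:cube-mamu}, and your final input is the paper's \autoref{th:uncertainty-principle-n}; you correctly locate the main difficulty there. The genuine gap is Step 2. The claim that $\langle a,b,c\rangle$ with $a\le b\le c$ restricts to the diagonal tensor $\langle ab\rangle$ is false, and your construction does not establish it: injectivity of the three coordinate projections of $\Gamma$ is not enough for the zeroing-out to produce a diagonal, because the surviving support consists of \emph{all} triples $(i,j,k)$ whose three pairwise projections lie in the projected sets, not just the triples of $\Gamma$. For your specific $\Gamma=\{(i,j,i+j \bmod b)\}$ with $a=b$, all three projections are surjective, so the restriction removes nothing and you recover $\langle b,b,b\rangle$ itself rather than $\langle b^2\rangle$. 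What would be needed is the much stronger ``closedness'' (tri-colored sum-free) condition on $\Gamma$, and no set of size $ab$ with that property exists in this regime: indeed $\subrank(\langle a,b,b\rangle)\le\bordersubrank(\langle a,b,b\rangle)\le ab-\lfloor a^2/4\rfloor$ (Strassen's bound is tight here, cf.\ the geometric rank upper bound), so $\subrank(T^{\boxtimes 3})\ge a_{(1)}a_{(2)}$ cannot be obtained this way, e.g.\ when all three max-ranks are comparable, which does occur (\autoref{ex:JopsEx}).

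The statement survives because one only needs the \emph{asymptotic} subrank of the matrix multiplication tensor, and this is where the paper goes: Strassen's result $\asympsubrank(\langle a,b,c\rangle)=\min\{ab,bc,ca\}$ gives $\asympsubrank(T^{\boxtimes 3})\ge\min_{i,j}\subrank_i(T)\subrank_j(T)$ (\autoref{lem:asympsub-min}), and combined with \autoref{th:uncertainty-principle-n} this yields $\asympsubrank(T)^3\ge\min_k n_k$. Equivalently, you could keep a non-asymptotic intermediate step with a constant loss (e.g.\ via Strassen's border subrank bound $\bordersubrank(\langle a,b,b\rangle)\ge\tfrac34 ab$ together with $\bordersubrank\le\asympsubrank$, or the weaker \autoref{lem:Qi-square}, which only gives the fourth-root bound of \autoref{th:fourth-root}), since constants wash out after powering; but some asymptotic or border-rank ingredient is unavoidable — an exact restriction to $\langle ab\rangle$ is simply not available. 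One further small point: \autoref{th:uncertainty-principle-n} carries the hypothesis $|\FF|>n_k$, so one must first pass to an infinite extension field, which is legitimate because the asymptotic subrank is invariant under field extension; the paper does this explicitly and your write-up should too.
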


Strassen \cite{strassen1991degeneration}, building on the work of Coppersmith and Winograd \cite{MR1056627}, introduced a method to prove (optimal) asymptotic subrank lower bounds for a subclass of structured tensors called ``tight'' tensors. This method formed an integral part of the laser method for constructing fast matrix multiplication algorithms, and similar ideas have also been applied in the context of additive combinatorics \cite{kleinberg2018growth}.
We emphasize that \autoref{th:balanced} does not rely on any special structure of the tensor~$T$, unlike previous methods like the laser method that rely on $T$ being tight.

We do not know whether the lower bound $(\min_\ell n_\ell)^{1/3}$ in \autoref{th:balanced} is optimal. 
The best upper bound we know is from an example of a concise tensor $T \in \FF^n \otimes \FF^n \otimes \FF^n$ such that $\asympsubrank(T)=2\sqrt{n-1}$ (\autoref{ex:strassenNullAlgebra}). %

Alternatively, \autoref{th:balanced} can be phrased without the conciseness condition if we replace the dimension $n_\ell$ by the flattening rank $\tensorrank(T_\ell)$, as follows: Let $T$ be any tensor. Then $\asympsubrank(T) \geq (\min_\ell \tensorrank_\ell(T))^{1/3}$.

For symmetric tensors we can prove the following stronger bound. (In fact, we can prove this stronger bound for a larger class of tensors which we call ``pivot--matched'', as we will explain in \autoref{sub:sqrt-assub-equal-pivots}). %
We recall that a tensor $T=\sum_{i,j,k}T_{i,j,k}\, e_i\otimes e_j \otimes e_k \in \FF^{n} \otimes \FF^{n} \otimes \FF^{n}$ is called symmetric if permuting the three tensor legs does not change the tensor, that is, for every $(i_1,i_2,i_3)\in [n]^3$ and  every permutation  $\sigma\in S_3$ it holds that $T_{i_1,i_2,i_3}=T_{i_{\sigma(1)},i_{\sigma(2)},i_{\sigma(3)}}$. 


\begin{theorem}\label{th:balanced-sym}
Let $T \in \FF^{n} \otimes \FF^{n} \otimes \FF^{n}$ be concise and symmetric. Then $\asympsubrank(T) \geq n^{1/2}$.
\end{theorem}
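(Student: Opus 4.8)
The plan is to revisit the proof of \autoref{th:balanced} and show that symmetry removes exactly the loss that forces the exponent down to $1/3$ there. Recall the two ingredients behind \autoref{th:balanced}: (a) for a concise three-tensor the product of the max-ranks of any two of its three slice spaces is at least the remaining dimension, so for a concise $T\in\FF^n\otimes\FF^n\otimes\FF^n$ at least two of the three slice spaces contain a matrix of rank $\geq\sqrt n$; and (b) a concise tensor one of whose dimensions is sufficiently small compared to the other two has maximal asymptotic subrank. For a general tensor one combines (a) and (b) by first \emph{restricting} the largest leg down to the point where a ``narrow enough'' sub-tensor emerges, and it is this restriction step that only survives down to width $\approx n^{1/3}$: the max-rank witnesses in the three directions sit on incompatible index sets, and re-aligning them costs dimensions.

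First I would record what symmetry buys. If $T=\sum_{i,j,k}T_{ijk}\,e_i\otimes e_j\otimes e_k$ is symmetric, then each slice is a symmetric matrix ($T_{ijk}=T_{ikj}$) and, after identifying the three copies of $\FF^n$, the three slice spaces obtained by slicing in the three directions are \emph{one and the same} subspace $\mathcal M\subseteq\FF^n\otimes\FF^n$ of symmetric matrices (this uses $T_{ijk}=T_{jik}=T_{ikj}$). In particular conciseness of $T$ is equivalent to $\dim\mathcal M=n$, and $\maxrank_1(T)=\maxrank_2(T)=\maxrank_3(T)=:R$. Combining this with ingredient (a) gives $R^2\geq n$, i.e.\ $R\geq\sqrt n$, and, crucially, a single matrix $M\in\mathcal M$ of rank $R$ now witnesses large max-rank in all three directions \emph{simultaneously}: being symmetric, $M$ has a nonsingular principal $R\times R$ submatrix indexed by one set $I\subseteq[n]$, and because $\mathcal M$ is the common slice space, this single $I$ is a valid ``pivot set'' for every direction at once.

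Next I would set up the ``pivot-matched'' (equal-pivots) variant of ingredient (b): a concise tensor all three of whose slice spaces share a common pivot set of size $m$ has asymptotic subrank at least $m$ (the content of \autoref{sub:sqrt-assub-equal-pivots}). Applying it to $T$ with $m=R\geq\sqrt n$ immediately yields $\asympsubrank(T)\geq\sqrt n$. If the equal-pivots bound is stated only for narrow ambient formats, one instead first restricts all three legs of $T$ to the span of $R'=\lceil\sqrt n\rceil$ of the pivot coordinates: the common pivot block stays nonsingular, so the restricted tensor is concise, symmetric, and has a common pivot set of size $R'$; one then applies the bound to it, using that $\asympsubrank$ is monotone under restriction since a restriction of $T^{\boxtimes N}$ is a power of a restriction of $T$.

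The main obstacle is the equal-pivots lower bound itself: turning ``a common $m\times m$ pivot block in all three directions'' into an actual $\approx m$-sized diagonal after passing to a tensor power. The idea is that the pivot block lets one select, in each of the three slice spaces, a matrix equal to the $m\times m$ identity on the common block; feeding these three matrices into the corresponding legs exhibits a degeneration (or a restriction, after powering to suppress the off-block ``error'' entries) of $T$ to a tensor assembled from the $m\times m$ identity in all three directions — essentially a matrix-multiplication-type or generalized-identity tensor — whose asymptotic subrank is $m$. Quantifying how the off-block entries decay under powering, and checking that the three identity matrices are genuinely compatible (which is exactly where the pivot sets being \emph{equal}, not merely of equal size, is used), is the delicate part; symmetry makes that compatibility automatic, which is why symmetric — more generally pivot-matched — tensors escape the $n^{1/3}$ barrier. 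One should also keep \autoref{ex:strassenNullAlgebra} in mind as a sanity check that $\sqrt n$ is close to best possible in this regime.
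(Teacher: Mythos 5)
Your reduction to a single common max-rank witness does not prove the theorem, because the key lemma you invoke is not what \autoref{sub:sqrt-assub-equal-pivots} contains and is false as stated. That section proves: if $T$ is concise and \emph{pivot-matched} --- a condition on the pivots of \emph{all $n$ slices} in two slicing directions, requiring the pivot row-coordinates to match slice by slice --- then $\asympsubrank(T)\geq\sqrt{n}$; the proof exhibits an upper-triangular restriction of $T^{\boxtimes 2}$ with a diagonal of length $n$ (\autoref{lem:upper-triangular}), deduces $\bordersubrank(T^{\boxtimes 2})\geq n$ by an $\eps$-scaling degeneration (\autoref{lem:border-subrank-for-upper-triag}), and concludes via \autoref{lem:border-asymp-subrank}. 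For symmetric $T$ the 1-slices and 3-slices are literally the same matrices, so $T$ is pivot-matched and the theorem follows. Your proposed lemma --- ``a concise tensor whose three slice spaces share a common pivot set of size $m$ has asymptotic subrank at least $m$,'' applied with $m=R=\maxrank$ of the common slice span --- is a different and much stronger statement, and it fails already for $n=2$: the W-tensor is symmetric and concise, its common slice span contains an invertible $2\times2$ matrix (so your $I$ is all of $[2]$ and $m=R=n=2$), yet $\asympsubrank(W)=2^{h(1/3)}<2$. More structurally, the data you extract (one matrix of rank $R$ witnessed simultaneously in all three directions) only yields restrictions of the type $T^{\boxtimes 3}\geq\langle R,R,R\rangle$ as in \autoref{lem:cube-mamu}, hence at best $\asympsubrank(T)\geq R^{2/3}$, i.e.\ the cube-root bound again; to break that barrier the paper must use the pivot positions of all $n$ slices and their matching across two directions, not a single max-rank element.

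Your fallback step has independent problems as well. Restricting all three legs to $R'=\lceil\sqrt n\rceil$ of the pivot coordinates preserves neither conciseness (the restricted slices indexed by $I'$ need not be linearly independent, since the max-rank witness is a combination of slices with indices outside $I'$) nor nonsingularity of the block (a nonsingular symmetric matrix can have singular principal submatrices of smaller size, e.g.\ the $2\times2$ antidiagonal matrix). And even if both held, applying a correct square-root-type bound to a concise tensor of side about $\sqrt{n}$ would only give $n^{1/4}$, so this route is either circular or again relies on the false ``$\asympsubrank\geq m$'' claim. The correct use of symmetry is not that the three max-ranks coincide (your first observation, which is true but insufficient), but that the two slicings used in \autoref{def:pivot-matched} coincide, which is exactly the compatibility the paper's $T^{\boxtimes 2}$-degeneration argument needs.
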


We do not know whether the lower bound $n^{1/2}$ in \autoref{th:balanced-sym} is optimal.

Again, alternatively, \autoref{th:balanced-sym} can be phrased without conciseness if the dimension~$n$ is replaced by the flattening rank $\tensorrank_1(T)$ (for symmetric tensors the three flattening ranks are equal): Let $T \in \FF^n \otimes \FF^n \otimes \FF^n$ be symmetric. Then $\asympsubrank(T)\geq \tensorrank_1(T)^{1/2}$.

\subsubsection{Narrow tensors have maximal asymptotic subrank}

\autoref{th:balanced} implies that if $n_1$, $n_2$ and $n_3$ all grow, and $T \in \FF^{n_1} \otimes \FF^{n_2} \otimes \FF^{n_3}$ is any concise tensor, then $\asympsubrank(T)$ must also grow. This leaves open what happens in the regime where one of the $n_i$ is constant. We will consider the ``narrow'' regime where~$n_3=c$ is constant, and one of the dimensions $n_1, n_2$ is large enough. Here we prove that the asymptotic subrank is maximal, that is, matches the upper bound $c$.

\begin{theorem}\label{th:narrow-intro}
For every integer $c \geq 2$ there is an $N(c) \in \NN$ such that for every~$n_1, n_2$ with $\max\{n_1, n_2\} > N(c)$ and every concise tensor $T \in \FF^{n_1} \otimes \FF^{n_2} \otimes \FF^c$ we have $\asympsubrank(T) = c$.
\end{theorem}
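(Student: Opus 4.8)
By the symmetry of the statement in the first two legs I may assume $n_1=\max\{n_1,n_2\}>N(c)$, where $N(c)$ is chosen to be a sufficiently large function of $c$ (at least $c^{3}$; the precise requirement will emerge below). Conciseness in the second direction gives $n_1\le c\,n_2$, so $n_2>N(c)/c$ is also large and $\min_i n_i=c$; hence the upper bound $\asympsubrank(T)\le\min_i n_i=c$ is immediate. All the work is in the lower bound $\asympsubrank(T)\ge c$, and I will in fact prove the a priori stronger statement $\subrank(T)=c$ — that the unit tensor $\langle c\rangle=\sum_{i=1}^{c}e_i\otimes e_i\otimes e_i$ is a restriction of $T$ — so that no amortization is needed.

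First I reduce to a tensor with an invertible slice. Write $\maxrank_i(\cdot)$ for the maximal rank attained in the span of the slices in direction $i$. By the max-rank relation for concise tensors established in this paper (the product of the slice-wise max-ranks in any two directions is at least the remaining dimension), applied to directions $2$ and $3$, $\maxrank_2(T)\cdot\maxrank_3(T)\ge n_1$; since a direction-$2$ slice is an $n_1\times c$ matrix, $\maxrank_2(T)\le c$, so $\maxrank_3(T)\ge n_1/c$. Thus the span of the $c$ narrow-direction slices contains a matrix of rank $\rho\ge n_1/c$, which is large. Applying generic linear maps $L_1,L_2$ that compress this matrix to an invertible $\rho\times\rho$ matrix while keeping the $c$ slices linearly independent (possible since $\rho$ is large relative to $c$) yields a tensor $T'\in\FF^{\rho}\otimes\FF^{\rho}\otimes\FF^{c}$ with $\subrank(T')\le\subrank(T)$, concise in the third direction, with an invertible third slice; after a basis change $T'=\id_{\rho}\otimes e_c+\sum_{k=1}^{c-1}M_k\otimes e_k$ where $\id_{\rho},M_1,\dots,M_{c-1}\in\Mat_{\rho\times\rho}$ are linearly independent.

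Now I show $\subrank(T')\ge c$. Unwinding $\langle c\rangle\le T'$, this is equivalent to finding $a_1,\dots,a_c,b_1,\dots,b_c\in\FF^{\rho}$ with $a_i^{\top}M_kb_j=0$ and $a_i^{\top}b_j=0$ for all $i\ne j$ and all $k$, such that the $c$ vectors $(a_i^{\top}M_1b_i,\dots,a_i^{\top}M_{c-1}b_i,a_i^{\top}b_i)\in\FF^{c}$ form a basis (the third-leg maps being its dual basis). I construct the $a_i$ first and obtain the $b_j$ essentially for free via the generalized Krylov spaces $K_i:=\linspan\{a_i,M_1^{\top}a_i,\dots,M_{c-1}^{\top}a_i\}$: if $\dim K_i=c$ for all $i$ and $K_1,\dots,K_c$ are in direct sum — which needs only $\rho\ge c^{2}$ — then for each $j$ the map $b\mapsto(\langle M_1^{\top}a_j,b\rangle,\dots,\langle M_{c-1}^{\top}a_j,b\rangle,\langle a_j,b\rangle)$ restricts to a surjection $(\sum_{i\ne j}K_i)^{\perp}\twoheadrightarrow\FF^{c}$ by a short dimension count, so $b_j$ can be chosen in that complement with value $e_j$, which simultaneously annihilates all off-diagonal terms and makes the diagonal vectors the standard basis. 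That a generic single $a$ already satisfies $\dim K_a=c$ is a cyclic-vector fact: were it to fail identically, the incidence variety $\{(a,[\mu:\lambda])\colon(\mu\id+\sum_k\lambda_kM_k^{\top})a=0\}$ would have dimension $\ge\rho$ while mapping to $\mathbb{P}^{c-1}$ with $0$-dimensional generic fibre (the fibre over $[1:0:\dots:0]$ being $\ker\id=0$), forcing $\rho\le c-1$.

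The main obstacle is producing $c$ Krylov spaces in direct sum: a greedy extension of a partial direct sum $W=K_1\oplus\dots\oplus K_m$ can get stuck, and since the flattenings of $T'$ are low-dimensional inside their ambient matrix spaces one cannot simply count dimensions to find the next Krylov space disjoint from $W$. The way around this is that a greedy step can be obstructed only if some nonzero matrix in $\linspan\{\id,M_1,\dots,M_{c-1}\}$ has rank less than $c^{2}$ — i.e.\ only in a small-min-rank situation, which is exactly the regime that the matrix-subspace (min-rank) lemmas of the paper are designed to control. Handling that regime by rotating the low-rank matrix $M$ to be a slice, splitting $\FF^{\rho}=V\oplus V^{\perp}$ with $V\supseteq\col(M)+\col(M^{\top})$ of bounded dimension (so that $M$ becomes block-diagonal with a zero $V^{\perp}$-block), recursing on the restriction of $T'$ to $V^{\perp}$ — again a large tensor, concise in the third direction, with an invertible slice but fewer spanning slices — and recombining via super-additivity of the subrank under direct sums, with the base case $c\le 2$ being a direct matrix-pencil argument, completes the construction. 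I expect the bookkeeping — how large $N(c)$, hence $\rho$ and the recursion-level sizes, must be — to be the most delicate part, but no ingredient beyond the max-rank and min-rank inequalities, the cyclic-vector count, and this peeling induction should be required.
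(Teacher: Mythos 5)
Your first two steps are sound (over an infinite field): reducing to $T'=\id_\rho\otimes e_c+\sum_k M_k\otimes e_k$ with an invertible slice, and the observation that if every nonzero element of $\linspan\{\id,M_1,\dots,M_{c-1}\}$ has rank at least $c^2$ then the Krylov spaces $K_1,\dots,K_c$ can be chosen in direct sum and the $b_j$'s obtained by your annihilator/dimension count, giving $\langle c\rangle\le T'$. But the low-min-rank branch, which is exactly the hard case, is not a proof. After you peel off a low-rank element $M$ and compress both legs to a complement $V'$ of $V\supseteq\col(M)+\col(M^\top)$, there is no direct-sum structure to appeal to: the remaining slices $M_2,\dots,M_{c-1}$ have nonzero cross blocks between $V$ and $V'$, so ``super-additivity of the subrank under direct sums'' does not apply, and the compressed slices may also become linearly dependent, so the recursive call is not on a concise tensor with $c-1$ independent slices in any controlled way. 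More fundamentally, the recursion as stated can only return a diagonal of size $c-1$ (or fewer) from the $V'$ corner; recovering the missing diagonal element from the peeled slice $M$ while simultaneously annihilating all cross terms against the vectors produced by the recursion is precisely the step that is missing, and it is where the construction would have to do real work (the restriction of $M$ to the complements forced on you by the recursion may vanish). Nothing in the sketch rules this out.

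Note also that you are claiming the strictly stronger statement $\subrank(T)=c$, which the paper only establishes for $c=2$; for general $c$ the paper deliberately retreats to the asymptotic statement: it converts the one high-rank slice into a diagonal, large-min-rank family on a principal submatrix, and then uses Kronecker powers plus a Chernoff count so that in $T^{\boxtimes m}$ a $\tfrac12 c^m$-sized family of slices has min-rank large enough for the min-rank-implies-subrank lemma, recovering $c$ only in the limit $m\to\infty$. Your plan, if completed, would bypass the powering entirely, so the burden on the unproved recombination step is high. Finally, your genericity arguments (generic compressions $L_1,L_2$, generic cyclic vector, the incidence-variety counts) assume an infinite or large field, whereas the theorem is over an arbitrary $\FF$; for the asymptotic statement this is repairable by invariance of $\asympsubrank$ under field extension, but it breaks the claimed exact-subrank strengthening over small finite fields. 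As it stands, the proposal has a genuine gap at the peeling/recombination step and does not yet prove the theorem.
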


Moreover, for the case $c = 2$ we prove with a direct construction that $N(2) = 2$ and that asymptotic subrank can be replaced by subrank.

\begin{theorem}
Let $n_1,n_2 > 2$. Let $T \in \FF^{n_1} \otimes \FF^{n_2} \otimes \FF^2$ be concise. Then $\subrank(T) = 2$.
\end{theorem}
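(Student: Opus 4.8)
The plan is to establish $\subrank(T)\ge 2$; since $\subrank(T)\le\min_i n_i=2$ always, this yields $\subrank(T)=2$. Write $A,B\in\FF^{n_1\times n_2}$ for the two slices of $T$ in the third direction, so $T=A\otimes e_1+B\otimes e_2$, and note that conciseness of $T$ translates exactly into: $A,B$ are linearly independent, $\ker A\cap\ker B=0$, and $\ker A^{\top}\cap\ker B^{\top}=0$. It suffices to find linear maps $L_1\colon\FF^{n_1}\to\FF^2$ and $L_2\colon\FF^{n_2}\to\FF^2$ such that $\{\,L_1ML_2^{\top}:M\in\linspan(A,B)\,\}$ is the space of all diagonal $2\times2$ matrices, because then a further linear map $L_3\colon\FF^2\to\FF^2$ gives $(L_1\otimes L_2\otimes L_3)T=e_1\otimes e_1\otimes e_1+e_2\otimes e_2\otimes e_2$, witnessing $\subrank(T)\ge 2$; such $L_1,L_2$ automatically have rank $2$.

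I would put the pencil $(A,B)$ into Kronecker canonical form, which exists over an arbitrary field: a simultaneous block decomposition $(A,B)\cong\bigoplus_i(A^{(i)},B^{(i)})$ whose indecomposable summands are the singular blocks $L_{\epsilon}$ (size $\epsilon\times(\epsilon+1)$) and $L_{\eta}^{\top}$ (size $(\eta+1)\times\eta$) and the regular blocks (up to swapping $A\leftrightarrow B$, companion pencils $(I_k,C)$ for powers of irreducible polynomials). The conditions $\ker A\cap\ker B=0$ and $\ker A^{\top}\cap\ker B^{\top}=0$ say precisely that no $L_0$ block (a zero column) and no $L_0^{\top}$ block (a zero row) appears. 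For coordinate projections $L_1=[e_{i_1}\,|\,e_{i_2}]^{\top}$ and $L_2=[e_{j_1}\,|\,e_{j_2}]^{\top}$, the matrices $L_1AL_2^{\top}$ and $L_1BL_2^{\top}$ are just the $2\times2$ submatrices of $A$ and $B$ on rows $\{i_1,i_2\}$ and columns $\{j_1,j_2\}$, so my goal becomes: find two \emph{pivot positions} $(i_1,j_1),(i_2,j_2)$ with $i_1\neq i_2$ and $j_1\neq j_2$ such that both $A$ and $B$ vanish at the two \emph{cross positions} $(i_1,j_2),(i_2,j_1)$ and such that the \emph{pattern vectors} $(A_{i_1j_1},B_{i_1j_1})$ and $(A_{i_2j_2},B_{i_2j_2})$ are linearly independent in $\FF^2$; this last condition is equivalent to $L_1AL_2^{\top}$ and $L_1BL_2^{\top}$ spanning the diagonals.

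Case A: the Kronecker form has at least two blocks. Here I claim there are already two positions lying in two different blocks whose pattern vectors are non-parallel — taking these as pivots, the cross positions sit in off-diagonal blocks and hence vanish in $A$ and $B$, and we are done. To prove the claim: if it failed, then inside every block all nonzero pattern vectors would lie on one fixed line, and all these lines would coincide; but an indecomposable pencil all of whose entries have pattern on a single line is equivalent to one of $(A^{(i)},0)$ or $(0,B^{(i)})$, hence — being indecomposable and free of $L_0,L_0^{\top}$ summands — is a $1\times1$ block, so $\linspan(A,B)$ would be a diagonal pencil with $\beta A=\alpha B$ for some $(\alpha,\beta)\neq 0$, contradicting the linear independence of $A$ and $B$.

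Case B: the Kronecker form is a single indecomposable block. Since no $L_0$ or $L_0^{\top}$ summand occurs and $n_1,n_2\ge 3$, the block is one of three families: $L_{\epsilon}$ with $\epsilon\ge 3$, $L_{\eta}^{\top}$ with $\eta\ge 3$, or (up to swapping $A\leftrightarrow B$) a companion pencil $(I_k,C)$ with $k\ge 3$. In particular $n_1,n_2\ge 3$ excludes exactly the $2\times2$ parabolic Jordan block, whose tensor is the $W$-tensor of subrank $1$ — this is precisely why the statement fails at $n_1=n_2=2$. In each family the pivots can be written down at small fixed coordinates: for $(I_k,C)$ with $k\ge 3$ take $(1,1)$ (pattern $(1,0)$, since a companion matrix has first diagonal entry $0$) and $(3,2)$ (pattern $(0,1)$, a subdiagonal $1$), whose cross positions $(1,2),(3,1)$ carry only zeros; analogous explicit pivots work for $L_{\epsilon}$ and $L_{\eta}^{\top}$. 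Undoing the invertible Kronecker change of basis on the first two tensor legs turns the coordinate projections back into the required $L_1,L_2$ for $T$. The real work — and the main obstacle — is this last case: one must run the Kronecker classification carefully to confirm that the conciseness constraints together with $n_1,n_2\ge 3$ leave only those three block families, and then verify that one fixed choice of pivots succeeds in each of them over \emph{every} field, including small finite fields where no genericity argument is available and where the $W$-tensor at $n_1=n_2=2$ warns that the bookkeeping has to be exact.
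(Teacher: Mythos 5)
Your proposal is correct in substance but takes a genuinely different route from the paper's proof of \autoref{local-dim-2}. You translate conciseness into pencil conditions ($A,B$ linearly independent, no common kernel or cokernel vector), reduce $\subrank(T)\geq 2$ to exhibiting a $2\times 2$ subpencil whose two slices span the diagonal matrices, and then invoke the Kronecker canonical form of the pencil $(A,B)$ over an arbitrary field: with at least two indecomposable summands, two positions in distinct blocks with non-parallel pattern vectors must exist (otherwise all pattern vectors lie on one line, i.e.\ $\beta A=\alpha B$, contradicting independence), and the cross entries vanish automatically because they sit in off-diagonal blocks; with a single block, the exclusion of $L_0,L_0^\top$ together with $n_1,n_2\geq 3$ leaves only $L_\epsilon$, $L_\eta^\top$, or (up to swapping the slices) a regular block $(I_k,C)$ with $C$ a companion matrix and $k\geq 3$, where explicit pivot pairs such as $(1,1),(3,2)$ work. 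The paper instead argues self-containedly by linear algebra: it diagonalizes the concatenation $(A|B)$, normalizes $A$ to a partial identity, cleans up $B$ with row and column operations, and in each of three cases lands in the $2\times2$ configuration of \autoref{cl:1000andtus1Sub2}, i.e.\ slices $\bigl(\begin{smallmatrix}1&0\\0&0\end{smallmatrix}\bigr)$ and $\bigl(\begin{smallmatrix}t&u\\s&1\end{smallmatrix}\bigr)$. What your route buys is structural clarity: Case A shows that any decomposable concise pencil has subrank $2$ with no dimension hypothesis at all, and the single-block case isolates exactly why $n_1,n_2>2$ is needed (it rules out the $2\times2$ block giving the $W$-tensor); what it costs is importing the Kronecker classification over arbitrary fields, including its rational canonical blocks for the regular part, as a black box, whereas the paper needs nothing beyond elementary row/column operations. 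Two points to tighten when writing it up: your explicit entries for $(I_k,C)$ assume the companion-matrix convention with subdiagonal ones and coefficients in the last column (with coefficients in the first row the cross position $(1,2)$ carries a $1$, so either fix the convention or shift the pivots, e.g.\ to $(1,1),(2,3)$), and in Case A the detour through summands equivalent to $(A^{(i)},0)$ or $(0,B^{(i)})$ is unnecessary, since once all nonzero pattern vectors lie on a single line you directly get $\beta A=\alpha B$ and the contradiction.
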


\subsubsection{Lower bound on asymptotic subrank in terms of slice rank}

Besides the aforementioned bounds on the asymptotic subrank, we use some of the same methods to prove a lower bound on the asymptotic subrank in terms of the asymptotic slice rank.

Slice rank was introduced by Tao \cite{tao}. He proved that for every tensor $T$ we have $\slicerank(T) \geq \subrank(T)$. The gap between $\slicerank(T)$ and $\subrank(T)$ can be large, namely for generic tensors $T \in \FF^n \otimes \FF^n \otimes \FF^n$ (over algebraically closed fields $\FF$) it is known that $\slicerank(T) = n$ while $\subrank(T) = \Theta(\sqrt{n})$ \cite{derksen_et_al:LIPIcs.CCC.2022.9}. 
It is, however, not known whether there can be a large gap between $\slicerank(T^{\boxtimes m})$ and $\subrank(T^{\boxtimes m})$ when $m$ is large. In particular, it is possible that $\lim_{n\to\infty} \subrank(T^{\boxtimes n})^{1/n} = \lim_{n\to\infty} \slicerank(T^{\boxtimes n})^{1/n}$. Strassen's work implies this equality for the subset of tight tensors \cite{strassen1991degeneration}. Over the complex numbers, the limit $\lim_{n\to\infty} \slicerank(T^{\boxtimes n})^{1/n}$ is also known to exist and has a characterization in terms of moment polytopes \cite{MR4495838}.

We prove the following:

\begin{theorem}
\label{th:q-sr-bound-intro}
Suppose the limit $\asympslicerank(T) = \lim_{n\to\infty} \slicerank(T^{\boxtimes m})^{1/m}$ exists. 
Then 
\[
\asympsubrank(T) \geq \asympslicerank(T)^{2/3}.
\]
\end{theorem}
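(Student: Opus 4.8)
The plan is to bootstrap the inequality from a ``one-shot'' (non-asymptotic) version by taking tensor powers. Since $\asympsubrank$ is multiplicative under $\boxtimes$ and the limit $\asympslicerank(T)=\lim_m\slicerank(T^{\boxtimes m})^{1/m}$ is assumed to exist, it suffices to prove a single-tensor bound of the shape $\asympsubrank(S)\ge c\cdot\slicerank(S)^{2/3}$ for every order-three tensor $S$ and a fixed constant $c>0$ (even a subexponential error term in place of $c$ would do): applying this to $S=T^{\boxtimes m}$, taking $m$-th roots, and letting $m\to\infty$ yields $\asympsubrank(T)\ge\asympslicerank(T)^{2/3}$. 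So the rest of the argument concerns one fixed tensor $S$.

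First I would reduce to the concise case: passing to the concise core of $S$ (restricting each leg to the column space of the corresponding flattening) changes neither $\slicerank$ nor $\asympsubrank$, so assume $S\in\FF^{n_1}\otimes\FF^{n_2}\otimes\FF^{n_3}$ is concise, relabelled so that $n_1\le n_2\le n_3$. Expanding the $i$-th flattening as a sum of $\tensorrank_i(S)=n_i$ terms of flattening-rank one shows that each such term has slice rank one, hence $\slicerank(S)\le n_1=\min_i n_i$. Now split into two regimes. If $\slicerank(S)\le n_1^{1/2}$, then \autoref{th:balanced} already gives $\asympsubrank(S)\ge n_1^{1/3}\ge\slicerank(S)^{2/3}$ and we are done. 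The substance is the opposite regime, where $\slicerank(S)$ lies within a square of its maximal value $n_1$; here one must show $\asympsubrank(S)$ is correspondingly large, which is where the max-rank machinery enters.

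In this regime the plan is to combine two ingredients. (a) Use the relation between $\slicerank$ and the max-ranks of the three slicings of a concise tensor to argue that, when $\slicerank(S)$ is close to $\min_i n_i$, at least two of $\maxrank_1(S),\maxrank_2(S),\maxrank_3(S)$ must be large --- large enough that their product is of order $\slicerank(S)^2$. (b) Upgrade \autoref{th:balanced} to a version that, for a concise tensor with \emph{two} distinct slicing directions of large max-rank, exhibits a restriction of a power of $S$ to a unit tensor of side roughly $(\maxrank_i(S)\cdot\maxrank_j(S))^{k/3}$, and hence gives $\asympsubrank(S)\ge(\maxrank_i(S)\cdot\maxrank_j(S))^{1/3}$. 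Combining (a) and (b) produces $\asympsubrank(S)\gtrsim\slicerank(S)^{2/3}$ in the remaining regime, completing the one-shot bound. (The very unbalanced sub-case $n_1\ll n_3$ can alternatively be dispatched directly by \autoref{th:narrow-intro}, which gives $\asympsubrank(S)=n_1\ge\slicerank(S)\ge\slicerank(S)^{2/3}$.)

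The main obstacle is step (a): quantitatively, near-maximal slice rank of a concise tensor must force near-maximal max-rank in two directions, with the loss along the way small enough that the product of the two max-ranks genuinely reaches $\slicerank(S)^2$ and not merely $\slicerank(S)$ --- the latter being automatic and yielding only the $1/3$ bound again. This is exactly where the ``product of two max-ranks is large'' phenomenon from the max-rank part of the paper has to be paired with a converse-type estimate that bounds slice rank in terms of the three max-ranks. Step (b) is the other delicate point: the refinement of \autoref{th:balanced} must use the two large directions simultaneously --- using only the single largest max-rank recovers nothing beyond \autoref{th:balanced} --- so the construction of the large unit tensor inside a power of $S$ has to interleave the two slicing directions, the same structural interplay between distinct directions emphasized in the abstract.
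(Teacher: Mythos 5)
Your global strategy is the right one and matches the paper's: reduce to a one-shot bound of the form $\asympsubrank(S)\ge c\,\slicerank(S)^{2/3}$, apply it to $S=T^{\boxtimes m}$, take $m$-th roots and use $\asympsubrank(T^{\boxtimes m})=\asympsubrank(T)^m$. But the two steps you label (a) and (b) are exactly the content of the theorem, and you leave both as identified obstacles rather than proving them, so as it stands the proposal has a genuine gap. For (a), the missing idea is a Flanders-type converse to $\maxrank\le\mincov$: for every direction $i$ one has $\slicerank(T)\le\slicerank_i(T)=\mincov(\mathcal{A}_i)\le 4\,\maxrank(\mathcal{A}_i)=4\,\subrank_i(T)$ (Haramaty--Shpilka over arbitrary fields, \autoref{thm:sri_leq_4Qi}, or Flanders with constant $2$ over large fields). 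This gives $\subrank_i(T)\ge\tfrac14\slicerank(T)$ \emph{unconditionally in every direction}, so no case split on whether $\slicerank(S)$ is above or below $n_1^{1/2}$ is needed, and the conditional statement you pose (``near-maximal slice rank forces two near-maximal max-ranks'') is never proved by you and is not the mechanism the paper uses; the ``product of two max-ranks $\ge n_k$'' theorem you allude to plays no role here. For (b), the needed ``interleaving of two directions'' is not an upgrade of \autoref{th:balanced} that you would have to construct from scratch: it is the restriction $T^{\boxtimes 3}\ge\langle \subrank_2(T),\subrank_3(T),\subrank_1(T)\rangle$ (\autoref{lem:cube-mamu}) combined with the known value of the asymptotic (or border) subrank of matrix multiplication tensors, giving $\asympsubrank(T)^3\ge\min_{i,j}\subrank_i(T)\subrank_j(T)$ (\autoref{lem:asympsub-min}); this lemma is already available in the max-rank section of the paper and could simply be cited, but your proposal does not identify it or any substitute construction.

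Putting the two missing ingredients together yields $\asympsubrank(T)^3\ge\bigl(\tfrac14\slicerank(T)\bigr)^2$, which is precisely the paper's one-shot bound (\autoref{lem:b-subr-sr} proves the border-subrank variant with constant $\tfrac{3}{64}$ via Strassen's bound $\bordersubrank(\langle e,h,\ell\rangle)\ge\tfrac34 eh$; the cleaner asymptotic variant is \autoref{th:a-subr-sr}), after which your amortization step is exactly the paper's \autoref{th:q-sr-bound}. So your skeleton is sound, but the proof is incomplete without supplying (a) the Flanders/Haramaty--Shpilka relation between max-rank and the covering number of each slice span together with $\slicerank(T)\le\slicerank_i(T)$, and (b) the matrix-multiplication restriction of $T^{\boxtimes 3}$.
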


Our proof of \autoref{th:q-sr-bound-intro} consists of proving that the border subrank of the third power of $T$ is bounded from below in terms of the slice rank of $T$, and applying a field-agnostic Flanders-type lower bound on the max-rank of Haramaty and Shpilka~\cite{haramaty2010structure}.

\subsection{Max-rank and min-rank of slice spans}

Our lower bounds on asymptotic subrank discussed in the previous subsection rely on results we prove about the ranks of elements in the span of the slices of a tensor. These may be of independent interest and we discuss some of them here.

\subsubsection{Max-ranks of slice spans}

Our proof of \autoref{th:balanced} relies (among other ingredients) on the notion of the max-rank of matrix subspaces, and the relation between max-ranks of matrix subspaces obtained by slicing a tensor in the three possible directions.

For any matrix subspace $\mathcal{A} \subseteq \FF^{n_1} \otimes \FF^{n_2}$, we let $\textnormal{max-rank}(\mathcal{A})$ denote the largest matrix rank of any element of $\mathcal{A}$. 
To any tensor $T \in \FF^{n_1} \otimes \FF^{n_2} \otimes \FF^{n_3}$ we may associate three matrix subspaces 
$\mathcal{A}_1 \subseteq \FF^{n_2} \otimes \FF^{n_3}$, $\mathcal{A}_2 \subseteq \FF^{n_1} \otimes \FF^{n_3}$ and $\mathcal{A}_3 \subseteq \FF^{n_1} \otimes \FF^{n_2}$, obtained by taking the span of the slices of $T$ in each of the three possible directions. We denote the max-ranks of these spaces by $\subrank_\ell(T) = \textnormal{max-rank}(\mathcal{A}_\ell)$, for $\ell \in [3]$. We prove that for a concise tensor $T$ the max-ranks $\subrank_\ell(T)$ cannot all be small, in the following sense.

%
%

%


\begin{theorem}\label{th:uncertainty-principle-n-intro}
    Let $T \in \FF^{n_1} \otimes \FF^{n_2} \otimes \FF^{n_3}$ be concise. Let $\ell_1, \ell_2, \ell_3 \in [3]$ be distinct. Suppose that $|\FF|>n_{\ell_3}$. Then
    $\subrank_{\ell_1}(T)\subrank_{\ell_2}(T) \geq n_{\ell_3}.$
\end{theorem}

We have explicit examples of families of tensors (provided later) that show how \autoref{th:uncertainty-principle-n-intro} is essentially optimal:
\begin{itemize}
\item For every $n$ that is a square, 
there is a concise tensor $T \in \FF^{n} \otimes \FF^n \otimes \FF^n$ such that for all $\ell$ we have
$\sqrt{n} \leq \subrank_\ell(T)\leq 2\sqrt{n}$ (\autoref{ex:JopsEx}). In particular, for all $\ell_1 \neq \ell_2$ we have $n\leq\subrank_{\ell_1}(T)\subrank_{\ell_2}(T)\leq4n$. 

\item For every $n$, there is a concise tensor $T \in \FF^n \otimes \FF^n \otimes \FF^n$ such that $\subrank_1(T) = \subrank_3(T) = n$ and $\subrank_2(T) = 2$, so that $\subrank_2(T) \subrank_3(T) = 2n$ (\autoref{ex:strassenNullAlgebra}).

\item For every $c$ and for every $n$  that is a multiple of $c$,
there is a concise tensor $T \in \FF^{n} \otimes \FF^n \otimes \FF^n$ such that
$\subrank_1(T)=n$, $\subrank_2(T)\leq c+1$ and $\subrank_3(T)\leq n/c+1$ (\autoref{ex:generalisedNullAlg}).
In particular,
$\subrank_2(T)\subrank_3(T)\leq  \tfrac{(c+1)}{c} n+c+1$.
\end{itemize}

It follows from \autoref{th:uncertainty-principle-n-intro} (by a straightforward argument) that $\subrank_\ell(T)$ must be large for at least two directions $\ell$, in the following sense:
\begin{corollary}
    \label{th:rank-two-direc-intro}
    Let $T \in \FF^{n_1} \otimes \FF^{n_2} \otimes \FF^{n_3}$ be concise.
    Suppose that $|\FF|>\max_\ell n_\ell$.
    There are distinct $\ell_1, \ell_2 \in [3]$ such that $\subrank_{\ell_1}(T) \geq (\max_\ell n_\ell)^{1/2}$ and $\subrank_{\ell_2}(T) \geq (\min_\ell n_\ell)^{1/2}$.
\end{corollary}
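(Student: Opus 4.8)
The plan is to derive Corollary~\ref{th:rank-two-direc-intro} purely from Theorem~\ref{th:uncertainty-principle-n-intro} by an elementary pigeonhole/ordering argument. First I would observe that Theorem~\ref{th:uncertainty-principle-n-intro} requires $|\FF| > n_k$; to handle the general case I would note that passing to an extension field does not change conciseness and can only increase each $\subrank_i(T)$ (the slice span, hence its max-rank, only grows under field extension, while $n_1,n_2,n_3$ stay fixed), so it suffices to prove the corollary over a large enough field, and from now on assume $|\FF|$ exceeds all three dimensions.

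Next I would relabel the directions so that $n_1 \geq n_2 \geq n_3$, i.e.\ $n_1 = \max_i n_i$ and $n_3 = \min_i n_i$. Applying Theorem~\ref{th:uncertainty-principle-n-intro} with $\{i,j,k\} = \{2,3,1\}$ gives $\subrank_2(T)\,\subrank_3(T) \geq n_1 = \max_i n_i$. Hence at least one of $\subrank_2(T), \subrank_3(T)$ is $\geq (\max_i n_i)^{1/2}$; call that direction $\ell_1 \in \{2,3\}$. For the other large direction, I would apply the theorem again, this time with the pair $\{1,\ell_1\}$ as $\{i,j\}$ and the third index as $k$: since the third index is one of $\{2,3\}$, its dimension is $\geq n_3 = \min_i n_i$, so $\subrank_1(T)\,\subrank_{\ell_1}(T) \geq \min_i n_i$, and therefore at least one of these two is $\geq (\min_i n_i)^{1/2}$. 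Choosing $\ell_2$ to be whichever of $1, \ell_1$ achieves this bound, and noting $\ell_2 \neq \ell_1$ is not yet guaranteed, I would take a small amount of care: if the direction achieving $(\min_i n_i)^{1/2}$ happens to equal $\ell_1$, then I instead use that $\ell_1$ already satisfies $\subrank_{\ell_1}(T) \geq (\max_i n_i)^{1/2} \geq (\min_i n_i)^{1/2}$ and pick $\ell_2 = 1$, which works since $\subrank_1(T)\,\subrank_{\ell_1}(T) \geq \min_i n_i$ forces $\subrank_1(T) \geq \min_i n_i / \subrank_{\ell_1}(T)$, and combined with the symmetric use of the first inequality one of the two directions $\{1,2,3\}\setminus\{\ell_1\}$ must meet $(\min_i n_i)^{1/2}$.

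Actually the cleanest packaging avoids case analysis: I would simply note that among the three products $\subrank_1\subrank_2$, $\subrank_2\subrank_3$, $\subrank_1\subrank_3$, Theorem~\ref{th:uncertainty-principle-n-intro} says each is at least the remaining dimension, so $\subrank_2\subrank_3 \geq n_1$ and (say) $\subrank_1\subrank_2 \geq n_3$ and $\subrank_1\subrank_3 \geq n_2$. From $\subrank_2\subrank_3 \geq n_1$, the larger of $\subrank_2,\subrank_3$ is $\geq \sqrt{n_1} = (\max_i n_i)^{1/2}$; set $\ell_1$ to that direction. The product of the remaining two directions $\{1,2,3\}\setminus\{\ell_1\}$ is, by the theorem, at least the dimension indexed by $\ell_1$, which is $\geq n_3 = \min_i n_i$; hence the larger of those two is $\geq (\min_i n_i)^{1/2}$, and we set $\ell_2$ to it. Then $\ell_1 \neq \ell_2$ by construction and both bounds hold. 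The only genuine content is Theorem~\ref{th:uncertainty-principle-n-intro} itself; the rest is bookkeeping. I do not expect any real obstacle here — the single point needing a line of justification is the field-extension reduction so that the hypothesis $|\FF| > n_k$ can be assumed without loss of generality.
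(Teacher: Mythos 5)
Your final ``clean packaging'' is exactly the paper's argument: from $\subrank_i(T)\,\subrank_j(T)\geq n_k$ for all distinct $i,j,k$ (\autoref{th:uncertainty-principle-n}), the pair of directions complementary to the largest dimension yields an $\ell_1$ with $\subrank_{\ell_1}(T)\geq(\max_i n_i)^{1/2}$, and the product of the two directions other than $\ell_1$ is at least $n_{\ell_1}\geq\min_i n_i$, yielding a distinct $\ell_2$ with $\subrank_{\ell_2}(T)\geq(\min_i n_i)^{1/2}$. This is the same pigeonhole the paper uses in \autoref{cor:twoLargeRankDirs}, and your earlier, more convoluted case analysis is correctly superseded by it.

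The one genuine gap is the field-extension reduction with which you try to discharge the hypothesis $|\FF|>n_k$. You observe that each $\subrank_i(T)$ can only increase when passing from $\FF$ to an extension $\KK$, and conclude that it therefore suffices to prove the corollary over a large field. The implication runs the wrong way: the corollary asserts a lower bound on $\subrank_i(T)$ computed over $\FF$, i.e.\ on the maximal rank of an $\FF$-linear combination of slices, whereas the argument over $\KK$ only lower-bounds the (possibly strictly larger) quantity $\maxrank$ taken over $\KK$-linear combinations; a lower bound on the larger quantity says nothing about the smaller one. This is different from the situation in \autoref{th:fourth-root}, where the extension trick is legitimate because the quantity being bounded there, the asymptotic subrank, is invariant under field extension; no such invariance is available (or claimed in the paper) for the max-ranks $\subrank_i$. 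The paper sidesteps the issue by keeping the hypothesis $|\FF|>\max_i n_i$ in its formal statement (\autoref{cor:twoLargeRankDirs}); the introduction's phrasing, which you were handed, silently drops it. So either retain that hypothesis in your statement, or you need a genuinely different argument for small fields --- the blanket passage to an extension does not close it.
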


From \autoref{th:rank-two-direc-intro}  we can prove a preliminary asymptotic subrank lower bound $\asympsubrank(T) \geq (\min_\ell n_\ell)^{1/4}$ using the (easy to prove) fact that $\asympsubrank(T)^2 \geq \subrank_{\ell_1}(T) \subrank_{\ell_2}(T)$ for any distinct ${\ell_1},{\ell_2} \in [3]$. Proving our stronger cube-root lower bound $\asympsubrank(T) \geq (\min_\ell n_\ell)^{1/3}$ of \autoref{th:balanced} requires slightly more work.

Work on max-rank (and its relation to dimension) goes back to Dieudonn\'{e} \cite{MR29360}, Meshulam \cite{meshulam1985maximal} and \cite{flanders1962spaces}. Another relevant line of work here is on commutative and non-commutative rank, which has established strong connections between max-rank and non-commutative rank \cite{cohn1995skewfields,MR2123060,derksen2018ncrk}.

\subsubsection{Min-ranks of slice spans}

Our proof of \autoref{th:narrow-intro} relies on a careful analysis of the min-rank of matrix subspaces, the relation to subrank and the behaviour of min-rank under powering.

For any matrix subspace $\mathcal{A} \subseteq \FF^{n_1} \otimes \FF^{n_2}$, we let $\minrank(\mathcal{A})$ denote the smallest matrix rank of any nonzero element of $\mathcal{A}$. We prove several properties of the min-rank, of which we give a rough outline here (the precise description we defer to later):
\begin{itemize}
    \item If the slices of a tensor have large min-rank, then the tensor has large subrank (\autoref{cla-high-rank-high-subrank}). %
    \item Any concise tensor has a slice of large rank (\autoref{cla:high-rk}).
    \item If a matrix subspace has large max-rank, then we can transform the subspace in a natural fashion such that it has large min-rank and all elements are diagonal (\autoref{claim:matrices-into-diagonal-with-large-rank}).
    \item Min-rank is super-multiplicative under tensor product, as long as at least one of the matrix subspaces is diagonal (\autoref{lem:minrank-diag-supermultiplicative}).
\end{itemize}
A careful combination of the above ingredients leads to a proof of \autoref{th:narrow-intro}.

The min-rank has been investigated before in several different contexts. Amitsur~\cite{amitsur1965generalized} used min-rank to characterize properties of rings of operators. %
Meshulam and Semrl \cite{MR2073900} used the min-rank to study properties of operator spaces. %
In quantum information the rank is a measure of entanglement. Spaces of bipartite states which are all entangled are spaces of matrices over the complex numbers with min-rank strictly greater than~$1$. They were investigated in \cite{wallach2002unentangled} and \cite{parthasarathey2004maxdim}. In \cite{hayden2006generic} a slightly different angle was taken, analysing random subspaces. It was shown that most random subspaces have almost maximal min-rank, which was used for superdense coding in \cite{anura2006superdense}, and was summarised in \cite{hayden2004random}.
 Generalising both of these lines of work,
in \cite{cubitt2008subspacesbounded} the question of dimension versus min-rank was addressed as number of qubits versus guaranteed entanglement in a subspace of states, and their construction is used in a follow-up paper \cite{cubitt2008counterexamples} to show that 
there are counterexamples to the additivity of $p$-R{\'e}nyi entropies, for all $p\leq p_0$ for some small constant $p_0<1$, utilising the fact that the $0$-R{\'e}nyi entropy is the min-rank.

\subsection{Tensor preliminaries}\label{subsec:prelim}

We summarize in this section some standard tensor preliminaries and terminology that we will use throughout the paper, following the papers of Strassen \cite{MR882307,strassen1988asymptotic,strassen1991degeneration} and the book of Bürgisser, Clausen and Shokrollahi~\cite{burgisser1997algebraic}.

\subsubsection*{Tensors}
Let $\FF$ be any field. For any $n \in \NN$ we denote by $e_1, \ldots, e_n$ the standard basis elements of the vector space $\FF^n$. For any $n_1, n_2, n_3 \in \NN$, we denote by $\FF^{n_1} \otimes \FF^{n_2} \otimes \FF^{n_3}$ the vector space of order-three tensors $T = \sum_{i \in [n_1]} \sum_{j \in [n_2]} \sum_{k \in [n_3]} T_{i,j,k}\, e_i \otimes e_j \otimes e_k$ with coefficients $T_{i,j,k} \in \FF$. In most of this paper we will be dealing only with order-three tensors, so instead of order-three tensor we will just say tensor. 

\subsubsection*{Restriction and equivalence} 
We will be using a natural preordering on tensors called restriction, which is defined as follows. For any two tensors $T \in \FF^{n_1} \otimes \FF^{n_2} \otimes \FF^{n_3}$ and $S \in \FF^{m_1} \otimes \FF^{m_2} \otimes \FF^{m_3}$, we say $T$ \emph{restricts to} $S$ and write $T \geq S$ if there are linear maps $L_\ell : \FF^{n_\ell} \to \FF^{m_\ell}$ such that $(L_1 \otimes L_2 \otimes L_3)T = S$. We say $S$ and $T$ are \emph{equivalent} if $T \geq S$ and $S \geq T$. Most tensor properties we will study are invariant under equivalence and monotone under restriction (see \autoref{lem:monotone-restrict}). Finally, we say $S, T \in \FF^{n_1} \otimes \FF^{n_2} \otimes \FF^{n_3}$ are \emph{isomorphic} if $(L_1 \otimes L_2 \otimes L_3)T = S$ for invertible linear maps $L_\ell : \FF^{n_\ell} \to \FF^{n_\ell}$.

\subsubsection*{Flattenings and flattening ranks}
Let $T \in V_1 \otimes V_2 \otimes V_3$. Grouping together $V_2$ and $V_3$ gives an element  $T_{1} \in V_1 \otimes (V_2 \otimes V_3)$, and we similarly obtain $T_2$ and $T_3$. 
We may think of the $T_\ell$ as matrices and we call them 
the \emph{flattenings} of $T$.
Let $\tensorrank_{\ell}(T)$ ve the matrix rank of $T_{\ell}$. We call $\tensorrank_1(T)$, $\tensorrank_2(T)$ and $\tensorrank_3(T)$ 
the \emph{flattening ranks} of $T$.
The triple $(\tensorrank_1(T), \tensorrank_2(T), \tensorrank_3(T))$ is often called the \emph{multilinear rank} of $T$.

\subsubsection*{Conciseness} 
Let $T \in \FF^{n_1} \otimes \FF^{n_2} \otimes \FF^{n_3}$. From the definition of flattening rank it follows that $\tensorrank_\ell(T) \leq n_\ell$ for every $\ell \in [3]$. We call $T$ \emph{concise} if $\tensorrank_\ell(T) = n_\ell$  for every~$\ell \in [3]$. Conciseness essentially says that the tensor cannot fit in a smaller tensor space. The property of being concise is very mild in the following sense. We call a tensor $S$ a \emph{subtensor} of $T = \sum_{i,j,k} T_{i,j,k}\, e_i \otimes e_j \otimes e_k$ if $S$ is of the form $S = \sum_{i\in I,j\in J,k\in K} T_{i,j,k}\, e_i \otimes e_j \otimes e_k$ for some subsets $I,J,K$.

\begin{lemma}\label{lem:equiv-concise}
    Every tensor is equivalent to a concise tensor, which is unique up to equivalence. Concretely, every tensor is equivalent to a subtensor that is concise.
\end{lemma}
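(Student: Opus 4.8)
The plan is to extract, in each of the three directions, the ``support space'' of $T$, show that $T$ already lies in the triple tensor product of these spaces, and observe that this forces conciseness; uniqueness up to equivalence is then a formal consequence of equivalence being an equivalence relation together with invariance of the multilinear rank.

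Set $r_i := \tensorrank_i(T)$, and for each $i$ let $U_i \subseteq \FF^{n_i}$ be the column span of the flattening $T_i$ viewed as an $n_i \times (n_jn_k)$ matrix, so that $\dim U_i = r_i$. The first step is to verify the containment $T \in U_1 \otimes U_2 \otimes U_3$. Writing $T = \sum_{j\in[n_2],\,k\in[n_3]} v_{jk}\otimes e_j\otimes e_k$ with $v_{jk} = \sum_{i} T_{i,j,k}\,e_i$, each $v_{jk}$ is a column of $T_1$ and hence lies in $U_1$, so $T \in U_1\otimes\FF^{n_2}\otimes\FF^{n_3}$; the analogous argument in the other two directions gives $T \in \FF^{n_1}\otimes U_2\otimes\FF^{n_3}$ and $T \in \FF^{n_1}\otimes\FF^{n_2}\otimes U_3$. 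Choosing for each $i$ a basis of $\FF^{n_i}$ whose first $r_i$ vectors span $U_i$, these three subspaces become coordinate subspaces with respect to the induced product basis of $\FF^{n_1}\otimes\FF^{n_2}\otimes\FF^{n_3}$, so their intersection is exactly the coordinate subspace $U_1\otimes U_2\otimes U_3$, yielding the claim.

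Next I would turn this into the concrete statement. Let $g_i\in\GL(\FF^{n_i})$ send $e_\ell$ to the $\ell$-th chosen basis vector and put $\tilde T := (g_1^{-1}\otimes g_2^{-1}\otimes g_3^{-1})T$; since the $g_i$ are invertible, $\tilde T$ is equivalent to $T$. By the containment above, $\tilde T$ has support inside $[r_1]\times[r_2]\times[r_3]$, hence coincides (under the obvious padding identification) with its subtensor $S\in\FF^{r_1}\otimes\FF^{r_2}\otimes\FF^{r_3}$ on that box. Because a linear map can only decrease a flattening rank, $\tensorrank_i$ is invariant under equivalence, so $\tensorrank_i(S) = \tensorrank_i(T) = r_i = \dim\FF^{r_i}$ for every $i$; that is, $S$ is concise. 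This establishes existence, with the concise tensor realized as a subtensor of the coordinate change $\tilde T$ of $T$.

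For uniqueness, one records first that restriction is reflexive and transitive (compose the witnessing triples of linear maps), so equivalence is an equivalence relation; hence any two concise tensors equivalent to $T$ are equivalent to each other, and by invariance of the multilinear rank $(\tensorrank_1,\tensorrank_2,\tensorrank_3)$ they even share the format $(r_1,r_2,r_3)$. I expect the only delicate points to be the subspace-intersection identity that pins down $U_1\otimes U_2\otimes U_3$ and making precise the sense in which the concise tensor is a ``subtensor'' (it is a subtensor of the coordinate change $\tilde T$, not of $T$ in its original basis) — both dispatched by working in bases adapted to the $U_i$ — while the remaining steps are routine bookkeeping.
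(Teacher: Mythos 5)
Your argument correctly establishes the first sentence of the lemma: the containment $T \in U_1\otimes U_2\otimes U_3$ and the adapted-basis intersection argument are fine, the coordinate change $(g_1^{-1}\otimes g_2^{-1}\otimes g_3^{-1})T$ does give an equivalent concise tensor, and uniqueness up to equivalence follows from transitivity as you say. However, it does not prove the second sentence as stated, and this is a genuine gap rather than a bookkeeping issue. In this paper a \emph{subtensor} of $T=\sum_{i,j,k}T_{i,j,k}\,e_i\otimes e_j\otimes e_k$ means the tensor $\sum_{i\in I,j\in J,k\in K}T_{i,j,k}\,e_i\otimes e_j\otimes e_k$ for index subsets $I,J,K$ — i.e.\ it carries the \emph{original} coefficients of $T$. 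Your concise representative is a subtensor only of the basis-changed tensor $\tilde T$, and you flag this yourself but then leave the discrepancy unresolved; changing what ``subtensor'' refers to does not dispatch it.

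The concrete form is exactly the part of the lemma that gets used later: in the proof of \autoref{th:no-acc}(i) the class of tensors with coefficients in a fixed finite subset of $\FF$ must be closed under passing to a concise equivalent tensor, and this holds precisely because that concise tensor is a subtensor of $T$ and hence has coefficients among those of $T$. After a general change of basis the entries of $\tilde T$ can be arbitrary field elements, so your construction does not give this closure. The paper's proof proceeds differently so as to get it: choose a maximal linearly independent subset $A_1,\dots,A_r$ of the $1$-slices of $T$ and let $S$ be the tensor with exactly those slices; then $T\geq S$ trivially (it is a subtensor), and $S\geq T$ because the discarded slices lie in $\linspan\{A_1,\dots,A_r\}$, so $S$ is equivalent to $T$ while keeping $T$'s coefficients; since flattening ranks are invariant under equivalence, repeating this in directions $2$ and $3$ produces a concise subtensor. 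Your proof can be repaired by replacing the coordinate-change step with this slice-selection argument (your subspaces $U_i$ are not needed for that); as written, it proves the abstract statement but not the concrete one.
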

\begin{proof}
    The first statement is a standard fact \cite[Section 14.3]{burgisser1997algebraic}. For the second statement, let $T = \sum_{i,j,k} T_{i,j,k}\, e_i \otimes e_j \otimes e_k \in \FF^{n_1} \otimes \FF^{n_2} \otimes \FF^{n_3}$. Let $A_i = (T_{i,j,k})_{j,k} \in \FF^{n_2} \otimes \FF^{n_3}$ for $i \in [n_1]$ be the 1-slices of~$T$. Then one verifies that the dimension of the span of the~$A_i$ equals the flattening rank $r_1=\tensorrank_1(T)$. Suppose $A_1, \ldots, A_r$ are linearly independent, and let $S$ be the tensor with these matrices as 1-slices. Then $S$ is a subtensor of $T$ so clearly $T \geq S$. Also, since the matrices $A_{r+1}, \ldots, A_{n_1}$ are in the span of the matrices $A_1, \ldots, A_r$ we have that $S \geq T$. Since $S$ and $T$ are thus equivalent, their flattening ranks are equal. Repeating this construction for the other two directions gives the claimed subtensor.
    %
\end{proof}

Not every space $\FF^{n_1} \otimes \FF^{n_2} \otimes \FF^{n_3}$ contains concise tensors, as can be seen from the following implication (which is in fact known to be an equivalence \cite[Theorem~2]{MR2776439}):

\begin{lemma}\label{cl:concise-formats}
For any $n_1,n_2,n_3\in\NN$, if there is a concise tensor in $\FF^{n_1} \otimes \FF^{n_2} \otimes \FF^{n_3}$, then $n_1 \leq n_2\cdot n_3$, $n_2 \leq n_1\cdot n_3$, and $n_3 \leq n_1\cdot n_2$.
\end{lemma}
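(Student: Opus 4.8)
The plan is to show the contrapositive-style statement directly: assume $T \in \FF^{n_1} \otimes \FF^{n_2} \otimes \FF^{n_3}$ is concise, and deduce $n_1 \leq n_2 n_3$ (the other two inequalities follow by the evident symmetry in the roles of the three tensor legs). First I would use conciseness in the first direction: by definition $\tensorrank_1(T) = n_1$, and $\tensorrank_1(T)$ is the matrix rank of the flattening $T_1 \in \FF^{n_1} \otimes (\FF^{n_2} \otimes \FF^{n_3})$, which we regard as an $n_1 \times (n_2 n_3)$ matrix. The matrix rank of any $n_1 \times (n_2 n_3)$ matrix is at most $\min\{n_1, n_2 n_3\}$, so $n_1 = \tensorrank_1(T) \leq n_2 n_3$.

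The key step is therefore the trivial bound $\rank(M) \leq \min\{\text{rows}, \text{columns}\}$ applied to the flattening, combined with the fact (already recorded in the preliminaries, right before \autoref{lem:equiv-concise}) that $\tensorrank_i(T) \leq n_i$ with equality precisely being conciseness. Concretely: conciseness gives $\tensorrank_1(T) = n_1$; the flattening interpretation gives $\tensorrank_1(T) = \rank(T_1)$ where $T_1$ has $n_1$ rows and $n_2 \cdot n_3$ columns; and the rank bound gives $\rank(T_1) \leq n_2 n_3$. Chaining these yields $n_1 \leq n_2 n_3$.

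Finally I would observe that the same argument applied to the flattenings $T_2 \in \FF^{n_2} \otimes (\FF^{n_1} \otimes \FF^{n_3})$ and $T_3 \in \FF^{n_3} \otimes (\FF^{n_1} \otimes \FF^{n_2})$ gives $n_2 = \tensorrank_2(T) \leq n_1 n_3$ and $n_3 = \tensorrank_3(T) \leq n_1 n_2$ respectively, completing the proof. There is essentially no obstacle here — the only thing to be slightly careful about is correctly identifying, for each $i$, that the flattening $T_i$ is an $n_i \times (n_j n_k)$ matrix (with $\{i,j,k\} = [3]$), so that the number of columns is the product of the other two dimensions; once that bookkeeping is in place the inequalities are immediate. (One could also cite \cite[Theorem~2]{MR2776439} for the converse, but that direction is not needed for the stated implication.)
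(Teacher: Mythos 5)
Your proof is correct and is essentially identical to the paper's own argument: conciseness forces $\tensorrank_i(T) = n_i$, and since the flattening $T_i$ is an $n_i \times (n_j n_k)$ matrix its rank is at most $n_j n_k$, giving $n_i \leq n_j n_k$ for each choice of distinct $i,j,k$. No gaps.
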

\begin{proof}
    Suppose $T \in \FF^{n_1} \otimes \FF^{n_2} \otimes \FF^{n_3}$ is concise. Then $n_{\ell_1} = \rank(T_{\ell_1})$ where $T_{\ell_1}$ is the previously defined flattening of $T$. Since $T_{\ell_1}$ is an $n_{\ell_1} \times (n_{\ell_2}\cdot n_{\ell_3})$ matrix, we have $\tensorrank(T_{\ell_1}) \leq n_{\ell_2} \cdot n_{\ell_3}$. Thus $n_{\ell_1} \leq n_{\ell_2} \cdot n_{\ell_3}$ for every choice of distinct ${\ell_1},{\ell_2},{\ell_3} \in [3]$.
\end{proof}

\subsubsection*{Unit tensors, tensor rank, subrank, and slice rank}
For every $r \in \NN$ we define the tensor $\langle r \rangle = \sum_{i=1}^r e_i \otimes e_i \otimes e_i \in \FF^r \otimes \FF^r \otimes \FF^r$. We call~$\langle r \rangle$ the \emph{unit tensor} of rank $r$.
We say a tensor has \emph{tensor rank one} if it is of the form $u \otimes v \otimes w$ for nonzero vectors $u,v,w$.
The tensor rank of a tensor $T$ is the smallest number $r$ such that $T$ can be written as a sum of $r$ tensors with tensor rank one. 
We denote tensor rank by $\tensorrank(T)$. Equivalently, the tensor rank $\tensorrank(T)$ is the smallest number $r$ such that $T \leq \langle r\rangle$. The \emph{subrank} of $T$ is the largest number $r$ such that $\langle r\rangle \leq T$. We denote subrank by $\subrank(T)$. 
We say a tensor has \emph{slice rank one} if any of its flattening ranks equals~one. The slice rank of a tensor $T$ is the smallest number $r$ such that $T$ can be written as a sum of $r$ tensors with slice rank one.
Subrank, slice rank, the flattening ranks and tensor rank satisfy $\subrank(T) \leq \slicerank(T) \leq \tensorrank_\ell(T) \leq \tensorrank(T)$ for every $\ell\in [3]$.

It is also easy to see that these measures are monotone under restrictions:

\begin{lemma}\label{lem:monotone-restrict}
    For any two tensors $T$ and $S$, if $T\geq S$, then we have $\tensorrank(T)\geq \tensorrank(S)$, $\slicerank(T)\geq \slicerank(S)$, $\subrank(T)\geq \subrank(S)$, and for every $\ell\in[3]$, 
    $\tensorrank_\ell(T)\geq \tensorrank_\ell(S)$.
\end{lemma}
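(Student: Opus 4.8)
The plan is to reduce all six inequalities to two elementary facts: that the restriction preorder $\geq$ is transitive, and that applying linear maps to a tensor cannot increase any matrix rank attached to it. First I would record transitivity: if $(L_1 \otimes L_2 \otimes L_3)T = T'$ and $(L_1' \otimes L_2' \otimes L_3')T' = T''$, then $\big((L_1'L_1) \otimes (L_2'L_2) \otimes (L_3'L_3)\big)T = T''$, so $T \geq T'$ and $T' \geq T''$ imply $T \geq T''$.

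For tensor rank and subrank I would invoke the characterizations $\tensorrank(T) = \min\{r : T \leq \langle r\rangle\}$ and $\subrank(T) = \max\{r : \langle r\rangle \leq T\}$ recorded above. Given $T \geq S$: taking $r = \tensorrank(T)$ we get $\langle r\rangle \geq T \geq S$, hence $\tensorrank(S) \leq r = \tensorrank(T)$; taking $r = \subrank(S)$ we get $T \geq S \geq \langle r\rangle$, hence $\subrank(T) \geq r = \subrank(S)$.

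For the flattening ranks, write $S = (L_1 \otimes L_2 \otimes L_3)T$; unpacking coordinates shows that, as matrices, $S_i = L_i \, T_i \, (L_j \otimes L_k)^\top$ for $\{i,j,k\} = \{1,2,3\}$ (up to a fixed reordering of the grouped legs, which does not affect rank), so $\tensorrank_i(S) = \rank(S_i) \leq \rank(T_i) = \tensorrank_i(T)$ since matrix rank is non-increasing under left and right multiplication. The slice-rank bound then follows immediately: if $T = \sum_{m=1}^{r} T^{(m)}$ with $r = \slicerank(T)$ and each $T^{(m)}$ of slice rank at most one, then $S = \sum_m (L_1 \otimes L_2 \otimes L_3)T^{(m)}$, and by the flattening bound each summand still has some flattening of rank at most one, so $\slicerank(S) \leq r$.

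Finally, for $\subrank_i = \textnormal{max-rank}(\mathcal{A}_i)$: with $S = (L_1 \otimes L_2 \otimes L_3)T$, each $i$-slice of $S$ is $L_j$ times a linear combination of the $i$-slices of $T$ times $L_k^\top$, so by linearity every element of $\mathcal{A}_i(S)$ equals $L_j M L_k^\top$ for some $M \in \mathcal{A}_i(T)$; then $\rank(L_j M L_k^\top) \leq \rank(M) \leq \subrank_i(T)$, giving $\subrank_i(S) \leq \subrank_i(T)$. I do not expect a genuine obstacle here; the only care required is bookkeeping with the transpose and leg-ordering conventions in the flattening and slice-span identities, and observing that the ``slice rank $\leq 1$ is preserved under restriction'' step is precisely the flattening-rank case already handled.
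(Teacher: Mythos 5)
Your proof is correct: the paper itself states this lemma without proof (introducing it only with ``it is also easy to see''), and your argument --- transitivity of $\geq$ for the $\tensorrank$/$\subrank$ characterizations via $\langle r\rangle$, and the fact that the flattenings and slice spans of $(L_1\otimes L_2\otimes L_3)T$ are obtained from those of $T$ by left/right matrix multiplication (and taking linear combinations), which cannot increase matrix rank --- is exactly the routine verification the authors omit. No gaps; the only hairline point is that a slice-rank-one summand may map to zero under restriction, which you implicitly handle since a zero summand contributes nothing to the slice-rank decomposition of $S$.
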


\subsubsection*{Kronecker product and asymptotic tensor parameters}
Asymptotic tensor parameters are defined by ``regularizing'' a tensor parameter over large powers under the Kronecker product. For any two tensors 
\begin{align*}
T &= \sum_{i,j,k} T_{i,j,k}\, e_i \otimes e_j \otimes e_k \in \FF^{n_1} \otimes \FF^{n_2} \otimes \FF^{n_3}\\
S &= \sum_{a,b,c} S_{a,b,c} \, e_a\otimes e_b \otimes e_c \in \FF^{m_1} \otimes \FF^{m_2} \otimes \FF^{m_3}
\end{align*}
the Kronecker product $T \boxtimes S \in \FF^{n_1 m_1} \otimes \FF^{n_2 m_2} \otimes \FF^{n_3 m_3}$ is defined as
\[
T \boxtimes S = \sum_{i,j,k} \sum_{a,b,c} T_{i,j,k} S_{a,b,c}\, (e_i \otimes e_{a}) \otimes (e_j \otimes e_{b}) \otimes (e_k \otimes e_{c}).
\]
We define $T^{\boxtimes n}$ as the product of $n$ copies of $T$. The asymptotic tensor rank is defined as $\asymprank(T) = \lim_{n \to \infty} \tensorrank(T^{\boxtimes n})^{1/n}$, which, because of sub-multiplicativity of $\tensorrank$, equals the infimum $\inf_{n} \tensorrank(T^{\boxtimes n})^{1/n}$ (by Fekete's lemma). The asymptotic subrank is defined as $\asympsubrank(T) = \lim_{n \to \infty} \subrank(T^{\boxtimes n})^{1/n}$, which, because of super-multiplicativity of the subrank $\subrank$, equals the supremum $\sup_{n} \subrank(T^{\boxtimes n})^{1/n}$. Over the complex numbers, the asymptotic slice rank is defined as $\asympslicerank(T) = \lim_{n\to\infty} \slicerank(T^{\boxtimes n})^{1/n}$. By the characterization in \cite{MR4495838} this limit exists. Over other fields, we generally do not know whether the limit $\lim_{n\to\infty} \slicerank(T^{\boxtimes n})^{1/n}$ exists, except for oblique tensors. When we do not know whether the limit exists we will instead consider the liminf and refer to this as the ``asymptotic slice rank''.\footnote{The results we have over finite fields for asymptotic slice rank hold equally well using liminf or limsup.}

\section{Max-rank of slice spans and asymptotic subrank lower bound}\label{sec:max-rank}

In this section we will prove a lower bound on the asymptotic subrank.
The main goal is to obtain a lower bound on the asymptotic subrank of concise tensors that only depends on the dimensions of the tensor and that moreover grows as the dimensions grow. Unlike previous lower bound methods (like the laser method mentioned in the introduction) our lower bound does not rely on any special structure of the tensor. Indeed the property of being concise is very mild as any tensor can be made concise by embedding it in the smallest tensor space that it fits in.  %

Concretely we will prove that the asymptotic subrank of a concise tensor is at least the cube root of the smallest dimension of the tensor. We will give two proofs of this in this paper, one in this section and another one in \autoref{subsec:cube-root-via-pivots} (which uses different ideas which we then use to get stronger results for symmetric tensors).

The proof of the asymptotic subrank lower bound in this section relies on the notion of max-rank of a matrix subspace (\autoref{subsec:maxrank}). 
We prove that there is a strong relation between the max-ranks of slice spans of any tensor. Namely, the product of every pair of them is large (\autoref{subsec:QiQj}). This result, combined with known facts about matrix multiplication tensors, leads to the asymptotic subrank bound (\autoref{subsec:lb-mamu}).

\subsection{Max-rank of matrix subspaces and slice spans of a tensor}\label{subsec:maxrank}

    We denote by $\FF^{n_1} \otimes \FF^{n_2}$ the space of $n_1 \times n_2$ matrices over $\FF$.
    We define the \emph{max-rank} of any matrix subspace $\mathcal{A} \subseteq \FF^{n_1} \otimes \FF^{n_2}$ as
    \[
\maxrank(\mathcal{A}):=\max\{\rank(A):A\in\mathcal{A}\}.
    \]
For any tensor $T = \sum_{i,j,k} T_{i,j,k}\, e_i \otimes e_j \otimes e_k \in \FF^{n_1} \otimes \FF^{n_2} \otimes \FF^{n_3}$ we define the matrices %
\begin{align*}
T^{(1)}_i &= \sum_{j,k} T_{i,j,k}\, e_j \otimes e_k \quad (i \in [n_1])\\
T^{(2)}_j &= \sum_{i,k} T_{i,j,k}\, e_i \otimes e_k \quad (j \in [n_2])\\
T^{(3)}_k &= \sum_{i,j} T_{i,j,k}\, e_i \otimes e_j \quad (k \in [n_3]).
\end{align*}
For every $\ell \in [3]$, we call $\{T^{(\ell)}_i : i \in [n_\ell] \}$ the \emph{$\ell$-slices} of $T$. %
We define the $\ell$-slice span of $T$ as the matrix subspace
$\mathcal{A}_\ell = \linspan \{T^{(\ell)}_i : i \in [n_\ell] \}$
and denote its max-rank by
\[
\subrank_\ell(T) = \maxrank(\mathcal{A}_\ell).
\]
We note that the flattening rank $\tensorrank_\ell(T)$ defined earlier, is equal to the dimension of the subspace $\mathcal{A}_\ell$.

\subsection{Lower bound on products of max-ranks of slice spans}\label{subsec:QiQj}

%

%

%

%

In this section we prove the aforementioned fundamental property of the max-ranks of the slice spans of a tensor, showing that they cannot all be small. This will be a crucial ingredient for proving the asymptotic subrank lower bound afterwards.

\begin{theorem}\label{th:uncertainty-principle-n}
    Let $T \in \FF^{n_1} \otimes \FF^{n_2} \otimes \FF^{n_3}$ be concise. Let ${\ell_1}, {\ell_2}, {\ell_3} \in [3]$ be distinct. Suppose that $|\FF|>n_{\ell_3}$. Then 
    \[
    \subrank_{\ell_1}(T)\subrank_{\ell_2}(T) \geq n_{\ell_3}.
    \]
\end{theorem}

A corollary of this is that there are two distinct directions with large max-rank:

\begin{corollary} %
\label{cor:twoLargeRankDirs}
    Let $T\in\FF^{n_1}\otimes\FF^{n_2}\otimes\FF^{n_3}$ be concise.
    Suppose $n_1 \leq n_2 \leq n_3$. Suppose that $|\FF| > n_3$. 
    Then there are distinct ${\ell_1},{\ell_2}\in[3]$ such that
    \begin{align*}\subrank_{\ell_1}(T) &\geq \sqrt{n_3},\\
    \subrank_{\ell_2}(T) &\geq \sqrt{n_1}.
    \end{align*}
\end{corollary}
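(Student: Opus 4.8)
The plan is to deduce the corollary directly from \autoref{th:uncertainty-principle-n} by a short case analysis. Write $q_\ell := \subrank_\ell(T)$ for $\ell \in [3]$. First I would observe that the field-size hypothesis of \autoref{th:uncertainty-principle-n} is satisfied for every choice of the ``target'' direction: since $n_1 \le n_2 \le n_3$ and $|\FF| > n_3$, we have $|\FF| > n_k$ for all $k \in [3]$. Applying \autoref{th:uncertainty-principle-n} three times (once for each choice of $k$) therefore yields
\[
q_1 q_2 \ge n_3, \qquad q_1 q_3 \ge n_2, \qquad q_2 q_3 \ge n_1.
\]

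Next, from the first inequality $q_1 q_2 \ge n_3$ at least one of $q_1, q_2$ is at least $\sqrt{n_3}$; call that index $a \in \{1,2\}$ and let $b$ be the other index of $\{1,2\}$, so that $q_a \ge \sqrt{n_3}$. It remains to find a direction different from $a$ whose max-rank is at least $\sqrt{n_1}$; the natural candidates are $b$ and $3$. The inequality among the three displayed ones that involves exactly the pair $\{b,3\}$ gives $q_b q_3 \ge n_1$: indeed if $b = 2$ this is $q_2 q_3 \ge n_1$ directly, and if $b = 1$ it is $q_1 q_3 \ge n_2 \ge n_1$. Hence $\max(q_b, q_3) \ge \sqrt{n_1}$; let $j \in \{b, 3\}$ attain this maximum. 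Since $a \notin \{b, 3\}$, the indices $i := a$ and $j$ are distinct, and $\subrank_i(T) \ge \sqrt{n_3}$, $\subrank_j(T) \ge \sqrt{n_1}$, as required.

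I do not anticipate a genuine obstacle here: all the real work sits in \autoref{th:uncertainty-principle-n}, and the corollary is a purely combinatorial consequence of the three pairwise product bounds it provides. The only points requiring minor care are (a) checking that the field-size condition $|\FF| > n_k$ indeed holds in each application, which follows because $n_3$ is the largest dimension, and (b) keeping track of which of the three product inequalities is the relevant one in each branch of the case split.
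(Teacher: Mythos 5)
Your proposal is correct and follows essentially the same route as the paper: both derive the three product inequalities $\subrank_1(T)\subrank_2(T)\ge n_3$, $\subrank_2(T)\subrank_3(T)\ge n_1$, $\subrank_1(T)\subrank_3(T)\ge n_2\ge n_1$ from \autoref{th:uncertainty-principle-n} and then conclude by the same short case analysis. The only cosmetic difference is bookkeeping: you single out the index $b$ complementary to the one achieving $\sqrt{n_3}$, while the paper phrases it as ``either $\subrank_3(T)\ge\sqrt{n_1}$ or both $\subrank_1(T),\subrank_2(T)\ge\sqrt{n_1}$''.
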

\begin{proof}%
    We have by \autoref{th:uncertainty-principle-n} that
    \begin{align*}
    \subrank_1(T) \subrank_2(T) &\geq n_3\\
    \subrank_2(T) \subrank_3(T) &\geq n_1\\
    \subrank_1(T) \subrank_3(T) &\geq n_2 \geq n_1.
    \end{align*}
    Thus one of $\subrank_1(T)$ and $\subrank_2(T)$ is at least $\sqrt{n_3}$, and either $\subrank_3(T)$ is at least~$\sqrt{n_1}$ or both $\subrank_1(T)$ and $\subrank_2(T)$ are at least $\sqrt{n_1}$.
\end{proof}

We will now work towards the proof of \autoref{th:uncertainty-principle-n}.

For matrices $A \in \FF^{n\times m}$ and $B \in \FF^{n \times k}$ we denote by $[A; B] \in \FF^{n \times (m+k)}$ the concatenation of $A$ and $B$. For any matrix $A$ we denote by $A|_{[d]}$ the matrix consisting of the first $d$ columns of $A$. We denote by $\col(A)$ the span of the columns of $A$.

\begin{lemma}\label{lem:colspan}
    Let $A \in \FF^{n\times m}$ and $B \in \FF^{n \times k}$. There is a nonzero 
    polynomial $p\in \FF[x_{1,1},x_{1,2},,\ldots,x_{i,j},\dots,x_{k,k}]$ of degree $s = \rank([A;B]) - \rank(A)$ such that for any matrix $U \in \FF^{k\times k}$ satisfying $p(U)\neq 0$, we have that
    \[
    \col([A;B]) = \col([A; (BU)|_{[s]}]).
    \]
\end{lemma}
\begin{proof}
    Let $r = \rank([A;B])$. Then, there is an $r\times r$ submatrix of $[A;B]$ with rank~$r$ induced by $r-s$ columns of~$A$ and~$s$ columns of~$B$.
    Let~$R\subseteq [n]$ be the set of row indices of this submatrix, and let $C_A\subseteq[m], C_B\subseteq \{m+1,\dots,m+k\}$ be the respective column indices of~$A$ and~$B$ in this submatrix.
    Let $p(X)$ be the determinant of the submatrix of $[A;BX]$ induced by the rows in~$R$ and columns in $C_A\cup \{m+1,\dots,m+s\}$, where $X = (x_{i,j})_{i,j=1}^k$ is a matrix of variables.
    
    We claim that~$p$ is nonzero. There is a permutation matrix~$V\in \FF^{k \times k}$ that swaps the columns in~$C_B$ with the columns in $\{m+1,\dots, m+s\}$. Since the submatrix of $[A;BV]$ induced by the rows in~$R$ and columns in $C_A\cup \{m+1,\dots,m+s\}$ is then of full rank, it follows that~$p(V)\ne 0$. 
        
    Let $U \in \FF^{k \times k}$ be a matrix such that $p(U) \neq 0$. 
    Then we have $\rank([A; (BU)|_{[s]}]) \geq r$. From $\col([A; (BU)|_{[s]}])\subseteq \col([A;B])$ and $\rank([A;B]) = r$, we find the equality $\col([A;B]) = \col([A; (BU)|_{[s]}])$.
\end{proof}




\begin{lemma}\label{lem:max-rank-induction}
    Let $A_1, \ldots, A_m \in \FF^{n \times k}$ be matrices. Suppose $|\FF| > \rank([A_1; \cdots; A_m])$. Let 
    \[
    s_i = \rank([A_1; \cdots ; A_{i}]) - \rank([A_1; \cdots ; A_{i-1}]).
    \]
    There is a matrix $U \in \FF^{k \times k}$ such that
\begin{equation}\label{eq:cspan2}
    \col([(A_1 U)|_{[s_1]} ; (A_2 U)|_{[s_2]} ; \cdots ; (A_m U)|_{[s_m]}]) = \col([A_1; \cdots ; A_m]).
\end{equation}
\end{lemma}

\begin{proof}
We begin by showing that there exists a matrix $U \in \FF^{k \times k}$ such that for all $i \in [m]$ we have 
\begin{equation}\label{eq:cspan1}
    \col([A_1 ; A_2 ; \cdots ; A_{i-1} ; (A_i U)|_{[s_i]}]) = \col([A_1; \cdots ; A_i]).
\end{equation}
From Lemma~\ref{lem:colspan}, we get that for each $i\in [m]$, there exists a nonzero~$k^2$-variable polynomial $p_i(X)\in \FF[x_{1,1},x_{1,2},\dots,x_{k,k}]$ of degree~$s_i$ such that for any $U\in \FF^{k\times k}$ satisfying~$p_i(U)\ne 0$,
\[
    \col([A_1 ; A_2 ; \cdots ; A_{i-1} ; (A_i U)|_{[s_i]}]) = \col([A_1; \cdots ; A_i]).
    \]
Let $q(X)=  p_1(X)\cdots p_m(X)$.
This is a nonzero polynomial of degree $s_1 + \cdots +s_m = \rank([A_1; \cdots; A_m])<|\FF|$.
It follows from the Schwartz--Zippel lemma \cite{DEMILLO1978193, schwartz1980fast, zippel1979probabilistic} that there is a~$U\in \FF^{k\times k}$ such that~$q(U) \ne 0$.
This is to say that for all~$i\in[m]$ simultaneously,~$p_i(U)\ne 0$, which implies the existence of the desired matrix~$U$.

Let~$U\in \FF^{k\times k}$ be a matrix so that~\eqref{eq:cspan1} holds for each $i\in [m]$.
Next, we show by induction on~$i\in[m]$ that~$U$ also satisfies~\eqref{eq:cspan2}.
The base case is $\col((A_1 U)|_{[s_1]}) = \col(A_1)$,
which follows directly.
For the induction step, we assume
\begin{equation}\label{eq:ih}
    \col([A_1 ; A_2 ; \cdots ; A_i]) = \col([(A_1U)|_{[s_1]} ; \cdots ; (A_iU)|_{[s_i]}]).
\end{equation}
We now use that if $\col(A) = \col(B)$, then $\col([A ; C]) = \col([B;C])$ holds for any matrices $A,B,C$ with the same number of rows. Applying this to \eqref{eq:ih} gives
\begin{equation}\label{eq:x}
    \col([A_1 ; A_2 ; \cdots ; A_i; (A_{i+1} U)|_{[s_{i+1}]}]) = \col([(A_1U)|_{[s_1]} ; \cdots ; (A_iU)|_{[s_i]}; (A_{i+1} U)|_{[s_{i+1}]}]).
\end{equation}
From \eqref{eq:cspan1} we know that 
\begin{equation}\label{eq:sz}
    \col([A_1; \cdots ; A_{i+1}]) = \col([A_1 ; A_2 ; \cdots ; A_i ; (A_{i+1} U)|_{[s_{i+1}]}]).
\end{equation}
Then \eqref{eq:x} and \eqref{eq:sz} together give
    \[
    \col([A_1 ; A_2 ; \cdots ; A_i; A_{i+1}]) = \col([(A_1U)|_{[s_1]} ; \cdots ; (A_iU)|_{[s_i]}; (A_{i+1}U)|_{[s_{i+1}]}]).
    \]
This proves the claim by induction.
\end{proof}

\begin{proof}[Proof of \autoref{th:uncertainty-principle-n}]
We will prove $n_1 \leq \subrank_2(T) \subrank_3(T)$.
Let $A_1, \ldots, A_{n_3} \in \FF^{n_1 \times n_2}$ be the $3$-slices of $T$.
Let~$U$ be as in \autoref{lem:max-rank-induction}.
Let 
\[
s_i = \rank([A_1; \cdots ; A_{i}]) - \rank([A_1; \cdots ; A_{i-1}]).
\]
Conciseness of~$T$ gives that $s_1 + \cdots + s_{n_3} = n_1$.
It follows from this and \autoref{lem:max-rank-induction} that the columns of the submatrices $(A_iU)|_{[s_i]}$ are linearly independent.
This implies that for each $i\in[n_3]$, 
there is a 3-slice of rank at least $s_i$,
and in particular $\subrank_3(T)\geq\max_i s_i$. %
Let $S$ be the tensor with 3-slices $A_1U, A_2U, \ldots, A_{n_3}U$. Then $\subrank_2(S) \leq \subrank_2(T)$ since the 2-slices of $S$ are linear combinations of the 2-slices of $T$. 
The matrix $M = [(A_1U)|_{[1]};\cdots ; (A_{n_3}U)|_{[1]}]$ consisting of every first column of the 3-slices of $S$ equals the first 2-slice of~$S$. Since the linearly independent columns in the 3-slices of~$S$ are flushed to the left, we find that $M$ has at least $|\{i\in [n_3]: s_i\ne 0\}|$ linearly independent columns, and thus $\rank(M) \geq |\{i\in [n_3]: s_i\ne 0\}|$,
which gives a lower bound on~$\subrank_2(S)$ and thus on $\subrank_2(T)$.
Since
\[
n_1  = s_1 + \cdots + s_{n_3}\leq |\{i\in [n_3]: s_i\ne 0\}|\, \max_i s_i \leq \subrank_2(T) \subrank_3(T),
\]
the result follows. 
\end{proof}

\newcommand{\yesTextColour}{green!60!black}
\newcommand{\noTextColour}{red!60!black}

\newcommand{\yesCellColour}{green!20}
\newcommand{\noCellColour}{red!20}

\subsection{Examples of optimality of max-rank bounds}
The following examples show that the bound in \autoref{th:uncertainty-principle-n} is tight. In particular, the right hand side cannot be super-linear in $n_k$, and in fact the leading coefficient $1$ is optimal. %

\begin{example}[{Null-algebra \cite[page~168]{strassen1991degeneration}}] \label{ex:strassenNullAlgebra}
For every $n \in \NN$,
there is a concise tensor $T \in \FF^n \otimes \FF^n \otimes \FF^n$ such that $\subrank_1(T) = \subrank_3(T) = n$ and $\subrank_2(T) = 2$. In particular, $\subrank_2(T) \subrank_3(T) = 2n$.
    To define this tensor, let
    \[T :=e_1\otimes e_1\otimes e_1 + \sum_{i=2}^n e_1\otimes e_i\otimes e_i + \sum_{i=2}^ne_i\otimes e_i\otimes e_1.\]
    We see directly that $\subrank_1(T) = \subrank_3(T) = n$. 
    Because the non-zero elements of the $2$-slices are confined to the first column and the first row, we have $\subrank_2(T)=2$. We also mention that for all $n\geq5$ the tensor $T$ has asymptotic subrank $\asympsubrank(T)=2\sqrt{n-1}$ \cite[page~169]{strassen1991degeneration}. %
\end{example}

\begin{example}%
    \label{ex:generalisedNullAlg}
    This example extends \autoref{ex:strassenNullAlgebra} (by taking $c=1$).
    Let $c$ be a constant and let $n$  be any multiple of $c$.
    There is a concise tensor $T \in \FF^{n} \otimes \FF^n \otimes \FF^n$ such that
    $\subrank_1(T)=n$, $\subrank_2(T)\leq c+1$ and $\subrank_3(T)\leq n/c+1$.
    In particular,
    \[
    \subrank_2(T)\subrank_3(T)\leq  \tfrac{(c+1)}{c} n+c+1.
    \]
    To define this tensor, let
    \begin{align*}
    T:=\sum_{i=1}^n e_1\otimes e_i\otimes e_i+&\sum_{k=1}^{c}\sum_{j=1}^{n/c}e_{(k-1)\frac{n}{c}+j}\otimes e_j\otimes e_k\\
    =\sum_{i=1}^n e_1\otimes e_i\otimes e_i+&\sum_{i=0}^{n-1} e_{i+1}\otimes e_{(i \bmod \frac{n}{c}) + 1}\otimes e_{\lfloor i / (\frac{n}{c})\rfloor+1}.
    \end{align*}
    We will prove that $\subrank_2(T)\leq c+1$ and $\subrank_3(T)\leq n/c+1$.
    Looking at direction $2$ (interpreting direction $1$ as enumerating the rows and $3$ as enumerating the columns), we can see all slices are restricted to the first row (first sum) or the first $c$ columns (the other sum) 
    for non-zero entries, so each matrix in their span, is of rank $\subrank_2(T)\leq c+1$. 
    Similarly, all slices in direction $3$ (interpreting direction $2$ as enumerating the rows) are restricted to the first $n/c$ rows or the first column, meaning $\subrank_3(T)\leq n/c+1$.

\end{example}

\begin{example}\label{ex:JopsEx}
    Let $n$ be a square.
    There is a concise tensor $T \in \FF^{n} \otimes \FF^n \otimes \FF^n$ such that for all~$\ell\in[3]$ we have
    $\sqrt{n} \leq \subrank_\ell(T)\leq 2\sqrt{n}$. In particular, for all ${\ell_1} \neq {\ell_2}$ we have $n\leq\subrank_{\ell_1}(T)\subrank_{\ell_2}(T)\leq4n$. 

    To define this tensor, let $f:[n]\to[\sqrt{n}]$ and $g:[n]\to[\sqrt{n}]$ be any two functions such that 
    the function $(f,g):[n]\to[\sqrt{n}]\times[\sqrt{n}] : i \mapsto (f(i), g(i))$ is injective and such that for every $i\in[\sqrt{n}]$ we have $f(i)=g(i)=i$. Let
    \[T:=\sum_{i=1}^{\sqrt{n}}e_i\otimes e_i\otimes e_i+\sum_{i=\sqrt{n}+1}^n e_i\otimes e_{f(i)}\otimes e_{g(i)} + e_{f(i)}\otimes e_i\otimes e_{g(i)} + e_{f(i)}\otimes e_{g(i)}\otimes e_i\]
    Then $T$ has a diagonal subtensor of size $\sqrt{n}$, so $\sqrt{n} \leq \subrank(T) \leq \subrank_\ell(T)$.
    On the other hand, $\subrank_\ell(T) \leq 2\sqrt{n}$, which can be seen using \autoref{th:Qi-SRi} 
    as the support of all its slices is confined to the first $\sqrt{n}$ rows and the first $\sqrt{n}$ columns, so $\slicerank_\ell(T)\leq 2\sqrt{n}$ (see \autoref{subsec:mincov} for the definition of $\slicerank_\ell$). From injectivity of $(f,g)$ it follows that $T$ is concise. %
\end{example}

\subsection{Lower bound on the asymptotic subrank via max-ranks and matrix multiplication tensors}\label{subsec:lb-mamu}

We will now prove a lower bound on the asymptotic subrank of concise tensors in terms of the dimensions of the tensor. We will use the max-rank inequality (\autoref{th:uncertainty-principle-n}) proven before and basic properties of the matrix multiplication tensors that we will discuss now. The main result here is the cube-root lower bound $\asympsubrank(T) \geq (\min_\ell n_\ell)^{1/3}$ for concise tensors $T \in \FF^{n_1} \otimes \FF^{n_2} \otimes \FF^{n_3}$ (\autoref{th:cube-root-lb}). With a simpler argument we can already obtain the bound $\asympsubrank(T) \geq (\min_\ell n_\ell)^{1/4}$ (\autoref{th:fourth-root}), which we will prove first. This fourth-root lower bound is already sufficient for our use in the proof of the discreteness theorem. In \autoref{sec:lower-bound-asympsubrank} we will prove an even better square-root lower bound under a symmetry assumption.

For $a,b,c \in \NN$ we denote by $\langle a,b,c\rangle \in \FF^{ab} \otimes \FF^{bc} \otimes \FF^{ca}$ the %
matrix multiplication tensor $\langle a,b,c \rangle = \sum_{i=1}^a \sum_{j=1}^b \sum_{k=1}^c e_{i,j} \otimes e_{j,k} \otimes e_{k,i}$. 
These tensors have the multiplicative property $\langle a,b,c\rangle \otimes \langle x,y,z\rangle = \langle ax, by, cz\rangle$. They are naturally related to the parameters~$\subrank_i$ as follows:

\begin{lemma}\label{lem:Qi-mamu}
Let $T$ be a tensor and $r \in \NN$. The following holds:
\begin{itemize}
\item $\subrank_1(T) \geq r$ if and only if $T \geq \langle 1, 1,r \rangle$.
\item $\subrank_2(T) \geq r$ if and only if $T \geq \langle r, 1,1 \rangle$.
\item $\subrank_3(T) \geq r$ if and only if $T \geq \langle 1, r, 1 \rangle$.  
\end{itemize}
\end{lemma}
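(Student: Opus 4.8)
The plan is to unwind the definition of $\subrank_\ell$ as the max-rank of the $\ell$-slice span and match it against the structure of the matrix multiplication tensors $\langle 1,1,r\rangle$, $\langle r,1,1\rangle$, $\langle 1,r,1\rangle$. Recall $\langle a,b,c\rangle = \sum_{i,j,k} e_{i,j}\otimes e_{j,k}\otimes e_{k,i}$. For instance $\langle 1,1,r\rangle = \sum_{k=1}^r e_{1,1}\otimes e_{1,k}\otimes e_{k,1}$, so up to identifying the singleton index sets this is the tensor with a single first leg whose unique $1$-slice is the $1\times r$ row vector $\sum_k e_1\otimes e_k$ — equivalently $\langle 1,1,r\rangle$ is (equivalent to) the tensor whose third-direction has $r$ slices, each a $1\times 1$ block, arranged so the $3$-slice span is $r$-dimensional and has an element of rank... more precisely $\langle 1,1,r\rangle \cong e_1\otimes(\sum_{k=1}^r e_k\otimes e_k)$ viewed in $\FF^1\otimes\FF^r\otimes\FF^r$, whose first flattening is a $1\times r^2$ matrix of rank $1$ and whose $\mathcal{A}_1$ is spanned by the single rank-$r$ identity matrix. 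So $\subrank_1(\langle 1,1,r\rangle) = r$. The point of the lemma is that this $\langle 1,1,r\rangle$ is in a precise sense the minimal tensor witnessing ``$\subrank_1 \geq r$''.

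First I would prove the ``only if'' direction of the first bullet. Suppose $\subrank_1(T)\geq r$, i.e. some matrix $M$ in the $1$-slice span $\mathcal{A}_1 = \linspan\{T^{(1)}_i : i\in[n_1]\}\subseteq\FF^{n_2}\otimes\FF^{n_3}$ has $\rank(M)\geq r$. Write $M = \sum_i \lambda_i T^{(1)}_i$; then applying the linear map $L_1:\FF^{n_1}\to\FF^1$ with $L_1 e_i = \lambda_i$ (to the first leg) sends $T$ to the tensor $e_1\otimes M \in \FF^1\otimes\FF^{n_2}\otimes\FF^{n_3}$. Since $\rank(M)\geq r$, there are rank-one decompositions allowing us to choose $L_2:\FF^{n_2}\to\FF^r$, $L_3:\FF^{n_3}\to\FF^r$ with $(L_2\otimes L_3)M = \sum_{k=1}^r e_k\otimes e_k$ (pick bases so $M$ restricts to an $r\times r$ identity block). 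Then $(L_1\otimes L_2\otimes L_3)T = e_1\otimes\sum_k e_k\otimes e_k$, which (after the trivial identification $e_{1,k}\leftrightarrow e_k$, $e_{k,1}\leftrightarrow e_k$) is exactly $\langle 1,1,r\rangle$, so $T\geq\langle 1,1,r\rangle$.

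For the ``if'' direction: if $T\geq\langle 1,1,r\rangle$, then by \autoref{lem:monotone-restrict} $\subrank_1(T)\geq\subrank_1(\langle 1,1,r\rangle)$, and the latter equals $r$ by the direct computation above (its unique $1$-slice is the $r\times r$ identity). The second and third bullets are identical after permuting the three tensor legs: $\langle r,1,1\rangle$ plays the role of $\langle 1,1,r\rangle$ with the roles of legs $1$ and $2$ swapped (its $2$-slice span is the line through the $r\times r$ identity in $\FF^{n_1}\otimes\FF^{n_3}$), and $\langle 1,r,1\rangle$ with legs $1$ and $3$ swapped. I would simply remark that $\subrank_2$ and $\subrank_3$ are obtained from $\subrank_1$ by the permutation action of $S_3$ on tensor legs, that restriction is $S_3$-equivariant, and that $\langle r,1,1\rangle$, $\langle 1,r,1\rangle$ are the corresponding permutations of $\langle 1,1,r\rangle$, so the two remaining equivalences follow formally from the first.

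I do not expect a genuine obstacle here; the only mild care needed is the bookkeeping to see that the tensor $e_1\otimes\sum_{k=1}^r e_k\otimes e_k$ really is equivalent (indeed isomorphic after relabelling singleton indices) to $\langle 1,1,r\rangle$ as defined via $\langle a,b,c\rangle = \sum_{i,j,k} e_{i,j}\otimes e_{j,k}\otimes e_{k,i}$, and to make sure the max-rank of the slice span is genuinely achieved by the chosen combination so that the rank-$r$ identity block can be extracted by column/row operations (which are exactly $L_2,L_3$). This is routine.
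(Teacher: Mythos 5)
Your proof is correct, and it fills in exactly the routine verification the paper leaves implicit (the paper states \autoref{lem:Qi-mamu} without proof): identify $\langle 1,1,r\rangle$ with $e_1\otimes\sum_{k=1}^r e_k\otimes e_k$, obtain the restriction by collapsing the first leg onto a max-rank slice combination and normalizing that rank-$\geq r$ matrix to an identity block, and get the converse from monotonicity of $\subrank_i$ under restriction (\autoref{lem:monotone-restrict}), with the other two bullets following by permuting legs. No gaps.
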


\begin{lemma}\label{lem:Qi-square}
    If there are distinct ${\ell_1},{\ell_2}$ such that $\subrank_{\ell_1}(T), \subrank_{\ell_2}(T) \geq r$, then $\subrank(T^{\boxtimes 2}) \geq r$.
\end{lemma}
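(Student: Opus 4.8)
The plan is to reduce the statement to restrictions between matrix multiplication tensors via \autoref{lem:Qi-mamu}. First I would note that $\subrank$ is invariant under permuting the three tensor legs, while such a permutation permutes the three quantities $\subrank_\ell$; hence it suffices to handle one choice of $\{i,j\}$, say $\{i,j\}=\{1,2\}$, and the cases $\{1,3\}$ and $\{2,3\}$ follow by the same argument after relabelling legs.

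So suppose $\subrank_1(T),\subrank_2(T)\geq r$. By \autoref{lem:Qi-mamu} this is equivalent to $T\geq\langle 1,1,r\rangle$ and $T\geq\langle r,1,1\rangle$. Taking the Kronecker product of these two restrictions — restriction is monotone under $\boxtimes$, since one simply tensors the underlying linear maps — and using the multiplicative property $\langle a,b,c\rangle\boxtimes\langle x,y,z\rangle=\langle ax,by,cz\rangle$, I would obtain
\[
T^{\boxtimes 2}=T\boxtimes T\;\geq\;\langle 1,1,r\rangle\boxtimes\langle r,1,1\rangle\;=\;\langle r,1,r\rangle .
\]
Since restriction is transitive and, by definition of subrank, the conclusion $\subrank(T^{\boxtimes 2})\geq r$ means exactly $T^{\boxtimes 2}\geq\langle r\rangle$, it then remains only to show $\langle r,1,r\rangle\geq\langle r\rangle$.

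This last point is the only computation, and it is routine: writing $\langle r,1,r\rangle=\sum_{i,k\in[r]}e_{i,1}\otimes e_{1,k}\otimes e_{k,i}$, the $r$ terms with $k=i$ use pairwise distinct standard basis vectors in each of the three legs, and the third-leg basis vectors $e_{k,i}$ with $k\neq i$ are disjoint from those; hence applying the identity to the first two legs (which are naturally $\FF^r$) and projecting the third leg onto $\linspan\{e_{i,i}:i\in[r]\}$ kills all remaining terms and leaves precisely $\langle r\rangle$. The only thing to watch is the bookkeeping of the parameters of the matrix multiplication tensors, so that in the other two cases the Kronecker products land on $\langle 1,r,r\rangle$ and $\langle r,r,1\rangle$ respectively; all three of these restrict to $\langle r\rangle$, the latter two following from $\langle r,1,r\rangle\geq\langle r\rangle$ by permuting tensor legs. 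I do not anticipate any genuine obstacle: the lemma is essentially a direct consequence of \autoref{lem:Qi-mamu} together with the multiplicativity of the matrix multiplication tensors.
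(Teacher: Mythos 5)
Your proposal is correct and follows essentially the same route as the paper: invoke \autoref{lem:Qi-mamu} to get $T\geq\langle 1,1,r\rangle$ and $T\geq\langle r,1,1\rangle$, multiply to obtain $T^{\boxtimes 2}\geq\langle r,1,r\rangle$, and project the third leg onto the diagonal to recover $\langle r\rangle$. The paper handles the other index pairs by the same argument, just as your relabelling of legs does, so there is no substantive difference.
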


\begin{proof}
Suppose $\subrank_1(T), \subrank_2(T) \geq r$ (the proof for the other cases goes the same). Then we have that
$
T \geq \langle1, 1, r\rangle =  \sum_{i=1}^r e_1 \otimes e_i \otimes e_i$ and 
$T \geq \langle r, 1, 1\rangle = \sum_{i=1}^r e_i \otimes e_1 \otimes e_i$.
We multiply these inequalities to get
$T^{\boxtimes 2} \geq \langle r,1,r\rangle= \sum_{i,j=1}^r e_i \otimes e_j \otimes (e_i \otimes e_j).
$
We apply the projection $(e_i \otimes e_j) \mapsto \delta_{i=j} e_i$ to the third tensor leg to get
$
T^{\boxtimes 2} \geq \sum_{i=1}^r e_i \otimes e_i \otimes e_i
$
which proves the claim.
\end{proof}

We are now ready to prove the fourth-root lower bound on the asymptotic subrank. After that we will prove the stronger cube-root lower bound (\autoref{th:cube-root-lb}).

\begin{theorem}\label{th:fourth-root}
    Let $T \in \FF^{n_1} \otimes \FF^{n_2} \otimes \FF^{n_3}$ be concise. Then $\asympsubrank(T) \geq (\min_\ell n_\ell)^{1/4}$.
\end{theorem}
\begin{proof}
The proof is an application of \autoref{th:uncertainty-principle-n}.
That theorem only holds over large enough fields however. 
Since the asymptotic subrank does not change under field extension \cite[Theorem~3.10]{strassen1988asymptotic}, we may assume that our base field $\FF$ is infinite (by working over the algebraic closure of the original field) so that the field size condition of \autoref{th:uncertainty-principle-n} is satisfied. %
We use that there are distinct $\ell_1, \ell_2 \in [3]$ such that 
$\subrank_{\ell_1}(T), \subrank_{\ell_2}(T) \geq \sqrt{\min_\ell n_\ell}$ (\autoref{cor:twoLargeRankDirs}).
Then $\asympsubrank(T^{\boxtimes 2}) \geq \sqrt{\min_\ell n_\ell}$ (\autoref{lem:Qi-square}) which gives the claim. %
\end{proof}

It turns out that the above construction is not optimal. We will now give a slightly more elaborate construction that leads to the better cube-root bound:

\begin{theorem}\label{th:cube-root-lb}
    Let $T \in \FF^{n_1} \otimes \FF^{n_2} \otimes \FF^{n_3}$ be concise. Then $\asympsubrank(T) \geq (\min_{\ell} n_\ell)^{1/3}$.
\end{theorem}

The proof of \autoref{th:cube-root-lb} makes use of basic properties of matrix multiplication tensors which we discuss in the following two lemmas. 

\begin{lemma}\label{lem:cube-mamu}
    $T^{\boxtimes 3} \geq \langle \subrank_2(T),\subrank_3(T),\subrank_1(T) \rangle.$
\end{lemma}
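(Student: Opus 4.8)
The plan is to combine three one-directional restrictions of $T$, one obtained from each of the three max-ranks, via the Kronecker product, and then to read off the resulting matrix multiplication tensor using multiplicativity. Write $a = \subrank_2(T)$, $b = \subrank_3(T)$, $c = \subrank_1(T)$.

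First, I would invoke \autoref{lem:Qi-mamu} three times, once in each direction, to get the three restrictions
\[
T \geq \langle a,1,1\rangle, \qquad T \geq \langle 1,b,1\rangle, \qquad T \geq \langle 1,1,c\rangle .
\]
(The first comes from $\subrank_2(T)\geq a$, the second from $\subrank_3(T)\geq b$, the third from $\subrank_1(T)\geq c$.) Next I would use that restriction is preserved under the Kronecker product — if $S \geq S'$ and $R \geq R'$ then $S\boxtimes R \geq S'\boxtimes R'$, since one may tensor together the two triples of linear maps witnessing the restrictions — to conclude
\[
T^{\boxtimes 3} \;\geq\; \langle a,1,1\rangle \boxtimes \langle 1,b,1\rangle \boxtimes \langle 1,1,c\rangle .
\]
Finally, applying the multiplicativity rule $\langle a_1,b_1,c_1\rangle \boxtimes \langle a_2,b_2,c_2\rangle = \langle a_1a_2,\, b_1b_2,\, c_1c_2\rangle$ stated above (twice), the right-hand side equals $\langle a,b,c\rangle = \langle \subrank_2(T),\subrank_3(T),\subrank_1(T)\rangle$, which is the claim.

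There is essentially no obstacle here; the only points requiring a moment's care are bookkeeping ones — making sure the legs are paired up so that the three ``thin'' factors land in three different directions (so that the $1$'s multiply against the nontrivial parameters and the product does not collapse), and citing that $\boxtimes$ respects restriction. Both are immediate from the definitions already recorded in the preliminaries.
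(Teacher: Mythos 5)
Your proposal is correct and follows the same route as the paper: apply \autoref{lem:Qi-mamu} once in each direction to obtain $T \geq \langle \subrank_2(T),1,1\rangle$, $T \geq \langle 1,\subrank_3(T),1\rangle$, $T \geq \langle 1,1,\subrank_1(T)\rangle$, and multiply the three restrictions with the Kronecker product using $\langle a,b,c\rangle \boxtimes \langle x,y,z\rangle = \langle ax,by,cz\rangle$. Your extra bookkeeping remarks (compatibility of $\boxtimes$ with restriction, placement of the thin factors) are exactly the implicit steps in the paper's one-line proof.
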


\begin{proof}
    From \autoref{lem:Qi-mamu} we have 
    $T \geq \langle \subrank_2(T),1,1\rangle$, 
    $T \geq \langle 1,\subrank_3(T),1\rangle$, and  
    $T \geq \langle 1,1,\subrank_1(T)\rangle$.
We multiply these inequalities using the Kronecker product to get the claim.
\end{proof}

\begin{lemma}\label{lem:asympsub-min}
    $\asympsubrank(T^{\boxtimes 3}) \geq \min_{{\ell_1},{\ell_2}} \subrank_{\ell_1}(T) \subrank_{\ell_2}(T).$
\end{lemma}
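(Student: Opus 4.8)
The plan is to reduce the statement, via \autoref{lem:cube-mamu}, to the classical evaluation of the asymptotic subrank of matrix multiplication tensors. First I would record that asymptotic subrank is monotone under restriction: if $S \geq S'$ then, tensoring the restriction maps, $S^{\boxtimes n} \geq (S')^{\boxtimes n}$ for all $n$, so $\subrank(S^{\boxtimes n}) \geq \subrank((S')^{\boxtimes n})$ by \autoref{lem:monotone-restrict}, and taking $n$-th roots and letting $n\to\infty$ gives $\asympsubrank(S) \geq \asympsubrank(S')$. Applying this to the restriction $T^{\boxtimes 3} \geq \langle \subrank_2(T), \subrank_3(T), \subrank_1(T)\rangle$ from \autoref{lem:cube-mamu} yields
\[
\asympsubrank(T^{\boxtimes 3}) \geq \asympsubrank\bigl(\langle \subrank_2(T), \subrank_3(T), \subrank_1(T)\rangle\bigr).
\]

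It then remains to use that for all $a,b,c \in \NN$ one has $\asympsubrank(\langle a,b,c\rangle) = \min\{ab,bc,ca\}$. The upper bound is immediate since the three leg dimensions of $\langle a,b,c\rangle \in \FF^{ab}\otimes\FF^{bc}\otimes\FF^{ca}$ are $ab$, $bc$, $ca$, and $\asympsubrank$ of any tensor is at most each of its flattening ranks. The lower bound is a theorem of Strassen on tight tensors \cite{strassen1991degeneration}: the matrix multiplication tensor is tight, and the combinatorial-degeneration lower bound for the asymptotic subrank of a tight tensor, evaluated on the uniform distribution on its support $\{((i,j),(j,k),(k,i)) : i \in [a],\, j \in [b],\, k \in [c]\}$, gives exactly $2^{\min\{\log_2 ab,\, \log_2 bc,\, \log_2 ca\}} = \min\{ab,bc,ca\}$. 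As in the proof of \autoref{th:fourth-root}, the field size plays no role here, since $\asympsubrank$ is invariant under field extension \cite[Theorem~3.10]{strassen1988asymptotic}. Plugging $(a,b,c) = (\subrank_2(T), \subrank_3(T), \subrank_1(T))$ into the displayed inequality gives
\[
\asympsubrank(T^{\boxtimes 3}) \geq \min\{\subrank_1(T)\subrank_2(T),\, \subrank_2(T)\subrank_3(T),\, \subrank_3(T)\subrank_1(T)\},
\]
which is the claim.

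The only non-routine ingredient is the lower bound $\asympsubrank(\langle a,b,c\rangle) \geq \min\{ab,bc,ca\}$, and this is where I would expect the real difficulty to sit were one to insist on a self-contained argument: a direct proof needs (a special case of) the machinery behind the laser method, i.e.\ a combinatorial degeneration of a high power of $\langle a,b,c\rangle$ onto a near-maximal diagonal, which in turn relies on Salem--Spencer-type sets to absorb the subexponential loss when passing to powers. Since this is a well-known classical fact I would simply cite it. Everything else --- monotonicity of $\asympsubrank$ under restriction, the trivial flattening upper bound, and assembling the three symmetric cases --- is straightforward.
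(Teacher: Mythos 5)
Your proposal is correct and follows essentially the same route as the paper: combine \autoref{lem:cube-mamu} with monotonicity of $\asympsubrank$ under restriction and the known evaluation $\asympsubrank(\langle a,b,c\rangle)=\min\{ab,bc,ca\}$, which the paper simply cites (to \cite[Equation~4.6]{strassen1988asymptotic}) just as you do. The extra detail you give about how that classical evaluation is proved (tightness, Salem--Spencer-type constructions) is accurate but not needed beyond the citation.
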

\begin{proof}
    It is known that for every $a,b,c \in \NN$ the asymptotic subrank of the matrix multiplication tensor $\langle a,b,c\rangle$ is given by $\asympsubrank(\langle a,b,c\rangle) = \min \{ab, bc, ac\}$ \cite[Equation~4.6]{strassen1988asymptotic}. We combine this with \autoref{lem:cube-mamu} to get the claim.
\end{proof}

\begin{proof}[Proof of \autoref{th:cube-root-lb}]
As in the proof of \autoref{th:fourth-root} we may assume that the base field $\FF$ is infinite.
We use $\asympsubrank(T^{\boxtimes 3}) \geq \min_{{\ell_1},{\ell_2}} \subrank_{\ell_1}(T) \subrank_{\ell_2}(T)$ (\autoref{lem:asympsub-min}) and we use that for every distinct ${\ell_1}, {\ell_2}, {\ell_3} \in [3]$ we have $\subrank_{\ell_1}(T)\subrank_{\ell_2}(T) \geq n_{\ell_3}$ (\autoref{th:uncertainty-principle-n}) to obtain
$\asympsubrank(T^{\boxtimes 3}) \geq \min_{{\ell_1},{\ell_2}} \subrank_{\ell_1}(T) \subrank_{\ell_2}(T) \geq \min_{\ell_3} n_{\ell_3}$. %
\end{proof}

\section{Min-rank of slice spans and asymptotic subrank of narrow tensors} %
\label{sec:lower-bound-imbalanced}

In \autoref{sec:max-rank} %
we lower bounded the asymptotic subrank of a concise tensor by a growing function in the smallest local dimension (\autoref{th:cube-root-lb}). Namely, we showed that for every concise $n_1 \times n_2 \times n_3$ tensor the asymptotic subrank is at least $(\min_\ell n_\ell)^{1/3}$. 
We will use that result as an ingredient to prove the discreteness theorem in \autoref{sec:discreteness}. However, that bound will not be enough as we must also understand how asymptotic subrank behaves when the smallest dimension $\min_\ell n_\ell$ is fixed (we elaborate on this point in \autoref{sec:discreteness}). This is the topic of this section.

To briefly explain what is missing, we note that it is a priori possible that there is a sequence of concise $c \times n_1 \times n_2$ tensors with $c$ constant and at least one of $n_1, n_2$ growing\footnote{We note that in this context, since we consider only concise tensors, if $c$ is constant and at least one of $n_1, n_2$ is growing, then both $n_1$ and $n_2$ must be growing by \autoref{cl:concise-formats}.}
such that the asymptotic subrank (or asymptotic slice rank) has an accumulation point. Our result will rule this out in the appropriate sense. In particular, we prove (\autoref{theorem:concise-cn1n2-assubrank-c}) that for every $c \geq 2$ there is a number $N(c)$ such that for every $n_1, n_2$ with $\max \{n_1, n_2\} > N(c)$,
for every concise $c \times n_1 \times n_2$ tensor, the asymptotic subrank equals $c$. This will be a crucial ingredient in our proof of the discreteness theorem in \autoref{sec:discreteness}.

Whereas the proof of the asymptotic subrank lower bound in \autoref{sec:max-rank} relied for a large part on the max-rank of matrix subspaces, here we will need the related notion of min-rank of a matrix subspace.
In \autoref{subsec:minrank} we introduce the notion of min-rank, which we will use throughout. In \autoref{subsec:minrank-subrank} we prove that if the slices of a concise tensor have large min-rank, then the tensor has large subrank. In \autoref{subsec:maxrank-minrank} we see how from matrices with large max-rank we can construct matrices with large min-rank that are moreover diagonal. In \autoref{subsec:supermult} we then prove that min-rank is super-multiplicative when at least one of the matrix subspaces is diagonal, and finally we generalize the super-multiplicativity property to make it work with the output of \autoref{subsec:maxrank-minrank}.

In \autoref{subsec:main-unb} we combine the lemmas on min-rank to prove the main result (\autoref{theorem:concise-cn1n2-assubrank-c}). Finally, as an extra, in \autoref{subsec:dimtwo}, we determine precisely what happens in the smallest case $c = 2$, and in particular determine that $N(2) = 2$.

To avoid potential confusion we remark that throughout we will call any matrix $A\in \FF^{n_1} \otimes \FF^{n_2}$ (that is not necessarily square) \emph{diagonal} if its support, which is defined as $\supp(A) := \{(i,j) : A_{i,j} \neq 0\}$, is a subset of $\{(i,i): 1\leq i\leq \min\{n_1, n_2\}\}$.

\subsection{Min-rank of matrix subspaces and its properties}\label{subsec:minrank} %
\newcommand{\StrExC}{T_{\textnormal{NA},c,n}}
The min-rank of a matrix subspace is simply the smallest rank of any non-zero matrix in the space.

\begin{definition}
    We define the \emph{min-rank} of a matrix subspace $\mathcal{A} \subseteq \FF^{n_1} \otimes \FF^{n_2}$ as
    \[
    \minrnk(\mathcal{A}):=\min\{\rank(A):A\in\mathcal{A}\setminus\{0\}\}.
    \]
    For any collection of matrices $A_1,
    \ldots,A_c \in \FF^{n_1} \otimes \FF^{n_2}$ we define \[
    \minrnk(A_1,\dots,A_c):=\minrnk(\linspan\{A_1,\dots,A_c\}).
    \]
\end{definition}

We will use the following straightforward equivalence. 

\begin{lemma}\label{lem:minrank-nontrivial-lin-comb}
    Let $A_1,\dots,A_c \in \FF^{n_1} \otimes \FF^{n_2}$ be matrices. %
    For every $r\geq1$, the following two statements are equivalent:
    \begin{enumerate}[\upshape(a)]
        \item The matrices $A_1,\dots,A_c$ are linearly independent and $\minrnk(A_1,\dots,A_c)\geq r$.
        \item For every $u \in \FF^c \setminus \{0\}$ we have $\rank(u_1 A_1 + \cdots + u_c A_c) \geq r$.
    \end{enumerate} 
\end{lemma}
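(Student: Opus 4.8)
The plan is to prove both implications directly from the definitions, using only elementary linear algebra. This is a straightforward equivalence, so the main task is careful bookkeeping rather than a deep idea; I do not expect a genuine obstacle here.

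First I would prove (a) $\Rightarrow$ (b). Assume $A_1,\dots,A_c$ are linearly independent and $\minrnk(A_1,\dots,A_c)\geq r$. Let $u\in\FF^c\setminus\{0\}$. Then $A := u_1 A_1 + \cdots + u_c A_c$ is an element of $\linspan\{A_1,\dots,A_c\}$. Because the $A_i$ are linearly independent and $u\neq 0$, the matrix $A$ is nonzero. Hence $A\in\linspan\{A_1,\dots,A_c\}\setminus\{0\}$, and by definition of min-rank we get $\rank(A)\geq \minrnk(A_1,\dots,A_c)\geq r$, which is exactly (b).

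Next I would prove (b) $\Rightarrow$ (a). Assume that for every $u\in\FF^c\setminus\{0\}$ we have $\rank(u_1 A_1 + \cdots + u_c A_c)\geq r$. First, the $A_i$ are linearly independent: if some nontrivial linear combination $u_1 A_1 + \cdots + u_c A_c$ were zero, then taking that $u\neq 0$ would force $0 = \rank(u_1A_1+\cdots+u_cA_c)\geq r\geq 1$, a contradiction. Second, to bound the min-rank, take any nonzero $A\in\linspan\{A_1,\dots,A_c\}$, and write $A = u_1 A_1 + \cdots + u_c A_c$ for some coefficient vector $u\in\FF^c$; since $A\neq 0$ we have $u\neq 0$, so by hypothesis $\rank(A)\geq r$. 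As $A$ was an arbitrary nonzero element of the span, $\minrnk(A_1,\dots,A_c)\geq r$. Combining the two points gives (a), completing the proof.

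Since both directions are immediate from unwinding the definition of $\minrnk$ and the observation that nonzero linear combinations correspond precisely to nonzero coefficient vectors (given linear independence, or forcing linear independence from the rank bound), there is essentially no hard part; the only thing to be careful about is not to assume linear independence in the (b) $\Rightarrow$ (a) direction before deriving it, and to handle the degenerate coefficient vector $u=0$ correctly in each step.
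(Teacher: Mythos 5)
Your proof is correct: the paper states this lemma without proof, calling it a ``straightforward equivalence,'' and your argument is exactly the intended direct unwinding of the definition of $\minrnk$, with the right care taken to derive linear independence (rather than assume it) in the (b)~$\Rightarrow$~(a) direction.
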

\subsubsection{Large min-rank implies large subrank}\label{subsec:minrank-subrank}

In this section we prove that if the slices of a tensor have large min-rank, then the tensor has large subrank, in the following sense:


\begin{restatable}{lemma}{lemD}
\label{cla-high-rank-high-subrank}
Let $T\in\FF^{n_1}\otimes\FF^{n_2}\otimes\FF^{n_3}$ be a tensor and $A_1,\ldots,A_c\in\FF^{n_1}\otimes\FF^{n_2}$ be $c$ 3-slices of it. %
If $A_1,\ldots,A_c$ are linearly independent and  $\minrnk(A_1,\dots,A_c)\geq 2c(c-1)$,
then $\subrank(T)\geq c$.    
\end{restatable}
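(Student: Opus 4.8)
The plan is to extract a unit tensor $\langle c \rangle$ from $T$ by choosing linear maps $L_1 : \FF^{n_1} \to \FF^c$, $L_2 : \FF^{n_2} \to \FF^c$, $L_3 : \FF^{n_3} \to \FF^c$ such that $(L_1 \otimes L_2 \otimes L_3) T = \sum_{i=1}^c e_i \otimes e_i \otimes e_i$. First I would use $L_3$ to project onto the span of the $c$ chosen $3$-slices $A_1, \dots, A_c$; this reduces the problem to the tensor $T'$ with those $c$ slices, i.e.\ we may assume $n_3 = c$ and the $3$-slices are exactly $A_1, \dots, A_c$. The task becomes: find $L_1, L_2$ and an invertible change of basis $u^{(1)}, \dots, u^{(c)}$ of $\FF^c$ (the third direction) so that $L_1 (\sum_k u^{(i)}_k A_k) L_2^{\mathsf T} = e_i e_i^{\mathsf T}$ for each $i$ — that is, simultaneously each chosen combination of slices is sent to a distinct rank-one diagonal matrix, with the ``pivot'' rows/columns all distinct across $i$.

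The key step is a greedy/inductive construction that builds up the rank-one pieces one at a time, exploiting the large min-rank. Suppose inductively that we have already arranged that slices $A_1, \dots, A_m$ (after some invertible recombination, still called $A_i$) have rank-one ``pivots'' at coordinates $(p_1, q_1), \dots, (p_m, q_m)$, all distinct, meaning each $A_i$ for $i \le m$ has nonzero $(p_i, q_i)$-entry while $A_j$ for $j \le m$, $j \ne i$ vanishes there. To handle $A_{m+1}$: subtract off suitable multiples of $A_1, \dots, A_m$ to kill its entries at all the rows $p_1, \dots, p_m$ and columns $q_1, \dots, q_m$ used so far — this replaces $A_{m+1}$ by $A_{m+1} - \sum_{i \le m} \lambda_i A_i$, a nonzero vector in $\linspan\{A_1, \dots, A_c\}$ (nonzero by linear independence), hence of rank $\ge \minrnk \ge 2c(c-1)$. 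Crucially, zeroing out $m \le c-1$ rows and $m \le c-1$ columns drops the rank by at most $2(c-1)$, so the resulting matrix restricted to the remaining rows and columns still has rank $\ge 2c(c-1) - 2(c-1) = 2(c-1)^2 > 0$ (for $c \ge 2$), so it has a nonzero entry outside the used rows/columns; pick that as the new pivot $(p_{m+1}, q_{m+1})$. A symmetric adjustment (subtracting a multiple of the new $A_{m+1}$ from each earlier $A_i$) restores the property that the earlier slices vanish at the new pivot, without disturbing the old pivots. After $c$ steps we have $c$ slices each with a distinct isolated pivot entry; then $L_1$ projects onto rows $\{p_1, \dots, p_c\}$, $L_2$ onto columns $\{q_1, \dots, q_c\}$ (with a relabeling and scaling to make the pivot entries equal to $1$), and we read off $\langle c \rangle$, giving $\subrank(T) \ge c$.

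I expect the main obstacle to be bookkeeping the rank accounting carefully enough to see that $2c(c-1)$ is the right threshold: at step $m+1$ we have killed at most $2m \le 2(c-1)$ rows and columns total, but one must be careful whether the ``restore old pivots'' adjustments can force additional rank loss in later steps, or whether pivots can collide. I would organize this by maintaining as an invariant that at every stage the working submatrix (current slice with all previously-used rows and columns deleted) still has rank at least $\minrnk$ minus twice the number of used coordinates, and check that the worst case is exactly the last slice, where $2(c-1)$ rows and columns are deleted, leaving rank $\ge 2c(c-1) - 2(c-1) \ge 2 > 0$. The linear-independence hypothesis is used precisely to guarantee that every recombination we form is a nonzero element of the span, so that the min-rank bound applies to it.
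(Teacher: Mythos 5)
Your induction maintains too weak an invariant, and the final step does not follow from it. At the end of your construction all you know is that, after relabeling, each slice $A_i$ is nonzero at its own pivot $(p_i,q_i)$ and zero at the other pivots $(p_j,q_j)$, $j\neq i$. Projecting onto the rows $\{p_1,\dots,p_c\}$ and columns $\{q_1,\dots,q_c\}$ then yields a $c\times c\times c$ tensor whose $k$-th slice has a nonzero $(k,k)$ entry and zeros at the other diagonal positions, but whose entries at the mixed positions $(p_i,q_j)$, $i\neq j$, are completely uncontrolled --- nothing in your procedure clears the remainder of the pivot rows and columns (your slice recombinations only kill the $m$ pivot entries, not whole rows and columns). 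Such a tensor need not be $\langle c\rangle$ and need not even have subrank $c$: for $c=2$ the slices $A_1=\left(\begin{smallmatrix}1&1\\0&0\end{smallmatrix}\right)$, $A_2=\left(\begin{smallmatrix}0&0\\1&1\end{smallmatrix}\right)$ satisfy your pivot pattern with pivots $(1,1)$ and $(2,2)$, yet the corresponding tensor has second flattening rank $1$ and hence subrank $1$. So ``we read off $\langle c\rangle$'' is a genuine gap, not bookkeeping; a further warning sign is that your rank accounting would only ever need $\minrnk(A_1,\dots,A_c)\geq 2(c-1)+1$, far weaker than the stated hypothesis.

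The expensive part of the proof is precisely clearing the pivot rows and columns in \emph{all} slices, and this cannot be done by recombining slices alone (you have $c-1$ free coefficients per slice against entire rows and columns of entries), while row and column operations act on all slices simultaneously. The paper's proof therefore maintains the stronger invariant that after step $i$ the whole $i$-th row and $i$-th column of every slice vanish except for the single entry $(A_i)_{i,i}=1$; to achieve this it must, for each of the other $c-1$ slices, remove up to one row and one column (a leftover nonzero entry of $A_j$ in the cleared row can only be disposed of by deleting the column containing it), which costs up to $2(c-1)$ in min-rank per induction step. Summed over $c$ steps this is exactly where the hypothesis $\minrnk(A_1,\dots,A_c)\geq 2c(c-1)$ is consumed. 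Repairing your argument by strengthening the invariant in this way essentially reproduces the paper's proof.
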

%
%
%
%

%
%
%

%
%
%
%


%

%

%

%
%

%
%
%

\begin{proof}
%
For $i \in [c]$ and $\alpha \in \{0,1\}$, we say that an $m_1\times m_2$ matrix $A$ has property $P(i,\alpha)$ if $A_{i,i} = \alpha$ and $A_{j, i} = 0$ for all $j \in [m_1]\setminus \{i\}$ and $A_{i,j} = 0$ for all $j \in [m_2] \setminus \{i\}$.

We may assume $n_3 = c$.
We will be changing $T$ with a series of restrictions after which the slices $A_1, \ldots, A_c$ will satisfy: for every $j \in [c]$ the matrix $A_j$ has property $P(j, 1)$ and for every $k \in [c] \setminus \{j\}$ the matrix $A_k$ has property $P(j, 0)$. Then $A_1, \ldots, A_c$ restricted to the first $c$ rows and columns form the unit tensor $\langle c\rangle$.

We will carry out the above inductively, obtaining by induction on $0 \leq i \leq c$, that 
\begin{itemize}
\item $A_1, \ldots, A_c$ are linearly independent and $\minrank(A_1, \ldots, A_c) \geq (c-i)\cdot 2(c-1)$
\item for all $1 \leq j \leq i$, 
\begin{itemize}
    \item $A_j$ has property $P(j, 1)$
    \item for all $k \in [c] \setminus \{j\}$, $A_k$ has property $P(j, 0)$.
\end{itemize}  
\end{itemize}
Throughout the induction, the restrictions we use may change the matrices $A_1, \ldots, A_c$, but we will keep denoting them by these names. In particular, the matrices $A_1, \ldots, A_c$ may become smaller. Namely, in every step of the induction we will remove at most $c-1$ rows and at most $c-1$ columns from all matrices (simultaneously). Note that the case $i=0$ is the assumption of the lemma.

Let $1 \leq i\leq c$ and assume the induction claim holds for $i-1$. 
In the following, all operations on rows and columns we apply to all matrices at the same time.
By linear independence, $A_i$ is nonzero. For $j < i$, row $j$ and column $j$ of $A_i$ are zero (by induction hypothesis), so there are $j,k \geq i$ with $(A_i)_{j,k}$ nonzero.
Swap rows $i$ and $j$ and swap columns~$i$ and $k$ and scale so that $(A_i)_{i,i} = 1$. By adding scalar multiples of row $i$ to rows $j > i$ and adding scalar multiples of column $i$ to columns $k > i$ we make row $i$ and column~$i$ in $A_i$ zero. Then $A_i$ has property $P(i,1)$. 

Now we go over $j \in [c] \setminus \{i\}$ one by one. By adding a scalar multiple of $A_i$ to $A_j$ we get $(A_j)_{i,i} = 0$. Suppose the $i$th row of $A_j$ is nonzero. Then there is a $k>i$ such that $(A_j)_{i,k}$ is nonzero (by induction hypothesis). Add scalar multiples of the $k$th column to the other columns $k'>i$ to make $(A_j)_{i,k'}$ zero. Then remove the $k$th column. Then the $i$th row of $A_j$ is zero. Repeat the same procedure for the $i$th column of $A_j$ to make it zero, possibly removing one row. As a result, $A_j$ has property $P(i,0)$.

For every $j \in [c] \setminus \{i\}$, we removed at most one row and one column, so in total we removed at most $2(c-1)$ rows or columns. Removing a row or column decreases matrix rank by at most one, so (\autoref{lem:minrank-nontrivial-lin-comb})
for any $u\in\FF^{c}\setminus\{0\}$ we have $\rank(u_1A_1+\cdots+u_cA_c)\geq
(c-(i-1))\cdot 2(c-1) - 2(c-1) = (c-i)\cdot 2(c-1)$. 
If $i < c$, this is nonzero so $A_1, \ldots, A_c$ are linearly independent and $\minrank(A_1,\ldots,A_c) \geq (c-i)\cdot 2(c-1)$ (\autoref{lem:minrank-nontrivial-lin-comb}). (This is clear for $i=c$ since then the matrices form the tensor $\langle c\rangle$.) 
This proves the induction claim for $i$.
\end{proof}

\subsubsection{From large max-rank to large min-rank and diagonal}\label{subsec:maxrank-minrank}

In this section we discuss how from any collection of matrices that contains \emph{at least one} matrix of large rank (large max-rank) we can produce a collection of slightly smaller matrices such that \emph{all} nonzero matrices in their span have large rank (large min-rank). Moreover, the resulting matrices are all diagonal, which will be crucial in our application.

For any matrix $A\in \FF^{n_1} \otimes \FF^{n_2}$ %
and subsets $I\subseteq [n_1]$, $J\subseteq [n_2]$ we denote by $A_{I\times J}$ the submatrix of $A$ with rows indexed by $I$ and columns indexed by~$J$.  For any subset $X \subseteq \FF^{n_1} \otimes \FF^{n_2}$ we denote by $X_{I \times J}$ the set $\{A_{I\times J} : A \in X\}$.

\begin{restatable}{lemma}{lemBx}
\label{lem:minrk_diag}
\label{claim:matrices-into-diagonal-with-large-rank}
For every $c\in\NN$, there is an $\eps(c)>0$ such that the following holds.
Let $\mathcal A\subseteq \FF^{n_1\times n_2}$ be a matrix subspace of dimension~$c$.
Then, there exist $U\in \GL(\FF^{n_1})$, $V\in \GL(\FF^{n_2})$, a subset $J\subseteq [\min_\ell n_\ell]$, a basis $B_1,\dots,B_c$ for the space $U\mathcal AV$ and $b\in[c]$ such that:
\begin{itemize}
    \item $(B_1)_{J\times J},\dots,(B_b)_{J\times J}$  are diagonal and linearly independent;
    \item $\minrank\big((B_1)_{J\times J}, \ldots, (B_b)_{J\times J}\big) \geq \eps(c) \maxrank(\mathcal{A})$; 
    \item $(B_{b+1})_{J\times J} = \cdots = (B_c)_{J\times J} = 0$.
\end{itemize}
\end{restatable}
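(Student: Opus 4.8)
The plan is to reduce the statement, using one maximal‑rank element of $\mathcal A$, to a claim about subspaces of \emph{square} matrices containing the identity, and then to prove that claim by induction on $c$. First I would pick $A\in\mathcal A$ with $\rank(A)=r:=\maxrank(\mathcal A)$ and choose $U_0\in\GL(\FF^{n_1})$ and $V_0\in\GL(\FF^{n_2})$ so that $U_0AV_0$ has the $r\times r$ identity in its top‑left corner and zeros elsewhere. Passing to the principal submatrix indexed by $[r]$ in every matrix of $U_0\mathcal AV_0$ gives a subspace $\mathcal B\subseteq\FF^{r\times r}$ with $I_r\in\mathcal B$ and $\dim\mathcal B\le c$. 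This is a faithful reduction because any operation on $\mathcal B$ of the shape ``conjugate by $P\in\GL(\FF^r)$, then restrict to the principal submatrix on $K\subseteq[r]$'' pulls back to an operation on $\mathcal A$ of the required form: extending $P$ by the identity on the remaining coordinates and setting $U=P^{\mathrm{ext}}U_0$, $V=V_0(P^{-1})^{\mathrm{ext}}$, one has $K\subseteq[r]\subseteq[\min_i n_i]$ and $(U\mathcal AV)_{K\times K}=(P\mathcal BP^{-1})_{K\times K}$, with a matrix of $\mathcal A$ restricting to $0$ (resp.\ to $I_{|K|}$) exactly when the corresponding matrix of $\mathcal B$ does. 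So it suffices to prove: \emph{for every $c$ there is $\delta(c)>0$ such that for every $\mathcal B\subseteq\FF^{r\times r}$ with $I_r\in\mathcal B$ and $\dim\mathcal B\le c$ there are $P\in\GL(\FF^r)$ and $K\subseteq[r]$ with $(P\mathcal BP^{-1})_{K\times K}$ a space of diagonal matrices of min‑rank at least $\delta(c)\,r$}; given this one takes $b:=\dim\big((P\mathcal BP^{-1})_{K\times K}\big)$ and a basis $B_1,\dots,B_c$ of $U\mathcal AV$ whose first $b$ members map onto a basis of that diagonal space and whose last $c-b$ members span the kernel of the restriction to $K\times K$.

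The engine of the induction is the following single‑matrix statement: for any $C\in\FF^{r\times r}$ there are $P_0\in\GL(\FF^r)$ and $K_0\subseteq[r]$ with $|K_0|=\Omega(r)$ such that $D:=(P_0CP_0^{-1})_{K_0\times K_0}$ is diagonal and, moreover, \emph{either} $D$ is a scalar matrix \emph{or} $\minrank(\linspan\{I_{|K_0|},D\})\ge|K_0|/2$. To see it, conjugate $C$ to its primary rational canonical form $\bigoplus_\ell C_\ell$: on a companion block of an irreducible factor of degree at least two the principal submatrix on the indices of one parity is the zero matrix and has at least half the block's size, while on a Jordan block $J_\lambda^{(e)}$ the ``every other index'' set of size $\lceil e/2\rceil$ has principal submatrix $\lambda I$. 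Assembling these block by block gives a diagonal $D$ whose diagonal entries lie in $\{0\}$ together with the $\FF$‑eigenvalues of $C$; if one value of $D$ occurs on more than half of $K_0$ one restricts $K_0$ to the locus of that value (making $D$ scalar), and otherwise the displayed min‑rank bound holds. Over an algebraically closed field one may use the Jordan form directly; and since this lemma is only ever used to bound the asymptotic subrank, which does not change under field extension, one may in fact assume $\FF=\overline{\FF}$ throughout.

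For the induction, the base case $\dim\mathcal B\le1$ is trivial ($P=I_r$, $K=[r]$). For $\dim\mathcal B=c\ge2$, pick $C\in\mathcal B$ that is not a scalar multiple of $I_r$ and apply the engine, obtaining $\mathcal B':=(P_0\mathcal BP_0^{-1})_{K_0\times K_0}\ni I_{|K_0|}$ with $\maxrank(\mathcal B')=|K_0|=\Omega(r)$ and the diagonal element $D$. If $D$ is scalar then $\dim\mathcal B'\le c-1$ and the inductive hypothesis applied to $\mathcal B'$ finishes (with a constant‑factor loss). If $D$ is not scalar, write $\mathcal B'=\linspan\{I,D\}\oplus\mathcal F$ with $\dim\mathcal F\le c-2$, apply the inductive hypothesis to $\linspan\{I\}+\mathcal F$ (dimension $\le c-1$, contains $I_{|K_0|}$) to get a conjugation $Q$ and a set $K_1$ on which all of $\linspan\{I\}+\mathcal F$ is diagonal, and then restrict further to a set $K_2\subseteq K_1$ on which $(QDQ^{-1})_{K_1\times K_1}$ — which is again diagonalizable and hence has a diagonal principal submatrix of linear size — is also diagonal; on $K_2$ every element of $(Q\mathcal B'Q^{-1})_{K_2\times K_2}$ is then diagonal. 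The remaining work is to track how the min‑rank survives this chain of restrictions, using that a diagonal value repeating too heavily never actually hurts (restricting to its locus only raises the rank, by turning an element into a scalar) and balancing the index sets so that no single repeated value can be annihilated by a low‑rank combination; carrying this through yields an admissible $\delta(c)$ (doubly‑exponentially small in $c$, which is harmless here).

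The hardest part is exactly this last bookkeeping: ``peeling off'' one direction at a time by conjugation does not commute with passing to a submatrix — a conjugation chosen to diagonalize a fresh direction can undo the diagonality of the directions already processed — so the induction must be carried on the whole partially‑diagonalized configuration at once, and one must show that a common principal submatrix of linear size keeping \emph{everything} diagonal always exists while simultaneously keeping every repeated‑value count, equivalently the min‑rank, under control through all the nested restrictions. Over a general field the primary rational canonical form plays the role the Jordan form would play over $\overline{\FF}$.
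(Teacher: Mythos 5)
Your reduction to a square space $\mathcal B\subseteq\FF^{r\times r}$ containing $I_r$ is sound, and your single-matrix ``engine'' via canonical forms is plausible, but the induction step breaks at exactly the point your last paragraph flags. In Case 2 you apply the inductive hypothesis to $\linspan\{I\}+\mathcal F$, obtain a conjugation $Q$ and a set $K_1$, and then assert that $(QDQ^{-1})_{K_1\times K_1}$ ``is again diagonalizable and hence has a diagonal principal submatrix of linear size.'' Both halves of this are false in general: a principal submatrix of a diagonalizable matrix need not be diagonalizable (the $3\times 3$ cyclic permutation matrix is diagonalizable over $\CC$, yet its principal submatrix on the first two indices is a nilpotent Jordan block), and a diagonalizable matrix need not contain any literally diagonal principal submatrix of size larger than $1$ (any matrix all of whose off-diagonal entries are nonzero has none, and generic conjugates of a non-scalar diagonal matrix are of this kind --- which is precisely your situation, since $Q$ is chosen to suit $\mathcal F$ and is not under your control as far as $D$ is concerned). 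Since you rightly forbid yourself a fresh conjugation at this stage (it would undo the diagonality of $\linspan\{I\}+\mathcal F$), no mechanism remains to make the image of $D$ diagonal; and the min-rank bookkeeping you defer (``carrying this through yields an admissible $\delta(c)$'') is never carried out. So the central existence statement --- a common principal submatrix of linear size on which \emph{all} of $\mathcal B'$ is diagonal, with min-rank controlled --- is asserted rather than proved; rescuing the plan would require strengthening the induction so that a prescribed already-diagonal element stays diagonal, which is exactly the statement you acknowledge needing.

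The paper avoids this obstruction by never invoking similarity or eigenvalues at all: ``diagonal'' in the lemma is a support condition on (possibly rectangular) matrices under the two-sided $\GL\times\GL$ action, and the transformations used in \autoref{lem:diagonalize-ind2} are row/column eliminations anchored at a single row and a single column of the matrix being processed. After deleting those two indices they provably fix \emph{every} diagonal matrix, i.e.\ $(UBV)_{I\times I}=B_{I\times I}$ for diagonal $B$, so previously processed slices stay diagonal for free; iterating loses only a factor $3$ in the index set per matrix (\autoref{lem:diagonalize}), giving a principal block of size $3^{-(c-1)}\maxrank(\mathcal A)$ on which everything is simultaneously diagonal. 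The passage from max-rank to min-rank is then a purely combinatorial pigeonhole argument on supports of the diagonals (\autoref{lem:minsupp}, \autoref{lem:large-minrank}), and the basis splitting is \autoref{lem:basis-extension-matrices}. Committing to conjugation is what forces you into canonical forms, field-extension caveats, and the unresolved simultaneity problem; the extra freedom of independent left and right multiplications is what makes the lemma elementary.
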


Before discussing the proof in detail, as a motivation and ingredient for later, we observe using the pigeonhole principle that any concise tensor has at least one slice of large rank. 

\begin{restatable}{lemma}{lemA}
\label{cla:high-rk}
Let $T\in\FF^{n_1}\otimes\FF^{n_2}\otimes\FF^{c}$ be a concise tensor. Then $T$ has at least one 3-slice 
of rank at least $\max\{n_1,n_2\}/c$.    
\end{restatable}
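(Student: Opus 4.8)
\textbf{Proof plan for Lemma \ref{cla:high-rk}.}
The plan is to exploit conciseness through the two flattenings $T_1$ and $T_2$, and to observe that each of these is, up to reordering columns, a horizontal concatenation of the $c$ matrices formed by the $3$-slices of $T$. First I would set up notation: write $A_1,\dots,A_c \in \FF^{n_1}\otimes\FF^{n_2}$ for the $3$-slices of $T$, i.e.\ $A_k = T^{(3)}_k$ with $(A_k)_{i,j}=T_{i,j,k}$. Then the flattening $T_1 \in \FF^{n_1}\otimes(\FF^{n_2}\otimes\FF^{c})$, viewed as an $n_1\times(n_2 c)$ matrix, is exactly the block row $[\,A_1\mid A_2\mid\cdots\mid A_c\,]$, and similarly $T_2$, viewed as an $n_2\times(n_1 c)$ matrix, is $[\,A_1^{\mathsf T}\mid A_2^{\mathsf T}\mid\cdots\mid A_c^{\mathsf T}\,]$.

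Next I would use subadditivity of rank under column concatenation: for any matrices $M_1,\dots,M_c$ with the same number of rows, $\rank([M_1\mid\cdots\mid M_c]) \leq \sum_{k=1}^c \rank(M_k) \leq c\cdot\max_k \rank(M_k)$. Applying this to $T_1$ and invoking conciseness ($\rank(T_1)=n_1$) gives $n_1 \leq c\cdot\max_k \rank(A_k)$, hence $\max_k \rank(A_k)\geq n_1/c$. Applying the same estimate to $T_2$, together with $\rank(A_k^{\mathsf T})=\rank(A_k)$ and conciseness ($\rank(T_2)=n_2$), gives $\max_k \rank(A_k)\geq n_2/c$. Combining the two inequalities yields $\max_k \rank(A_k)\geq \max\{n_1,n_2\}/c$, so some $3$-slice has rank at least $\max\{n_1,n_2\}/c$, as claimed.

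There is no real obstacle here; the statement is a one-line pigeonhole consequence of conciseness once the flattenings are identified with concatenations of the slices, and the only things to be careful about are the bookkeeping of which grouping of tensor legs produces which block matrix and the use of $\rank(A_k)=\rank(A_k^{\mathsf T})$.
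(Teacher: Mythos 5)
Your proposal is correct and is essentially the paper's argument: the paper also concatenates the $3$-slices into an $n_1\times cn_2$ matrix, uses conciseness to get rank $n_1$, and applies a pigeonhole/subadditivity step to find a slice of rank at least $n_1/c$ (handling $\max\{n_1,n_2\}$ by assuming $n_1\geq n_2$ rather than by repeating the argument with the second flattening as you do). The two routes differ only in this cosmetic bookkeeping, so nothing further is needed.
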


\begin{proof}
Suppose $n_1\geq n_2$. %
Concatenate the 3-slices $A_1, \ldots, A_c$ of $T$ to an $n_1\times cn_2$ matrix. Conciseness of $T$ implies that the rank of this matrix is $n_1$ so there are $n_1$ linearly independent columns. By the pigeonhole principle, there is an $i \in [c]$ such that $A_i$ contains at least $n_1/c$ linearly independent columns.   
\end{proof}

\paragraph{Diagonalization.}
The proof of \autoref{lem:minrk_diag} has three parts.
In this first part, we show in \autoref{lem:diagonalize} that any collection of matrices can be made simultaneously diagonal on a large principal submatrix in such a way that the property of having large max-rank is preserved. 
For $a,b \in [n]$ let $E_{a,b} \in \FF^n \otimes \FF^n$ be the matrix with coefficient one at index $(a,b)$ and coefficient zero elsewhere.

\begin{lemma}\label{lem:diagonalize-ind2}
    Let $A \in \FF^{n \times n}$. Then there are $U,V \in \GL(\FF^n)$ and $i,j \in \{2,\ldots, n\}$ such that for $I = [n] \setminus\{i,j\}$,
    \begin{enumerate}[\upshape (i)]
    \item $(UAV)_{I\times I}  = \alpha_{1,1} E_{1,1} + \sum_{a,b \geq 2} \alpha_{a,b} E_{a,b}$ for some $\alpha_{a,b} \in \FF$;
    \item for every diagonal matrix $B\in \FF^{n\times n}$, we have that $(U B V)_{I\times I} = B_{I\times I}$.
    \end{enumerate}
\end{lemma}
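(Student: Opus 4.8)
The plan is to construct $U$, $V$, and the index set $I$ by a sequence of elementary row and column operations that we are careful to make "diagonal-friendly." The starting point is that $A \in \FF^{n\times n}$ is a single matrix, so we first bring it to a controlled form by a change of basis. Let $r = \rank(A)$. I would like to perform row and column operations that clear everything except a small normal form, but with the crucial constraint of condition (ii): whatever $U$ and $V$ we end up using, when we restrict $UBV$ to $I\times I$ for a \emph{diagonal} $B$, we must recover $B_{I\times I}$ exactly. This forces $U$ and $V$ to act essentially trivially on the index set $I$ — more precisely, the submatrices $U_{I\times I}$ and $V_{I\times I}$ must be identity-like, and the "mixing" of coordinates must be confined to the two removed indices $i,j$ (and to index $1$, which is allowed to interact with the block $\{2,\dots,n\}$ only in the special off-pattern position $E_{1,1}$ in the target).

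The key idea is that we only need to "spend" a constant number of coordinates to absorb all the damage. First I would handle the generic/easy situation: if the $(1,1)$ entry pattern already cooperates, only a couple of coordinates need to be sacrificed. Concretely: pick one coordinate, say index $2$, to serve as a "pivot sink" for row operations, and another, say index $j$, as a "pivot sink" for column operations; by adding multiples of row $2$ to rows $\geq 2$ (and similarly for columns), we can clear the off-diagonal structure of $A$ within the block $\{2,\dots,n\}$ at the cost of only touching rows/columns indexed by $1$, $2$, and $j$. The operations "add a multiple of row $2$ to row $k$" for $k \in I$ are realized by a $U$ whose restriction to $I\times I$ is the identity (they change row $k$ using row $2 \notin I$), so condition (ii) is preserved for diagonal $B$ — adding $\lambda$ times the $2$nd diagonal entry's row contributes only to column $2$, which lies outside $I$. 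The same reasoning applies on the column side with sink index $j$. After these operations, the block $(UAV)_{I\times I}$ restricted to rows and columns $\geq 2$ should be diagonal (all its rank has been pushed into rows/columns $1, 2, j$ or become genuinely diagonal), and the only surviving "bad" entry inside $I\times I$ is the $(1,1)$ position, which is exactly the exceptional term $\alpha_{1,1}E_{1,1}$ allowed in (i). This is why the statement singles out $E_{1,1}$: index $1$ is the one coordinate we cannot cleanly sacrifice (we need $i,j \geq 2$), so we tolerate a single rank-one defect there.

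The main obstacle I anticipate is bookkeeping the interaction between condition (i) and condition (ii) simultaneously: we must show that \emph{the same} $U,V$ that diagonalize $A$ on $I\times I$ (up to the $E_{1,1}$ correction) also act as the identity on $I\times I$ for \emph{all} diagonal matrices. The cleanest way is to insist that $U$ is a product of elementary matrices of the form $\mathrm{Id} + \lambda E_{k,2}$ and $\mathrm{Id} + \lambda E_{k,j}$ with $k$ arbitrary, together with permutations/scalings supported on $\{1,2,j\}$, and dually for $V$ with sink columns in $\{1,2,i\}$ — then $U_{I\times I}$ and $V_{I\times I}$ are literally identity matrices and $(UBV)_{I\times I} = U_{I\times I} B_{I\times I} V_{I\times I} = B_{I\times I}$ follows since $B$ is diagonal so $UB$ only has the sink-column of $U$ acting, which is projected away. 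I would verify carefully that the two sink indices needed on the row side and the two on the column side can be chosen to be a single pair $\{i,j\} \subseteq \{2,\dots,n\}$ (we have freedom since $n$ can be assumed large enough; for small $n$ the claim is vacuous or trivial as the block is too small to require diagonalization). Finally I would double-check that the claimed normal form in (i) is exactly what these operations produce: everything in rows/columns $\{1,i,j\}$ is unconstrained, the block on $\{2,\dots,n\}\setminus\{i,j\}$ is diagonal, and the single cross-term at $(1,1)$ is permitted.
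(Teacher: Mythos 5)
Your proposal aims at the wrong target, and this is not just a matter of presentation. Condition (i) asserts that $(UAV)_{I\times I}$ is supported on the single entry $(1,1)$ together with the block of entries $(a,b)$ with $a,b\ge 2$; that is, the entries $(1,b)$ for $b\ge 2$ and $(a,1)$ for $a\ge 2$ of the restricted matrix must vanish, while the lower-right block is completely unconstrained (the $\alpha_{a,b}$ are arbitrary; nothing is diagonalized at this stage --- that only happens later, by iterating this lemma in \autoref{lem:diagonalize-ind}). You read it the other way around: you try to make the block on $\{2,\ldots,n\}$ diagonal and declare ``everything in rows/columns $\{1,i,j\}$ unconstrained.'' The actual task --- clearing the first row and the first column inside $I\times I$ --- is never addressed in your plan. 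Moreover, the step you do propose is impossible in general: adding multiples of one fixed ``sink'' row to the other rows and of one fixed sink column to the other columns changes $A$ by a matrix of rank at most $2$, so it cannot turn an arbitrary $(n-1)\times(n-1)$ block into a diagonal one. The paper's proof is much more modest and ties the sacrificed indices to $A$: it picks $j\in\{2,\ldots,n\}$ with $A_{1,j}\neq 0$ (if any exists) and uses column $j$ as the pivot to kill the remaining entries $A_{1,\ell}$, $\ell\ge 2$; then picks $i\in\{2,\ldots,n\}\setminus\{j\}$ with $A_{i,1}\neq 0$ and uses row $i$ to kill the remaining entries $A_{\ell,1}$; the two indices removed from $I$ are exactly these pivots. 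Your sinks are chosen independently of $A$ (index $2$), so even for the correct goal they could not serve as pivots.

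For condition (ii) your instinct is right (the non-identity parts of $U$ and $V$ must live in the sacrificed column and row), but the identity you invoke, $(UBV)_{I\times I}=U_{I\times I}B_{I\times I}V_{I\times I}$, does not follow merely from $U_{I\times I}=V_{I\times I}=\id$: the restriction of a product is not the product of restrictions, and the cross terms must be shown to vanish. The paper does this by writing $U=\id+U_0$ and $V=\id+V_0$ with $U_0$ supported on column $i$ and $V_0$ on row $j$, so that for diagonal $B$ one has $U_0BV_0=0$ (using $i\neq j$) and $(U_0B)_{I\times I}=(BV_0)_{I\times I}=0$. In summary: your overall architecture (sacrifice two indices, keep $U,V$ identity-like off them) matches the paper's, but the proposal misstates the normal form in (i), relies on an impossible diagonalization step, omits the actual clearing of the first row and column with pivots chosen from the support of $A$, and leaves a gap in the verification of (ii).
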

\begin{proof}
    Let $X = \{\ell \in \{2, \ldots, n\} : A_{1, \ell} \neq 0\}$. If $X = \emptyset$, then we let $V = \id$ and let~$j$ be any element of $[n]$. Otherwise, we let $j$ be any element of $X$, and we let $V$ be the matrix that, when multiplied with $A$ from the right, adds multiples of the $j$th column of $A$ to the $\ell$th column of $A$ such that the $(1, \ell)$ entry becomes zero, for every $\ell \in X \setminus \{j\}$%
    . In other words, we let
    \[
    V = \id - \sum_{\mathclap{\ell \in X \setminus \{j\}}} A_{1,j}^{-1} A_{1,\ell} E_{j, \ell}.
    \]
    Let $Y = \{\ell \in \{2, \ldots, n\} : A_{\ell, 1} \neq 0\}$. If $Y= \emptyset$ or $Y = \{j\}$, then we let $U = \id$ and we let $i$ be any element of $[n] \setminus \{j\}$. Otherwise, let $i$ be any element of $Y \setminus \{j\}$, and we let $U$ be the matrix that, when multiplied with $A$ from the left, add multiples of the $i$th row of $A$ to the $\ell$th row of $A$ such that the $(\ell, 1)$ entry becomes zero, for every $\ell \in Y \setminus \{i,j\}$.
    In other words,
    \[
    U = \id - \sum_{\mathclap{\ell \in Y \setminus \{i,j\}}} A_{i,1}^{-1} A_{\ell,1} E_{\ell, i}.
    \]
    Then claim (i) is satisfied. 
    To see that (ii) is satisfied we note that $U = \id + U_0$ and $V = \id + V_0$, where~$U_0$ supported by the $i$th column and of~$V_0$ is supported by the $j$th row.
    Then, for any diagonal matrix $B \in \FF^{n\times n}$,
    \begin{align*}
        UBV &= B + U_0B + BV_0 + U_0BV_0.
    \end{align*}
    Since $i\ne j$, it follows that $U_0BV_0 = 0$.
    Moreover, $U_0B$ is supported by the $i$th column and $BV_0$ is supported by the $j$th row, so $(U_0B)_{I \times I}=0$ and $(BV_0)_{I \times I}=0$, and so $(UBV)_{I \times I} = B_{I\times I}$.
\end{proof}

\begin{lemma}\label{lem:diagonalize-ind}
    Let $A \in \FF^{n \times n}$. Then there are $U,V \in \GL(\FF^n)$ and $I \subseteq [n]$ such that $|I| \geq n/3$ and 
    \begin{enumerate}[\upshape (i)]
    \item $(UA V)_{I \times I}$ is diagonal;
    \item for every diagonal matrix $B \in \FF^{n \times n}$,  we have that $(U B V)_{I \times I} = B_{I \times I}$.
\end{enumerate}
\end{lemma}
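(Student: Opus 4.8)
The plan is to iterate the one-step diagonalization Lemma \ref{lem:diagonalize-ind2}. That lemma takes a single matrix $A\in\FF^{n\times n}$ and, at the cost of deleting two row/column indices $i,j\in\{2,\dots,n\}$ (both $\neq 1$), produces invertible $U,V$ such that on the surviving index set $I'=[n]\setminus\{i,j\}$ the transformed matrix $(UAV)_{I'\times I'}$ is supported only on the diagonal plus possibly the $(1,1)$ entry — in other words, $(UAV)_{I'\times I'}$ is already diagonal once we remember that $1\in I'$ and the off-diagonal entries in row $1$ and column $1$ have been cleared. Crucially, the transformation fixes every diagonal matrix on $I'\times I'$ (property (ii)), which is exactly what lets us compose steps without destroying earlier progress.

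First I would set up the induction on $n$. For small $n$ (say $n\le 3$) the statement is trivial: take $U=V=\id$ and $I$ a single index, since $|I|=1\ge n/3$ and any $1\times 1$ matrix is diagonal. For the inductive step, apply Lemma \ref{lem:diagonalize-ind2} to $A$ to obtain $U_0,V_0$ and indices $i,j$; after relabelling, the principal submatrix indexed by $I'=[n]\setminus\{i,j\}$ of $U_0 A V_0$ is diagonal (using that row/column $1$ have been cleared and the only allowed non-diagonal slot $(1,1)$ is in fact diagonal). Now restrict attention to the index block $I'$, which has size $n-2$, and recurse: there is no single matrix left to diagonalize, but that is fine — once a matrix is diagonal on a block, \emph{any} further transformations that fix diagonal matrices on sub-blocks leave it diagonal there. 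So the recursion is not "diagonalize $A$ again" but rather: we have already achieved the diagonal property on $I'$; we are done, there is nothing more to do for $A$ itself.

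Wait — that observation actually shows the lemma is even easier than an iteration: a single application of Lemma \ref{lem:diagonalize-ind2} already yields a diagonal principal submatrix of size $n-2$, and $n-2\ge n/3$ for $n\ge 3$. So the clean plan is: if $n\le 3$ use the trivial $|I|=1$ choice; if $n\ge 3$, apply Lemma \ref{lem:diagonalize-ind2} once to get $U,V$ and $i,j$, set $I=[n]\setminus\{i,j\}$ (so $|I|=n-2\ge n/3$), and verify (i) that $(UAV)_{I\times I}$ is diagonal — which follows because the lemma gives $(UAV)_{I\times I}=\alpha_{1,1}E_{1,1}+\sum_{a,b\ge2}\alpha_{a,b}E_{a,b}$, and deleting $i,j$ together with clearing row/column $1$ forces the surviving support onto the diagonal — and (ii) that diagonal matrices are preserved, which is literally property (ii) of Lemma \ref{lem:diagonalize-ind2}.

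The only mild subtlety, and the point I would be most careful about, is the bookkeeping in claim (i): Lemma \ref{lem:diagonalize-ind2} as stated still allows non-diagonal entries $\alpha_{a,b}$ for $a,b\ge 2$ with $a\neq b$. So a \emph{single} application does \emph{not} give a fully diagonal block — I misspoke. The honest plan is therefore the genuine iteration: repeatedly apply Lemma \ref{lem:diagonalize-ind2} to the current (shrinking) matrix, each time clearing one more row and column beyond the already-diagonalized part while property (ii) protects the diagonal structure built so far. Each step consumes $2$ indices and diagonalizes one more pivot position, so after $k$ steps we have a diagonal $(n-2k)\times(n-2k)$ block; stopping when the remaining block has size roughly $n/3$ gives $|I|\ge n/3$. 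The main obstacle is precisely verifying that property (ii) of Lemma \ref{lem:diagonalize-ind2} composes correctly across iterations — i.e.\ that the $U,V$ from a later step, which fix diagonal matrices on the \emph{smaller} surviving block, do not reintroduce off-diagonal entries into the portion already diagonalized — and tracking that the discarded indices across all steps never include the running "pivot row" so that the count $n-2k$ is exactly right; I would handle this by phrasing the induction with a hypothesis of the form "there exist $U,V,I$ with $|I|\ge 2m/3$, $(UAV)_{I\times I}$ diagonal, and $UBV=B$ on $I\times I$ for all diagonal $B$", applied to the current block size $m$.
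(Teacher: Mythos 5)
Your final (self-corrected) plan is essentially the paper's proof: iterate \autoref{lem:diagonalize-ind2}, using its property (ii) to protect the already-diagonalized positions under later transformations, losing two indices and securing one diagonal position per step, so that when the unprocessed block is exhausted the surviving index set has size at least $n/3$. The only imprecision is your intermediate claim that after $k$ steps the whole surviving $(n-2k)\times(n-2k)$ block is diagonal — in fact only $k$ of those positions are diagonal and the remaining $n-3k$ are still unprocessed — but your stopping rule and final count agree with the paper's accounting, so this is a matter of bookkeeping, not a gap.
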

\begin{proof}
    The proof is by repeatedly applying \autoref{lem:diagonalize-ind2}.
    Applying \autoref{lem:diagonalize-ind2} to $A$ gives $U,V,I$ such that $(UAV)_{I \times I}$ is of the form 
    \[
    \alpha_{1,1} E_{1,1} + \sum_{a,b\geq 2} \alpha_{a,b} E_{a,b}
    \]
    and such that for any diagonal $B$ we have $(UBV)_{I \times I} = B_{I \times I}$.
    Suppose we are given $A' \in \FF^{m \times m}$ of the form 
    \[
    \sum_{i=1}^k \alpha_{i,i} E_{i,i} + \sum_{a,b\geq k+1} \alpha_{a,b} E_{a,b}.
    \]
    Applying \autoref{lem:diagonalize-ind2} to $A'' = (A')_{\{k+1, \ldots m\} \times \{k+1, \ldots m\}}$ gives $U', V', I'$ such that $(U' A'' V')_{I' \times I'}$ is of the form $\alpha'_{1,1} E_{1,1} + \sum_{a,b\geq 2} \alpha'_{a,b} E_{a,b}$. Then 
    \[
    ((\id_k\oplus U') A' (\id_k\oplus V'))_{I' \times I'}
    \]
    is of the form 
    \[
    \sum_{i=1}^{k+1} \alpha_{i,i} E_{i,i} + \sum_{a,b\geq k+2} \alpha_{a,b} E_{a,b}
    \]
    and for any diagonal $B$ we have $((\id_k\oplus U')UBV(\id_k\oplus V'))_{I' \times I'} = B_{I' \times I'}$. In every step of the process $k$ goes up by one, and $|I|$ goes down by two, so that in the end we obtain the claim with $I \geq n/3$.
\end{proof}

\begin{lemma}\label{lem:diagonalize}
Let $A_1 = \id_n, A_2, \ldots, A_c \in \FF^{n \times n}$. Then there are $U,V \in \GL(\FF^n)$ and $I \subseteq [n]$ such that $|I| \geq 3^{-(c-1)} n$ and
\begin{enumerate}[\upshape (i)]
    \item $(U \id_n V)_{I\times I} = (\id_n)_{I\times I}$
    \item $(UA_i V)_{I \times I}$ is diagonal for all $i \in [c]$.
\end{enumerate}
\end{lemma}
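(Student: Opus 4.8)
The plan is to iterate \autoref{lem:diagonalize-ind} $c-1$ times, peeling off one matrix at a time while shrinking the principal index set by a bounded factor at each stage. First I would set $I_0 = [n]$ and $A_1 = \id_n$. Apply \autoref{lem:diagonalize-ind} to $A_2$: this gives $U_1, V_1 \in \GL(\FF^n)$ and a set $I_1 \subseteq [n]$ with $|I_1| \geq n/3$ such that $(U_1 A_2 V_1)_{I_1 \times I_1}$ is diagonal, and crucially, for every diagonal $B$ we have $(U_1 B V_1)_{I_1 \times I_1} = B_{I_1 \times I_1}$. In particular, since $A_1 = \id_n$ is diagonal, $(U_1 A_1 V_1)_{I_1 \times I_1} = (\id_n)_{I_1 \times I_1}$ remains the (partial) identity. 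So after one step both $A_1$ and $A_2$ are diagonal on the block indexed by $I_1$.

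For the inductive step, suppose after $t$ steps we have accumulated $U = U_t \cdots U_1$, $V = V_1 \cdots V_t$ and a nested set $I_t$ with $|I_t| \geq 3^{-t} n$, such that $(U A_i V)_{I_t \times I_t}$ is diagonal for $i = 1, \ldots, t+1$. To handle $A_{t+2}$: restrict $(U A_{t+2} V)_{I_t \times I_t}$ to the $I_t$-block and apply \autoref{lem:diagonalize-ind} to this $|I_t| \times |I_t|$ matrix, obtaining $U', V', I_{t+1}$ with $I_{t+1} \subseteq I_t$, $|I_{t+1}| \geq |I_t|/3 \geq 3^{-(t+1)} n$, such that $(U' (U A_{t+2} V)_{I_t \times I_t} V')_{I_{t+1} \times I_{t+1}}$ is diagonal. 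I would then lift $U', V'$ to $\GL(\FF^n)$ by embedding them as the identity outside the $I_t$-block (using that multiplication by a block-identity matrix acts as the identity on rows/columns indexed outside $I_t$, and that the $I_t$-block composes correctly); update $U \mapsto U' U$, $V \mapsto V V'$. The diagonal-preservation clause of \autoref{lem:diagonalize-ind} guarantees that the matrices $A_1, \ldots, A_{t+1}$, which were already diagonal on $I_t \supseteq I_{t+1}$, stay diagonal on $I_{t+1}$ — this is exactly why clause (ii) of \autoref{lem:diagonalize-ind} is stated. After $c-1$ iterations all of $A_1, \ldots, A_c$ are simultaneously diagonal on $I = I_{c-1}$, with $|I| \geq 3^{-(c-1)} n$, and $(U \id_n V)_{I \times I} = (\id_n)_{I \times I}$ since $\id_n$ is diagonal.

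The main technical point to get right — and the step I expect to be the main obstacle — is the bookkeeping for composing the transformations across stages: one must check that applying \autoref{lem:diagonalize-ind} \emph{within} the $I_t$-block, then padding back out to $\GL(\FF^n)$, genuinely leaves the previously-diagonalized matrices diagonal on the smaller set $I_{t+1}$. The diagonal-preservation property is not automatically inherited under composition unless one is careful that the new $U', V'$ act only on coordinates inside $I_t$; since they do (being the identity outside that block), and since the old matrices restricted to $I_t$ were already diagonal (hence fixed by any transformation satisfying clause (ii) applied within that block), the property propagates. Everything else is routine: the size bound $|I| \geq 3^{-(c-1)} n$ is just the geometric accumulation of the factor-$1/3$ losses, and the statement about $\id_n$ follows because the identity is diagonal and every transformation we apply fixes diagonal matrices on the relevant blocks.
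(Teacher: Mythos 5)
Your proposal is correct and follows essentially the same route as the paper: an induction on the number of matrices, applying \autoref{lem:diagonalize-ind} to the next matrix restricted to the current index block and using its diagonal-preservation clause (ii) to keep the previously handled matrices (including $\id_n$) diagonal on the shrinking set, yielding the factor $3^{-(c-1)}$. The only difference is that you spell out the padding of $U',V'$ to $\GL(\FF^n)$ outside the current block, a bookkeeping step the paper leaves implicit, and your treatment of it is correct.
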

\begin{proof}
    The proof is by induction over $c$ using \autoref{lem:diagonalize-ind}. 
    The base case $c = 1$ is clear.
    For the induction step, we are given $A_1, \ldots, A_m$, $U,V$ and $I \subseteq[n]$ with $|I| \geq 3^{-(m-1)}n$ such that for $B_i = (UA_i V)_{I \times I}$ we have $B_1 = \id_I$ and every $B_i$ is diagonal.
    Let $B_{m+1} = (U A_{m+1} V)_{I \times I}$. We apply \autoref{lem:diagonalize-ind} to $A = B_{m+1}$. This gives $U', V' \in \GL(\FF^I)$ and $J \subseteq I$ such that $|J| \geq \tfrac13 |I|$, $(U' B_{m+1} V')_{J \times J}$ is diagonal, and $(U' B_i V')_{J \times J} = (B_i)_{J \times J}$ for every $i \in [m]$. This proves the claim.
\end{proof}

\paragraph{From large max-rank to large min-rank.}
In this second part of the proof of \autoref{lem:minrk_diag}, we show in \autoref{lem:large-minrank} that any low-dimensional matrix subspace~$\mathcal A$ of diagonal matrices with large max-rank contains a principal-matrix subspace with large min-rank.
Moreover, in \autoref{lem:basis-extension-matrices}, we show that any matrix subspace has a basis that, when restricted to a particular principal submatrix, consists of a set of linearly independent matrices and all-zero matrices.

For any vector $w \in \FF^n$ let $\supp(w) = \{i \in [n]: w_i \neq 0\}$ and let $|w| = |\supp(w)|$. For any subspace $W \subseteq \FF^n$ let
\begin{align*}
\maxsupp(W) &= \max \{ |w| : w \in W\},\\
\minsupp(W) &= \min \{ |w| : w \in W \setminus \{0\}\}.
\end{align*}

\begin{lemma}\label{lem:minsupp}
Let $V \subseteq \FF^n$ be a subspace such that $\dim(V) \leq c$.  
There is an $I \subseteq [n]$ such that $V_I$ is not zero and
\[
\minsupp(V_I) \geq \tfrac1c \maxsupp(V).
\]
\end{lemma}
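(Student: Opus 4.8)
The plan is to prove the formally stronger statement in which $c$ is replaced by $d:=\dim(V)$: since $d\le c$ we have $\tfrac1d\ge\tfrac1c$, so this implies the lemma. (We may assume $V\neq\{0\}$; otherwise $\maxsupp(V)=0$ and no $I$ can make $V_I$ nonzero, so the statement should be read as implicitly excluding the zero space.) I would argue by induction on $d\ge 1$. The base case $d=1$ is immediate: writing $V=\FF v$ with $v\neq 0$ we have $\minsupp(V)=|v|=\maxsupp(V)$, so $I=[n]$ works.

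For the inductive step, let $m=\maxsupp(V)$, fix $w\in V$ with $|w|=m$, and set $S=\supp(w)$, so $|S|=m$ and $w|_S$ has full support on $S$. Consider the restricted subspace $V_S\subseteq\FF^S$. If $\minsupp(V_S)\ge m/d$, we are done by taking $I=S$, since $V_S\ni w|_S\neq 0$. Otherwise there is $v\in V_S\setminus\{0\}$ with $|v|<m/d$; put $T=\supp(v)\subseteq S$ and $S'=S\setminus T$. The key maneuver is that passing from $V_S$ to $V_{S'}$ kills $v$ (its support lies entirely in $T$), so $\dim(V_{S'})\le\dim(V_S)-1\le d-1$, while $w|_{S'}$ still has full support $S'$, whence $V_{S'}\neq\{0\}$ and $\maxsupp(V_{S'})=|S'|=m-|T|>m(1-\tfrac1d)=\tfrac{(d-1)m}{d}$.

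Now apply the induction hypothesis to $V_{S'}$, whose dimension $d'$ satisfies $1\le d'\le d-1$: there is $I\subseteq S'$ with $V_I=(V_{S'})_I\neq\{0\}$ and $\minsupp(V_I)\ge\tfrac1{d'}\maxsupp(V_{S'})\ge\tfrac1{d-1}|S'|>\tfrac1{d-1}\cdot\tfrac{(d-1)m}{d}=\tfrac md$, which closes the induction. The routine bookkeeping is that a restriction of a restriction is again a restriction (so $V_I$ is unambiguous) and that deleting coordinates can only lower dimension, dropping it strictly precisely when some nonzero vector of the space is supported on the deleted coordinates. The one point deserving care — and really the crux — is the observation that deleting $T$ does not collapse the max-support: it remains exactly $|S'|$, witnessed by $w|_{S'}$, which is what lets the factor $\tfrac{(d-1)m}{d}$ propagate correctly. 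I do not expect a genuine obstacle; the only mild subtlety is that $m/d$ may be below $1$, but then $\minsupp(V_S)\ge 1>m/d$ automatically, placing us in the first case.
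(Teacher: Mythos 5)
Your proof is correct, but it is organized differently from the paper's. The paper runs a one-shot greedy process: it collects vectors $v_1,v_2,\dots$ each of which contributes fewer than $\tfrac1c\maxsupp(V)$ new support coordinates, observes that such a family is linearly independent (supports strictly increase), hence has size at most $\dim(V)\le c$, deletes the union $J$ of their supports (so $|J|<\maxsupp(V)$ and a maximal-support vector survives on $I=[n]\setminus J$), and then gets $\minsupp(V_I)\ge\tfrac1c\maxsupp(V)$ immediately from the maximality of the greedy process. You instead induct on $d=\dim(V)$ and prove the formally sharper bound $\tfrac1d\maxsupp(V)$: restrict to the support $S$ of a maximal vector, delete the support of a single small-support vector, note that the dimension drops by at least one while the max-support of the restriction stays above $\tfrac{d-1}{d}\maxsupp(V)$, and recurse; the price is the propagation computation $\tfrac1{d-1}|S'|>\tfrac md$, which the paper's fixed threshold $\tfrac1c\maxsupp(V)$ avoids. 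Both arguments rest on the same counting --- at most $\dim(V)$ deletions, each removing less than a $\tfrac1c$ (resp.\ $\tfrac1d$) fraction of $\maxsupp(V)$ --- so neither is substantially harder; your version makes the $1/\dim(V)$ strengthening explicit (the paper's argument would also give it if run with $d$ in place of $c$), and you correctly flag that the statement, in either proof, implicitly assumes $V\ne\{0\}$.
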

\begin{proof}
    We will use the general fact that for arbitrary vectors $w_1, \ldots, w_\ell \in W$, if for every $j\in \{0, \ldots, \ell-1\}$ we have $\cup_{i \leq j} \supp(w_i) \neq \cup_{i \leq j + 1} \supp(w_i)$, then $w_1, \ldots, w_\ell$ are linearly independent.

    Let $k = \maxsupp(V)$ and $d = \dim(V)$.
    Before constructing $I$, we first construct a set $X \subseteq V$ by the following process, starting with $X$ empty.
    If there is a nonzero element $v_1 \in V$ such that $|v_1| < \tfrac1c k$, then add $v_1$ to $X$.
    If $X = \{v_1, \ldots, v_\ell\}$ and there is an element $v_{\ell + 1} \in V$ such that 
    $0 < |\cup_{i \leq \ell+1} \supp(v_i) \setminus \cup_{i \leq \ell} \supp(v_i)| < \tfrac1c k$, then add $v_{\ell+1}$ to $X$.
    By the aforementioned general fact, $X$ is linearly independent and thus $X$ is finite, namely $|X| \leq d$.

    
    We let $J = \cup_i \supp(v_i)$ and $I = [n] \setminus J$.
    
    First we claim that $V_I$ is not zero. We have $|J| < d\tfrac1c k \leq k$.
    Then for any element $v \in V$ with $|v| = k$ we have $|v_I| \geq k - |J| > 1$ and thus $v_I \neq 0$.
    
    Second, we claim $\minsupp(V_I) \geq \tfrac1c k$.  Suppose there is a nonzero element $w \in V_I$ such that $|w| < \tfrac1c k$, then this contradicts the way we constructed $X$.
\end{proof}

\begin{lemma}\label{lem:basis-extension}
    Let $V \subseteq \FF^n$ be a subspace of dimension $c$, and let $I \subseteq [n]$. Then there is a basis $v_1, \ldots, v_c$ for $V$ such that 
    \begin{itemize}
        \item $(v_1)_I, \ldots, (v_b)_I$ are linearly independent
        \item $(v_{b+1})_I = \cdots = (v_c)_I = 0$
    \end{itemize}
    for $b = \dim(V_I)$.
\end{lemma}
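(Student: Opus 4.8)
The plan is to recognize this as an immediate consequence of the rank--nullity theorem applied to the coordinate--restriction map. Concretely, I would introduce the linear map $\pi_I \colon V \to \FF^{I}$ defined by $\pi_I(v) = v_I$ (restriction of $v$ to the coordinates indexed by $I$). By definition its image is exactly $V_I$, so $\operatorname{rank}\pi_I = \dim V_I = b$, and since $\dim V = c$, rank--nullity gives $\dim\ker\pi_I = c - b$.

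Next I would build the basis from the bottom up. Choose any basis $v_{b+1},\dots,v_c$ of $\ker\pi_I$ (there are exactly $c-b$ of them by the previous paragraph), and extend it to a basis $v_1,\dots,v_c$ of $V$ using the standard basis--extension theorem for finite--dimensional vector spaces. By the choice of $v_{b+1},\dots,v_c$ we have $(v_{b+1})_I = \cdots = (v_c)_I = 0$, which is the second bullet. For the first bullet, note that $v_1,\dots,v_c$ span $V$ while $\pi_I$ annihilates $v_{b+1},\dots,v_c$, so $(v_1)_I,\dots,(v_b)_I$ span $\pi_I(V) = V_I$; being a spanning set of cardinality $b = \dim V_I$, they are a basis, hence in particular linearly independent.

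I do not expect a real obstacle: the statement is elementary linear algebra, and the only nonconstructive step (extending a basis of $\ker\pi_I$ to a basis of $V$) is a textbook fact. If a fully explicit/algorithmic argument were preferred, one could instead place a basis of $V$ as the rows of a $c\times n$ matrix $M$, row-reduce the column--submatrix $M|_I$ to read off $b$ pivot rows and $c-b$ row combinations vanishing on $I$, and complete; but this merely re-derives the same conclusion with extra bookkeeping, so I would present the coordinate-free version above.
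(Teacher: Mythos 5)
Your proof is correct. It takes a mildly different route from the paper: you argue ``kernel first,'' applying rank--nullity to the restriction map $\pi_I\colon V\to\FF^I$, picking a basis of $\ker\pi_I$ (of dimension $c-b$) and extending it to a basis of $V$, after which the images of the first $b$ vectors span $V_I$ and are therefore a basis by a dimension count. The paper instead goes ``image first'': it writes an arbitrary basis of $V$ as the columns of a matrix, selects $b$ of those basis vectors whose restrictions to $I$ are linearly independent (a full-rank $b$-column submatrix of the rows indexed by $I$), and then subtracts suitable multiples of these from the remaining $c-b$ vectors to make their $I$-restrictions vanish --- essentially the explicit Gaussian-elimination version you sketch in your last sentence. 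The two arguments are equivalent in content; yours is coordinate-free and avoids the cleanup step, while the paper's has the small feature of keeping $b$ vectors of the originally chosen basis unchanged. Either version suffices for the way the lemma is used (via \autoref{lem:basis-extension-matrices} inside the proof of \autoref{lem:minrk_diag}).
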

\begin{proof}
    Choose any basis $w_1, \ldots, w_c$ for $V$. Let $M$ be the matrix with these basis vectors as columns. The submatrix $M_{I \times[n]}$ has rank $b$ so there is a subset $J \subseteq [n]$ of size $b$ such that $M_{I \times J}$ has rank $b$.
    Let $\{v_1, \ldots, v_b\} = \{w_j : j \in J\}$ and $\{v_{b+1}, \ldots, v_c\} = \{w_j : j \not\in J\}$.
    Then $(v_1)_I, \ldots, (v_b)_I$ form a basis of $V_I$. By subtracting multiples of $v_1, \ldots, v_b$ from the vectors $v_{b+1}, \ldots, v_c$ we can ensure that $(v_{b+1})_I = \cdots = (v_c)_I = 0$.
\end{proof}

\begin{lemma}\label{lem:large-minrank}
    Let $\mathcal{A} \subseteq \FF^{n \times n}$ be a subspace of diagonal matrices such that $\dim(\mathcal{A}) \leq c$. There is an $I \subseteq [n]$ such that 
    \[
    \minrank(\mathcal{A}_{I\times I}) \geq \tfrac1c \maxrank(\mathcal{A}).
    \]
\end{lemma}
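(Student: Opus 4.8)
The plan is to reduce \autoref{lem:large-minrank} to the already-established \autoref{lem:minsupp}, exploiting the elementary fact that a space of diagonal matrices is ``the same'' as a space of vectors, with matrix rank playing the role of Hamming weight (support size).

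First I would set up this dictionary precisely. Let $D\colon \FF^n \to \FF^{n\times n}$ be the linear map sending $a = (a_1,\dots,a_n)$ to $\Diag(a)$, and observe that $\rank(\Diag(a)) = |\supp(a)|$. Since every element of $\mathcal{A}$ is diagonal, there is a subspace $V \subseteq \FF^n$ with $D(V) = \mathcal{A}$; because $D$ is injective we get $\dim(V) = \dim(\mathcal{A}) \le c$, and the rank identity gives $\maxsupp(V) = \maxrank(\mathcal{A})$. Next I would note that principal submatrices are compatible with this dictionary: for any $I \subseteq [n]$ and $a \in \FF^n$, the $I\times I$ principal submatrix of $\Diag(a)$ is $\Diag(a_I)$, where $a_I$ is the restriction of $a$ to the coordinates in $I$ (after reindexing by $I$). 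Consequently $\mathcal{A}_{I\times I} = D(V_I)$, and hence $\minrank(\mathcal{A}_{I\times I}) = \minsupp(V_I)$ whenever $V_I \ne 0$.

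With the dictionary in place the argument is immediate. If $\maxrank(\mathcal{A}) = 0$ then $\mathcal{A} = 0$ and there is nothing to prove, so assume $\maxrank(\mathcal{A}) \ge 1$. Apply \autoref{lem:minsupp} to $V$, which has dimension at most $c$: it produces an $I \subseteq [n]$ with $V_I \ne 0$ and $\minsupp(V_I) \ge \tfrac1c \maxsupp(V)$. Translating both sides back through $D$ yields $\minrank(\mathcal{A}_{I\times I}) \ge \tfrac1c \maxrank(\mathcal{A})$, which is exactly the claim.

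I do not expect any real obstacle here: all of the combinatorial content sits in \autoref{lem:minsupp}. The only points needing (routine) verification are the two dictionary facts above — that the rank of a diagonal matrix equals the size of the support of its diagonal, and that a principal submatrix of a diagonal matrix is again diagonal with the expected diagonal — together with the one mildly delicate bookkeeping point that $D$ is injective, so that $\dim(V) = \dim(\mathcal{A}) \le c$ and the dimension hypothesis of \autoref{lem:minsupp} is genuinely met.
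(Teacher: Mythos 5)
Your proposal is correct and follows essentially the same route as the paper: the paper's proof is precisely the one-line reduction to \autoref{lem:minsupp} via identifying each diagonal matrix with its diagonal vector and using $\rank(\Diag(a)) = |\supp(a)|$. Your version simply spells out the dictionary (including the principal-submatrix compatibility and the trivial $\maxrank(\mathcal{A})=0$ case) more explicitly.
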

\begin{proof}
    This follows from \autoref{lem:minsupp} by identifying every diagonal matrix $A \in \mathcal{A}$ with the vector of its diagonal coefficients $v = (A_{1,1}, \ldots, A_{n,n})$ and $\rank(A) = |\supp(v)|$.
\end{proof}

\begin{lemma}\label{lem:basis-extension-matrices}
    Let $\mathcal{A} \subseteq \FF^{n_1 \times n_2}$ be matrix subspace of dimension $c$, and $J \subseteq [\min_\ell n_\ell]$. Then there is a basis $B_1, \ldots, B_c$ for $\mathcal{A}$ such that
    \begin{itemize}
        \item $(B_1)_{J \times J}, \ldots, (B_b)_{J \times J}$ are linearly independent
        \item $(B_{b+1})_{J \times J}= \cdots = (B_{c})_{J \times J} = 0$
    \end{itemize}
    for $b = \dim(\mathcal{A}_{J\times J})$.
\end{lemma}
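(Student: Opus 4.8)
The plan is to observe that this is exactly the matrix analogue of \autoref{lem:basis-extension}, and either to deduce it from that lemma by vectorization or, more cleanly, to give the one-line rank--nullity argument directly. I would go with the direct argument. Note first that because $J \subseteq [\min_i n_i]$, the assignment $A \mapsto A_{J\times J}$ makes sense: $J$ indexes a legitimate set of row indices (since $|J| \le n_1$) and of column indices (since $|J| \le n_2$), so restricting to the principal submatrix indexed by $J$ is a well-defined linear map $\rho : \mathcal{A} \to \FF^{J \times J}$.

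Next I would set up the rank--nullity bookkeeping. The image of $\rho$ is precisely $\mathcal{A}_{J\times J}$, which by definition has dimension $b$; hence the kernel $K = \{A \in \mathcal{A} : A_{J\times J} = 0\}$ has dimension $c - b$. Choose any basis $B_{b+1}, \ldots, B_c$ of $K$, and extend it (using that $K$ is a subspace of the $c$-dimensional space $\mathcal{A}$) to a basis $B_1, \ldots, B_b, B_{b+1}, \ldots, B_c$ of $\mathcal{A}$. By construction $(B_{b+1})_{J\times J} = \cdots = (B_c)_{J\times J} = 0$. It remains to see that $(B_1)_{J\times J}, \ldots, (B_b)_{J\times J}$ are linearly independent: since $B_1, \ldots, B_c$ span $\mathcal{A}$ and the last $c-b$ of them are killed by $\rho$, the images $(B_1)_{J\times J}, \ldots, (B_b)_{J\times J}$ span $\rho(\mathcal{A}) = \mathcal{A}_{J\times J}$, a space of dimension $b$; a spanning set of size $b$ in a $b$-dimensional space is a basis, so these restrictions are linearly independent, as required.

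There is essentially no obstacle here — the statement is a routine consequence of the fact that restriction to a principal submatrix is linear, together with dimension counting, and it mirrors the proof of \autoref{lem:basis-extension} verbatim after identifying $\FF^{n_1\times n_2}$ with $\FF^{n_1 n_2}$ and $(\cdot)_{J\times J}$ with a coordinate projection. The only point worth stating explicitly is the well-definedness of $\rho$ noted above, which is where the hypothesis $J \subseteq [\min_i n_i]$ is used.
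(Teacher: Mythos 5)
Your proposal is correct; the paper's own proof is exactly the vectorization route you mention as an alternative, namely a one-line application of \autoref{lem:basis-extension} with $V=\mathcal{A}$ and $I=J\times J$. Your direct rank--nullity argument (kernel basis extended to a basis of $\mathcal{A}$, then noting the first $b$ restrictions span the $b$-dimensional image) is just an unfolded version of the same dimension count, so nothing is missing.
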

\begin{proof}
This follows from applying \autoref{lem:basis-extension} to $V = \mathcal{A}$ and $I = J \times J$.
\end{proof}

\paragraph{Combining the above.} \!We now combine \autoref{lem:diagonalize}, \autoref{lem:large-minrank} and \autoref{lem:basis-extension-matrices} to prove \autoref{lem:minrk_diag}.

\begin{proof}[Proof of \autoref{lem:minrk_diag}]
Let $k = \maxrank(\mathcal A)$.
    By Lemma~\ref{cla:high-rk}, we have that $k \geq n_2/c$.
    After a basis transformation, we may assume that $\mathcal A$ has a basis $A_1 = \id_k,A_2,\dots,A_c$.
    Applying \autoref{lem:diagonalize} to the submatrices $(A_i)_{[k]\times [k]}$, we obtain matrices $U,V\in \GL(\FF^k)$ and a set $I\subseteq[k]$ of size at least $3^{-(c-1)}k$ such that
    \begin{enumerate}[(i)]
        \item $A_1' = \big((U\oplus \id_{n_1 - k})A_1(V\oplus \id_{n_2 - k})\big)_{I\times I} = (\id_k)_{I\times I}$;
    \item $A_i' = \big((U\oplus \id_{n_1 - k})A_i(V\oplus \id_{n_2 - k})\big)_{I\times I}$ is diagonal for all $i\in [c]$.
    \end{enumerate}
    By \autoref{lem:large-minrank} there is a set~$J\subseteq I$ such that
    \beqn
    \minrnk\big((A_1')_{J\times J},\dots,(A_c')_{J\times J}\big) \geq \tfrac{1}{c} 3^{-(c-1)}k. 
    \eeqn
    The claim follows from applying \autoref{lem:basis-extension-matrices} with $B_i = A'_i$.
\end{proof}

\subsubsection{Super-multiplicativity of min-rank}\label{subsec:supermult}

For subspaces of matrices $\mathcal{A}$ and $\mathcal{B}$ their tensor product $\mathcal{A} \otimes \mathcal{B}$ is the subspace of matrices spanned by the products $A \otimes B$ for all $A \in \mathcal{A}$ and $B \in \mathcal{B}$.
In this section we will prove that the min-rank is super-multiplicative under the tensor product, under the assumption that at least one of the matrix subspaces is \emph{diagonal}. (Without this assumption the min-rank can be strictly submultiplicative, see \autoref{ex:minrank-not-supermultiplicative}.)
We then apply this to the min-rank of powers of any diagonal subspace of matrices.

%
%

\begin{lemma}\label{lem:minrank-diag-supermultiplicative}
    Let $\mathcal{A}$ and $\mathcal{B}$ be subspaces of matrices. Suppose that all matrices in $\mathcal{A}$ are diagonal.
    Then %
$\minrnk(\mathcal{A}\otimes\mathcal{B})\geq\minrnk(\mathcal{A})\cdot\minrnk(\mathcal{B})$.
\end{lemma}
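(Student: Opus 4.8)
The plan is to reduce the statement to the following claim: every nonzero $C \in \mathcal{A}\otimes\mathcal{B}$ satisfies $\rank(C) \geq \minrnk(\mathcal{A})\cdot\minrnk(\mathcal{B})$. To prove this I would first put $C$ into a convenient normal form. Writing $C = \sum_{i=1}^m A_i \otimes B_i$ with $A_i \in \mathcal{A}$ and $B_i \in \mathcal{B}$ is possible since such elements span $\mathcal{A}\otimes\mathcal{B}$; then I would run Gaussian elimination on the $B_i$'s, using that each linear dependence $B_m = \sum_{i<m} c_i B_i$ lets us fold $A_m$ into the other summands via $C = \sum_{i<m}(A_i+c_iA_m)\otimes B_i$, with all new left factors still lying in $\mathcal{A}$. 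Iterating and then discarding any term whose $A_i$ has become $0$, I arrive at a representation $C = \sum_{i=1}^{m'} A_i\otimes B_i$ in which $B_1,\dots,B_{m'}$ are linearly independent and all $A_i\neq 0$; since $C\neq 0$, at least one term survives, so $m'\geq 1$.

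The key structural observation is that when every $A_i$ is diagonal, $C$ is block diagonal: in the standard Kronecker indexing, the $(\ell,\ell)$ block of $C$ (for $\ell\le\min_i n_i$) equals $M_\ell := \sum_i (A_i)_{\ell,\ell}\,B_i \in \mathcal{B}$, all off-diagonal blocks vanish, and the nonzero blocks occupy pairwise disjoint sets of rows and of columns. Hence ranks simply add: $\rank(C) = \sum_\ell \rank(M_\ell)$. Setting $L = \{\ell : M_\ell \neq 0\}$, which is nonempty since $C\neq 0$, each $M_\ell$ with $\ell\in L$ is a nonzero element of $\mathcal{B}$, so $\rank(M_\ell)\geq\minrnk(\mathcal{B})$, and therefore $\rank(C) \geq |L|\cdot\minrnk(\mathcal{B})$.

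To finish, I would lower-bound $|L|$ by $\minrnk(\mathcal{A})$. Because the $B_i$ are linearly independent, $M_\ell = 0$ holds exactly when $(A_i)_{\ell,\ell}=0$ for all $i$; in particular, for every $\ell\notin L$ the $(\ell,\ell)$ entry of the diagonal matrix $A_1$ vanishes, so $\supp(A_1)\subseteq\{(\ell,\ell):\ell\in L\}$ and $\rank(A_1)\leq |L|$. Since $A_1\in\mathcal{A}\setminus\{0\}$ we get $\minrnk(\mathcal{A})\leq\rank(A_1)\leq|L|$, and combining with the previous inequality yields $\rank(C)\geq\minrnk(\mathcal{A})\cdot\minrnk(\mathcal{B})$, as desired.

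I do not expect a genuine obstacle here. The only points requiring care are the bookkeeping in the normal-form reduction (every modified left factor must remain in $\mathcal{A}$, and one must argue that a surviving term exists precisely because $C\neq 0$), and a precise statement of the block-diagonal decomposition of a Kronecker product whose left factor is diagonal — in particular that distinct diagonal blocks sit in disjoint rows and columns, which is what makes ranks add. I would also remark that the degenerate case where $\mathcal{A}$ or $\mathcal{B}$ is the zero subspace is vacuous, and that the diagonality hypothesis is essential, as the referenced example showing strict submultiplicativity of $\minrnk$ in general demonstrates.
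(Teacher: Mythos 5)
Your proof is correct and follows essentially the same route as the paper: reduce to a representation $C=\sum_i A_i\otimes B_i$ with the $B_i$ linearly independent, use diagonality of the $A_i$ to get block-diagonality and additivity of rank over the blocks $M_\ell=\sum_i (A_i)_{\ell,\ell}B_i$, and count at least $\rank(A_1)\geq\minrnk(\mathcal{A})$ blocks of rank at least $\minrnk(\mathcal{B})$. The only (immaterial) difference is bookkeeping: the paper counts the blocks with $(A_1)_{j,j}\neq 0$ directly, while you count all nonzero blocks and then bound their number below by $\rank(A_1)$ via linear independence of the $B_i$.
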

\begin{proof}%
    Let $C\in\mathcal{A}\otimes\mathcal{B}$ be a nonzero element. Then $C$ is of the form $C =\sum_{i=1}^\ell A_i\otimes B_i$ for some $A_i \in \mathcal{A}$, $B_i \in \mathcal{B}$ and $\ell \geq 1$. We may assume $\{A_i\}_{i=1}^\ell$ and $\{B_i\}_{i=1}^\ell$ are both linearly independent sets of matrices.  %

    Since every $A_i$ is diagonal, the matrix $C$ is block-diagonal with blocks $D_j$ given by $D_j = \sum_{i=1}^\ell (A_i)_{j,j} B_i$. Thus $\rank(C) = \sum_j \rank(D_j)$. For any $j$ such that $(A_1)_{j,j} \neq 0$, the matrix $D_j$ is a nontrivial linear combination of the $B_i$ and thus $\rank(D_j) \geq \minrnk(\mathcal{B})$. The number of $j$ such that $(A_1)_{j,j} \neq0$ is equal to $\rank(A_1)$ since $A_1$ is diagonal. %
    Thus
    \[
    \rank(C) = \sum_j \rank(D_j) \geq \sum_{\mathclap{j: (A_1)_{j,j} \neq 0}} \rank(D_j) %
    \geq \rank(A_1)\minrnk(\mathcal{B}) \geq \minrnk(\mathcal{A})\minrnk(\mathcal{B}),
    \]
    which proves the claim. %
\end{proof}
%
\begin{example}\label{ex:minrank-not-supermultiplicative}
    The condition in \autoref{lem:minrank-diag-supermultiplicative} that at least one of the matrix subspaces consists of diagonal matrices cannot be removed. To see this we may take the subspace of matrices over the reals
    \[
\mathcal{A} = \Bigl\{\begin{pmatrix}
    a & b\\-b & a
\end{pmatrix} : a,b \in \RR\Bigr\}.
    \]
    Then $\minrnk(\mathcal{A}) = 2$. However, $\mathcal{A}^{\otimes 2}$ contains the element
    \[
    \begin{pmatrix}
        1 & 0\\0 & 1
    \end{pmatrix}^{\otimes 2}
    +
    \begin{pmatrix}
        0 & 1\\-1 & 0
    \end{pmatrix}^{\otimes 2}
     = \begin{pmatrix}
         1 & 0 & 0 & 1\\
         0 & 1 & -1 & 0\\
         0 & -1 & 1 & 0\\
         1 & 0 & 0 & 1
     \end{pmatrix}
    \]
    which has rank $2 < 2^2$, so $\minrnk(\mathcal{A}^{\otimes 2}) < \minrnk(\mathcal{A})^2$.
\end{example}

\begin{lemma}\label{claim:large-rank-lin-comb-tensorprods}
    Let $\mathcal{A}$ %
    be a subspace of matrices. Suppose that all matrices in $\mathcal{A}$ are diagonal. %
    Then for all $k\in\NN$ we have $\minrnk(\mathcal{A}^{\otimes k})\geq \minrnk(\mathcal{A})^k$.
\end{lemma}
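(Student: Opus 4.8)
The plan is to prove this by a straightforward induction on $k$, using \autoref{lem:minrank-diag-supermultiplicative} as the engine. The base case $k=1$ is immediate since $\minrnk(\mathcal{A}^{\otimes 1}) = \minrnk(\mathcal{A}) = \minrnk(\mathcal{A})^1$.

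For the inductive step, I would write $\mathcal{A}^{\otimes k} = \mathcal{A} \otimes \mathcal{A}^{\otimes(k-1)}$. The crucial point that makes \autoref{lem:minrank-diag-supermultiplicative} applicable is that the first factor, $\mathcal{A}$, consists entirely of diagonal matrices by hypothesis — note that we do not even need the (true, but irrelevant) fact that $\mathcal{A}^{\otimes(k-1)}$ is also diagonal, since the lemma only requires one of the two tensor factors to be diagonal. Applying the lemma gives $\minrnk(\mathcal{A}^{\otimes k}) = \minrnk(\mathcal{A}\otimes \mathcal{A}^{\otimes(k-1)}) \geq \minrnk(\mathcal{A})\cdot \minrnk(\mathcal{A}^{\otimes(k-1)})$, and then the induction hypothesis $\minrnk(\mathcal{A}^{\otimes(k-1)}) \geq \minrnk(\mathcal{A})^{k-1}$ yields $\minrnk(\mathcal{A}^{\otimes k}) \geq \minrnk(\mathcal{A})^k$, closing the induction.

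There is no real obstacle here; the statement is an immediate corollary of \autoref{lem:minrank-diag-supermultiplicative}, and the only thing to be careful about is to invoke the diagonality hypothesis on the correct factor at each step of the induction (which is automatic since the same subspace $\mathcal{A}$ plays the role of the diagonal factor throughout). One could alternatively phrase it as a direct $\ell$-fold iteration of the inequality rather than an explicit induction, but the inductive presentation is cleanest.
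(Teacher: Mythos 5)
Your proof is correct and is essentially the paper's own argument: the paper's proof of this lemma is literally ``repeatedly apply \autoref{lem:minrank-diag-supermultiplicative}'', which is exactly your induction with $\mathcal{A}$ (the diagonal factor) split off at each step. Nothing further is needed.
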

\begin{proof}
    Repeatedly apply \autoref{lem:minrank-diag-supermultiplicative}.
\end{proof}
In the rest of this section we prove a slightly more general version of \autoref{claim:large-rank-lin-comb-tensorprods}.
The reason that we need to do this is that we want to take the output of \autoref{claim:matrices-into-diagonal-with-large-rank} (a collection of linearly independent matrices, such that restricted to specified rows and columns the matrices are diagonal and have large min-rank, or are zero) and use the super-multiplicativity property of \autoref{claim:large-rank-lin-comb-tensorprods} to construct a large collection of matrices (that are tensor products of the original matrices) with large min-rank. However, we cannot do this directly as some of the matrices after restriction are zero. The lemma we prove here deals with that.


%
%

%
\begin{restatable}{lemma}{lemC}
\label{claim:linear-combinations-of-mixed-tensorprods-high-rank}
    Let $X = \{B_1, \ldots, B_b, C_{b+1} \ldots, C_c\} \subseteq \FF^{n_1} \otimes \FF^{n_2}$ be a set of linearly independent matrices. Let $I\subseteq [n_1]$, $J \subseteq [n_2]$, $|I|=|J|$, such that 
    \begin{itemize}
        \item the matrices $(B_i)_{I\times J}$, for $i = 1, \ldots, b$, are diagonal and linearly independent,
        \item the matrices $(C_i)_{I\times J}$, for $i = b+1, \ldots, c$, are zero.
    \end{itemize}
    For any $m \geq \ell\geq 1$, 
    let $Y$ %
    be the set of all order-$m$ Kronecker products of elements of~$X$ such that at least $\ell$ of the factors are from $B_1, \ldots, B_b$. Then %
    \[
    \minrnk(Y) \geq \minrnk(X_{I\times J})^{\ell}.
    \]
\end{restatable}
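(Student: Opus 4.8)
First I would unwind the hypotheses: since $(C_{b+1})_{I\times J}=\cdots=(C_c)_{I\times J}=0$, the restricted set satisfies $\linspan(X_{I\times J})=\mathcal A$ where $\mathcal A:=\linspan\{(B_1)_{I\times J},\dots,(B_b)_{I\times J}\}$, and $\mathcal A$ is a space of diagonal matrices by assumption, with $\minrnk(X_{I\times J})=\minrnk(\mathcal A)$. So the goal becomes: for every nonzero $Z\in\linspan(Y)$, show $\rank(Z)\ge\minrnk(\mathcal A)^{\ell}$. The plan is to produce a submatrix of $Z$ that is simultaneously (a) nonzero and (b) constrained to lie in a ``mostly diagonal'' tensor-product space whose min-rank is controlled by the super-multiplicativity lemmas, and then use that passing to a submatrix cannot increase the rank. (The degenerate cases $b=0$ or $\ell>b$ are vacuous, since then $Y$ is empty and there is nothing to prove.)

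For a subset $S\subseteq[m]$ with $|S|=\ell$, let $Z_S$ be the submatrix of $Z$ obtained by restricting, for every $t\in S$, the $t$-th pair of index sets to $I\times J$ (and leaving the other $m-\ell$ coordinates untouched). Write $Z=\sum_{w}\lambda_w M_w$ as a linear combination of \emph{distinct} order-$m$ Kronecker products $M_w=M_{w,1}\otimes\cdots\otimes M_{w,m}$ with $M_{w,t}\in X$, all $\lambda_w\ne 0$, and at least $\ell$ of the factors $M_{w,1},\dots,M_{w,m}$ lying in $\{B_1,\dots,B_b\}$. The key observations are: (i) $(M_w)_S=0$ unless every $M_{w,t}$ with $t\in S$ lies in $\{B_1,\dots,B_b\}$, because the $C_i$ restrict to $0$ on $I\times J$ while the $(B_i)_{I\times J}$ are nonzero; and (ii) whenever $S$ is contained in the set of ``$B$-positions'' of some $M_w$ — which is possible since that set has size $\ge\ell$ for every $w$ and at least one $w$ occurs — the restricted products $(M_w)_S$ appearing in $Z_S$ are linearly independent. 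For (ii), each surviving $(M_w)_S$ is a Kronecker product whose factors in positions $t\in S$ come from the linearly independent family $\{(B_i)_{I\times J}\}_{i\le b}$ and whose remaining factors come from the linearly independent family $X$; distinct $M_w$ produce distinct such products (here one uses that distinct $(B_i)_{I\times J}$ remain distinct by their linear independence), and distinct Kronecker products of vectors drawn from linearly independent families are themselves linearly independent. Consequently, choosing such an $S$, we get $Z_S\ne 0$.

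Finally, permute the tensor coordinates so that $S=\{1,\dots,\ell\}$; this only permutes the rows and columns of $Z_S$, hence preserves $\rank(Z_S)$. Now $Z_S$ lies in $\mathcal A^{\otimes\ell}\otimes(\FF^{n_1}\otimes\FF^{n_2})^{\otimes(m-\ell)}$. Since $\mathcal A$ consists of diagonal matrices, so does $\mathcal A^{\otimes\ell}$, and \autoref{claim:large-rank-lin-comb-tensorprods} gives $\minrnk(\mathcal A^{\otimes\ell})\ge\minrnk(\mathcal A)^{\ell}$; applying \autoref{lem:minrank-diag-supermultiplicative} once more, with the diagonal factor $\mathcal A^{\otimes\ell}$ and the factor $(\FF^{n_1}\otimes\FF^{n_2})^{\otimes(m-\ell)}$ (whose min-rank is at least $1$), yields $\minrnk\big(\mathcal A^{\otimes\ell}\otimes(\FF^{n_1}\otimes\FF^{n_2})^{\otimes(m-\ell)}\big)\ge\minrnk(\mathcal A)^{\ell}$. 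Since $Z_S$ is a nonzero element of this space, $\rank(Z)\ge\rank(Z_S)\ge\minrnk(\mathcal A)^{\ell}=\minrnk(X_{I\times J})^{\ell}$, which is the claim.

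The step I expect to be the main obstacle is the non-cancellation argument, i.e. verifying that $Z_S\ne 0$: one must carefully track that the $S$-restricted factors stay linearly independent (using that the $(B_i)_{I\times J}$ are linearly independent, not merely that each is nonzero) and invoke the standard fact that Kronecker products of elements taken from coordinate-dependent linearly independent families form a linearly independent set. Everything else — the survival criterion for a term under $(\,\cdot\,)_{I\times J}$, the rank-preserving reordering of tensor factors, and the two applications of diagonal super-multiplicativity — is routine bookkeeping.
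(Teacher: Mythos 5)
Your proof is correct, and it follows the same overall strategy as the paper's proof (restrict selected Kronecker factors to $I\times J$ so that terms with a $C$-factor in those positions vanish, use linear independence of Kronecker products of per-position linearly independent families to see the restriction is nonzero, then apply diagonal super-multiplicativity and submatrix monotonicity of rank), but the way you extract the rank bound is genuinely different and in fact a bit cleaner. The paper takes $k$ to be the \emph{maximal} number of $B$-factors over all summands, restricts at the $B$-positions of a maximizing summand, and uses this maximality to ensure that every surviving term has only $C$-factors outside the restricted positions; this enables a block-matrix analysis in which one block is a pure nontrivial combination of products of the diagonal matrices $(B_i)_{I\times J}$, to which \autoref{claim:large-rank-lin-comb-tensorprods} is applied. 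You avoid the maximality argument entirely: you restrict exactly $\ell$ positions chosen inside the $B$-positions of any one surviving summand, and you absorb all unrestricted coordinates (whether they carry $B$- or $C$-factors) into the full matrix space $(\FF^{n_1}\otimes\FF^{n_2})^{\otimes(m-\ell)}$, whose min-rank is $1$; one extra application of \autoref{lem:minrank-diag-supermultiplicative} with the diagonal factor $\mathcal A^{\otimes\ell}$ then bounds the min-rank of the ambient space by $\minrnk(X_{I\times J})^{\ell}$, which is all that is needed since $Z_S\neq 0$. This trades the paper's blockwise bookkeeping for one more (easy) invocation of the super-multiplicativity lemma, and it is valid because that lemma places no hypothesis on the second factor. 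One small inaccuracy: your parenthetical claim that $\ell>b$ makes $Y$ empty is false, since factors may repeat (e.g.\ $B_1^{\boxtimes m}\in Y$ even when $b=1<\ell$); but this is harmless, because nowhere in your main argument do you use $\ell\leq b$ — the set $S$ is a set of \emph{positions}, and your linear independence argument works with repeated factor indices.
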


\begin{proof}
    Let $A$ be any nontrivial linear combination of elements of $Y$.
    Then
    \[
    A=\sum_{t=1}^r a_t\, D_{i_{t,1}}\otimes\cdots\otimes D_{i_{t,m}}
    \]
    for some nonzero $a_t \in \FF$ and $D_{i_{t, j}} \in X$. Every $D_{i_{t, j}}$ is either in $\{B_1, \ldots, B_b\}$, in which case we will call it a $B$-matrix, or in $\{C_{b+1}, \ldots, C_c\}$, in which case we will call it a $C$-matrix.
    Let $k$ be the maximal number of $B$-matrices that appear as factors in any of the summands. Then $k\geq \ell$. 
    Consider any summand with the maximal number of $B$-matrices in its factors. Suppose it is the first summand. %
    For simplicity of notation, we assume the first $k$ factors of this summand are $B$-matrices and the last $m - k$ factors are $C$-matrices: %
    \[
    D_{i_{1,1}} \otimes \cdots \otimes D_{i_{1,m}}  = B_{i_{1,1}}\otimes\cdots\otimes B_{i_{1,k}}\otimes C_{i_{1,k+1}}\otimes\cdots\otimes C_{i_{1,m}}.
    \]
    We restrict the first $k$ factors of each summand in the definition of $A$ to $I\times J$ to get 
    \[
    A' = \sum_{t=1}^r a_t \, (D_{i_{t, 1}})_{I\times J} \otimes \cdots \otimes (D_{i_{t, k}})_{I\times J} \otimes D_{i_{t, k+1}} \otimes \cdots \otimes D_{i_{t, m}}.
    \]
    If there is a $C$-matrix among $D_{i_{t, 1}}, \ldots, D_{i_{t, k}}$, then $(D_{i_{t, 1}})_{I\times J} \otimes \cdots \otimes (D_{i_{t, k}})_{I\times J}$ is zero. 
    Moreover, nonzero summands cannot have any $B$-matrix among $D_{i_{t, k+1}}, \ldots, D_{i_{t, m}}$, as this would contradict the maximality of $k$. So in any non-zero summand the $D_{i_{t, 1}}, \ldots, D_{i_{t, k}}$ are $B$-matrices and the $D_{i_{t, k+1}}, \ldots, D_{i_{t, m}}$ are $C$-matrices. 
    We know that the first $k$ factors of the first summand of $A$ are $B$-matrices, so $A'$ is nonzero by the linear-independence of $\{(B_i)_{I\times J}\}^{\otimes k}\otimes\{C_i\}^{\otimes m-k}$.
    Using the notation $B'_i = (B_i)_{I\times J}$, we may rewrite $A'$ as
    %
    \[
    A' = \sum_{{j}\in[b]^k}B'_{j_1}\otimes\cdots\otimes B'_{j_k}\otimes M_j,
    \]
    where
    \[
    M_j = \sum_{t=1}^r\delta_{i_{t,1}=j_1}\cdots\delta_{i_{t,k}=j_k}\cdot a_t\, C_{i_{t,k+1}}\otimes\cdots\otimes C_{i_{t,m}}.
    \]
    Note that $S = M_{(i_{1,1},\dots,i_{1,k})}$ is a nonzero matrix, because 
    $a_1\neq0$ and the set $\{C_i\}_{i\in\{b+1,\dots,c\}}^{\otimes(m-k)}$ is linearly independent. %
    %
    %
    %
    %
%
    %
%
    Suppose $S_{x,y}\neq 0$.
    Then
    \[
    \sum_{{j}\in[b]^k}B'_{j_1}\otimes\cdots\otimes B'_{j_k} \cdot (M_j)_{x,y}
    \]
    is a nontrivial linear combination, so it has rank at least 
    \[
    \minrnk(X_{I\times J})^k\geq \minrnk(X_{I\times J})^\ell.
    \]
    This matrix is a submatrix of $A'$, and $A'$ is a submatrix of $A$. This proves the claim.
    %
\end{proof}

\subsection{Narrow tensors have maximal asymptotic subrank}
\label{sec:narrow}\label{subsec:main-unb}

We will now %
prove the main theorem of this section, which says that for every constant~$c$, and large enough $n_1, n_2$, all tensors in $\FF^{n_1} \otimes \FF^{n_2} \otimes \FF^{c}$ have maximal asymptotic subrank.

\begin{theorem}%
\label{theorem:concise-cn1n2-assubrank-c}
    For every $c\geq2$ there is $N(c)\in\mathbb{N}$ such that for any $n_1,n_2\in\NN$ such that $\max\{n_1,n_2\}>N(c)$, every concise tensor $T\in\mathbb{F}^{n_1}\otimes\mathbb{F}^{n_2}\otimes\mathbb{F}^c$ satisfies $\asympsubrank(T)=c$.
\end{theorem}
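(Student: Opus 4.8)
The plan is to prove the two inequalities $\asympsubrank(T)\le c$ and $\asympsubrank(T)\ge c$ separately. The upper bound is immediate: $T^{\boxtimes m}\in\FF^{n_1^m}\otimes\FF^{n_2^m}\otimes\FF^{c^m}$, so $\subrank(T^{\boxtimes m})\le c^m$ and hence $\asympsubrank(T)\le c$, with no conditions on $n_1,n_2$. All the work is in the lower bound, and this is where the hypothesis $\max\{n_1,n_2\}>N(c)$ must enter. Write $N=\max\{n_1,n_2\}$. First I would apply \autoref{cla:high-rk} to get that the $3$-slice span $\mathcal A_3$ of $T$ satisfies $\maxrank(\mathcal A_3)\ge N/c$, while conciseness in the third direction gives $\dim\mathcal A_3=c$. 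Applying \autoref{claim:matrices-into-diagonal-with-large-rank} to $\mathcal A_3$ then produces $U\in\GL(\FF^{n_1})$, $V\in\GL(\FF^{n_2})$, a set $J\subseteq[\min_i n_i]$, a basis $B_1,\dots,B_c$ of $U\mathcal A_3V$ and an integer $b\in[c]$ (with $b\ge 1$, since $\maxrank(\mathcal A_3)>0$) such that $(B_1)_{J\times J},\dots,(B_b)_{J\times J}$ are diagonal, linearly independent, with $\minrank\big((B_1)_{J\times J},\dots,(B_b)_{J\times J}\big)\ge k:=\eps(c)\maxrank(\mathcal A_3)\ge \eps(c)N/c$, and $(B_{b+1})_{J\times J}=\cdots=(B_c)_{J\times J}=0$. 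A further invertible change of basis on the third leg gives a tensor $\tilde T$ equivalent to $T$ whose $3$-slices are exactly $B_1,\dots,B_c$, so $\subrank(\tilde T^{\boxtimes m})=\subrank(T^{\boxtimes m})$ for all $m$.

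Next, fix $m\in\NN$ and an integer $\ell$ with $1\le\ell\le m$. The $3$-slices of $\tilde T^{\boxtimes m}$ are the $m$-fold Kronecker products of $B_1,\dots,B_c$, indexed by $[c]^m$. Restricting the third leg to the basis vectors whose index tuple has at least $\ell$ coordinates in $\{1,\dots,b\}$ yields a restriction $\tilde T^{(m,\ell)}\le\tilde T^{\boxtimes m}$ whose $3$-slices form exactly the set $Y$ from \autoref{claim:linear-combinations-of-mixed-tensorprods-high-rank} (applied with both index sets $I,J$ there taken to be $J$). These $|Y|$ slices are linearly independent, and \autoref{claim:linear-combinations-of-mixed-tensorprods-high-rank} gives $\minrank(Y)\ge k^{\ell}$. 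Therefore, whenever $k^{\ell}\ge 2|Y|(|Y|-1)$, \autoref{cla-high-rank-high-subrank} applied to $\tilde T^{(m,\ell)}$ yields $\subrank(T^{\boxtimes m})\ge|Y|$.

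It remains to choose $\ell=\ell(m)$ so that both $k^{\ell}\ge 2|Y|(|Y|-1)$ holds and $|Y|^{1/m}\to c$. Since $|Y|\le c^m$, the first condition follows from $k^{\ell}\ge 2c^{2m}$, so I would take $\ell(m)=\big\lceil\tfrac{2\ln c}{\ln k}\,m\big\rceil+1$; then $k^{\ell(m)}\ge k\cdot c^{2m}\ge 2c^{2m}$ (using $k>c^2$), and $\ell(m)\le m$ for large $m$. For the second condition, $|Y|=\sum_{t\ge\ell}\binom mt b^t(c-b)^{m-t}$ (and $|Y|=c^m$ if $b=c$), and a Stirling/entropy estimate shows that if $\ell(m)/m\to\alpha<b/c$ then $|Y|^{1/m}\to c$, because $\max_{p\in[0,1]}\big(H(p)+p\ln b+(1-p)\ln(c-b)\big)=\ln c$, attained at $p=b/c$. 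Since $\ell(m)/m\to\tfrac{2\ln c}{\ln k}$, the requirement $\alpha<b/c$ is guaranteed once $\tfrac{2\ln c}{\ln k}<\tfrac1c\le\tfrac bc$, i.e. once $k>c^{2c}$; and $k\ge\eps(c)N/c$, so this holds whenever $N>c^{2c+1}/\eps(c)$. Setting $N(c):=\big\lceil c^{2c+1}/\eps(c)\big\rceil$, for $\max\{n_1,n_2\}>N(c)$ we obtain $\asympsubrank(T)=\lim_m\subrank(T^{\boxtimes m})^{1/m}\ge\lim_m|Y|^{1/m}=c$, which with the upper bound completes the proof.

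The main obstacle is the last paragraph: making the restricted slice-collection $Y$ have exponential growth rate $c$ while keeping its min-rank $k^{\ell}$ above the quadratic-in-$|Y|$ threshold demanded by \autoref{cla-high-rank-high-subrank} pushes $\ell$ in opposite directions, and it is precisely reconciling them that forces $N(c)$ to be large (on the order of $c^{2c+1}/\eps(c)$); the entropy computation identifying $\ln c$ as the growth rate of $|Y|$ for $\ell/m<b/c$ is the technical heart. A secondary, bookkeeping-type difficulty is keeping careful track of the chain of equivalences and restrictions (the transformations $U,V$, the third-leg basis change, and the sub-basis restriction defining $\tilde T^{(m,\ell)}$) so that the matrices analyzed by \autoref{claim:linear-combinations-of-mixed-tensorprods-high-rank} really are, up to equivalence, a restriction of the $3$-slices of $T^{\boxtimes m}$.
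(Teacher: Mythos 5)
Your proposal is correct and takes essentially the same route as the paper's proof: it chains \autoref{cla:high-rk}, \autoref{claim:matrices-into-diagonal-with-large-rank}, \autoref{claim:linear-combinations-of-mixed-tensorprods-high-rank} and \autoref{cla-high-rank-high-subrank} on $T^{\boxtimes m}$ and lets $m\to\infty$, exactly as the paper does. The only (harmless) difference is in the final bookkeeping: the paper fixes $\ell=\tfrac{1}{2c}m$ and uses a Chernoff bound to get $|Y|\geq\tfrac12 c^m$, defining $N(c)$ implicitly, whereas you let $\ell/m$ depend on $k$ and use an entropy estimate, which buys an explicit value $N(c)=\lceil c^{2c+1}/\eps(c)\rceil$.
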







%


\begin{proof}
Suppose $n_1 \geq n_2$. Suppose $n_1 > N(c)$ for an $N(c)$ to be determined in the proof.
Let $T \in \FF^{n_1} \otimes \FF^{n_2} \otimes \FF^{c}$ be concise.
Let $\mathcal{A} \subseteq \FF^{n_1 \times n_2}$ be the 3-slice span of $T$.
By \autoref{cla:high-rk} we have 
$\maxrank(\mathcal{A}) \geq \frac{1}{c} n_1$.
We apply \autoref{lem:minrk_diag} to $\mathcal{A}$, giving matrices $B_1, \ldots, B_b$, $B_{b+1}, \ldots, B_c$ and a subset $J \subseteq [\min_\ell n_\ell]$.
To these, we apply \autoref{claim:linear-combinations-of-mixed-tensorprods-high-rank} for $m \geq \ell \geq 1$ (to be specified later) 
to obtain a set~$Y$ of slices from~$T^{\boxtimes m}$
satisfying
\beq\label{eq:mrkY}
\minrank(Y) \geq \big(\eps(c)\maxrank(\mathcal{A})\big)^\ell \geq (\eps(c)\tfrac1c n_1)^{\ell}. 
\eeq
The size of~$Y$ equals the number of $m$-tuples in $[c]^m$ containing at least~$\ell$ elements from~$[b]$.
Taking $m \geq 8c$ and $\ell = \tfrac{1}{2c} m$, it then follows from the Chernoff bound (e.g., \cite[Equation~(7)]{MR1045520}) that~$|Y|\geq \tfrac12c^m$.

Since~$T$ is concise, so is~$T^{\boxtimes m}$ and therefore, the set~$Y$ is linearly independent.
Let~$T'$ be the tensor with 3-slices given by the elements in~$Y$.
Then $T^{\boxtimes m} \geq T'$.
Provided
\begin{equation}\label{eq:minrank}
\minrnk(Y) \geq 2(\tfrac12c^{m})^2,
\end{equation}
it follows from \autoref{cla-high-rank-high-subrank} that
\beq\label{eq:QTmbound}
\subrank(T^{\boxtimes m}) \geq \subrank(T') \geq \tfrac12c^m.
\eeq
By our choice of~$\ell$ and~\eqref{eq:mrkY}, we have that~\eqref{eq:minrank} holds if
\beqn
\big(\eps(c)\tfrac1c n_1\big)^{\tfrac{1}{2c}m}
\geq
2(\tfrac12c^{m})^2.
\eeqn
It is clear that there is an~$N(c)\in\NN$ such that this is satisfied for any~$n_1>N(c)$ and every~$m$.
Letting~$m$ go to infinity we then get from~\eqref{eq:QTmbound} that 
\beqn
\asympsubrank(T) = \lim_{m\to \infty}\subrank(T^{\boxtimes m})^{1/m} \geq c.\qedhere
\eeqn
\end{proof}

\begin{remark}
    To conclude this section, let us note that a similar approach fails for $k$-tensors, for $k>3$. The motivating question mentioned at the beginning of \autoref{sec:lower-bound-imbalanced}, whether there exist accumulation points in the image of asymptotic slicerank (over~$\CC$ and finite fields), makes sense for $k$-tensors for any $k$. In order to show there cannot exist such points in the same way for $k>3$ we would need to show there are only finitely many formats that allow asymptotic sliceranks strictly smaller than the maximal constant $c$. For $k=3$ we showed it for asymptotic subrank, which lower bounds asymptotic slicerank, but this is false for asymptotic subrank and any~$k>3$. For $k=3$ conciseness implies the formats that allow for a constant asymptotic subrank are only $(n_1,n_2,c)$ for a constant $c$ and $n_1,n_2$ growing. But for $k=4$ and above there can be formats of the form $(n_1,n_2,\dots,n_{k-2},c_1,c_2)$ for $c_1,c_2$ constants and $n_1,n_2,\dots,n_{k-2}$ growing. For these formats the straightforward generalisation 
    of \autoref{theorem:concise-cn1n2-assubrank-c}, asking to show that for every $c_1,c_2$ constants there would be $N$ for which any concise $(n_1,n_2,\dots,n_{k-2},c_1,c_2)$-tensor with $n_1,n_2,\dots,n_{k-2}>N$ has asymptotic subrank $\min\{c_1,c_2\}$ does not hold for $k$-tensors, for any $k>3$:
    
    The $4$-tensor $\id_n\otimes \id_2=\sum_{i=1}^n e_i\otimes e_i\otimes e_1\otimes e_1+\sum_{i=1}^n e_i\otimes e_i\otimes e_2\otimes e_2$ is a concise $4$-tensor of format $(n,n,2,2)$ (for each $i\in[4]$, the set of $i$-slices is linearly independent) but it is a tensor-product as a $2$-tensor (for the bipartition $\{\{1,2\},\{3,4\}\}$) so its subrank and its asymptotic subrank are $1$ as a $2$-tensor and thus as a $4$-tensor as well.
    
    This example can be generalised to any $k>3$ and any $c_1=c_2=:c$ as $\langle n\rangle_{k-2}\otimes \id_c$ where $\langle n\rangle_{k-2}=\sum_{i=1}^n e_i\otimes\cdots\otimes e_i$ is the diagonal (unit) $(k-2)$-tensor of size $n$.
\end{remark}

\subsection{Concrete case: smallest dimension equals two} %
\label{subsec:dimtwo}

In the previous section we proved for any constant $c$ that for large enough $n_1,n_2$ all concise tensors in $\FF^{n_1} \otimes \FF^{n_2} \otimes \FF^c$ have asymptotic subrank $c$. In the special case $c=2$ we can prove the following stronger statement in which we make the constant $N(2)$ explicit, namely $N(2) = 2$, and we are able to replace asymptotic subrank by subrank.

\begin{theorem}\label{local-dim-2}
    Let $n_1,n_2 > 2$ and let $T\in\mathbb{F}^{n_1}\otimes\mathbb{F}^{n_2}\otimes\mathbb{F}^2$ be a concise tensor. Then
    \[\subrank(T)=2.\]
\end{theorem}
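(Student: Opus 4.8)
The plan is to establish the lower bound $\subrank(T)\geq 2$; the matching upper bound is immediate since $\subrank(T)\leq\tensorrank_3(T)\leq 2$. Write $A,B\in\FF^{n_1\times n_2}$ for the two $3$-slices of $T$. Conciseness says exactly that $A,B$ are linearly independent, that the concatenation $[A;B]$ has rank $n_1$, and that $\left[\begin{smallmatrix}A\\ B\end{smallmatrix}\right]$ has rank $n_2$. Unwinding the definition of a restriction to $\langle 2\rangle$ — the map on the third leg only remixes the two slices — one obtains the reformulation: $\subrank(T)\geq 2$ holds if and only if there are $L_1\in\FF^{2\times n_1}$ and $L_2\in\FF^{2\times n_2}$ such that $L_1AL_2^{\top}$ and $L_1BL_2^{\top}$ are two linearly independent \emph{diagonal} $2\times 2$ matrices. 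Spelling this out, it suffices to find (after some $\GL\times\GL$ change of basis) rows $i_1\neq i_2$ and columns $j_1\neq j_2$ such that the two off-diagonal positions $(i_1,j_2)$ and $(i_2,j_1)$ vanish in both $A$ and $B$, while the pairs $(A_{i_1j_1},B_{i_1j_1})$ and $(A_{i_2j_2},B_{i_2j_2})$ are linearly independent in $\FF^2$.

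I would then bring the pencil $\lambda A+\mu B$ into Kronecker canonical form: there are invertible $U\in\FF^{n_1\times n_1}$, $V\in\FF^{n_2\times n_2}$ making $UAV$ and $UBV$ simultaneously block diagonal, with blocks of type $L_k$ (size $k\times(k+1)$), $L_k^{\top}$ (size $(k+1)\times k$), and square regular blocks. Crucially this requires no field extension (the rational canonical form of the regular part suffices, using companion matrices of powers of irreducible polynomials), and $\subrank$ is invariant under this equivalence. The rank hypotheses translate cleanly: full column rank of $\left[\begin{smallmatrix}A\\ B\end{smallmatrix}\right]$ forbids an $L_0$ block (a zero column), full row rank of $[A;B]$ forbids an $L_0^{\top}$ block (a zero row), and linear independence of $A,B$ forbids the pencil being a direct sum of $1\times 1$ blocks all of the same eigenvalue type.

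The core of the proof is then a case analysis producing the two rows and columns from the reformulation. Every block other than a $1\times 1$ regular block (call these the \emph{big} blocks) contains two positions whose entry patterns $(A_{ij},B_{ij})$ are linearly independent — e.g. $(1,0)$ and $(0,1)$ for an $L_k$ or $L_k^{\top}$ block, and $(1,\ast)$ and $(0,1)$ for a companion block — whereas a $1\times 1$ block contributes a single nonzero pattern. Hence: if the pencil has two distinct blocks, at least one of them big, pick a position from each of two different blocks with linearly independent patterns; lying in different blocks forces the two off-diagonal entries of the resulting $2\times 2$ submatrix to vanish in both slices, and we are done. If all blocks are $1\times 1$, then $A,B$ are simultaneously diagonal and, being independent, two of the diagonal pairs $(A_{ii},B_{ii})$ are linearly independent — restrict to those two coordinates. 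The remaining case is that the pencil is a single block, where $n_1,n_2>2$ forces its size to be at least $3$: for an $L_k$ or $L_k^{\top}$ block the submatrix on rows $\{1,2\}$ and columns $\{1,3\}$ is already of the required form; a single regular block is indecomposable, hence has an invertible coefficient matrix, so (possibly after swapping the last two tensor legs) it is equivalent to $\{I_m,M\}$ with $M$ a companion matrix, and since $m\geq 3$ the submatrix on rows $\{1,3\}$ and columns $\{1,2\}$ works — its $A$-part is $\operatorname{diag}(1,0)$ and, as $M$ has $1$'s on its subdiagonal and zeros on its diagonal away from the last coordinate, its $M$-part is $\operatorname{diag}(0,1)$.

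The point where care is needed is the single-block case: there is no second block to play against, so the $2\times 2$ pattern must be produced inside one indecomposable pencil, and this is exactly where the hypothesis $n_1,n_2>2$ (rather than $\geq 2$) enters — for size $2$ the statement genuinely fails, since $A=I_2$, $B=E_{1,2}$ is concise with $\subrank=1$. The other delicate bookkeeping is ensuring the normal-form reduction is valid over an arbitrary (possibly very small) field, which is why one works with the rational canonical form of the regular part rather than with eigenvalue/Jordan data; everything else is a routine verification of the entry patterns of the Kronecker blocks.
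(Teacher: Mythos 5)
Your proof is correct, but it takes a genuinely different route from the paper's. The paper argues by hand: it simultaneously reduces the two slices by Gaussian elimination (diagonalize $A$, then partially diagonalize $B$ on the rows where $A$ vanishes), splits into cases according to $\rank(A)$, and in each case restricts to a $2\times2$ subtensor with slices $\left(\begin{smallmatrix}1&0\\0&0\end{smallmatrix}\right)$ and $\left(\begin{smallmatrix}t&u\\s&1\end{smallmatrix}\right)$, which a separate short lemma (\autoref{cl:1000andtus1Sub2}) shows has subrank $2$ --- so nonzero off-diagonal entries are tolerated and cleaned up afterwards. You instead invoke the Kronecker canonical form of the pencil $\lambda A+\mu B$ (in its rational form, so indeed valid over any field), translate conciseness into the absence of $L_0$ and $L_0^{\top}$ blocks plus non-proportionality of the slices, and then locate a $2\times2$ submatrix that is \emph{exactly} diagonal in both slices; your reformulation of $\subrank(T)\geq 2$ via two linearly independent diagonal $2\times2$ images is right, since the diagonal $2\times2$ matrices form a $2$-dimensional space, and your block-by-block verification (two independent entry patterns in every big block, the all-diagonal case, and the single-block case where $n_1,n_2>2$ enters) goes through. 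What your route buys is a structural explanation --- the block analysis makes transparent why only a single small block can obstruct the pattern and exactly where the hypothesis $n_1,n_2>2$ is needed --- at the cost of invoking pencil classification; the paper's argument is more elementary and self-contained, and, as its closing remark notes, uses conciseness only in two directions, whereas you also use full column rank of $\bigl[\begin{smallmatrix}A\\B\end{smallmatrix}\bigr]$ to exclude $L_0$ blocks.

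Two small repairs. For a single $L_k^{\top}$ block your stated choice (rows $\{1,2\}$, columns $\{1,3\}$) does not work, since that block has $B_{2,1}=1$; take rows $\{1,3\}$ and columns $\{1,2\}$ instead, the transpose of your $L_k$ choice. And ``swapping the last two tensor legs'' should read ``swapping the two $3$-slices'', i.e.\ acting by the swap on the third leg, which is the invertible restriction you actually need; neither point affects the validity of the argument.
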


\begin{example}
    In \autoref{local-dim-2} it is required that at least one of the dimensions~$n_1, n_2$ is strictly larger than 2.
    Namely, there are concise tensors $T \in \FF^2 \otimes \FF^2 \otimes \FF^2$ such that $\subrank(T) = 1$, for instance the tensor $T=e_1 \otimes e_2 \otimes e_2 + e_2 \otimes e_1 \otimes e_2 + e_2 \otimes e_2 \otimes e_1$ that is often called the ``W-tensor''. 
\end{example}

\begin{remark}
    \autoref{local-dim-2} leaves open what happens for $n_1=2$ and $n_2 > 2$. 
    Over the complex numbers the classification %
    of Miyake and Verstraete \cite{miake2004multipartite22n} shows that indeed for every $n>2$ and every concise tensor $T \in \FF^{n} \otimes \FF^2 \otimes \FF^2$, $\subrank(T)=2$.
    Over any algebraically closed field, Tobler \cite[Section 4.2]{Tobler1997Spezi-6235} showed that for every $n_2>n_1\geq2$ and every concise tensor $T \in \FF^{n_1} \otimes \FF^{n_2} \otimes \FF^2$, $\bordersubrank(T)=2$. This in particular implies that the asymptotic subrank is $2$ (\autoref{lem:border-asymp-subrank}). %

\end{remark}

\begin{lemma}\label{cl:1000andtus1Sub2} 
For any $s,t,u \in \FF$, the tensor with slices 
\[
A=\begin{pmatrix}1 & 0\\ 0 & 0\end{pmatrix}, B=\begin{pmatrix}t & u\\s & 1\end{pmatrix}
\]
has subrank $2$.
\end{lemma}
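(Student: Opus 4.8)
The goal is to show that the $2\times 2 \times 2$ tensor $T$ with slices $A = \begin{pmatrix}1 & 0\\ 0 & 0\end{pmatrix}$ and $B = \begin{pmatrix}t & u\\ s & 1\end{pmatrix}$ has subrank $2$, i.e.\ $\langle 2\rangle \leq T$. Since $T$ already lives in $\FF^2\otimes\FF^2\otimes\FF^2$ and the target $\langle 2\rangle$ has the same format, it suffices to exhibit invertible maps $L_1, L_2 \in \GL(\FF^2)$ and a suitable $L_3$ acting on the slice-direction so that $(L_1\otimes L_2\otimes L_3)T = e_1\otimes e_1\otimes e_1 + e_2\otimes e_2\otimes e_2$. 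Equivalently, I want to find a linear change of the two slices (replacing $(A,B)$ by two independent combinations $C = \alpha A + \beta B$, $D = \gamma A + \delta B$) and then invertible row/column operations $L_1, L_2$ such that $L_1 C L_2^{\mathsf T} = E_{1,1}$ and $L_1 D L_2^{\mathsf T} = E_{2,2}$. So the whole problem reduces to: find two linearly independent matrices in $\linspan\{A,B\}$ that are simultaneously equivalent (under two-sided multiplication) to $E_{1,1}$ and $E_{2,2}$ respectively.

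**Key steps.** First I would observe that $A = E_{1,1}$ is already rank one, so it is a natural candidate for the matrix to be sent to $E_{1,1}$. The pencil $\lambda A + \mu B$ has determinant $\det(\lambda A + \mu B) = (\lambda + \mu t)\mu - \mu^2 u s = \mu(\lambda + \mu t - \mu u s)$, which vanishes when $\mu = 0$ (giving $A$) or when $\lambda = \mu(us - t)$. So there is a second rank-$\leq 1$ matrix in the pencil, namely $C := (us - t)A + B = \begin{pmatrix}us & u\\ s & 1\end{pmatrix}$, which has rank exactly one (it is the outer product $\begin{pmatrix}u\\ 1\end{pmatrix}\begin{pmatrix}s & 1\end{pmatrix}$, and is nonzero since its $(2,2)$-entry is $1$). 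Thus $\linspan\{A, B\} = \linspan\{A, C\}$ contains two linearly independent rank-one matrices $A = e_1\otimes e_1$ and $C = (u e_1 + e_2)\otimes(s e_1 + e_2)$. Now I would check that the pair of vectors $e_1,\ u e_1 + e_2$ is a basis of $\FF^2$ (it is, since the matrix $\begin{pmatrix}1 & u\\ 0 & 1\end{pmatrix}$ is invertible), and likewise $e_1,\ s e_1 + e_2$ is a basis of $\FF^2$. Hence there is $L_1 \in \GL(\FF^2)$ sending $e_1 \mapsto e_1$ and $u e_1 + e_2 \mapsto e_2$, and $L_2 \in \GL(\FF^2)$ sending $e_1 \mapsto e_1$ and $s e_1 + e_2 \mapsto e_2$. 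Then $(L_1 \otimes L_2)$ maps $A = e_1\otimes e_1$ to $e_1\otimes e_1$ and $C$ to $e_2\otimes e_2$. Finally, apply the map $L_3$ on the third leg that sends the first basis vector to $e_1$ and the second to $(us-t)e_1 + e_2$ (the combination defining $C$); this is invertible and realizes $T \geq e_1\otimes e_1\otimes e_1 + e_2\otimes e_2\otimes e_2 = \langle 2\rangle$, so $\subrank(T)\geq 2$. The reverse inequality $\subrank(T)\leq 2$ is immediate since $T$ has format $2\times 2\times 2$.

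**Main obstacle.** There is essentially no deep obstacle here; the one point requiring care is verifying that the second rank-one matrix $C$ is genuinely nonzero and genuinely independent of $A$ for all values of $s,t,u\in\FF$ (including degenerate cases like $s=u=0$, where $C = \begin{pmatrix}0 & 0\\ 0 & 1\end{pmatrix}$ — still fine), and that the two changes of basis $L_1, L_2$ remain invertible in every case, which they do because the relevant $2\times 2$ transition matrices are unipotent upper-triangular regardless of the parameters. One should also double-check that the argument does not secretly require $\FF$ to be infinite or algebraically closed — it does not, since the factorization of $\det(\lambda A + \mu B)$ over $\FF$ is explicit and the root $\lambda/\mu = us - t$ lies in $\FF$. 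So the proof is a short direct construction; I would present it by writing down $C$, its rank-one factorization, the two basis changes, and concluding $\langle 2\rangle \leq T$.
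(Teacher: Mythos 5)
Your proof is correct and in substance the same as the paper's: both give an explicit restriction of $T$ to $\langle 2\rangle$ by recombining the two slices and applying unipotent basis changes on the first two legs, the paper clearing the off-diagonal entries of $B$ by row/column operations and then subtracting $t'A$, while you first locate the second rank-one element of the pencil via $\det(\lambda A+\mu B)$ and then diagonalize the pair $(A,C)$. One immaterial slip: the $L_3$ you describe ($e_1\mapsto e_1$, $e_2\mapsto(us-t)e_1+e_2$) turns the slices into $(A+(us-t)B,\,B)$ rather than $(A,\,(us-t)A+B)$; the intended map is its transpose, and in any case some invertible recombination of the slices realizing $(A,C)$ clearly exists, so the argument stands.
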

\begin{proof} We first subtract $s$ times the second column from the first column and $u$ times the second row from the first row to get 
\[
A=\begin{pmatrix}1 & 0\\ 0 & 0\end{pmatrix}, B=\begin{pmatrix}t' & 0\\ 0 & 1\end{pmatrix}
\]
for some $t'$.
We then subtract $t'\cdot A$ from $B$ to get 
\[
A=\begin{pmatrix}1 & 0\\ 0 & 0\end{pmatrix}, B=\begin{pmatrix}0 & 0\\ 0 & 1\end{pmatrix}.
\]
This is $\langle2\rangle$, the diagonal tensor of rank $2$.
\end{proof}

\begin{proof}[Proof of \autoref{local-dim-2}]
    Let $A,B$ be the two $n_1\times n_2$ slices. We diagonalize the $n_1\times 2n_2$ matrix $(A|B)$ by diagonalizing $A$ and doing the same row and column operations on~$B$. Observe that $\rank(A|B)=n_1$ as $T$ is concise. 
    Assume w.l.o.g.\ that the top $r\times r$ minor of $A$ is the identity and the rest is~$0$. Specifically, the last $n_1-r$ rows in it are $0$. %
    We can now continue and diagonalize the last $n_1-r$ rows of $B$, noting that this will not change the structure of~$A$ (we are only doing row operations now).
    We will get a subset~$I$ of $B$'s columns such that restricted to some of the last $n_1-r$ rows, $B|_{\{r+1,\dots,r+\ell\}\times I}$ is the identity. 
    By doing more row operations we can assume that the first $r$ rows in the columns~$I$ in $B$ are $0$ (the $n_1-r$ rows in $A$ are zeros so we are free to subtract scaled versions of these rows from above rows to zero out the desired cells in $B$). Here's an illustration of our two $n_1\times n_2$ matrices (the dashed lines separate the first~$r$ rows and columns from the rest and the $I$ indices of columns are indicated beneath matrix $B$. Asterisks signify arbitrary values that we cannot assume anything about). In particular, a $0$ above the dashed line in $B$ represents an $r\times 1$ columns of zeros:
    \[
   A=\left(
   \begin{array}{c;{2pt/2pt}c}
       \id_r & 0 \\ \hdashline[2pt/2pt]
       0 & 0
   \end{array}
   \right) \quad
   B=\genfrac{}{}{0pt}{}{\left(
   \begin{array}{cccccc;{2pt/2pt}c}
        *&0&*&0&\cdots & 0& \\ \hdashline[2pt/2pt]
        *&1&*&0&\cdots&0&*\\
        *&0&*&1&\cdots&0&*\\
        *&\vdots&*&\vdots&\ddots&\vdots&\vdots\\
        *&0&*&0&\cdots&1&*\\
        *&0&*&0&\cdots&0&*
   \end{array}
   \right)}{ \begin{array}{ccccccc}
        &I_1&&I_2&\cdots&I_\ell&\hspace{1pt}  
   \end{array}}%
    \]
    We now have three cases to consider:

    Case 1: $1<r<\min\{n_1,n_2\}$.
    Let $i\in I$ be arbitrary but different than $r$.
    Note that $B_{r+i,i}=1$ (it is part of the identity matrix on columns $I$). 
    Consider the restriction of our tensor to rows $r,r+i$ and columns $r$ and $i$. 
    This gives the tensor with slices
    \[
    A = \begin{pmatrix}1 & 0\\ 0 & 0\end{pmatrix},
    B=\begin{pmatrix}t & 0\\ s & 1\end{pmatrix}
    \]
    for some $s$ and $t$. From \autoref{cl:1000andtus1Sub2} we know that this tensor has subrank 2.

    Case 2: $r=\min\{n_1,n_2\}$.
We consider two subcases:
    \begin{itemize}
    \item Case 2a: The matrix $B$ has a nonzero nondiagonal coefficient. Assume w.l.o.g.\ that $B_{1,2}=t \neq 0$.  
Restrict our tensor to rows 1,3 and columns 2,3 ($n_1,n_2\geq3$).
We get slices
\[
A=\begin{pmatrix}0 & 0\\ 0 & 1\end{pmatrix}, B=\begin{pmatrix}t & *\\ * & *\end{pmatrix}.
\]
Using \autoref{cl:1000andtus1Sub2}, as $t\neq 0$, this tensor has subrank $2$.

    \item Case 2b: All nondiagonal coefficients of $B$ are $0$ ($\forall i\neq j: B_{i,j}=0$). 
As the tensor is concise, the matrix~$B$ is not a multiple of the matrix $A$. The diagonal coefficients of $A$ are all~$1$. Hence, w.l.o.g., $B_{1,1}\neq B_{2,2}$.
Restricting to rows and columns $\{1,2\}$ we get 
\[
A=\begin{pmatrix}1 & 0\\ 0 & 1\end{pmatrix},
B=\begin{pmatrix}a & 0\\ 0 & b\end{pmatrix}.
\]
Taking appropriate linear combinations of $A$ and $B$ we get 
\[
A=\begin{pmatrix}0 & 0\\ 0 & 1\end{pmatrix}, B=\begin{pmatrix}1 & 0\\ 0 & 0\end{pmatrix}.
\]
\end{itemize}

    Case 3: $r=1$.
    In this case we repeat the argument above with $B$ in place of $A$. As $\rank(A)=1$, $n_1,n_2>2$, and the tensor is concise we must have $\rank(B)>1$.
\end{proof}

\begin{remark}
    The proof of \autoref{local-dim-2} only uses conciseness with respect to two directions, namely that the matrices $A$ and $B$ are linearly independent (Case 2b) and that the matrix $(A|B)$ has full rank (all cases). An example (that is not concise) is
    \[
    A=\begin{pmatrix}1 & 1\\ 1 & 1\end{pmatrix}, B=\begin{pmatrix}1 & 1\\ 2 & 2\end{pmatrix}.
    \]
\end{remark}

\section{Discreteness of asymptotic tensor ranks}
\label{sec:discreteness}

In this section we prove a general sufficient condition for a tensor parameter to be discrete (using the lower bounds on the asymptotic subrank of concise tensors that we proved in the previous sections, \autoref{sec:max-rank} %
and \autoref{sec:lower-bound-imbalanced}). Then we show that this condition is satisfied (and thus obtained discreteness) for several natural tensor parameters and regimes.

\subsection{General sufficient condition for discreteness}

We recall that two tensors $S$ and $T$ are called equivalent if $S \geq T$ and $T \geq S$. Every tensor is equivalent to a concise tensor (\autoref{lem:equiv-concise}). We will be interested in functions~$f$ on tensors that are invariant under equivalence, meaning that if two tensors~$S$ and $T$ are equivalent, then $f(S) = f(T)$. This is a very mild condition, in particular any function that is monotone under restriction is also invariant under equivalence.

\begin{theorem}
    \label{th:gen-no-acc}
    Let $\FF$ be any field. Let $\mathcal{C}$ be any subset of order-three tensors over~$\FF$, such that for any $T \in \mathcal{C}$ there is an $S \in \mathcal{C}$ that is concise and equivalent to $T$.
    Let~$f : \mathcal{C} \to \RR_{\geq0}$ be any function such that
    \begin{enumerate}[\upshape (i)]
    \item\label{item:equiv} $f$ does not change under equivalence of tensors
    \item\label{item:lb} $f(T) \geq \asympsubrank(T)$ for every $T \in \mathcal{C}$
    \item\label{item:fin} For every $n_1, n_2, n_3\in \NN$,  $f$ takes finitely many values on $\mathcal{C} \cap (\FF^{n_1} \otimes \FF^{n_2} \otimes \FF^{n_3})$ 
    \item\label{item:ub} For every $n_1, n_2, n_3\in \NN$, $f(T) \leq \min_\ell n_\ell$ for every $T \in \mathcal{C} \cap (\FF^{n_1} \otimes \FF^{n_2} \otimes \FF^{n_3})$.
    \end{enumerate}
    Then the set of values that $f$ takes on $\mathcal{C}$ has no accumulation points.
\end{theorem}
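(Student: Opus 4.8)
The plan is to argue by contradiction. Suppose, towards a contradiction, that some $v \in \RR_{\ge 0}$ is an accumulation point of the set $f(\mathcal C)$. A real number is an accumulation point of a subset $A \subseteq \RR$ precisely when every neighbourhood of it meets $A$ in infinitely many points, so I would begin by extracting \emph{pairwise distinct} values $s_1, s_2, \dots \in f(\mathcal C)$ with $s_i \to v$ and tensors $T_i \in \mathcal C$ with $f(T_i) = s_i$. Using the hypothesis on $\mathcal C$ together with condition~(\ref{item:equiv}), I would then replace each $T_i$ by a concise tensor in $\mathcal C$ equivalent to it, which leaves $f(T_i)$ unchanged; so from now on every $T_i$ is concise, say $T_i \in \FF^{n_1^{(i)}} \otimes \FF^{n_2^{(i)}} \otimes \FF^{n_3^{(i)}}$, and I write $m_i = \min_j n_j^{(i)}$ and $M_i = \max_j n_j^{(i)}$. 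The whole argument then aims to contradict the fact that $f(T_i) = s_i$ gives infinitely many distinct reals converging to the finite value $v$.

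Next comes a short case analysis on the boundedness of $(m_i)_i$ and $(M_i)_i$. If $(m_i)_i$ is unbounded, I pass to a subsequence with $m_i \to \infty$; then \autoref{th:cube-root-lb} (the weaker bound \autoref{th:fourth-root} would also do) gives $\asympsubrank(T_i) \ge m_i^{1/3} \to \infty$, so condition~(\ref{item:lb}) forces $f(T_i) \to \infty$, contradicting $f(T_i) \to v$. Hence $(m_i)_i$ is bounded. If in addition $(M_i)_i$ is bounded, then the $T_i$ lie in only finitely many tensor formats, and condition~(\ref{item:fin}) (applied to each such format) makes $\{f(T_i) : i \in \NN\}$ finite, contradicting that the $s_i$ are infinitely many distinct numbers. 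So $(M_i)_i$ is unbounded; I pass to a subsequence with $M_i \to \infty$, and, since $(m_i)_i$ is a bounded sequence of non-negative integers, I pass to a further subsequence on which $m_i = c$ is constant.

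In this remaining regime the goal is to show that $\asympsubrank(T_i) = c$ for all large $i$: then conditions~(\ref{item:lb}) and~(\ref{item:ub}) sandwich $c = \asympsubrank(T_i) \le f(T_i) \le m_i = c$, so $f(T_i) = c$ for all large $i$, again contradicting that the $s_i$ are pairwise distinct. For $c \ge 2$, this is precisely \autoref{theorem:concise-cn1n2-assubrank-c}: since conciseness and $\asympsubrank$ are unchanged by permuting the three tensor factors, that theorem applies with the smallest dimension $c$ in whatever position it sits, and its hypothesis on the other two dimensions is met once $M_i > N(c)$. For $c \le 1$ it is elementary: a concise tensor all of whose dimensions are positive is nonzero, so $\asympsubrank(T_i) \ge 1$ when $c = 1$ (and $\asympsubrank(T_i) \ge 0 = c$ trivially when $c = 0$), while in all cases $\asympsubrank(T_i) \le \min_j \tensorrank_j(T_i) = m_i = c$ (using $\subrank \le \tensorrank_j$ and multiplicativity of flattening ranks under $\boxtimes$); hence $\asympsubrank(T_i) = c$. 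This yields the desired contradiction in every case, completing the proof.

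The only substantial inputs are the two asymptotic-subrank lower bounds for concise tensors, \autoref{th:cube-root-lb} and \autoref{theorem:concise-cn1n2-assubrank-c}, which are available; the rest is bookkeeping with subsequences. The step I expect to need the most care is the regime where $\min_j n_j^{(i)}$ stays bounded but $\max_j n_j^{(i)}$ grows without bound: this is exactly the situation that \autoref{theorem:concise-cn1n2-assubrank-c} was proved for, and the point is to check that the narrow-tensor bound pins $f(T_i)$ down to the \emph{single integer} equal to the common small dimension — so it cannot be part of a sequence of distinct values accumulating at $v$ — rather than merely bounding it from below.
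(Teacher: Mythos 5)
Your proof is correct and follows essentially the same route as the paper's: reduce to concise tensors, use the cube-root bound (\autoref{th:cube-root-lb}) when the minimal dimension is unbounded, and otherwise use the narrow-tensor result (\autoref{theorem:concise-cn1n2-assubrank-c}) together with conditions (ii) and (iv) to pin $f$ to the constant minimal dimension, contradicting a sequence of distinct converging values. Your explicit subsequence bookkeeping and separate treatment of the cases $c\le 1$ (where \autoref{theorem:concise-cn1n2-assubrank-c} is stated only for $c\ge 2$) is, if anything, slightly more careful than the paper's own write-up.
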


\begin{proof}
    Suppose we have, for every $i \in \NN$, three numbers $a_i, b_i, c_i \in \NN$ and a tensor
    $T_i \in \mathcal{C} \cap (\FF^{a_i} \otimes \FF^{b_i} \otimes \FF^{c_i})$ in such a way that the set $\{f(T_i): i \in \NN\}$ is infinite. We will show that the $f(T_i)$ cannot converge.
    
    We may assume that every tensor $T_i$ is concise by condition~(\ref{item:equiv}) and \autoref{lem:equiv-concise}.
    By condition~(\ref{item:fin}), $f$ takes only finitely many values on $\mathcal{C} \cap (\FF^{n_1} \otimes \FF^{n_2} \otimes \FF^{n_3})$ for any choice of $n_1, n_2, n_3$, so the set of triples $\{(a_i, b_i, c_i): i \in \NN\}$ must be infinite. %
    
    We consider two cases. First,  suppose $\left\{\min \{a_i, b_i, c_i\}\right\}_i$ is bounded. Suppose for simplicity of notation that $a_i = \min \{a_i, b_i, c_i\}$ for every $i \in \NN$. Then $a_i$ takes only finitely many values. Take any infinite subsequence of the $T_i$ such that  $a_i$ is constant. %
    Then both $b_i$ and $c_i$ are unbounded, because $b_i\leq a_i c_i$ and $c_i\leq a_i b_i$ by \autoref{cl:concise-formats}. When~$b_i$ and $c_i$ are large enough we have $\asympsubrank(T_i) = a_i$ by \autoref{theorem:concise-cn1n2-assubrank-c}, %
    and hence $f(T_i) = a_i$ by condition~(\ref{item:lb}) and~(\ref{item:ub}). We conclude that $f(T_i)$ takes only finitely many values, which contradicts our assumption that $\{f(T_i): i \in \NN\}$ is infinite.
    
    Second, suppose $\min \{a_i, b_i, c_i\}$ is unbounded. 
    We have $f(T_i) \geq \asympsubrank(T_i)$ by condition~(\ref{item:lb}). We have $\asympsubrank(T_i) \geq \min \{a_i, b_i, c_i\}^{1/3}$ by \autoref{th:cube-root-lb}. It then follows that the values $f(T_i)$ do not converge.
\end{proof}

\subsection{Asymptotic subrank and asymptotic slice rank}

We will now use \autoref{th:gen-no-acc} to prove a discreteness theorem for asymptotic slice rank and asymptotic subrank. 

Before stating our theorem, we recall the notions of tight and oblique tensors, which are used in one part of the theorem. 
Let $T \in \FF^{n_1} \otimes \FF^{n_2} \otimes \FF^{n_3}$.
We recall that the support of $T$ is the set $\supp(T) = \{(i,j,k) : T_{i,j,k} \neq 0\}$.
The tensor $T$ is called \emph{tight} if (after an appropriate basis transformation) there exist injective maps $f_\ell : [n_\ell] \to \ZZ$ such that for every $(i_1,i_2,i_3) \in \supp(T)$ we have $f_1(i_1) + f_2(i_2) + f_3(i_3) = 0$. 
The tensor $T$ is called \emph{oblique} if (after an appropriate basis transformation) $\supp(T)$ is an antichain in the product order (i.e.~pointwise order) on $[n_1] \times [n_2] \times [n_3]$. Tight tensors are oblique.

\begin{theorem}
\label{th:no-acc}
\hfill
\begin{enumerate}[\upshape(i)]
\item Let $\mathbb{F}$ be any field and $S \subseteq \FF$ any finite subset. 
Then the asymptotic subrank and asymptotic slice rank on order-three tensors over $S$ have no accumulation points.
\item 
The asymptotic slice rank on order-three tensors over~$\CC$ has no accumulation points.
\item Let $\mathbb{F}$ be any field.
\begin{itemize}
\item The asymptotic subrank and asymptotic slice rank on tight order-three tensors over $\mathbb{F}$ have no accumulation points.
\item The asymptotic slice rank on oblique order-three tensors over $\mathbb{F}$ has no accumulation points.
\end{itemize}
\end{enumerate}
\end{theorem}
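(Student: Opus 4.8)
The plan is to derive all three parts of \autoref{th:no-acc} from the general criterion \autoref{th:gen-no-acc}. In each case one specialises $\mathcal{C}$ and $f$: for part~(i), $\mathcal{C}$ is the set of order-three tensors with all coefficients in $S$ and $f$ is either $\asympsubrank$ or $\asympslicerank$; for part~(ii), $\mathcal{C}$ is all order-three tensors over $\CC$ and $f = \asympslicerank$; for part~(iii), $\mathcal{C}$ is the tight (resp.\ oblique) order-three tensors over $\FF$ and $f \in \{\asympsubrank, \asympslicerank\}$ (resp.\ $f = \asympslicerank$). The first hypothesis of \autoref{th:gen-no-acc} --- every $T \in \mathcal{C}$ is equivalent to a concise tensor still in $\mathcal{C}$ --- follows from \autoref{lem:equiv-concise}, which produces a concise \emph{coordinate subtensor} equivalent to $T$, once one checks that each class is closed under coordinate subtensors: a subtensor of a tensor over $S$ again has coefficients in $S$; a subtensor of a tight tensor is tight (restrict the maps $f_i$ to the surviving indices); a subtensor of an oblique tensor is oblique (a subset of an antichain is an antichain).

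Next I would check hypotheses (i), (ii), (iv) of \autoref{th:gen-no-acc}, which hold uniformly in every case. For~(i), $\asympsubrank$ and $\asympslicerank$ are monotone under restriction (apply \autoref{lem:monotone-restrict} to Kronecker powers, then take $n$-th roots and $\liminf$/$\lim$), hence invariant under equivalence. For~(ii), $\asympsubrank(T) \geq \asympsubrank(T)$ is trivial, and $\asympslicerank(T) \geq \asympsubrank(T)$ follows from $\slicerank(T^{\boxtimes n}) \geq \subrank(T^{\boxtimes n})$. For~(iv), for any $T \in \FF^{n_1}\otimes\FF^{n_2}\otimes\FF^{n_3}$ and each $i$ one has $\slicerank(T^{\boxtimes n}) \leq \tensorrank_i(T^{\boxtimes n}) = \tensorrank_i(T)^n \leq n_i^{\,n}$, and likewise for $\subrank$, so $\asympsubrank(T), \asympslicerank(T) \leq \min_i n_i$. (Throughout, when the limit defining $\asympslicerank$ is not known to exist one reads it as $\liminf$; the inequalities above are then exactly the ones needed.)

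The real content is hypothesis~(iii): for each format $(n_1,n_2,n_3)$, $f$ takes finitely many values on $\mathcal{C}\cap(\FF^{n_1}\otimes\FF^{n_2}\otimes\FF^{n_3})$. For part~(i) this is immediate, as there are at most $|S|^{n_1 n_2 n_3}$ such tensors --- the only point where finiteness of $S$ is used. For part~(iii) one appeals to Strassen's combinatorial description of these asymptotic parameters on tight and oblique tensors: the value is governed by the support pattern in the distinguished basis (a tight pattern, resp.\ an antichain) via a combinatorial optimisation, and since there are at most $2^{n_1 n_2 n_3}$ possible supports per format, only finitely many values can occur. For part~(ii) one invokes the characterisation of \cite{MR4495838} expressing $\asympslicerank(T)$ over $\CC$ as an optimisation over the moment polytope of $T$, together with the invariant-theoretic fact that there are finitely many moment polytopes of each format; hence finitely many values per format. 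With all four hypotheses in hand, \autoref{th:gen-no-acc} yields that the image of $f$ on $\mathcal{C}$ has no accumulation points, which is \autoref{th:no-acc}.

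I expect hypothesis~(iii) for part~(ii) to be the crux: it rests entirely on the moment-polytope machinery and the finiteness of moment polytopes, and the absence of an analogous statement for $\asympsubrank$ over $\CC$ is exactly why that parameter is omitted. A minor point to watch in part~(iii) is that the support-based characterisation must be applied in a basis realising the tight/oblique structure, which is compatible with first replacing $T$ by its concise coordinate subtensor in that same basis.
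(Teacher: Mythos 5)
Your proposal is correct and follows essentially the same route as the paper: it reduces all three parts to \autoref{th:gen-no-acc}, verifies the closure-under-concise-subtensor hypothesis and conditions (i), (ii), (iv) by the standard monotonicity and flattening-rank arguments, and establishes condition (iii) per format via finiteness of tensors over $S$, the moment-polytope characterization of asymptotic slice rank over $\CC$, and the support-functional/oblique-support characterizations for tight and oblique tensors. The extra care you take with closure of the tight/oblique classes under concise subtensors (in the distinguished basis) is a point the paper treats only implicitly, and your handling of it is right.
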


\begin{proof}
    The proof of each case follows from \autoref{th:gen-no-acc}. As slice rank is lower bounded by subrank, $f(T) \geq \asympsubrank(T)$ for every $T \in \mathcal{C}$, over any field $\FF$. Thus, conditions (i), (ii) and (iv) of  \autoref{th:gen-no-acc} are fulfilled. We next show that condition (iii) of \autoref{th:gen-no-acc} holds for each of the cases in \autoref{th:no-acc}, i.e. that in each of these cases $f$ takes only finitely many values on $\mathcal{C} \cap (\FF^{n_1} \otimes \FF^{n_2} \otimes \FF^{n_3})$.

    (i) Let $f$ be the asymptotic subrank or the asymptotic slice rank and let $\mathcal{C}$ be the set of all order-three tensors over the finite set $S$. Indeed $\mathcal{C}$ is closed under taking the concise version of any element by \autoref{lem:equiv-concise}, since any tensor is equivalent to a concise subtensor (which in particular again has coefficients in $S$). Then $f$ takes only finitely many values on $\mathcal{C} \cap (\FF^{n_1} \otimes \FF^{n_2} \otimes \FF^{n_3})$ simply because the set of tensors in $\FF^{n_1} \otimes \FF^{n_2} \otimes \FF^{n_3}$ with coefficients in $S$ is finite. 
    
    (ii) Let $f$ be the asymptotic slice rank and let~$\mathcal{C}$ be the set of all order-three tensors over~$\CC$.  Fix dimensions $n_1,n_2$ and $n_3$. By the characterization of asymptotic slice rank in terms of entanglement (moment) polytopes (via the quantum functionals) \cite[Corollary 5.7]{MR4495838} and the fact that for fixed $n_1,n_2,n_3$ there are finitely many such polytopes \cite[Theorem 2]{walter2013entanglement}, it holds that $f$ takes only finitely many values on $\mathcal{C} \cap (\FF^{n_1} \otimes \FF^{n_2} \otimes \FF^{n_3})$. 
    
    (iii) Let $f$ be the asymptotic subrank or the asymptotic slice rank and let $\mathcal{C}$ be the set of all tight order-three tensors over arbitrary $\FF$. Fix dimensions $n_1,n_2$ and $n_3$. By the characterization of asymptotic subrank in terms of the tight support (via the support functionals) \cite[Proposition 5.4]{strassen1991degeneration} and \cite[Theorem 5.9]{MR4495838}, and the fact there are only a finite number of possible supports, it holds that $f$ takes only finitely many values on $\mathcal{C} \cap (\FF^{n_1} \otimes \FF^{n_2} \otimes \FF^{n_3})$.
    

    Let $f$ be the asymptotic slice rank and let $\mathcal{C}$ be the set of all oblique order-three tensors over arbitrary $\FF$. By the characterization of asymptotic slice rank in terms of the oblique support \cite[Proposition 4]{sawin} it follows that $f$ takes only finitely many values on $\mathcal{C} \cap (\FF^{n_1} \otimes \FF^{n_2} \otimes \FF^{n_3})$. 
\end{proof}

\begin{remark}
    \autoref{th:no-acc} gives several regimes in which the asymptotic subrank and asymptotic slice rank are discrete, that is, between every two consecutive values there is a ``gap''.
    Using \autoref{ex:strassenNullAlgebra} we can say something about the size of these gaps. Namely,
    \autoref{ex:strassenNullAlgebra} gives for every large enough $n$ (namely $n\geq5$) a concise tensor $T \in \FF^n \otimes \FF^n \otimes \FF^n$ such that the asymptotic subrank and asymptotic slice rank of $T$ equal $2\sqrt{n-1}$. This means that the gap between two consecutive values is at most $2\sqrt{n-1} - 2\sqrt{n-2}$, which is at most $O(1/\sqrt{n})$.
    This shows in particular that the consecutive gaps of the asymptotic subrank and asymptotic slice rank tend to zero as $n$ grows.
\end{remark}

As sketched in the introduction, with similar ideas as above (but much simpler) we can prove the analogous statement for asymptotic rank over finite fields.
Recall that the asymptotic rank of a tensor~$T$ is defined as $\asymprank(T) = \lim_{n\to\infty} \tensorrank(T^{\boxtimes n})^{1/n}$.

\begin{theorem}\label{th:asympran-disc}
    Let $\FF$ be any field, $S\subseteq\FF$ a finite set, and $k \in \mathbb{N}$. The asymptotic rank on $k$-tensors over~$S$ has no accumulation points. 
    In particular, the asymptotic rank on $k$-tensors over any finite field has no accumulation points.
\end{theorem}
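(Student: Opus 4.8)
The plan is to carry out, for general $k$, the short argument already sketched for order-three tensors in the introduction, packaging it through the same three ingredients that drive \autoref{th:gen-no-acc}: invariance under equivalence, a lower bound in terms of the dimensions, and finiteness per format. The feature that makes this case genuinely easy — unlike asymptotic subrank or slice rank — is that asymptotic rank is bounded below by every flattening rank, and for a concise tensor the $i$-th flattening rank is exactly the $i$-th dimension. So no analogue of \autoref{th:cube-root-lb} or \autoref{theorem:concise-cn1n2-assubrank-c} is needed.

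First I would reduce to concise tensors with coefficients in $S$. Every $k$-tensor is equivalent to a concise subtensor by the same argument as \autoref{lem:equiv-concise}, applied in each of the $k$ directions in turn; crucially, a subtensor of a tensor over $S$ is again over $S$. Since tensor rank is submultiplicative under the Kronecker product, $\asymprank(T)=\lim_n \tensorrank(T^{\boxtimes n})^{1/n}$ is well-defined by Fekete's lemma, and since tensor rank is monotone under restriction (the $k$-tensor analogue of \autoref{lem:monotone-restrict}), $\asymprank$ is invariant under equivalence. Hence it suffices to rule out accumulation points for $\asymprank$ restricted to concise $k$-tensors over $S$.

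Next I would record the lower bound. For a concise $T\in\FF^{n_1}\otimes\cdots\otimes\FF^{n_k}$, the flattening isolating the $i$-th leg has rank $n_i$, flattening rank is multiplicative under $\boxtimes$, and tensor rank dominates every flattening rank; therefore $\tensorrank(T^{\boxtimes n})\geq n_i^{\,n}$ for all $n$, so $\asymprank(T)\geq n_i$, and thus $\asymprank(T)\geq\max_i n_i$. On the other hand, for each fixed format $(n_1,\dots,n_k)$ there are only $|S|^{\,n_1\cdots n_k}$ tensors over $S$, so $\asymprank$ takes finitely many values on that format. Finally I would combine these: if $x$ were an accumulation point of the set of values, choose concise $k$-tensors $T_i$ over $S$ with $\asymprank(T_i)$ pairwise distinct and converging to $x$; finiteness per format forces the formats of the $T_i$ to range over an infinite set of $k$-tuples of positive integers, hence $\max_j n_j^{(i)}$ is unbounded, and by the lower bound $\asymprank(T_i)$ is unbounded — contradicting convergence. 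The "in particular" clause is immediate, a finite field being a finite coefficient set in itself.

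There is essentially no hard step here; the only points requiring a little care are keeping the concise representative inside the coefficient set $S$ (which is why one passes to a concise \emph{subtensor} rather than to an arbitrary equivalent tensor) and the elementary observation that an infinite set of $k$-tuples of positive integers must have unbounded maximum coordinate.
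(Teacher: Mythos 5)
Your proposal is correct and follows essentially the same route as the paper: pass to concise representatives over $S$, use that for a concise tensor the asymptotic rank is at least the largest dimension, note finiteness of values per format over a finite coefficient set, and conclude that infinitely many distinct values force unbounded formats and hence unbounded (non-convergent) asymptotic ranks. You merely spell out a few steps the paper leaves implicit (the flattening-rank justification of the lower bound and the fact that the concise subtensor stays over $S$), which is harmless.
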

\begin{proof} 
We may assume all our tensors to be concise (\autoref{lem:equiv-concise}).
For a fixed format $(n_1, \ldots, n_k)$, there are only finitely many possible values of the asymptotic rank of tensors in $\FF^{n_1} \otimes \cdots \otimes \FF^{n_k}$ with coefficients from $S$ (simply because there are only finitely many tensors of that format with coefficients from $S$). Thus, to make a sequence of concise tensors $T_i$ such that the asymptotic rank $\asymprank(T_i)$ takes infinitely many different values, we need infinitely many different formats $(n_1,\ldots, n_k)$. In particular, $\max_\ell n_\ell$ needs to be unbounded. Since $\asymprank(T_i) \geq \max_\ell n_\ell$, the sequence $\asymprank(T_i)$ cannot converge.
\end{proof}

\section{Pivots of tensors and better lower bounds on the asymptotic subrank}\label{sec:lower-bound-asympsubrank}

In \autoref{sec:max-rank} we proved a cube-root lower bound on the asymptotic subrank of concise tensors, using a result on max-rank and matrix multiplication tensors.

In this section we develop a different strategy, using pivots, to first reprove the cube-root lower bound on asymptotic subrank of \autoref{sec:max-rank} and then prove a better lower bound under a symmetry condition that we introduce (pivot-matched).

Namely we show that the asymptotic subrank of a concise pivot-matched tensor is at least the square-root of the smallest dimension. In particular, this holds for symmetric concise tensors. %

\subsection{Pivots of tensors}
We will first discuss what a pivot of a matrix is and what the pivots of a matrix subspace are, then introduce the pivots of a tensor, then relate this notion to the border subrank and the asymptotic subrank.

\subsubsection{Pivot of a matrix and pivots of matrix subspaces}

Following Meshulam \cite{meshulam1985maximal}, we recall the notion of pivots of a matrix subspace and associated covering and packing parameters $\rho$ and $\sigma$, which are known to be equal by an application of Kőnig's theorem. In the next section we will use these notions in the context of tensors.

\begin{definition}
    For any $n_1 \times n_2$ matrix $A$ we call the smallest element $(i,j) \in [n_1] \times [n_2]$ in the lexicographic ordering on $[n_1] \times [n_2]$ for which $A_{i,j} \neq 0$ the \emph{pivot} of $A$, and we denote this element by $p(A)=(f(A),g(A))$. 
\end{definition}

\begin{lemma}\label{lem:pivot-set-invariant}
    Let $\mathcal{A}$ be a subspace of $n_1\times n_2$ matrices. There is a basis for $\mathcal{A}$, $\{A_1,\dots,A_{n_3}\}$, with distinct pivots $\forall i\neq j: p(A_i)\neq p(A_j)$.
    Additionally, for any such basis, the set of its matrices' pivots $\{p(A_1),\dots,p(A_{n_3})\}$ is the same.
\end{lemma}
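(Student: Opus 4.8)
The plan is to observe that a matrix subspace $\mathcal{A}\subseteq\FF^{n_1}\otimes\FF^{n_2}$ is nothing but a linear subspace $V\subseteq\FF^{N}$, $N=n_1n_2$, whose coordinates we index by $[n_1]\times[n_2]$ in lexicographic order; under this identification the pivot $p(A)$ of a matrix $A$ is precisely the index of the first nonzero coordinate of the corresponding vector. Thus both assertions of the lemma become the standard fact underpinning uniqueness of reduced row echelon form, and I would prove it in that language. First I would introduce, for each position $t\in[N]$, the nested subspaces $V_{\geq t}=\{v\in V:v_s=0\text{ for all }s<t\}$ (with $V_{\geq N+1}=0$), so $V=V_{\geq1}\supseteq V_{\geq2}\supseteq\cdots$. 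Since the evaluation-at-$t$ map $V_{\geq t}\to\FF$ has kernel $V_{\geq t+1}$, we have $\dim V_{\geq t}-\dim V_{\geq t+1}\in\{0,1\}$, and I would set $P=\{t\in[N]:\dim V_{\geq t}>\dim V_{\geq t+1}\}$, which depends only on $V$; telescoping gives $|P|=\dim V$.

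For the existence claim, for each $t\in P$ I would pick a vector $v^{(t)}\in V_{\geq t}$ with nonzero $t$-th coordinate, so that $v^{(t)}$ has pivot exactly $t$; the family $(v^{(t)})_{t\in P}$ then has pairwise distinct pivots and hence is linearly independent (ordering by pivot puts them in echelon form), and since $|P|=\dim V$ it is a basis, whose pivot set is $P$. For the uniqueness claim, given an arbitrary basis $\{A_1,\dots,A_{n_3}\}$ of $V$ with pairwise distinct pivots, I would show $\dim V_{\geq t}=|\{i:p(A_i)\geq t\}|$ for every $t$. The inequality ``$\geq$'' is immediate because those $A_i$ lie in $V_{\geq t}$ and are independent. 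For ``$\leq$'', given $v\in V_{\geq t}$ written in this basis, suppose some $A_i$ with $p(A_i)<t$ has nonzero coefficient and let $i_0$ realise the smallest such pivot; at the coordinate $p(A_{i_0})$ only $A_{i_0}$ contributes (all other basis vectors with nonzero coefficient have strictly larger pivot, hence vanish there, and those with smaller pivot have coefficient $0$ by minimality of $i_0$), forcing $v_{p(A_{i_0})}\ne0$ and contradicting $v\in V_{\geq t}$; so $v\in\linspan\{A_i:p(A_i)\geq t\}$. Therefore $\dim V_{\geq t}-\dim V_{\geq t+1}=|\{i:p(A_i)=t\}|$, which is $1$ exactly when $t$ is a pivot of the basis, so the basis's pivot set equals $P$ regardless of the chosen basis.

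I do not expect a real obstacle here: the only point that needs a little care is the bookkeeping that translates the lexicographic order on $[n_1]\times[n_2]$ into ``first nonzero coordinate'' in $\FF^N$, so that the notion of pivot from the definition matches the leading-position notion used above; the linear independence of families with distinct pivots and the telescoping identity $|P|=\dim V$ are routine.
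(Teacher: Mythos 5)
Your proof is correct, and it rests on the same identification the paper uses: vectorise each matrix into $\FF^{n_1n_2}$ with coordinates in lexicographic order, so that the pivot becomes the position of the first nonzero entry and a basis with distinct pivots is exactly a row-echelon basis of the subspace. The only difference is that the paper then simply invokes uniqueness of the row-echelon form, whereas you prove that invariance from scratch: the filtration $V_{\geq t}$, the observation that $\dim V_{\geq t}-\dim V_{\geq t+1}\in\{0,1\}$, and the identity $\dim V_{\geq t}=|\{i:p(A_i)\geq t\}|$ for any distinct-pivot basis together show that the pivot set must equal the dimension-drop set $P$, and the same filtration also yields existence. This makes your argument self-contained (and incidentally characterises the pivot set intrinsically as $P$), at the cost of a page versus the paper's two-line citation; both the existence and the minimality argument at coordinate $p(A_{i_0})$ are carried out correctly.
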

To see this, vectorise the matrices row-by-row and stack them as the rows of a matrix $M$. Changing the basis amounts to invertible row operations on $M$ and having distinct pivots amounts to getting $M$ into its row-echelon form (up to the order of the rows). Uniqueness of the row-echelon form gives the uniqueness of the set of pivots.

By this lemma we may define the set of pivots as a parameter of the linear subspace of matrices itself:

\begin{definition}\label{def:pivots-space}
    For any subspace of $n_1 \times n_2$ matrices $\mathcal{A}$, let $A_1, \ldots, A_{n_3}$ be any basis of $\mathcal{A}$ such that the pivots of the $A_i$ are distinct. We call the set of all pivots $\{p(A_1), \ldots, p(A_{n_3})\}$ the \emph{pivots} of $\mathcal{A}$, and we denote this set by $p(\mathcal{A})$. 
\end{definition}

\begin{remark}\label{rem:pivots-as-cell-locations}
    Let $T$ be a tensor with distinct pivots of its $1$-slices, then this set of pivots is a special case of a linearly independent set of cell-locations for $T$ in direction~$1$.
\end{remark}

\begin{definition}\label{def:rho}
    For any subspace of $n_1 \times n_2$ matrices $\mathcal{A}$, we let $\rho(\mathcal{A})$ be the smallest number $r$ such that there are subsets $S_1 \subseteq [n_1], S_2 \subseteq [n_2]$ of size $|S_1| + |S_2| = r$ for which $p(\mathcal{A}) \subseteq (S_1 \times [n_2]) \cup ([n_1] \times S_2)$.
\end{definition}

\begin{definition}
    For any subspace of $n_1 \times n_2$ matrices $\mathcal{A}$, we let $\sigma(\mathcal{A})$ be the largest number $s$ such that there are matrices $A_1, \ldots, A_s \in \mathcal{A}$ for which for every $i\neq j$ we have $f(A_i) \neq f(A_j)$ and $g(A_i) \neq g(A_j)$.%
\end{definition}

The following is an application of Kőnig's theorem on the equivalence of maximum matchings and minimal vertex covers in bipartite graphs 

\begin{lemma}[\cite{meshulam1985maximal}] \label{lem:pivots-max-ind-min-lines}
    For any subspace of $n_1 \times n_2$ matrices $\mathcal{A}$ we have $\rho(\mathcal{A}) = \sigma(\mathcal{A})$.
\end{lemma}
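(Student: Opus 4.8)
The plan is to translate the statement into bipartite-graph language and then invoke König's theorem. Let $G$ be the bipartite graph with vertex set $[n_1]\sqcup[n_2]$ whose edge set is the pivot set $p(\mathcal{A})$ (this is a well-defined set by \autoref{lem:pivot-set-invariant} and \autoref{def:pivots-space}); write $\nu(G)$ for its maximum matching size and $\tau(G)$ for its minimum vertex cover size. I will show $\rho(\mathcal{A})=\tau(G)$ and $\sigma(\mathcal{A})=\nu(G)$, and then König's theorem $\nu(G)=\tau(G)$ for finite bipartite graphs gives $\rho(\mathcal{A})=\sigma(\mathcal{A})$. The identity $\rho(\mathcal{A})=\tau(G)$ is immediate from \autoref{def:rho}: a pair $S_1\subseteq[n_1]$, $S_2\subseteq[n_2]$ satisfies $p(\mathcal{A})\subseteq(S_1\times[n_2])\cup([n_1]\times S_2)$ exactly when the vertex set $S_1\sqcup S_2$ covers every edge of $G$, and minimizing $|S_1|+|S_2|$ over all such pairs is precisely the minimum vertex cover number.

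For $\sigma(\mathcal{A})=\nu(G)$ I would prove both inequalities. For $\sigma(\mathcal{A})\geq\nu(G)$: a matching in $G$ is a set $\{(i_1,j_1),\dots,(i_s,j_s)\}\subseteq p(\mathcal{A})$ with the $i_t$ pairwise distinct and the $j_t$ pairwise distinct; since each $(i_t,j_t)$ is a pivot of $\mathcal{A}$, \autoref{lem:pivot-set-invariant} (via the basis in \autoref{def:pivots-space}) supplies a matrix $A_t\in\mathcal{A}$ with $p(A_t)=(i_t,j_t)$, and then the values $f(A_1),\dots,f(A_s)$ are distinct and $g(A_1),\dots,g(A_s)$ are distinct, witnessing $\sigma(\mathcal{A})\geq s$. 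Conversely, given $A_1,\dots,A_s\in\mathcal{A}$ with the $f(A_t)$ pairwise distinct and the $g(A_t)$ pairwise distinct, the cells $p(A_t)=(f(A_t),g(A_t))$ are distinct and no two lie in a common row or column, so they form a matching of size $s$ in $G$ — provided each $p(A_t)$ is actually an edge of $G$, i.e.\ lies in $p(\mathcal{A})$.

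That last point — $p(A)\in p(\mathcal{A})$ for every nonzero $A\in\mathcal{A}$ — is the only step beyond bookkeeping, and I expect it to be the crux. I would prove it from the echelon-basis description already underlying \autoref{lem:pivot-set-invariant}. Fix a basis $A_1,\dots,A_m$ of $\mathcal{A}$ with pairwise distinct pivots, so that $p(\mathcal{A})=\{p(A_1),\dots,p(A_m)\}$. Given a nonzero $A=\sum_t c_t A_t$, let $t_0$ be the index with lexicographically smallest pivot among those $t$ with $c_t\neq 0$; every other summand with $c_t\neq 0$ has pivot lexicographically larger than $p(A_{t_0})$ and hence vanishes at the cell $p(A_{t_0})$, so $A$ is nonzero at $p(A_{t_0})$ while $A$ vanishes at every cell lexicographically preceding $p(A_{t_0})$, giving $p(A)=p(A_{t_0})\in p(\mathcal{A})$. (Equivalently, one can use the basis-free characterization: $(i,j)\in p(\mathcal{A})$ iff the subspace of matrices of $\mathcal{A}$ vanishing on all cells lexicographically before $(i,j)$ strictly contains the analogous subspace for the lexicographic successor of $(i,j)$.) Combining $\rho(\mathcal{A})=\tau(G)$, $\sigma(\mathcal{A})=\nu(G)$, and König's theorem then completes the proof.
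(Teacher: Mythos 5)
Your proposal is correct and follows exactly the route the paper intends: the paper states the lemma as an application of K\H{o}nig's theorem (citing Meshulam) without giving details, and your bipartite-graph translation with $\rho(\mathcal{A})=\tau(G)$ and $\sigma(\mathcal{A})=\nu(G)$ is precisely that argument. Your extra step --- that the pivot of every nonzero $A\in\mathcal{A}$ lies in $p(\mathcal{A})$, proved via the echelon basis --- is the right detail to supply for the $\sigma(\mathcal{A})\leq\nu(G)$ direction and is argued correctly.
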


\subsubsection{Pivots of a tensor}

Having discussed the notion of pivots of a matrix subspace, we will now associate to any tensor matrix subspaces obtained by taking the span of the slices of the tensor. We have considered slice spans before when defining the parameters $\subrank_\ell$, but here we will need to be more precise and distinguish six matrix subspaces instead of three (because the property we consider is not invariant under taking transpose). We then consider the pivots of these subspaces. We will relate these to the parameters $\subrank_\ell$ and to the border subrank. Combining these ideas we then give another proof of the cube-root lower bound on the asymptotic subrank.

\begin{definition}\label{def:pivots-tensors}
    Let $T = \sum_{i,j,k} T_{i,j,k}\, e_i \otimes e_j \otimes e_k \in \FF^{n_1} \otimes \FF^{n_2} \otimes \FF^{n_3}$ be a tensor.
    Define the subspaces
    \begin{align*}
        &\mathcal{A}_{1,2} = \linspan \{A_1, \ldots, A_{n_3}\} \subseteq \FF^{n_1} \otimes \FF^{n_2},\quad A_k = \sum_{i,j} T_{i,j,k}\, e_i \otimes e_j\\
        &\mathcal{A}_{2,1} = \linspan \{A_1, \ldots, A_{n_3}\} \subseteq \FF^{n_2} \otimes \FF^{n_1},\quad A_k = \sum_{i,j} T_{i,j,k}\, e_j \otimes e_i\\
        &\mathcal{A}_{1,3} = \linspan \{A_1, \ldots, A_{n_2}\} \subseteq \FF^{n_1} \otimes \FF^{n_3},\quad A_j = \sum_{i,k} T_{i,j,k}\, e_i \otimes e_k\\
        &\mathcal{A}_{3,1} = \linspan \{A_1, \ldots, A_{n_2}\} \subseteq \FF^{n_3} \otimes \FF^{n_1},\quad A_j = \sum_{i,k} T_{i,j,k}\, e_k \otimes e_i\\
        &\mathcal{A}_{2,3} = \linspan \{A_1, \ldots, A_{n_1}\} \subseteq \FF^{n_2} \otimes \FF^{n_3},\quad A_i = \sum_{j,k} T_{i,j,k}\, e_j \otimes e_k\\
        &\mathcal{A}_{3,2} = \linspan \{A_1, \ldots, A_{n_1}\} \subseteq \FF^{n_3} \otimes \FF^{n_2},\quad A_i = \sum_{j,k} T_{i,j,k}\, e_k \otimes e_j.
    \end{align*}
    Let $\rho_{{\ell_1},{\ell_2}}(T) = \rho(\mathcal{A}_{{\ell_1},{\ell_2}})$.
\end{definition}

\begin{remark}
    Note that $\rho_{{\ell_1},{\ell_2}}(T)$ can depend on the order of ${\ell_1},{\ell_2}$, which defines which direction enumerates the rows and which the columns. This is a direct consequence of using the lexicographic order in the definition of the pivots. For example, the tensor~$T$ with %
    \[
    \mathcal{A}_{1,2} = \Bigl\{\begin{pmatrix}
        0&0&1\\
        1&0&0
    \end{pmatrix},\begin{pmatrix}
        0&1&0\\
        0&0&0
    \end{pmatrix}\Bigr\},
    \]
    has 
    \[
    \mathcal{A}_{2,1} = \Bigl\{\begin{pmatrix}
        0&1\\
        0&0\\
        1&0
    \end{pmatrix},\begin{pmatrix}
        0&0\\
        1&0\\
        0&0
    \end{pmatrix}\Bigr\}.
    \]
    Then $\rho_{1,2}(T)=1\neq2=\rho_{2,1}(T)$. %
\end{remark}

\subsection{Alternative proof of cube-root lower bound via pivots} \label{subsec:cube-root-via-pivots}

In this section we will give another proof of the cube-root lower bound on asymptotic subrank (\autoref{th:cube-root-lb}) using pivots. %

\begin{theorem}\label{th:pivot-uncertainty}
    Let $T \in \FF^{n_1} \otimes \FF^{n_2} \otimes \FF^{n_3}$ be concise.
    For every distinct ${\ell_1},{\ell_2},{\ell_3}$ we have 
    \[
    \rho_{{\ell_1},{\ell_2}}(T) \max \{\subrank_{\ell_1}(T), \subrank_{\ell_2}(T)\} \geq n_{\ell_3}.
    \]
\end{theorem}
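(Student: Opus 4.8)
The plan is to relabel the three tensor legs so that it suffices to treat the case $i=1$, $j=2$, $k=3$: all the quantities $\rho_{i,j}(T)$, $\max\{\subrank_i(T),\subrank_j(T)\}$ and $n_k$ transform covariantly under permuting the legs. So fix the basis-independent matrix subspace $\mathcal{A}_{1,2}\subseteq\FF^{n_1}\otimes\FF^{n_2}$ spanned by the $3$-slices $A_1,\dots,A_{n_3}$ of $T$, viewed as $n_1\times n_2$ matrices. Since $T$ is concise, $\dim\mathcal{A}_{1,2}=n_3$, so by \autoref{lem:pivot-set-invariant} the pivot set $P:=p(\mathcal{A}_{1,2})$ has $|P|=n_3$; fix a basis $M_1,\dots,M_{n_3}$ of $\mathcal{A}_{1,2}$ whose pivots $p(M_s)=(f(M_s),g(M_s))$ are exactly the elements of $P$.

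The core step is a counting lemma: for every row index $a\in[n_1]$, the number $c_a$ of indices $s$ with $f(M_s)=a$ satisfies $c_a\le\subrank_1(T)$, and symmetrically for every column index $b\in[n_2]$, the number $d_b$ of indices $s$ with $g(M_s)=b$ satisfies $d_b\le\subrank_2(T)$. For the first bound, observe that if $f(M_s)=a$, then rows $1,\dots,a-1$ of $M_s$ vanish and the first nonzero entry of the $a$-th row of $M_s$ sits in column $g(M_s)$; since the matrices $M_s$ with $f(M_s)=a$ have pairwise distinct pivots, their $a$-th rows have pairwise distinct leading positions, hence are linearly independent. On the other hand, the $a$-th row of any element $\sum_k\lambda_k A_k$ of $\mathcal{A}_{1,2}$ equals $\sum_k\lambda_k\,(T_{a,j,k})_{j\in[n_2]}$, which is a combination of the columns of the $1$-slice $T^{(1)}_a=\sum_{j,k}T_{a,j,k}\,e_j\otimes e_k$; thus the $a$-th row lies in a space of dimension $\rank(T^{(1)}_a)\le\maxrank(\mathcal{A}_1)=\subrank_1(T)$, giving $c_a\le\subrank_1(T)$. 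The bound $d_b\le\subrank_2(T)$ is proved the same way: if $g(M_s)=b$ then the $b$-th column of $M_s$ has its first nonzero entry in row $f(M_s)$, these row indices are distinct so the $b$-th columns are linearly independent, and every $b$-th column of an element of $\mathcal{A}_{1,2}$ lies in the column span of the $2$-slice $T^{(2)}_b$, of dimension at most $\subrank_2(T)$.

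Finally, let $r=\rho_{1,2}(T)$ and choose $S_1\subseteq[n_1]$, $S_2\subseteq[n_2]$ with $|S_1|+|S_2|=r$ and $P\subseteq(S_1\times[n_2])\cup([n_1]\times S_2)$, as provided by \autoref{def:rho}. Every pivot $(a,b)\in P$ then has $a\in S_1$ or $b\in S_2$, so double counting yields
\[
n_3=|P|\le\sum_{a\in S_1}c_a+\sum_{b\in S_2}d_b\le|S_1|\subrank_1(T)+|S_2|\subrank_2(T)\le r\max\{\subrank_1(T),\subrank_2(T)\},
\]
which is exactly the claimed inequality. I expect the only nontrivial point to be the counting lemma — specifically, recognizing that pivots lying in a common row (respectively column) force linear independence of the corresponding rows (respectively columns), which in turn live inside a slice span whose max-rank is by definition $\subrank_1(T)$ (respectively $\subrank_2(T)$); the covering argument from \autoref{def:rho} and the relabeling of legs are routine.
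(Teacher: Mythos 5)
Your proof is correct and takes essentially the same route as the paper's: bound the number of pivots of the slice span lying in any single row (resp.\ column) by $\subrank_i(T)$ (resp.\ $\subrank_j(T)$) via the rank of one slice, then combine the line cover from the definition of $\rho_{i,j}$ with conciseness (which gives $n_k$ distinct pivots) by double counting. The only cosmetic difference is in the per-row count: the paper exhibits a diagonal submatrix of the relevant slice after normalizing the pivot basis, while you observe that the leading rows/columns are linearly independent and lie in the column span of that slice --- the same fact seen from the other side.
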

\begin{proof}
We give the proof for ${\ell_1}=2, {\ell_2}=3, {\ell_3}=1$ (the proof for the other cases is the same).
By an invertible linear transformation of direction $1$ (linear combinations of the $1$-slices) we may assume that for every $i\in[n_1]$, the $i$th $1$-slice is the only one with a non-zero entry in its pivot $p(i)=(f(i),g(i))$. Now we will prove that in every row there are at most $\subrank_2(T)$ pivots and in every column at most $\subrank_3(T)$. %
Let $x$ in the image of $f$, and project the second leg of $T$ to $e_x$. Restricted in the third leg to the columns with pivots in row $x$, $\{k\in[n_3]:\exists i\in[n_1], (f(i),g(i))=(x,k)\}$, this gives the following diagonal submatrix of the $x$th $2$-slice:
\[
\sum_{i: f(i) = x} e_i \otimes e_x \otimes e_{g(i)},
\]
which implies $\subrank_2(T) \geq | \{i : f(i) = x\}|$ for every $x$ in the image of $f$. Similarly, we have the inequality $\subrank_3(T) \geq | \{i : g(i) = y\}|$ for every $y$ in the image of $g$.
By \autoref{def:rho}, $p(\mathcal{A}_{2,3}) \subseteq (S_2 \times [n_3]) \cup ([n_2] \times S_3)$ for some subsets $S_\ell \subseteq [n_\ell]$ such that $|S_2| + |S_3| = \rho_{2,3}(T)$.
Both of these combine to:
\begin{align*}
n_1 &= |\{(f(i), g(i)) : i\in [n_1]\}|\\ &\leq |\{(f(i),g(i)): i \in [n_1], f(i) \in S_2\}| + |\{(f(i),g(i)) : i \in [n_1], g(i) \in S_3\}|\\
&\leq |S_2| \subrank_2(T) + |S_3| \subrank_3(T)\\
&\leq (|S_2| + |S_3|) \max \{\subrank_2(T), \subrank_3(T)\}\\
&= \rho_{2,3}(T) \max \{\subrank_2(T), \subrank_3(T)\}.
\end{align*}
This proves the claim.
\end{proof}

\begin{definition}
The \emph{border subrank} of a tensor $T \in \FF^{n_1} \otimes  \FF^{n_2} \otimes \FF^{n_3}$, denoted by~$\bordersubrank(T)$, is the largest number~$r$ such that there are matrices $A(\eps), B(\eps), C(\eps)$ whose coefficients are Laurent polynomials in the formal variable $\eps$, such that
\[
(A(\eps) \otimes B(\eps) \otimes C(\eps))  T = \langle r\rangle + \eps S_1 + \eps^2 S_2 + \cdots + \eps^t S_t
\]
where $\langle r\rangle$ is the $r \times r \times r$ diagonal tensor and the $S_i$ are arbitrary tensors (with coefficients in $\FF$). 
\end{definition}

The following known relation between border subrank and asymptotic subrank can be found in \cite[Proposition 5.10]{strassen1988asymptotic} and \cite[Proposition 15.26]{burgisser1997algebraic}.
\begin{lemma}\label{lem:border-asymp-subrank}
    For any tensor~$T$ we have $\bordersubrank(T) \leq \asympsubrank(T)$.
\end{lemma}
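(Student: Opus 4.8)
The plan is to reconstruct the classical equivalence, at the asymptotic scale, between border subrank and subrank --- this is exactly the content of \cite[Proposition~5.10]{strassen1988asymptotic} and \cite[Proposition~15.26]{burgisser1997algebraic}, so in the paper we can also simply cite those. Since $\asympsubrank(T) = \sup_n \subrank(T^{\boxtimes n})^{1/n}$, it suffices to exhibit, for every $n$, a Kronecker power of $T$ whose ordinary subrank is at least $\bordersubrank(T)^n$ up to a factor that is subexponential in $n$.

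First I would observe that a border-subrank witness Kronecker-powers nicely: if $(A(\eps)\otimes B(\eps)\otimes C(\eps))T = \langle r\rangle + \eps S_1 + \cdots + \eps^t S_t$ with $r = \bordersubrank(T)$ and the entries of $A,B,C$ Laurent polynomials with exponents in $\{-q,\dots,q\}$, then applying the $n$-fold Kronecker power of this identity to $T^{\boxtimes n}$ yields $(\ldots)T^{\boxtimes n} = \langle r^n\rangle + \eps(\cdots) + \cdots + \eps^{tn}(\cdots)$, a witness of $\bordersubrank(T^{\boxtimes n})\ge r^n$ whose $\eps$-degrees are only $O(n)$. Next I would invoke the standard degeneration-to-restriction lemma (Bini; see the proof of \cite[Proposition~15.26]{burgisser1997algebraic}): a border restriction onto a unit tensor $\langle m\rangle$ whose $\eps$-degrees are bounded by $D$ can be turned into an \emph{honest} restriction of $U\otimes\langle P(D)\rangle$ onto $\langle m\rangle$, for some fixed polynomial $P$ --- intuitively, the extra unit-tensor legs ``carry'' the powers of $\eps$, and the off-diagonal blocks that collect the error terms $S_i$ are then projected away. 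Applying this to $U = T^{\boxtimes n}$ with $D = O(n)$ gives $\subrank\big(T^{\boxtimes n}\otimes\langle p(n)\rangle\big)\ge r^n$ for a fixed polynomial $p$.

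The remaining, and slightly delicate, step is to absorb the auxiliary factor $\langle p(n)\rangle$ at negligible cost. We may assume $\asympsubrank(T)\ge 2$: in the complementary case $\subrank(T^{\boxtimes m})\le 1$ for all $m$, and then (again by degeneration-to-restriction, now combined with Strassen's gap theorem \cite[Lemma~3.7]{strassen1988asymptotic}) $\bordersubrank(T)\le 1\le\asympsubrank(T)$ and we are done. So fix $m_0$ with $s_0 := \subrank(T^{\boxtimes m_0})\ge 2$. For any $k$, take the $k$-fold Kronecker power of the restriction above to get $\subrank\big(T^{\boxtimes nk}\otimes\langle p(n)^k\rangle\big)\ge r^{nk}$, realize $\langle p(n)^k\rangle$ as a restriction of $T^{\boxtimes j}$ with $j = m_0\big\lceil \log_{s_0} p(n)^k\big\rceil = O(k\log n)$ (possible since $\subrank(T^{\boxtimes m_0 \ell})\ge s_0^\ell$), and conclude $\subrank\big(T^{\boxtimes(nk+j)}\big)\ge r^{nk}$, hence $\asympsubrank(T)\ge r^{\,nk/(nk+j)}$. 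Letting $k\to\infty$ with $n$ fixed, then $n\to\infty$, the exponent $1/(1+O(\log n)/n)$ tends to $1$, so $\asympsubrank(T)\ge r = \bordersubrank(T)$.

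I expect the last paragraph to be the only real obstacle: it is where the linearly growing $\eps$-degree (hence the polynomially growing auxiliary unit tensor) must be reconciled with the exponentially growing target $\langle r^n\rangle$, and where the degenerate regime has to be disposed of separately; the first two ingredients are routine. Since the full argument is classical and written out in the references, in the paper the cleanest choice is simply to cite \cite[Proposition~5.10]{strassen1988asymptotic} and \cite[Proposition~15.26]{burgisser1997algebraic}.
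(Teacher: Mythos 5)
Your fallback of simply citing \cite[Proposition~5.10]{strassen1988asymptotic} and \cite[Proposition~15.26]{burgisser1997algebraic} is exactly what the paper does (it gives no proof of its own), and the main line of your reconstruction is the standard argument and is sound: Kronecker-powering the degeneration keeps the $\eps$-degree linear in $n$, coefficient extraction (or interpolation, if the field is large enough --- otherwise extract the $\eps^0$-coefficient formally, or pass to an extension and use that $\asympsubrank$ is invariant under field extensions) turns it into an honest restriction $\langle r^n\rangle \leq T^{\boxtimes n}\boxtimes\langle p(n)\rangle$ with $p$ polynomial, and the absorption of $\langle p(n)^k\rangle$ into $T^{\boxtimes O(k\log n)}$ using a fixed power of $T$ with subrank at least $2$, followed by the limit $k\to\infty$, $n\to\infty$, is correct.

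The genuine gap is your treatment of the degenerate case. The parenthetical ``again by degeneration-to-restriction, now combined with Strassen's gap theorem'' does not yield $\bordersubrank(T)\leq 1$ when $\subrank(T^{\boxtimes m})\leq 1$ for all $m$: degeneration-to-restriction only gives $\subrank(T\boxtimes\langle t+1\rangle)\geq\bordersubrank(T)$, and this is perfectly compatible with all powers of $T$ having subrank $1$ --- for $T=e_1\otimes e_1\otimes e_1+e_1\otimes e_2\otimes e_2$ one has $\subrank(T^{\boxtimes m})=1$ for every $m$, yet $\subrank(T\boxtimes\langle 2\rangle)=2$ --- while the gap theorem only constrains the possible \emph{values} of $\asympsubrank$, not $\bordersubrank$. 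What the case actually needs is: if every power has subrank at most $1$, then the concise version of $T$ has a side of dimension at most $1$, i.e.\ some flattening rank $\tensorrank_i(T)\leq 1$ (this follows from the structure behind Strassen's Lemma~3.7, or from \autoref{th:cube-root-lb}, whose proof in \autoref{sec:max-rank} does not use border subrank, so there is no circularity); and then $\bordersubrank(T)\leq\tensorrank_i(T)\leq 1$ because a degeneration cannot increase flattening rank --- compare the $i$-flattening rank of $(A(\eps)\otimes B(\eps)\otimes C(\eps))T$ over $\FF(\eps)$, which is at most $\tensorrank_i(T)$, with its specialization at $\eps=0$, which has rank at least $r$ by lower semicontinuity of matrix rank. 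Finally, the phrase ``we may assume $\asympsubrank(T)\geq 2$'' is a slip (the $W$-tensor has $\asympsubrank<2$ but a power of subrank $\geq 2$); what you actually use, the existence of $m_0$ with $\subrank(T^{\boxtimes m_0})\geq 2$, is precisely the complement of the degenerate case, so that misstatement is harmless.
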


\begin{theorem}\label{th:rho-bordersubrank}
    For any tensor $T$ and any ${\ell_1}\neq {\ell_2}$ we have
    $\rho_{{\ell_1},{\ell_2}}(T) \leq \bordersubrank(T)$.
\end{theorem}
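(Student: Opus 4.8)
The plan is to invoke König's theorem in the form of \autoref{lem:pivots-max-ind-min-lines}, which gives $\rho_{i,j}(T)=\rho(\mathcal{A}_{i,j})=\sigma(\mathcal{A}_{i,j})$, and then to exhibit an explicit degeneration of $T$ onto the diagonal tensor $\langle\sigma(\mathcal{A}_{i,j})\rangle$. It suffices to treat $(i,j)=(2,3)$; the other five cases follow verbatim after relabelling the three tensor legs. Write $s=\sigma(\mathcal{A}_{2,3})$. By definition of $\sigma$ there are matrices $B_1,\dots,B_s\in\mathcal{A}_{2,3}$, each a linear combination of the $1$-slices of $T$, whose pivots $p(B_\ell)=(f(B_\ell),g(B_\ell))$ have pairwise distinct first coordinates and pairwise distinct second coordinates.

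First I would reduce to an $s\times s\times s$ tensor with ``staircase'' support. Reorder the list so that $f(B_1)<\cdots<f(B_s)$, form $T'=\sum_{\ell=1}^s e_\ell\otimes B_\ell$ (so $T\ge T'$ via a constant map on the first leg), and then restrict the second leg to the rows $f(B_1),\dots,f(B_s)$ and the third leg to the columns among $g(B_1),\dots,g(B_s)$, relabelling both sets as $[s]$ in increasing order. Let $\tau\in S_s$ record the sorted rank of $g(B_\ell)$. The resulting $s\times s\times s$ tensor $R$ is a restriction of $T$, and since everything lexicographically below the pivot of $B_\ell$ vanishes, $R_{\ell,a,b}=0$ whenever $a<\ell$, and $R_{\ell,\ell,b}=0$ whenever $b<\tau(\ell)$, while $R_{\ell,\ell,\tau(\ell)}\neq0$.

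Next I would degenerate $R$ to a diagonal tensor by a two-scale torus action: apply $\eps^{x_\ell}$ to the first leg, $\eps^{y_a}$ to the second, $\eps^{z_b}$ to the third, with $z_b=b$, $y_a=s\,a$, and $x_\ell=-(y_\ell+z_{\tau(\ell)})=-(s\ell+\tau(\ell))$, so that the coefficient $R_{\ell,a,b}$ is multiplied by $\eps^{x_\ell+y_a+z_b}$. The pivot-diagonal entries $(\ell,\ell,\tau(\ell))$ acquire exponent $0$ by the choice of $x_\ell$, and for every other nonzero entry the staircase structure forces $a\ge\ell$, whence $x_\ell+y_a+z_b=s(a-\ell)+(b-\tau(\ell))\ge 1>0$: if $a>\ell$ this is at least $s-(s-1)$, and if $a=\ell$ then necessarily $b>\tau(\ell)$ (the cases $b<\tau(\ell)$ and $b=\tau(\ell)$ being excluded). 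Hence $\bigl(A(\eps)\otimes B(\eps)\otimes C(\eps)\bigr)T=\langle s\rangle+\eps S_1+\eps^2 S_2+\cdots$ for suitable tensors $S_t$, where $A(\eps),B(\eps),C(\eps)$ are the composites of the constant slice-selection and projection maps above with the diagonal $\eps$-matrices and a final constant permutation of the third leg together with a rescaling of the first leg. This yields $\bordersubrank(T)\ge s=\rho_{2,3}(T)$.

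The point that needs care is the combinatorial core: one must sort the pivot matrices by pivot-row, which makes the support lower-triangular in the pair (slice index, row index), and then scale the row leg much faster than the column leg so that the single pivot-diagonal term survives the $\eps\to 0$ limit while every other term is pushed to strictly positive $\eps$-order. The remaining bookkeeping — checking that the three composed maps have Laurent-polynomial entries in $\eps$, hence are legitimate in the definition of border subrank — is routine, since each factor is either constant or a diagonal matrix of (possibly negative) powers of $\eps$.
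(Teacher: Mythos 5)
Your proposal is correct and follows essentially the same route as the paper: use \autoref{lem:pivots-max-ind-min-lines} to replace $\rho_{i,j}$ by $\sigma$, restrict $T$ to an $s\times s\times s$ subtensor supported on the pivot rows and columns, and then kill all non-pivot terms with a diagonal $\eps$-scaling, exactly as in the paper's degeneration argument. The only difference is bookkeeping: the paper first normalizes so that each pivot row contains only the pivot entry and then scales legs $1$ and $2$ by $\eps^{\pm i}$, whereas you avoid that normalization by scaling legs $2$ and $3$ at two different rates and compensating on leg $1$, which is an equally valid choice of weights.
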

\begin{proof}
    Let $r = \rho_{{\ell_1},{\ell_2}}(T)$.
    Then by \autoref{lem:pivots-max-ind-min-lines}, after taking appropriate linear combinations of the slices of $T$, $T$ has an $r \times r \times r$ subtensor with slices $A_1, \ldots, A_r$ whose pivots $p(A_i)$ are pairwise disjoint in both coordinates. We may assume that $p(A_i) = (i,i)$ and that the $i$'th row of $A_i$ is zero except for the pivot coordinate $(i,i)$. Thus $A_i$ is of the form
    \[
    A_i = e_i \otimes e_i 
    + \sum_{j > i} \sum_{k} (A_i)_{j,k}\, e_j \otimes e_k.
    \]
    So $T$ restricts to the tensor
    \[
    S = \sum_i e_i \otimes e_i \otimes e_i 
    + \sum_{j > i} \sum_{k} (A_i)_{j,k}\, e_i \otimes e_j \otimes e_k.
    \]
    We define the matrices $A(\eps), B(\eps), C(\eps)$ as the ones that map
    \begin{align*}
    &A(\eps) : e_i \mapsto \eps^i e_i\\
    &B(\eps) : e_i \mapsto \eps^{-i} e_i\\
    &C(\eps) : e_i \mapsto e_i.
    \end{align*}
    Then 
    \[
(A(\eps) \otimes B(\eps) \otimes C(\eps))S = \sum_i e_i \otimes e_i \otimes e_i + \sum_{j > i} \eps^{j-i} \sum_{k} (A_i)_{j,k}\, e_i \otimes e_j \otimes e_k
    \]
    so $\bordersubrank(T) \geq r$.
\end{proof}

\begin{theorem} \label{thm:cube-root-via-pivots}
Let $T \in \FF^n \otimes \FF^n \otimes \FF^n$ be concise. Then
$\asympsubrank(T) \geq n^{1/3}$.
\end{theorem}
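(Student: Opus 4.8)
The plan is to prove the equivalent statement $\asympsubrank(T)^3\geq n$ by combining the pivot inequality \autoref{th:pivot-uncertainty} with the chain $\rho_{i,j}(T)\leq\bordersubrank(T)\leq\asympsubrank(T)$ furnished by \autoref{th:rho-bordersubrank} and \autoref{lem:border-asymp-subrank}, together with one short computation involving matrix multiplication tensors. In contrast to the proof in \autoref{sec:max-rank}, this route never invokes \autoref{th:uncertainty-principle-n}, so it imposes no condition on $|\FF|$ and works over an arbitrary field. Write $Q:=\asympsubrank(T)$ and $q_i:=\subrank_i(T)$ for $i\in[3]$ (the case $n=0$ is trivial, so assume $n\geq 1$). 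Permuting the tensor legs affects neither conciseness nor $\asympsubrank$, so I may assume $q_1\leq q_2\leq q_3$; also $q_i\geq 1$ for all $i$, since the $i$-slice span of a concise tensor is nonzero. I will establish the two inequalities
\[
Q\,q_2\geq n\qquad\text{and}\qquad Q^2\geq q_2,
\]
and then conclude $Q^3=Q\cdot Q^2\geq Q\,q_2\geq n$, i.e. $\asympsubrank(T)\geq n^{1/3}$.

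For the first inequality I would apply \autoref{th:pivot-uncertainty} with $(i,j,k)=(1,2,3)$: since $q_1\leq q_2$ this gives $\rho_{1,2}(T)\,q_2=\rho_{1,2}(T)\max\{q_1,q_2\}\geq n_3=n$. By \autoref{th:rho-bordersubrank} and \autoref{lem:border-asymp-subrank} we have $\rho_{1,2}(T)\leq\bordersubrank(T)\leq\asympsubrank(T)=Q$, hence $Q\,q_2\geq\rho_{1,2}(T)\,q_2\geq n$.

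For the second inequality I would use that, by \autoref{lem:Qi-mamu}, $T\geq\langle q_2,1,1\rangle$ and $T\geq\langle 1,q_3,1\rangle$. Multiplying these two restrictions and using $\langle a,b,c\rangle\boxtimes\langle x,y,z\rangle=\langle ax,by,cz\rangle$ yields $T^{\boxtimes 2}\geq\langle q_2,q_3,1\rangle$. Since $\asympsubrank(\langle a,b,c\rangle)=\min\{ab,bc,ca\}$ (the fact used in the proof of \autoref{lem:asympsub-min}) and $1\leq q_2\leq q_3$, this gives $\asympsubrank(\langle q_2,q_3,1\rangle)=\min\{q_2 q_3,q_3,q_2\}=q_2$. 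By monotonicity of $\asympsubrank$ under restriction and its multiplicativity under $\boxtimes$, $Q^2=\asympsubrank(T^{\boxtimes 2})\geq\asympsubrank(\langle q_2,q_3,1\rangle)=q_2$.

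The proof is thus essentially an assembly of results already available, and I do not expect a genuine obstacle. The one point needing care — the only place where something could go wrong — is the bookkeeping of directions: one must feed the two \emph{smallest} of $\subrank_1(T),\subrank_2(T),\subrank_3(T)$ into \autoref{th:pivot-uncertainty} so that its left-hand side carries the median $\max\{q_1,q_2\}=q_2$, and pair the two \emph{largest} in the matrix-multiplication step so that $\min\{q_2 q_3,q_3,q_2\}=q_2$ rather than something smaller. With that alignment both displayed inequalities feature exactly $q_2$, and multiplying them produces the cube root; any other pairing only yields the weaker fourth-root bound of \autoref{th:fourth-root}.
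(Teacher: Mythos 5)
Your argument is correct and rests on exactly the ingredients of the paper's own pivot-based proof: \autoref{th:pivot-uncertainty}, the chain $\rho_{i,j}(T)\le\bordersubrank(T)\le\asympsubrank(T)$ from \autoref{th:rho-bordersubrank} and \autoref{lem:border-asymp-subrank}, and the fact that two directions with large max-rank force $\subrank(T^{\boxtimes 2})$ to be large (your matrix-multiplication computation via $\langle q_2,q_3,1\rangle$ is essentially the content and proof of \autoref{lem:Qi-square}). The only difference is organizational: the paper splits into the two cases ``some $\rho_{i,j}(T)\ge n^{1/3}$'' versus ``all $\rho_{i,j}(T)<n^{1/3}$'', whereas you avoid the dichotomy by multiplying the inequalities $\asympsubrank(T)\cdot q_2\ge n$ and $\asympsubrank(T)^2\ge q_2$ (with $q_2$ the median of $\subrank_1(T),\subrank_2(T),\subrank_3(T)$), a mild reorganization that yields the same bound and, like the paper's proof, needs no assumption on $|\FF|$.
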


\begin{proof}
We consider two cases.
For the first case, suppose there are ${\ell_1},{\ell_2}$ such that $\rho_{{\ell_1},{\ell_2}}(T) \geq n^{1/3}$. Then $\asympsubrank(T) \geq n^{1/3}$, because generally we have $\asympsubrank(T) \geq \bordersubrank(T) \geq \rho_{{\ell_1},{\ell_2}}(T)$ (\autoref{th:rho-bordersubrank}). 
For the second case, suppose for all ${\ell_1} \neq {\ell_2}$ we have $\rho_{{\ell_1},{\ell_2}}(T) < n^{1/3}$. Then $\max \{ \subrank_{\ell_1}(T), \subrank_{\ell_2}(T) \} \geq n^{2/3}$ for all distinct ${\ell_1},{\ell_2}$ (\autoref{th:pivot-uncertainty}). From this follows that there are distinct ${\ell_1},{\ell_2}$ such that $\subrank_{\ell_1}(T) \geq n^{2/3}$ and  $\subrank_{\ell_2}(T) \geq n^{2/3}$. Then $\subrank(T^{\boxtimes 2}) \geq n^{2/3}$ (\autoref{lem:Qi-square}) so $\asympsubrank(T) \geq n^{1/3}$.
\end{proof}

We finish this subsection with a relation between the border subrank and the parameters~$\subrank_\ell$ that may be of independent interest.


\begin{theorem} \label{thm:commrnk_bounds_bordersubrnk} 
Let $T$ be any tensor.
Suppose $|\FF|=\infty$. 
For every $\ell \in [3]$ we have
\[\bordersubrank(T) \leq \subrank_\ell(T).\]
\end{theorem}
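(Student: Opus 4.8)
The plan is to read $r := \bordersubrank(T)$ as a one-parameter family of restrictions of $T$ degenerating to the diagonal tensor $\langle r\rangle$, and to track one explicit full-rank linear combination of slices through this family. By definition there are matrices $A(\eps)\in\FF[\eps,\eps^{-1}]^{r\times n_1}$, $B(\eps)\in\FF[\eps,\eps^{-1}]^{r\times n_2}$, $C(\eps)\in\FF[\eps,\eps^{-1}]^{r\times n_3}$ with
\[
\tilde T(\eps):=(A(\eps)\otimes B(\eps)\otimes C(\eps))\,T=\langle r\rangle+\eps S_1+\cdots+\eps^t S_t,
\]
so $\tilde T(\eps)$ has coefficients in $\FF[\eps]$ and $\tilde T(0)=\langle r\rangle$. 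I would fix a direction $i$ and produce a linear combination of the $i$-slices of $T$ of rank at least $r$, which is exactly what $\subrank_i(T)\ge r$ asserts.

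I would carry out the following steps, writing them for $i=1$ (the cases $i=2,3$ are word-for-word the same with the $2$- resp.\ $3$-slices). First, a direct computation shows that the $\alpha$-th $1$-slice of $\tilde T(\eps)$ equals $B(\eps)\,M_\alpha(\eps)\,C(\eps)^{\mathsf T}$, where $M_\alpha(\eps)=\sum_a A(\eps)_{\alpha,a}\,T^{(1)}_a$ is a Laurent-polynomial combination of the $1$-slices $T^{(1)}_a\in\FF^{n_2}\otimes\FF^{n_3}$ of $T$. Since the $1$-slices of $\langle r\rangle$ are $E_{1,1},\dots,E_{r,r}$, whose sum is $\id_r$, summing over $\alpha\in[r]$ yields
\[
N(\eps):=\sum_{\alpha=1}^r\big(\text{$\alpha$-th $1$-slice of }\tilde T(\eps)\big)=B(\eps)\,M(\eps)\,C(\eps)^{\mathsf T},\qquad M(\eps)=\sum_{a}c_a(\eps)\,T^{(1)}_a,
\]
with $c_a(\eps)=\sum_{\alpha}A(\eps)_{\alpha,a}\in\FF[\eps,\eps^{-1}]$; being a sum of slices of $\tilde T(\eps)$, the matrix $N(\eps)$ has polynomial entries and $N(0)=\id_r$. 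Hence $\det N(\eps)$ is a polynomial with $\det N(0)=1$, so $N(\eps)$ is invertible over $\FF(\eps)$; as $N=BMC^{\mathsf T}$, this forces $\rank M(\eps)\ge r$ over $\FF(\eps)$, so a fixed $r\times r$ minor of $M(\eps)$ — say over rows $I\subseteq[n_2]$ and columns $J\subseteq[n_3]$ — is a nonzero element of $\FF(\eps)$. That minor equals $q(c_1(\eps),\dots,c_{n_1}(\eps))$ for the polynomial $q(u)=\det\big((\sum_a u_a T^{(1)}_a)_{I\times J}\big)$, which is homogeneous of degree $r$; since it is nonzero at $u=c(\eps)$, it is not the zero polynomial. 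Finally, as $r=\bordersubrank(T)<|\FF|$, the Schwartz--Zippel lemma gives $u^\ast\in\FF^{n_1}$ with $q(u^\ast)\ne0$, so $\sum_a u^\ast_a T^{(1)}_a\in\mathcal{A}_1$ has a nonsingular $r\times r$ submatrix and $\subrank_1(T)=\maxrank(\mathcal{A}_1)\ge r$. (One could equivalently phrase this through \autoref{lem:Qi-mamu}, since the construction produces a restriction $T\ge\langle 1,1,r\rangle$, but the direct slice computation seems cleanest.)

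The step I expect to be the real obstacle is pinning down the field-size hypothesis. The naive route — specialize $\eps$ to a generic scalar $\eps_0\in\FF$, so that $\tilde T(\eps_0)$ is an honest restriction of $T$ with $\subrank_i(\tilde T(\eps_0))\ge r$ via \autoref{lem:monotone-restrict} — requires $\eps_0$ to avoid the roots of $\det N(\eps)$, whose degree can be of order $rt$, where $t$ is the order of the degeneration; this produces a bound on $|\FF|$ depending on $t$, which is not allowed. The workaround is to pass instead to the polynomial $q$ in the \emph{combination coefficients} $u$ rather than in $\eps$: $q$ has degree exactly $r$, independent of $t$, so Schwartz--Zippel needs only $|\FF|>r$. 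In particular the stated hypothesis $|\FF|>\bordersubrank(T)+1$ is comfortably sufficient (the argument in fact already goes through for $|\FF|>\bordersubrank(T)$). The remaining ingredients — the slice identity for $\tilde T(\eps)$ and the fact that in every direction the slice span of $\langle r\rangle$ contains $\id_r$ — are routine.
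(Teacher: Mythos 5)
Your proposal is correct, and its first half coincides with the paper's own argument: the paper likewise starts from the degeneration $(A(\eps)\otimes B(\eps)\otimes C(\eps))T=\langle q\rangle+\eps S$ and collapses the chosen direction by a uniform linear combination of the slices (absorbed into a new map $C'(\eps)$), arriving at exactly your identity $\id_q+\eps N(\eps)=B(\eps)M(\eps)C(\eps)^{\mathsf T}$ in which $M(\eps)$ is a Laurent-polynomial combination of the slices of $T$. Where you genuinely diverge is the finishing move. The paper specializes the degeneration parameter: $\det(\id_q+\eps N(\eps))$ is a nonzero polynomial in $\eps$ (it equals $1$ at $\eps=0$), one picks $x\in\FF$ where it does not vanish (the hypothesis $|\FF|>q+1$ is invoked for this, the extra $+1$ presumably leaving room to also avoid $\eps=0$, where the Laurent matrices may have poles), and then $A(x),B(x),C'(x)$ give an honest restriction of $T$ whose resulting slice has rank $q$, so $\subrank_i(T)\geq q$. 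You never specialize $\eps$: you work over $\FF(\eps)$ to see that a fixed $r\times r$ minor of $M(\eps)$ is nonzero, reinterpret that minor as a homogeneous degree-$r$ polynomial in the slice-combination coefficients $u$, and apply Schwartz--Zippel in $u$. Your worry about the $\eps$-route is legitimate: the $\eps$-degree of $\det(\id_q+\eps N(\eps))$ is governed by the order $t$ of the degeneration, not by $q$ alone (the paper asserts degree at most $q$, which is only valid when the error terms are constant in $\eps$), so over small finite fields the $\eps$-specialization as written is delicate, whereas your polynomial in $u$ has degree exactly $r$ independently of $t$. The trade-off: the paper's route is shorter and produces an explicit scalar specialization of the degeneration witnessing the rank, while yours is more robust in the field-size accounting and in fact proves the theorem under the marginally weaker hypothesis $|\FF|>\bordersubrank(T)$.
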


\begin{proof}
    We will give the proof for $\ell=3$.
    Let $q=\bordersubrank(T)$. Then there are matrices $A(\eps)$, $B(\eps)$, $C(\eps)$ whose coefficients are Laurent polynomials over $\FF$ in the variable $\eps$ %
    such that
    \[
    (A(\eps)\otimes B(\eps)\otimes C(\eps))(T)=\langle q\rangle+\eps S(\eps) %
    \]
    for some tensor $S(\eps)$ whose coefficients are polynomials over $\FF$ in $\eps$.
    Note that summing the 3-slices of $\langle q\rangle$ gives $\id_q$.
    Towards taking the sum of the 3-slices on both sides, define $C'(\eps) = (1,\ldots,1) C(\eps)$.
    Then
    $(A(\eps)\otimes B(\eps)\otimes C'(\eps))(T)=\id_q+\eps N(\eps)$ for some matrix $N(\eps)\in M_{q}(\FF[\eps])$. The determinant of the matrix $\id_q+\eps N(\eps)$ is a nonzero polynomial in $\eps$. 
    Thus, it has finitely many roots and we have another evaluation point for which it is non-zero. 
    Let $x\in\FF$ be such an evaluation point. Then $A(x),B(x),C'(x)$ are linear transformations over~$\FF$ for which we get a $3$-slice $(A(x)\otimes B(x)\otimes C'(x))(T)= \id_q +xN(x)\in M_q(\FF)$ of rank~$q$, showing $\subrank_3(T)\geq q=\bordersubrank(T)$.
\end{proof}

Combining \autoref{th:rho-bordersubrank} and \autoref{thm:commrnk_bounds_bordersubrnk}, assuming that $|\FF| = \infty$, 
we find for any directions ${\ell_1},{\ell_2},{\ell_3} \in [3]$ such that ${\ell_1}\neq {\ell_2}$ the relation
$\rho_{{\ell_1},{\ell_2}}(T) \leq \bordersubrank(T) \leq \subrank_{\ell_3}(T).$

\subsection{Square-root lower bound for symmetric tensors} \label{sub:sqrt-assub-equal-pivots}

For symmetric tensors we can improve the cube-root lower bound of \autoref{thm:cube-root-via-pivots} to a square-root lower bound, using the notion of pivots.

\begin{theorem}\label{cor:symmetric}
    Let $T \in \FF^{n} \otimes \FF^n \otimes\FF^n$ be symmetric and concise. Then  $\asympsubrank(T) \geq \sqrt{n}$.
\end{theorem}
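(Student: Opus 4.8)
The plan is to mimic the structure of the proof of \autoref{thm:cube-root-via-pivots}, splitting into two cases depending on whether the tensor has a large pivot-parameter $\rho_{i,j}(T)$ in some direction, but now exploiting symmetry to extract a \emph{quadratic} improvement in the second case. First I would observe that for a symmetric tensor all six matrix subspaces $\mathcal{A}_{i,j}$ coincide up to transpose with a single subspace, and in particular $\subrank_1(T)=\subrank_2(T)=\subrank_3(T)=:q$ and likewise the relevant $\rho_{i,j}(T)$ values are tied to a single quantity $\rho:=\rho_{1,2}(T)$ (one must be slightly careful because $\rho_{i,j}$ can differ from $\rho_{j,i}$ due to the lexicographic convention, but symmetry of the support still forces the estimate $\rho_{2,3}(T)\cdot q \geq n$ from \autoref{th:pivot-uncertainty}, with $\max\{\subrank_i,\subrank_j\} = q$).

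In the first case, suppose $\rho_{i,j}(T) \geq \sqrt{n}$ for some $i \neq j$. Then by \autoref{th:rho-bordersubrank} and \autoref{lem:border-asymp-subrank} we immediately get $\asympsubrank(T) \geq \bordersubrank(T) \geq \rho_{i,j}(T) \geq \sqrt{n}$, and we are done. In the second case, suppose $\rho_{i,j}(T) < \sqrt{n}$ for all $i \neq j$. Then \autoref{th:pivot-uncertainty} gives $\max\{\subrank_i(T),\subrank_j(T)\} \geq n/\rho_{i,j}(T) > \sqrt{n}$; combined with the fact that all three $\subrank_i$ are equal (symmetry), this yields $q = \subrank_i(T) > \sqrt{n}$ for every $i$. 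The key point is that now I do not merely square to get $\subrank(T^{\boxtimes 2}) \geq q$; instead I should use the stronger consequence that all three slice-span max-ranks are simultaneously $\geq \sqrt{n}$, feeding into \autoref{lem:asympsub-min}: $\asympsubrank(T^{\boxtimes 3}) \geq \min_{i,j}\subrank_i(T)\subrank_j(T) = q^2 > n$. Dividing the exponent, $\asympsubrank(T) \geq (q^2)^{1/3} > n^{1/3}$ — wait, that only gives the cube root again.

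So the genuine improvement must come from the symmetry allowing a \emph{better} bound than $\rho_{i,j}(T)\cdot q \geq n$. The right move, as hinted by the paper's mention of ``pivot-matched'' tensors in \autoref{sub:sqrt-assub-equal-pivots}, is that for a symmetric tensor the pivot set of $\mathcal{A}_{2,3}$ is itself symmetric under swapping the two coordinates, so a minimal line-cover $(S_2,S_3)$ can be taken with $S_2 = S_3$, giving $\rho_{2,3}(T) = 2|S|$ with $p(\mathcal{A}_{2,3}) \subseteq (S\times[n])\cup([n]\times S)$ for a single set $S$ of size $\rho/2$. Re-running the counting argument in the proof of \autoref{th:pivot-uncertainty}, $n \leq |S|\,q + |S|\,q = \rho q$ still, so the crucial gain is rather that in the first case a symmetric $\rho_{i,j}$ bound of $\rho$ upgrades to a \emph{diagonal} (not just triangular) restriction: a symmetric $\rho \times \rho \times \rho$ subtensor with distinct pivots in every direction has, after symmetrization, an actual $\langle \rho \rangle$ on its support up to border, so $\bordersubrank(T) \geq \rho$ as before — but we can also run case one with threshold $\rho \geq n^{2/3}$ and case two then gives $q \geq n/\rho$; optimizing, set the threshold at $n^{2/3}$: if $\rho \geq n^{2/3}$ then $\asympsubrank(T)\geq n^{2/3}\geq \sqrt n$; if $\rho < n^{2/3}$ then $q > n^{1/3}$, and now \autoref{lem:asympsub-min} gives $\asympsubrank(T^{\boxtimes 3}) \geq q^2$, i.e. $\asympsubrank(T)\geq q^{2/3}$, which is still not $\sqrt n$. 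The main obstacle — and the step I expect to be genuinely subtle — is precisely this: closing the gap between the cube-root bottleneck and the claimed square root requires the symmetric structure to let us extract a diagonal tensor of size proportional to $q$ (not just to $\rho$ or to $q^{2/3}$) directly, presumably by intersecting the pivot structure in the three symmetric directions simultaneously and using that a symmetric concise tensor whose slice spans all have max-rank $q$ must restrict (or border-restrict) to $\langle q^{?}\rangle$ with the right exponent; I would look to combine \autoref{th:pivot-uncertainty} applied in a symmetry-aware form with \autoref{cla-high-rank-high-subrank} or with a direct symmetric version of \autoref{lem:Qi-square} that, thanks to $\subrank_1=\subrank_2=\subrank_3$, produces $\subrank(T^{\boxtimes 2})\geq q$ and hence $\asympsubrank(T)\geq \sqrt q \geq n^{1/4}$ — again short. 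The resolution the authors must be using is the \emph{pivot-matched} hypothesis: for such tensors one shows $\bordersubrank(T)\geq \max\{\rho, q\}$ is the wrong target and instead $\asympsubrank(T) \geq \sqrt{\rho \cdot q} \geq \sqrt n$ directly, by constructing from the pivots a border-restriction to $\langle \lceil\sqrt n\rceil\rangle$; so the plan's final and hardest step is to carry out that explicit border construction, combining the triangular-pivot degeneration of \autoref{th:rho-bordersubrank} with the diagonal-submatrix extraction from the proof of \autoref{th:pivot-uncertainty}, and to verify it respects the symmetry so that the exponents multiply rather than add.
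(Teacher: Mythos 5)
Your proposal does not reach the claimed bound, and you say so yourself: every concrete route you try (threshold $\sqrt n$, threshold $n^{2/3}$, feeding all three equal max-ranks into \autoref{lem:asympsub-min}) bottoms out at a cube-root-type bound, and the final paragraph is a guess at what the missing construction should look like rather than a construction. The genuine gap is that the square-root bound does not come from any combination of the inequalities $\rho_{i,j}(T)\max\{\subrank_i,\subrank_j\}\geq n$, $\bordersubrank\geq\rho_{i,j}$, and $\asympsubrank(T^{\boxtimes 3})\geq\min_{i,j}\subrank_i\subrank_j$; no reweighting of that case analysis can beat $n^{1/3}$, because those tools only ever certify a diagonal of size $\rho$ inside $T$ or of size $\subrank_i\subrank_j$ inside $T^{\boxtimes 3}$.

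What the paper actually does (\autoref{thm:samePivotsInRow_sqrtLowerBound_asymsub} via \autoref{lem:upper-triangular} and \autoref{lem:border-subrank-for-upper-triag}) is qualitatively different: symmetry gives that the $1$-slices and the $3$-slices of $T$ are the same matrices, so $T$ is pivot-matched in the sense of \autoref{def:pivot-matched}, and one works with the Kronecker \emph{square} $T\boxtimes T$, slicing the first factor in direction $1$ and the second factor in direction $3$. Projecting each of the three legs of $T_A\boxtimes T_B$ by explicit diagonal projectors built from the pivot maps ($U=\sum_i E_{i,i}\otimes E_{f_A(i),f_A(i)}$, etc.), the matched-pivot condition $f_A=f_B$ forces the ``diagonal'' terms indexed by all $n$ slices to survive with coefficient $1$, while every other surviving term is strictly upper triangular in the grading by pivot row. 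The standard $\eps$-scaling degeneration then kills the off-diagonal part, giving $\bordersubrank(T^{\boxtimes 2})\geq n$, hence $\asympsubrank(T)\geq\sqrt n$ by \autoref{lem:border-asymp-subrank}. So the key object is a full-length-$n$ diagonal inside $T^{\boxtimes 2}$ extracted from two different slicings with matched pivots --- not a bound of the form $\sqrt{\rho\cdot q}$, and not anything obtainable from $T^{\boxtimes 3}$ and the max-rank products. Your intuition that the pivot structure in two directions must be intersected ``simultaneously'' points in the right direction, but the step you label as the hardest is exactly the step you would still need to supply.
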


In fact, we will prove a stronger theorem which implies \autoref{cor:symmetric}. For this we will define the notion of being pivot-matched.

First, for any collection of matrices $A_1, \ldots, A_n \in \FF^{m \times k}$ we define the pivot map 
\[
p = (f,g) : [n] \to [m] \times [k]
\]
by mapping $i\in [n]$ to the pivot $p(A_i) = (f(A_i), g(A_i))$ of $A_i$.

\begin{definition}[Pivot-matched] \label{def:pivot-matched}
    We call $T\in\FF^{n}\otimes\FF^{n}\otimes\FF^{n}$ \emph{pivot-matched} if there are tensors $T_A$ and $T_B$ isomorphic 
    to $T$ such that, 
    letting $A_i = \sum_{j,k} (T_A)_{i,j,k}\, e_j \otimes e_k \in \FF^{n} \otimes \FF^{n}$ for $i \in [n]$ be the 1-slices of $T_A$, and $p_A=(f_A,g_A)$ the pivot map of these matrices, and
    letting $B_k = \sum_{i,j} (T_B)_{i,j,k}\, e_i \otimes e_j \in \FF^{n}\otimes \FF^{n}$ for $k \in [n]$ be the 3-slices of $T_B$, and $p_B=(f_B,g_B)$ the pivot map of these matrices, we have that $p_A$ and $p_B$ are injective and  for all $i \in [n]$ we have $f_A(i)=f_B(i)$.
\end{definition}

The condition that $p_A$ and $p_B$ are injective can always be satisfied by \autoref{lem:pivot-set-invariant}.

We note that in the above definition, since pivots are defined using the lexicographic ordering, to make our proofs work it is crucial to define the slices $A_i$ as stated (and not as their transpose) and the same for the $B_i$.

\begin{remark}
    By reordering the $1$-slices of $T_A$ and the $3$-slices of $T_B$, the definition is equivalent to requiring a seemingly weaker requirement of having the same number of slices with pivots in each row for the two sets of slices, i.e.~$\forall i\in[n]:|f_A^{-1}(i)|=|f_B^{-1}(i)|$.
\end{remark}

\begin{theorem} \label{thm:samePivotsInRow_sqrtLowerBound_asymsub}
    Let $T\in\FF^{n}\otimes\FF^{n}\otimes\FF^{n}$ be a concise tensor that is pivot-matched. Then 
    \[\asympsubrank(T)\geq\sqrt{n}.\]
\end{theorem}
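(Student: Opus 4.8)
The plan is to combine the two bounds already in the pivot toolbox: the "uncertainty" bound $\rho_{i,j}(T)\max\{\subrank_i(T),\subrank_j(T)\}\geq n$ (\autoref{th:pivot-uncertainty}) and the inequality $\asympsubrank(T)\geq\bordersubrank(T)\geq\rho_{i,j}(T)$ (\autoref{lem:border-asymp-subrank}, \autoref{th:rho-bordersubrank}). If $\rho_{i,j}(T)\geq\sqrt n$ for some pair $i\neq j$ we are immediately done, so assume $\rho_{i,j}(T)<\sqrt n$ for all pairs; then $\max\{\subrank_i(T),\subrank_j(T)\}>\sqrt n$ for every pair, hence there are two distinct directions — call them $i$ and $j$ — with $\subrank_i(T),\subrank_j(T)\geq\sqrt n$, and \autoref{lem:Qi-square} gives $\subrank(T^{\boxtimes 2})\geq\sqrt n$, which yields only $\asympsubrank(T)\geq n^{1/4}$. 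So the crude argument is not enough; the pivot-matched hypothesis must be used to upgrade "two directions of large max-rank" into "three directions of large max-rank", or directly into a matrix-multiplication restriction that is square on all three legs.

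First I would set up the notation carefully. Pass to the tensors $T_A$ and $T_B$ provided by \autoref{def:pivot-matched}, with $1$-slices $A_1,\dots,A_n$ of $T_A$ and $3$-slices $B_1,\dots,B_n$ of $T_B$, injective pivot maps $p_A=(f_A,g_A)$ and $p_B=(f_B,g_B)$, and $f_A=f_B=:f$ after reordering. Running the argument in the proof of \autoref{th:pivot-uncertainty} on $T_A$ shows that for each value $x$ in the image of $f$, projecting direction $2$ to $e_x$ and restricting direction $3$ to the columns carrying pivots in row $x$ produces a diagonal submatrix of size $|f^{-1}(x)|$ inside the $x$-th $2$-slice, so $\subrank_2(T_A)\geq\max_x|f^{-1}(x)|$; likewise $\subrank_3(T_A)\geq\max_y|g_A^{-1}(y)|$. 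The point of pivot-matchedness is that $T_B$ has the same row-profile $f$ on its $3$-slices, so the analogous construction on $T_B$ — projecting direction $2$ to $e_x$ and restricting direction $1$ — gives a diagonal submatrix of size $|f^{-1}(x)|$ inside a $2$-slice of $T_B$, i.e. $\subrank_2(T_B)\geq\max_x|f^{-1}(x)|$ as well. Since $T_A\cong T\cong T_B$, all the $\subrank_i$ are the same for all three tensors, and I would use the two realizations $T_A,T_B$ to manufacture restrictions to $\langle r,s,t\rangle$ in all three directions simultaneously with the $\max_x|f^{-1}(x)|$ factor appearing twice.

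Concretely, combining $n=\sum_x|f^{-1}(x)|\leq(\#\,\mathrm{im}\,f)\cdot\max_x|f^{-1}(x)|$ with $\#\,\mathrm{im}\,f\leq\min\{\subrank_2(T),\subrank_3(T)\}$-type bounds coming from the two sides, one gets $n\leq\subrank_i(T)\subrank_j(T)$ for a well-chosen pair, and additionally a third matrix-multiplication inequality $T^{\boxtimes 3}\geq\langle \subrank_2(T),\subrank_3(T),\subrank_1(T)\rangle$ from \autoref{lem:cube-mamu}. Feeding the pivot-matched structural bounds into \autoref{lem:asympsub-min} — which says $\asympsubrank(T^{\boxtimes 3})\geq\min_{i,j}\subrank_i(T)\subrank_j(T)$ via the value $\min\{ab,bc,ca\}$ of $\asympsubrank(\langle a,b,c\rangle)$ — should give $\asympsubrank(T)^3\geq n^{3/2}$ once I show each pairwise product $\subrank_i(T)\subrank_j(T)$ is at least $n^{1/2}\cdot n^{1/2}$... no: the sharp accounting is that pivot-matchedness forces one can take $r=\max_x|f^{-1}(x)|$ as a \emph{common} gain in two different legs, so that the relevant matrix-multiplication tensor we restrict to has all three formats $\geq\sqrt n$, whence $\asympsubrank(T)\geq(\min\{ab,bc,ca\})^{1/3}\geq(\sqrt n\cdot\sqrt n)^{1/3}=n^{1/3}$ — still not enough, so the real mechanism must be that $T^{\boxtimes 2}$, not $T^{\boxtimes 3}$, already restricts to a square $\langle r,r,?\rangle$ with $r\geq\sqrt n$, giving $\subrank(T^{\boxtimes 2})\geq r\geq\sqrt n$ directly via a projection as in \autoref{lem:Qi-square}. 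The main obstacle, and the crux of the proof, is exactly this: showing that the shared row-profile $f$ lets one align a diagonal block of size $\max_x|f^{-1}(x)|$ coming from $T_A$ in directions $\{2,3\}$ with one of the same size coming from $T_B$ in directions $\{1,2\}$ \emph{along the same index set in direction $2$}, so that in $T_A\boxtimes T_B$ directions $1$ and $3$ each see an $r$-dimensional diagonal and direction $2$ sees the combined $r^2$; a projection in direction $2$ then yields $\langle r\rangle$ with $r=\max_x|f^{-1}(x)|\geq n/\#\mathrm{im}\,f$. Since we are in the case $\rho_{i,j}(T)<\sqrt n$, which bounds $\#\mathrm{im}\,f$, this forces $r>\sqrt n$ and hence $\asympsubrank(T)\geq\subrank(T_A\boxtimes T_B)\geq r\geq\sqrt n$. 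I expect the bookkeeping of "which index set in direction $2$" — making the two diagonal blocks genuinely compatible rather than merely equal in size — to be the delicate point, handled by reordering direction $2$ and using injectivity of $p_A$ and $p_B$; everything else is assembling lemmas already proved.
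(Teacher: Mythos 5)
There is a genuine gap, in fact two. The quantitative one: your construction inside $T_A\boxtimes T_B$ produces a diagonal of size $r=\max_x|f^{-1}(x)|$ (aligning pivots within a single row of the common row map $f=f_A=f_B$), and you then conclude $\asympsubrank(T)\geq \subrank(T_A\boxtimes T_B)\geq r$. That last inequality is not valid: $T_A\boxtimes T_B$ is isomorphic to $T^{\boxtimes 2}$, so $\subrank(T^{\boxtimes 2})\geq r$ only gives $\asympsubrank(T)=\asympsubrank(T^{\boxtimes 2})^{1/2}\geq \sqrt{r}$; even if you could force $r\geq\sqrt n$ this yields only $n^{1/4}$, no better than the crude argument you discarded at the start. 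To reach $\sqrt n$ one needs a diagonal of the \emph{full} length $n$ inside $T^{\boxtimes 2}$, and this cannot come from one pivot row alone; in the paper it comes from a degeneration. Concretely, \autoref{lem:upper-triangular} uses the matched row-profiles to project $T_A\boxtimes T_B$ so that the $n$ terms $[e_i\otimes e_{f(i)}]\otimes[e_{f(i)}\otimes e_{g(i)}]\otimes[e_{h(i)}\otimes e_i]$ survive with coefficient $1$ (a diagonal of length $n$, indexed by \emph{all} $i\in[n]$, using injectivity of $i\mapsto(f(i),g(i))$), while every remaining term is strictly upper-triangular in the relevant pair of indices; \autoref{lem:border-subrank-for-upper-triag} then kills those terms by the $\eps$-scalings $A(\eps),B(\eps)$, giving $\bordersubrank(T^{\boxtimes 2})\geq n$ and hence $\asympsubrank(T)\geq\sqrt n$ via \autoref{lem:border-asymp-subrank}. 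No case distinction on $\rho$ is needed at all.

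The case analysis you lean on is also unsound. In the case ``$\rho_{i,j}(T)<\sqrt n$ for all pairs'' you infer that $\#\,\mathrm{im}\,f$ is small and hence $\max_x|f^{-1}(x)|\geq n/\#\,\mathrm{im}\,f\geq\sqrt n$. But $\rho$ does not bound the number of distinct pivot rows: the cover in \autoref{def:rho} may consist of columns, so all $n$ pivots can lie in $n$ distinct rows yet in a single column, giving $\rho=1$ while $\#\,\mathrm{im}\,f=n$ and $\max_x|f^{-1}(x)|=1$. What $\rho_{2,3}(T)<\sqrt n$ actually yields, exactly as in the proof of \autoref{th:pivot-uncertainty}, is that some row \emph{or some column} carries more than $\sqrt n$ pivots; if it is a column, the shared row-profile gives no alignment between $T_A$ and $T_B$, so the situation your construction needs may simply not occur. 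Your instinct to work in $T_A\boxtimes T_B$ and exploit $f_A=f_B$ is the right one and matches the paper's mechanism, but the alignment has to be done globally across all $n$ pivots with the off-diagonal terms removed by a border degeneration, not per pivot row by a plain restriction.
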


\begin{remark}
    We do not know whether \autoref{cor:symmetric} can be used to prove $\asympsubrank(T) \geq \sqrt{n}$ for general non-symmetric concise $T \in \FF^{n \times n \times n}$. 
    A natural approach is to symmetrize $T$ by taking $\mathrm{sym}(T) = \boxtimes_{\pi \in S_3} \pi T$ (where we let the symmetric group $S_3$ act by permuting the three tensor legs) and apply \autoref{cor:symmetric} to $\mathrm{sym}(T)$.
    We note, however, that the asymptotic subrank of a tensor generally does not behave nicely with respect to symmetrization (in contrast with matrix rank). 
    In particular, if we let $T = \langle n,1,1\rangle$, which has asymptotic subrank $1$, then the product of the cyclic permutations of $T$, $T \boxtimes (1,2,3) T \boxtimes (1,2,3)^2 T$, equals the matrix multiplication tensor $\langle n,n,n\rangle \in \FF^{n^2 \times n^2 \times n^2}$ which has asymptotic subrank~$n^2$. In other words, tensoring $T$ with permutations of $T$ can drastically increase the asymptotic subrank.
\end{remark}

\begin{lemma}\label{lem:upper-triangular}
    Suppose $T\in\FF^n\otimes\FF^n\otimes\FF^n$ is a concise tensor that is %
    pivot-matched. 
    Then
    $T\boxtimes T$ restricts to a tensor of the form
\begin{multline}\label{eq:req-form}
\sum_{i=1}^n \big[e_i\otimes e_{f(i)}\big]\otimes\big[e_{f(i)}\otimes e_{g(i)}\big] \otimes \big[e_{h(i)}\otimes e_i\big]
+ \sum_{i=1}^n\sum_{j<k}\sum_{l,u,v=1}^n c_{ijkluv}\big[e_i\otimes e_j\big]\otimes \big[e_k\otimes e_l\big]\otimes \big[e_u\otimes e_v].
\end{multline}
for $f,g,h:[n]\to[n]$ functions such that $i\mapsto(f(i),g(i))$ is injective.
\end{lemma}

\begin{proof}
Let $A_1,\dots,A_n$ be the $1$-slices of $T_A$ as in \autoref{def:pivot-matched}, and $B_1,\dots,B_n$ the $3$-slices of $T_B$. Without loss of generality, we may assume that
\beq\label{eq:pivot1}
(A_i)_{p_A(j)} = (B_i)_{p_B(j)} = \delta_{ij}
\eeq
where $p_A=(f_A,g_A)$ is the pivot map for the $1$-slices of $T_A$ (rows enumerated by direction~$2$) and $p_B=(f_B,g_B)$ is for the $3$-slices of $T_B$ (rows by direction $1$), otherwise we can take invertible linear combinations of the slices to guarantee it.

Using these two slicings, we have the decomposition
\beqn
T^{\boxtimes 2}\geq T_A\boxtimes T_B
=
\sum_{i,u=1}^n\sum_{j,v=1}^n\sum_{k,w=1}^n (A_i)_{jk}(B_w)_{uv} (e_i\otimes e_u)\otimes (e_j\otimes e_v)\otimes (e_k\otimes e_w).
\eeqn

We apply the following three linear maps to the respective legs of this tensor:
\begin{align*}
U &= \sum_{i=1}^n E_{i,i}\otimes E_{f_A(i),f_A(i)}\\
V &= \sum_{v=1}^n E_{f_B(v),f_B(v)}\otimes E_{g_B(v),g_B(v)}\\
W &= \sum_{w=1}^nE_{g_A(w),g_A(w)}\otimes E_{w,w}.
\end{align*}
where $E_{i,i}$ is the projector onto $e_i$.
A small calculation then gives
\begin{multline*}
T^{\boxtimes 2}\geq(U\otimes V\otimes W)(T_A\boxtimes T_B)
=\\
\sum_{i,v,w=1}^n
(A_i)_{f_B(v)\,g_A(w)}\, (B_w)_{f_A(i)\,g_B(v)}\, 
(e_i\otimes e_{f_A(i)})\otimes (e_{f_B(v)}\otimes e_{g_B(v)})\otimes (e_{g_A(w)}\otimes e_w).
\end{multline*}
This is the final tensor. We now proceed to show how it fits the desired form \eqref{eq:req-form}.

By definition of $f_A$ as giving the first non-zero rows of the $A$ slices, we get non-zero contributions only from the $(i,v)$-pairs satisfying $f_B(v) \geq f_A(i)$ from the first factor after the triple sum.
This already shows that there is an upper-triangular structure present, by looking at the second component of the first leg and the first component of the second leg.

To finish showing $(U\otimes V\otimes W)(T_A\boxtimes T_B)$ is in the desired form we focus on the pairs $(i,v)$ satisfying $f_A(i) = f_B(v)$, aiming to show that this fits the first sum in \eqref{eq:req-form}. 
Let~$S$ denote the set of such pairs.
Restricted to $S$, the triple sum becomes
\beqn
\sum_{(i,v)\in S}\sum_{w=1}^n 
(A_i)_{f_A(i)\,g_A(w)}\, (B_w)_{f_B(v)\,g_B(v)}\, 
(e_i\otimes e_{f_A(i)})\otimes (e_{f_B(v)}\otimes e_{g_B(v)})\otimes (e_{g_A(w)}\otimes e_w).
\eeqn
By~\eqref{eq:pivot1}, we have that 
\beqn
(B_w)_{f_B(v)\,g_B(v)} = (B_w)_{p_B(v)} = \delta_{vw}.
\eeqn
Hence, the sum over~$w$ collapses to a single term and we are left with the tensor 
\beqn
\sum_{(i,v)\in S}
(A_i)_{f_A(i)\,g_A(v)}\, 
(e_i\otimes e_{f_A(i)})\otimes (e_{f_B(v)}\otimes e_{g_B(v)})\otimes (e_{g_A(v)}\otimes e_v).
\eeqn
Fix $i,j$ such that $f_A(i) = j$.
Then~$v$ ranges over the set $f_B^{-1}(j)$, so all the $B$-slices whose pivot lies on row~$j$.
But by assumption, these are precisely the $A$-slices whose pivot lies on row~$j$.
This implies that the coordinate $(j,g_A(v))$ is the pivot of some $A$-slice.
Again by~\eqref{eq:pivot1} we  get that
\beqn
(A_i)_{f_A(i)\,g_A(v)} = \delta_{iv}.
\eeqn
Thus our tensor looks like
\beqn
\sum_{i=1}^n 
(e_i\otimes e_{f_A(i)})\otimes (e_{f_B(i)}\otimes e_{g_B(i)})\otimes (e_{g_A(i)}\otimes e_i).
\eeqn
Note that the maps $i \mapsto (i, f_A(i))$, $i \mapsto (f_B(i), g_B(i))$ and $i \to (g_A(i), i)$ are injective, as required
(i.e.\ this first sum is a diagonal of length~$n$).
\end{proof}

\begin{lemma} \label{lem:border-subrank-for-upper-triag}
Suppose that $T\in\FF^{n^2}\otimes\FF^{n^2}\otimes\FF^{n^2}$ restricts to a tensor $S$ of the form
\begin{multline*}
\sum_{i=1}^n \big[e_i\otimes e_{f(i)}\big]\otimes\big[e_{f(i)}\otimes e_{g(i)}\big] \otimes \big[e_{h(i)}\otimes e_i\big]
+ \sum_{i=1}^n\sum_{j<k}\sum_{l,u,v=1}^n c_{ijkluv}\big[e_i\otimes e_j\big]\otimes \big[e_k\otimes e_l\big]\otimes \big[e_u\otimes e_v]
\end{multline*}
for $f,g,h:[n]\to[n]$ functions such that $i\mapsto(f(i),g(i))$ is injective.
Then,
    \[\bordersubrank(T) \geq n.\]
\end{lemma}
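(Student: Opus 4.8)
The plan is to write down an explicit degeneration of $S$ to $\langle n\rangle$. This suffices because border subrank is monotone under restriction: if $(P_1\otimes P_2\otimes P_3)T=S$ for fixed matrices $P_i$ and $(A(\eps)\otimes B(\eps)\otimes C(\eps))S=\langle n\rangle+\eps(\cdots)$, then $\big((A(\eps)P_1)\otimes(B(\eps)P_2)\otimes(C(\eps)P_3)\big)T=\langle n\rangle+\eps(\cdots)$, and the composed maps still have Laurent-polynomial entries; so $\bordersubrank(T)\ge\bordersubrank(S)$. Thus I reduce to proving $\bordersubrank(S)\ge n$.

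The key is the gradation hidden in the condition $j<k$ governing the ``error'' part of $S$, combined with the fact that in the diagonal part the second index of leg $1$ equals the first index of leg $2$ (both are $f(i)$). Writing $\FF^{n^2}=\FF^n\otimes\FF^n$, I will apply to the first leg the diagonal map $a(\eps)=\id_n\otimes\Diag(\eps^{-1},\dots,\eps^{-n})$, to the second leg the map $b(\eps)=\Diag(\eps,\dots,\eps^n)\otimes\id_n$, and to the third leg the identity. A direct computation shows that each diagonal term $[e_i\otimes e_{f(i)}]\otimes[e_{f(i)}\otimes e_{g(i)}]\otimes[e_{h(i)}\otimes e_i]$ is scaled by $\eps^{-f(i)}\cdot\eps^{f(i)}=1$, while each error term $[e_i\otimes e_j]\otimes[e_k\otimes e_l]\otimes[e_u\otimes e_v]$ is scaled by $\eps^{k-j}$ with $k-j\ge 1$. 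Hence $(a(\eps)\otimes b(\eps)\otimes\id)S=D+\eps R(\eps)$, where $D=\sum_{i=1}^n[e_i\otimes e_{f(i)}]\otimes[e_{f(i)}\otimes e_{g(i)}]\otimes[e_{h(i)}\otimes e_i]$ and $R(\eps)$ has entries in $\FF[\eps]$.

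It then remains to observe that $D$ restricts, by fixed linear maps, to $\langle n\rangle$. The leg-$1$ vectors $e_i\otimes e_{f(i)}$ are $n$ distinct standard basis vectors since $i\mapsto(i,f(i))$ is injective; the leg-$2$ vectors $e_{f(i)}\otimes e_{g(i)}$ are distinct standard basis vectors by the hypothesis that $i\mapsto(f(i),g(i))$ is injective; and the leg-$3$ vectors $e_{h(i)}\otimes e_i$ are distinct standard basis vectors since $i\mapsto(h(i),i)$ is injective. Choosing coordinate projections $L_1,L_2,L_3:\FF^{n^2}\to\FF^n$ with $L_1(e_i\otimes e_{f(i)})=e_i$, $L_2(e_{f(i)}\otimes e_{g(i)})=e_i$, $L_3(e_{h(i)}\otimes e_i)=e_i$ (and $0$ on all other basis vectors), we get $(L_1\otimes L_2\otimes L_3)D=\langle n\rangle$ while $(L_1\otimes L_2\otimes L_3)\big(\eps R(\eps)\big)=\eps\cdot(\text{something in }\FF[\eps])$. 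Therefore $\big((L_1a(\eps))\otimes(L_2b(\eps))\otimes L_3\big)S=\langle n\rangle+\eps(\cdots)$, which gives $\bordersubrank(S)\ge n$, and hence $\bordersubrank(T)\ge n$ by monotonicity.

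I do not expect a genuine obstacle here; the argument is a short weighting/degeneration trick. The only points requiring care are: (i) that the diagonal terms acquire $\eps$-weight exactly $0$ while all error terms acquire strictly positive weight, simultaneously --- this is precisely why one weights by $-$(leg-$1$ second index) and $+$(leg-$2$ first index), since $j<k$ is equivalent to $k-j>0=f(i)-f(i)$; and (ii) that the fixed projections $L_1,L_2,L_3$ are well defined (each target index must be determined by the source basis vector, which holds by the injectivity statements above) and do not disturb the leading $\eps^0$ term, which is automatic since every error term already carries a positive power of $\eps$.
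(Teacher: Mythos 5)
Your proof is correct and takes essentially the same approach as the paper's: the identical $\eps$-weighting (multiply the second factor of leg $1$ by $\eps^{-j}$ and the first factor of leg $2$ by $\eps^{i}$, identity on leg $3$) sends every $j<k$ error term to a strictly positive power of $\eps$ while leaving the diagonal part, whose weights cancel as $\eps^{-f(i)}\eps^{f(i)}=1$, untouched. The only difference is that you make explicit the monotonicity of border subrank under restriction and the final coordinate projections carrying the length-$n$ diagonal onto $\langle n\rangle$, both of which the paper leaves implicit.
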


\begin{proof}
We define the maps
\begin{align*}
&A(\eps) : [e_i \otimes e_j] \mapsto \eps^{-j} [e_i \otimes e_j]\\
&B(\eps) : [e_i \otimes e_j] \mapsto \eps^{i} [e_i \otimes e_j]\\
&C(\eps) : [e_i \otimes e_j] \mapsto [e_i \otimes e_j].
\end{align*}
Then
\begin{multline*}
(A(\eps) \otimes B(\eps) \otimes C(\eps))S = 
\sum_{i=1}^n \big[e_i\otimes e_{f(i)}\big]\otimes\big[e_{f(i)}\otimes e_{g(i)}\big] \otimes \big[e_{h(i)}\otimes e_i\big]\\
+ \sum_{i=1}^n\sum_{j<k}\sum_{l,u,v=1}^n \eps^{k-j} c_{ijkluv}\big[e_i\otimes e_j\big]\otimes \big[e_k\otimes e_l\big]\otimes \big[e_u\otimes e_v],
\end{multline*}
where the first sum is a diagonal of length~$n$.
Therefore, $\bordersubrank(T) \geq n$.
\end{proof}

\begin{proof}[Proof of \autoref{thm:samePivotsInRow_sqrtLowerBound_asymsub}]
    By \autoref{lem:upper-triangular}, $T^{\boxtimes 2}$ restricts to the upper-triangular form that is the hypothesis for \autoref{lem:border-subrank-for-upper-triag}, which in turn states that it degenerates to a diagonal tensor of size $n$. By \autoref{lem:border-asymp-subrank} this lower bounds its asymptotic subrank $\asympsubrank(T^{\boxtimes 2})\geq\bordersubrank(T^{\boxtimes 2})\geq n$. So $\asympsubrank(T)\geq\sqrt{n}$.
\end{proof}

\section{Asymptotic slice rank versus asymptotic subrank}\label{slicerank-vs-subrank}

In this section we prove a lower bound on the asymptotic subrank in terms of the asymptotic slice rank, thus bounding how different the asymptotic subrank and asymptotic slice rank can be. This requires some of the methods developed so far, and some extra ingredients. These extra ingredients are the notion of the minimal covering number of a matrix subspace, which is also known as non-commutative rank in the literature, and a Flanders-type relation to the max-rank.

\subsection{Minimal covering number of matrix subspaces and slice spans}\label{subsec:mincov}

For any matrix subspace $\mathcal{A} \subseteq \FF^{n_1} \otimes \FF^{n_2}$ we define the \emph{minimal covering number} $\mincov(\mathcal{A})$ as 
    the smallest number $m_1 + m_2$ such that there are subspaces $V_1 \subseteq \FF^{n_1}$ and $V_2 \subseteq \FF^{n_2}$ with $\dim V_1 = m_1$ and $\dim V_2 = m_2$ such that $\mathcal{A} \subseteq V_1 \otimes \FF^{n_2} + \FF^{n_1} \otimes V_2$.
In the literature, the parameter $\mincov$ is also known as the non-commutative rank~\cite{MR2123060}, and a matrix subspace $\mathcal{A}$ satisfying $\mincov(\mathcal{A}) \leq r$ is sometimes called $r$-decomposable~\cite{meshulam1985maximal}.

\begin{lemma}\label{th:Qi-SRi-matrix}
    For any matrix subspace $\mathcal{A}$, $\maxrank(\mathcal{A}) \leq \mincov(\mathcal{A})$.
\end{lemma}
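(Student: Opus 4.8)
For any matrix subspace $\mathcal{A}$, $\maxrank(\mathcal{A}) \leq \mincov(\mathcal{A})$.

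\begin{proof}[Proof proposal]
The plan is to directly compare an element of maximum rank in $\mathcal{A}$ with a minimal covering decomposition. Let $r = \maxrank(\mathcal{A})$ and fix $A \in \mathcal{A}$ with $\rank(A) = r$. Let $V_1 \subseteq \FF^{n_1}$ and $V_2 \subseteq \FF^{n_2}$ be subspaces with $\dim V_1 = m_1$, $\dim V_2 = m_2$, $m_1 + m_2 = \mincov(\mathcal{A})$, and $\mathcal{A} \subseteq V_1 \otimes \FF^{n_2} + \FF^{n_1} \otimes V_2$. In particular we may write $A = B + C$ with $B \in V_1 \otimes \FF^{n_2}$ and $C \in \FF^{n_1} \otimes V_2$.

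The key observation is the standard identification of these tensor subspaces with rank constraints. A matrix lies in $V_1 \otimes \FF^{n_2}$ precisely when each of its columns lies in $V_1$, i.e.\ its column span is contained in $V_1$; hence $\rank(B) \leq \dim V_1 = m_1$. Symmetrically, a matrix lies in $\FF^{n_1} \otimes V_2$ precisely when its row span is contained in $V_2$, so $\rank(C) \leq \dim V_2 = m_2$.

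Now subadditivity of matrix rank gives
\[
r = \rank(A) = \rank(B + C) \leq \rank(B) + \rank(C) \leq m_1 + m_2 = \mincov(\mathcal{A}),
\]
which is the claimed inequality $\maxrank(\mathcal{A}) \leq \mincov(\mathcal{A})$.

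There is no real obstacle here: the only point that warrants an explicit sentence is the translation between membership in $V_1 \otimes \FF^{n_2}$ (resp.\ $\FF^{n_1} \otimes V_2$) and the bound on rank coming from a bound on the column span (resp.\ row span); everything else is subadditivity of rank together with unwinding the definition of $\mincov$.
\end{proof}
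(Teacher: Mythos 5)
Your proof is correct and follows essentially the same route as the paper: decompose any (maximum-rank) element of $\mathcal{A}$ according to the minimal covering $\mathcal{A} \subseteq V_1 \otimes \FF^{n_2} + \FF^{n_1} \otimes V_2$, bound each summand's rank by $\dim V_1$ and $\dim V_2$ via the column/row-span observation, and conclude by subadditivity of rank. No gaps.
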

\begin{proof}
    There are subspaces $V_1\subseteq\FF^{n_1}$ and $V_2\subseteq\FF^{n_2}$ with $\dim(V_1)+\dim(V_2)=\mincov(\mathcal{A})$ %
    and such that $\mathcal{A} \subseteq V_1\otimes\FF^{n_2}+\FF^{n_1}\otimes V_2$. Any matrix in $\mathcal{A}$ can then be written as $A+B$ for some $A\in V_1\otimes\FF^{n_2}$ and $B\in\FF^{n_1}\otimes V_2$. Then $\rank(A)\leq\dim{V_1}$ and $\rank(B)\leq\dim(V_2)$. This gives $\rank(A+B)\leq\rank(A)+\rank(B)\leq\mincov(\mathcal{A})$, as required.
\end{proof}

In \autoref{subsec:flanders} we discuss reverse versions of 
\autoref{th:Qi-SRi-matrix}. 

For any tensor $T \in \FF^{n_1} \otimes \FF^{n_2} \otimes \FF^{n_3}$, let $\mathcal{A}_\ell$ denote the slice spans of $T$ for $\ell \in [3]$. We define $\slicerank_\ell(T) = \mincov(\mathcal{A}_\ell)$ for $\ell \in [3]$.
\autoref{th:Qi-SRi-matrix} directly gives:

\begin{lemma}\label{th:Qi-SRi}
    For any tensor $T$ and every $\ell \in [3]$, we have $\subrank_\ell(T) \leq \slicerank_\ell(T)$.
\end{lemma}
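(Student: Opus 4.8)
The plan is to derive this immediately from \autoref{th:Qi-SRi-matrix} by unwinding the two definitions involved. Recall that for the $i$-slice span $\mathcal{A}_i \subseteq \FF^{n_j} \otimes \FF^{n_k}$ of $T$ we have $\subrank_i(T) = \maxrank(\mathcal{A}_i)$ by definition (\autoref{subsec:maxrank}), and $\slicerank_i(T) = \mincov(\mathcal{A}_i)$ by the definition just introduced. So the inequality $\subrank_i(T) \leq \slicerank_i(T)$ is precisely the instance $\mathcal{A} = \mathcal{A}_i$ of the general matrix-subspace inequality $\maxrank(\mathcal{A}) \leq \mincov(\mathcal{A})$.

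Concretely, I would write: fix $i \in [3]$ and let $\mathcal{A}_i$ be the $i$-slice span of $T$. Applying \autoref{th:Qi-SRi-matrix} to $\mathcal{A}_i$ gives $\maxrank(\mathcal{A}_i) \leq \mincov(\mathcal{A}_i)$, which is exactly $\subrank_i(T) \leq \slicerank_i(T)$. That is the whole argument; the content is entirely in \autoref{th:Qi-SRi-matrix} (pick subspaces $V_1, V_2$ realizing $\mincov(\mathcal{A}_i)$, split any slice as $A + B$ with $\rank(A) \leq \dim V_1$, $\rank(B) \leq \dim V_2$, and use subadditivity of rank), so there is no real obstacle here — the only thing to be careful about is citing the definitions of $\subrank_i$ and $\slicerank_i$ correctly so the reader sees that no transpose/direction bookkeeping is needed (both parameters are defined on the same matrix subspace $\mathcal{A}_i$).

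I expect no difficulty at all for this statement; it is a one-line corollary. If anything, the only thing worth a sentence is to note that this holds over an arbitrary field, since \autoref{th:Qi-SRi-matrix} has no field-size hypothesis (unlike \autoref{th:uncertainty-principle-n}).
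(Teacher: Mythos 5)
Your proposal is correct and is exactly the paper's argument: the paper states that \autoref{th:Qi-SRi-matrix} directly gives the lemma, since $\subrank_i(T)=\maxrank(\mathcal{A}_i)$ and $\slicerank_i(T)=\mincov(\mathcal{A}_i)$ by definition. Nothing further is needed.
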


Finally, it is easily seen how $\subrank_\ell$ is a relaxation of $\subrank$, and similarly how $\slicerank$ is a relaxation of $\slicerank_\ell$, leading to the following inequalities: %

\begin{lemma}\label{lem:ranks-vs-ranksi}
    For any tensor $T$ and every $\ell\in[3]$, we have $\subrank(T) \leq \subrank_\ell(T)$.
\end{lemma}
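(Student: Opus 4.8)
For any tensor $T$ and every $i\in[3]$, we have $\subrank(T) \leq \subrank_i(T)$.

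The plan is to observe that $\subrank_i$, the max-rank of the $i$-th slice span, is a straightforward relaxation of the subrank, so that any diagonalization witnessing a large subrank immediately yields a high-rank matrix in the $i$-th slice span. First I would fix $i = 3$ (the other two cases being identical up to relabeling the tensor legs) and let $r = \subrank(T)$, so that by definition there are linear maps $L_\ell : \FF^{n_\ell} \to \FF^r$ with $(L_1 \otimes L_2 \otimes L_3) T = \langle r \rangle = \sum_{s=1}^r e_s \otimes e_s \otimes e_s$. Equivalently, $T \geq \langle r\rangle$ in the restriction preorder.

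Next I would invoke \autoref{lem:monotone-restrict}, which states precisely that $\subrank_i$ is monotone under restriction: since $T \geq \langle r\rangle$, we get $\subrank_3(T) \geq \subrank_3(\langle r\rangle)$. It then remains only to compute $\subrank_3(\langle r \rangle)$. The $3$-slices of $\langle r\rangle$ are the matrices $\sum_{i,j} (\langle r\rangle)_{i,j,k}\, e_i \otimes e_j = e_k \otimes e_k$ for $k \in [r]$, and their span contains, for example, $\sum_{k=1}^r e_k \otimes e_k = \id_r$, a matrix of rank $r$. Hence $\subrank_3(\langle r\rangle) \geq r$, and combining gives $\subrank_3(T) \geq r = \subrank(T)$. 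Alternatively, one can bypass the explicit slice computation by noting via \autoref{lem:Qi-mamu} that $\subrank_3(\langle r\rangle) \geq r$ iff $\langle r\rangle \geq \langle 1, r, 1\rangle$, and the projection sending $e_s \otimes e_s \otimes e_s$-supported diagonal to $\sum_s e_1 \otimes e_s \otimes e_1$ makes this manifest.

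There is essentially no obstacle here; the only thing to be careful about is the bookkeeping of which matrix subspace $\mathcal{A}_i$ corresponds to which direction, so that the monotonicity of $\subrank_i$ under restriction (as recorded in \autoref{lem:monotone-restrict}) is applied to the correct leg. Since that lemma already asserts $\subrank_i(T) \geq \subrank_i(S)$ whenever $T \geq S$, the proof is a one-line deduction once $\subrank_i(\langle r\rangle) \geq r$ is noted.
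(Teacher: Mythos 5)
Your proof is correct and is essentially the paper's argument: both reduce to $T \geq \langle r\rangle$ for $r = \subrank(T)$ and observe that a rank-$r$ matrix (e.g.\ $\id_r$, the sum of the slices of $\langle r\rangle$) arises from a linear combination of the $i$-slices of $T$ multiplied by matrices on the left and right, which cannot increase rank. The only cosmetic difference is that you package this via \autoref{lem:monotone-restrict} (monotonicity of $\subrank_i$ under restriction, stated in the paper without proof) together with $\subrank_i(\langle r\rangle) \geq r$, whereas the paper carries out that small step inline.
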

\begin{proof}
    Let $r=\subrank(T)$. Then $T\geq\langle r\rangle$ and specifically there are matrices $A,B,C$ such that $(A\otimes B\otimes C)T$ is a matrix of rank $r$, and a linear combination of the $\ell$-slices acted on by a matrix from the left and a matrix on the right. So the linear combination itself is of rank at least $r$.
\end{proof}

\begin{lemma}
    For any tensor $T$ and every $\ell\in[3]$, we have $\slicerank(T) \leq \slicerank_\ell(T)$.
\end{lemma}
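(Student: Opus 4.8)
The plan is to lift the proof of \autoref{th:Qi-SRi-matrix} (which gives $\maxrank\leq\mincov$) from matrix subspaces to tensors. Fix $i\in[3]$; since the three directions play symmetric roles it suffices to treat $i=3$. Write $r=\slicerank_3(T)=\mincov(\mathcal{A}_3)$, where $\mathcal{A}_3$ is the span of the $3$-slices $A_1,\dots,A_{n_3}$ of $T$ (so that $T=\sum_k A_k\otimes e_k$ under the identification $(\FF^{n_1}\otimes\FF^{n_2})\otimes\FF^{n_3}\cong\FF^{n_1}\otimes\FF^{n_2}\otimes\FF^{n_3}$). By the definition of $\mincov$ there are subspaces $V_1\subseteq\FF^{n_1}$ and $V_2\subseteq\FF^{n_2}$ with $\dim V_1+\dim V_2=r$ and $\mathcal{A}_3\subseteq V_1\otimes\FF^{n_2}+\FF^{n_1}\otimes V_2$.

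First I would split each slice: since each $A_k\in\mathcal{A}_3$, write $A_k=B_k+C_k$ with $B_k\in V_1\otimes\FF^{n_2}$ and $C_k\in\FF^{n_1}\otimes V_2$. Set $T'=\sum_k B_k\otimes e_k$ and $T''=\sum_k C_k\otimes e_k$, so that $T=T'+T''$. The tensor $T'$ has all of its $1$-slices contained in $V_1$, hence $\tensorrank_1(T')\leq\dim V_1$, and so $\slicerank(T')\leq\tensorrank_1(T')\leq\dim V_1$ by the standard inequality $\slicerank\leq\tensorrank_1$. Symmetrically $T''$ has all of its $2$-slices contained in $V_2$, giving $\slicerank(T'')\leq\tensorrank_2(T'')\leq\dim V_2$. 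Finally, slice rank is subadditive under addition of tensors (concatenating a slice-rank-one decomposition of $T'$ with one of $T''$ yields one of $T=T'+T''$), so $\slicerank(T)\leq\slicerank(T')+\slicerank(T'')\leq\dim V_1+\dim V_2=r=\slicerank_3(T)$. The cases $i=1,2$ follow by permuting the roles of the three directions.

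I do not expect a genuine obstacle here: the statement is the tensor analogue of $\maxrank\leq\mincov$ combined with the trivial bound $\slicerank\leq\tensorrank_i$. The only points needing a little care are the bookkeeping in the identification $T=\sum_k A_k\otimes e_k$ and the observation that $T'$ and $T''$ genuinely sit inside the smaller tensor spaces $V_1\otimes\FF^{n_2}\otimes\FF^{n_3}$ and $\FF^{n_1}\otimes V_2\otimes\FF^{n_3}$, which is exactly what licenses the flattening-rank bounds. Equivalently, one can phrase the whole proof in one line: a $\mincov$-optimal covering of the $3$-slices by a ``row'' subspace $V_1$ and a ``column'' subspace $V_2$ exhibits $T$ as a sum of a tensor of first flattening rank at most $\dim V_1$ and a tensor of second flattening rank at most $\dim V_2$, hence of slice rank at most $\dim V_1+\dim V_2$.
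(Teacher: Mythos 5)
Your proof is correct and is essentially the paper's own argument made explicit: the paper simply notes that $\slicerank_i(T)=\mincov(\mathcal{A}_i)$ is the minimal size of a slice-rank decomposition restricted to using only slice-rank-one terms in the two directions $j\neq i$, which is exactly what your splitting $T=T'+T''$ along the covering subspaces $V_1,V_2$ exhibits. The extra bookkeeping via $\slicerank\leq\tensorrank_1,\tensorrank_2$ and subadditivity is a fine way to spell out what the paper leaves as an immediate observation.
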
    
\begin{proof}
    %
%
    %
%
    Let $T \in \FF^{n_1} \otimes \FF^{n_2} \otimes \FF^{n_3}$.
    The slice rank $\slicerank(T)$ can be characterized \cite{sawin} as the smallest number $r$ such that there are subspaces $V_\ell \subseteq \FF^{n_\ell}$ for which $\sum_\ell \dim V_\ell = r$ and $T \in V_1 \otimes \FF^{n_2} \otimes \FF^{n_3} + \FF^{n_1} \otimes V_2 \otimes \FF^{n_3} + \FF^{n_1} \otimes \FF^{n_2} \otimes V_3$.
    Suppose $\ell = 3$. Let $s = \slicerank_3(T)$. Then $\mincov(\mathcal{A}) = s$ where $\mathcal{A}$ is the $3$-slice span of $T$. Then $\mathcal{A} \subseteq V_1 \otimes \FF^{n_2} + \FF^{n_1} \otimes V_2$ where $\dim V_1 = m_1$, $\dim V_2 = m_2$ and $m_1 + m_2 = s$. So $T \in V_1 \otimes \FF^{n_2} \otimes \FF^{n_3} + \FF^{n_1} \otimes V_2 \otimes \FF^{n_3}$, which gives $\slicerank(T) \leq s$.
\end{proof}
\subsection{Flanders-type bounds on the max-rank}\label{subsec:flanders}

We know for any matrix subspace $\mathcal{A}$ that $\maxrank(\mathcal{A}) \leq \mincov(\mathcal{A})$ %
(\autoref{th:Qi-SRi}). The goal of this section is to discuss inequalities in the reverse direction. While these are not new, we provide self-contained proofs for the convenience of the reader. Since the tensor parameters $\subrank_\ell$ and $\slicerank_\ell$ are defined in terms of $\maxrank$ and $\mincov$ we immediately get similar inequalities for those. We start with the most basic reverse inequality, which is 
due to Flanders~\cite{flanders1962spaces}:

\begin{lemma}\label{th:sr-q-2-matrix} %
    Let $\mathcal{A}$ be a matrix subspace over the field $\FF$.
    Suppose $|\FF| > \maxrank(\mathcal{A})$. Then we have that $\mincov(\mathcal{A}) \leq 2 \maxrank(\mathcal{A})$.
\end{lemma}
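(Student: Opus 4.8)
The plan is to follow the classical Flanders-type argument: put a maximal-rank element of $\mathcal A$ into a corner normal form, and then show every other element of $\mathcal A$ is forced to vanish on the complementary block. Concretely, set $r=\maxrank(\mathcal A)$ (if $r=0$ the statement is trivial since then $\mathcal A=\{0\}$). Pick $A\in\mathcal A$ with $\rank(A)=r$; after multiplying $\mathcal A$ on the left and right by invertible matrices — operations that change neither $\maxrank$ nor $\mincov$ — I may assume $A=\begin{pmatrix} I_r & 0\\ 0 & 0\end{pmatrix}$ in the block decomposition $\FF^{n_1}=\FF^r\oplus\FF^{n_1-r}$, $\FF^{n_2}=\FF^r\oplus\FF^{n_2-r}$. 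Taking $V_1=\langle e_1,\dots,e_r\rangle\subseteq\FF^{n_1}$ and $V_2=\langle e_1,\dots,e_r\rangle\subseteq\FF^{n_2}$, one has $V_1\otimes\FF^{n_2}+\FF^{n_1}\otimes V_2=\{M:M_{pq}=0\text{ whenever }p>r\text{ and }q>r\}$, so it suffices to prove the claim that every $B\in\mathcal A$ has vanishing lower-right $(n_1-r)\times(n_2-r)$ block; granting this, $\mincov(\mathcal A)\le\dim V_1+\dim V_2=2r=2\maxrank(\mathcal A)$.

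For the claim I would argue by contradiction. Suppose $B\in\mathcal A$ has $B_{r+i,\,r+j}\ne 0$ for some $i,j$. Consider the pencil $\lambda A+\mu B\in\mathcal A$ for $(\lambda,\mu)\in\FF^2$ and the $(r+1)\times(r+1)$ submatrix $M(\lambda,\mu)$ on rows $\{1,\dots,r,r+i\}$ and columns $\{1,\dots,r,r+j\}$. Since $\rank(\lambda A+\mu B)\le\maxrank(\mathcal A)=r$, the determinant $m(\lambda,\mu):=\det M(\lambda,\mu)$ vanishes identically on $\FF^2$. The key observation is that the row $r+i$ of $\lambda A+\mu B$ restricted to these columns receives no contribution from $A$, so the last row of $M(\lambda,\mu)$ is $\mu$ times a constant vector; hence $m(\lambda,\mu)=\mu\,\tilde m(\lambda,\mu)$ where $\tilde m$ is a binary form homogeneous of degree $r$, and evaluating at $\mu=0$ gives $\tilde m(\lambda,0)=\lambda^{r}B_{r+i,\,r+j}$, a nonzero polynomial. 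Thus $\tilde m\not\equiv 0$. On the other hand, for every $\lambda\in\FF$ we have $\tilde m(\lambda,1)=m(\lambda,1)=0$, so $\tilde m$ vanishes on at least $|\FF|$ points of $\mathbb{P}^1$; since a nonzero homogeneous binary form of degree $r$ has at most $r$ projective roots, the hypothesis $|\FF|>r$ forces $\tilde m\equiv 0$ — a contradiction. This proves the claim and hence the lemma.

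The routine parts are the invariance of $\mincov$ under left/right multiplication and the block bookkeeping; the one step that needs to be set up correctly is the homogenization. Working with the affine family $A+tB$ directly produces a determinant of degree $r+1$ in $t$ (its $t^{r+1}$-coefficient is $\det(W)\,B_{r+i,\,r+j}$, where $W$ is the core block of $B$), which would only yield the weaker hypothesis $|\FF|>r+1$. Passing to the homogeneous form $m(\lambda,\mu)$ and factoring out the single factor of $\mu$ contributed by the pivotal row is exactly what brings the degree down to $r$, so the main obstacle is recognizing that this homogenization — together with the fact that $A$ has a zero entry in the pivotal row — is precisely what makes the sharp bound $|\FF|>\maxrank(\mathcal A)$ go through.
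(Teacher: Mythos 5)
Your proof is correct, and it is worth noting that the paper itself gives no argument for this lemma at all: it is stated with a citation to Flanders, and the only proof the paper writes out in this subsection is for the weaker constant~$4$ over arbitrary fields (\autoref{thm:sri_leq_4Qi-matrix}), following Haramaty--Shpilka. So your write-up supplies a self-contained proof where the paper relies on a reference. Your route is the classical Flanders normal-form argument: put a maximal-rank element in the form $\begin{pmatrix} I_r & 0\\ 0 & 0\end{pmatrix}$, and show via vanishing of $(r+1)\times(r+1)$ minors of the pencil that every element of $\mathcal{A}$ has zero lower-right block, which immediately gives the cover by $r$ rows and $r$ columns. The homogenization step is handled correctly: since $A$ vanishes on the pivotal row, the minor of $\lambda A+\mu B$ factors as $\mu\,\tilde m(\lambda,\mu)$ with $\tilde m$ of degree $r$ and $\lambda^r$-coefficient $B_{r+i,r+j}\neq 0$, so vanishing at the $|\FF|>\maxrank(\mathcal{A})$ points $[\lambda:1]$ forces a contradiction; this is exactly what yields the sharp hypothesis $|\FF|>\maxrank(\mathcal{A})$, in contrast with the paper's proof of the factor-$4$ bound, which avoids any field-size assumption by a recursive rank-halving of the restriction to complementary subspaces but loses a factor of $2$. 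One small inaccuracy in your closing aside (which does not affect the proof): for the affine family $A+tB$, the $t^{r+1}$-coefficient of the $(r+1)\times(r+1)$ minor is the corresponding full minor of $B$, not $\det(W)\,B_{r+i,r+j}$, and it may well vanish; the affine variant instead works by factoring out one power of $t$ and using that the $t^1$-coefficient equals $B_{r+i,r+j}$, which indeed only gives the weaker requirement $|\FF|>\maxrank(\mathcal{A})+1$, as you say.
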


\autoref{th:sr-q-2-matrix} requires the base field $\FF$ to be large enough. For arbitrary fields, using a slightly larger constant on the right hand side, a similar inequality is still known to be true by a result of Haramaty and Shpilka \cite[Lemma 3.5]{haramaty2010structure} (Lemma~3.7 in the arXiv version).

\begin{lemma}\label{thm:sri_leq_4Qi-matrix} %
    Let $\mathcal{A}$ be any matrix subspace. Then $\mincov(\mathcal{A}) \leq 4 \maxrank(\mathcal{A})$.
\end{lemma}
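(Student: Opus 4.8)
The plan is to reduce to the case of a large base field, where Flanders' inequality (\autoref{th:sr-q-2-matrix}) already yields the stronger bound $\mincov(\mathcal A)\le 2\maxrank(\mathcal A)$. If $\FF$ is infinite, or finite with $|\FF|>\maxrank(\mathcal A)$, then \autoref{th:sr-q-2-matrix} applies verbatim and gives even $\mincov(\mathcal A)\le 2\maxrank(\mathcal A)\le 4\maxrank(\mathcal A)$. So the only case that requires work is that of a small finite field $\FF=\FF_q$ with $q\le\maxrank(\mathcal A)$, and there I would pass to a finite extension $\KK=\FF_{q^m}\supseteq\FF$, apply \autoref{th:sr-q-2-matrix} to the $\KK$-span $\mathcal A_\KK=\mathcal A\otimes_\FF\KK\subseteq\KK^{n_1\times n_2}$, and then descend back to $\FF$.

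This descent rests on two facts. First, $\mincov$ does not increase under field extension: a covering $\mathcal A_\KK\subseteq V_1\otimes\KK^{n_2}+\KK^{n_1}\otimes V_2$ can always be traded for one over $\FF$ of the same total dimension, since $\mincov$ coincides (essentially) with the non-commutative rank \cite{MR2123060}, which is field-independent. Second, one must control how $\maxrank$ grows under extension. Here the basic estimate is that a degree-$d$ extension inflates $\maxrank$ by at most a factor of $d$: any $M\in\mathcal A_\KK$ of $\KK$-rank $\rho$ can be written as $M=\sum_{j=1}^d\alpha_jB_j$ for an $\FF$-basis $\alpha_1,\dots,\alpha_d$ of $\KK$ and matrices $B_j\in\mathcal A$, and over $\FF$ this $M$ becomes the matrix $\sum_j M_{\alpha_j}\otimes B_j$ of $\FF$-rank $d\rho$ (where $M_{\alpha_j}\in\FF^{d\times d}$ is multiplication by $\alpha_j$), so $d\rho\le\sum_j\rank(M_{\alpha_j})\rank(B_j)\le d^2\maxrank(\mathcal A)$, i.e.\ $\rho\le d\,\maxrank(\mathcal A)$; in particular a single quadratic extension at most doubles $\maxrank$. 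If one can pass to a $\KK$ on which \autoref{th:sr-q-2-matrix} applies while incurring only a factor of $2$ in $\maxrank$, then $\mincov(\mathcal A)=\mincov_\KK(\mathcal A_\KK)\le 2\maxrank_\KK(\mathcal A_\KK)\le 4\maxrank(\mathcal A)$.

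The main obstacle is exactly this last point: iterating the degree-$d$ estimate only bounds the generic rank $\maxrank_{\overline{\FF}}(\mathcal A_{\overline{\FF}})$ by something like (a logarithm of the dimensions) times $\maxrank_\FF(\mathcal A)$, not by $2\maxrank_\FF(\mathcal A)$. One therefore needs a genuinely better argument that a matrix space with small max-rank over a small field cannot have a much larger generic rank, with an absolute constant — and this is the content of the Haramaty--Shpilka lemma \cite[Lemma 3.5]{haramaty2010structure}, which I would simply invoke for a complete proof. Alternatively, one can avoid field extensions altogether and imitate Flanders' argument directly over $\FF$: normalize a maximal-rank element to $\left(\begin{smallmatrix}I_r&0\\0&0\end{smallmatrix}\right)$ with $r=\maxrank(\mathcal A)$, show the bottom-right blocks of the elements of $\mathcal A$ again form a space of max-rank at most $r$, peel this off as a contribution of $2r$ to $\mincov$, and recurse; the delicate step there is handling the small-field case, where one does not have enough scalars for the usual Schur-complement manipulation, and tuning the recursion so that the final constant is $4$.
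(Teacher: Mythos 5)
There is a genuine gap: your proposal never actually establishes the constant $4$; at the decisive moment it falls back on citing \cite[Lemma 3.5]{haramaty2010structure}, which is precisely the statement being proved here (the paper's proof is a self-contained specialization of that lemma's argument to bilinear forms), so as a blind proof this is circular. Your own two routes both fall short. For the field-extension route, the obstacle is the one you name yourself: to apply \autoref{th:sr-q-2-matrix} over an extension $\KK \supseteq \FF_q$ you need $|\KK|$ larger than the max-rank over $\KK$, which forces degree $d$ roughly $\log_q \maxrank(\mathcal{A})$, and your (correct) estimate that a degree-$d$ extension inflates max-rank by at most a factor $d$ then yields only $\mincov(\mathcal{A}) = O(\maxrank(\mathcal{A})\log \maxrank(\mathcal{A}))$, not $4\maxrank(\mathcal{A})$. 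Moreover, the invariance of $\mincov$ under field extension, which your descent uses, is itself a nontrivial fact (field-independence of the non-commutative rank) that would require proof or a precise citation.

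Your alternative sketch of a direct recursion over $\FF$ has the right skeleton---it is what the paper does---but it is missing the one idea that makes the constant finite. You peel off the column and row spaces of a maximum-rank element $A$ (a contribution of $2r$ to $\mincov$, where $r = \maxrank(\mathcal{A})$) and observe that the compressed space again has max-rank at most $r$; with only that bound the recursion never terminates with an absolute constant (it gives nothing beyond the trivial bound of order $\min_i n_i$). The crux, due to Haramaty--Shpilka and reproved in the paper, is that the compression of $\mathcal{A}$ to complements of the column and row spans of $A$ has max-rank at most $r/2$: for any $B \in \mathcal{A}$ whose compression has rank $s$, one splits $B = C \oplus D$ with $\rank(C) = s$ supported on directions independent of those of $A$, and compares the ranks of $B$ and of $A+B$ (both at most $r$) using rank additivity to force $2s \leq r$. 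With this halving the peeled contributions $2r, 2(r/2), 2(r/4), \dots$ sum geometrically to $4r$; without it, no amount of ``tuning the recursion'' produces the constant $4$.
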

The result in \cite[Lemma 3.5]{haramaty2010structure} is stated in the more general setting of quadratic polynomials. We follow their proof closely, but present it for the special case of matrices (bilinear forms).
\begin{proof}
    Throughout the proof we will consider matrices as vectors in the tensor product $\FF^{n_1}\otimes\FF^{n_2}$.
    Denote $r=\maxrank(\mathcal{A})$. We will find subspaces $U_1\subseteq\FF^{n_1},V_1\subseteq\FF^{n_2}$, both of dimension $r$, such that restricting $\mathcal{A}$ to some complement subspaces for them $U_1^\perp,V_1^\perp$
    we get maximum rank in $\mathcal{A}|_{U_1^\perp\times V_1^\perp}$ at most $\frac{r}{2}$ (as we are working 
    over arbitrary fields, we are not assuming an inner product. $U_1^\perp,V_1^\perp$ are complements w.r.t.\ an arbitrary direct sum decomposition of $\FF^{n_1},\FF^{n_2}$ as explained in detail below). Repeating this process, we get by induction subspaces $U_1,U_2\dots,U_m$ and $V_1,V_2,\dots,V_m$ such that restricting $\mathcal{A}$ to complements of them all we have the trivial space of only the $0$ matrix. The dimension of both $U_i$ and $V_i$ is at most $\frac{r}{2^{i-1}}$ so the dimension of both $\sum_{i=1}^m U_i$ and $\sum_{i=1}^m V_i$ is at most $\sum_{i=1}^m\frac{r}{2^{i-1}}\leq 2r$. Then $\mathcal{A}\subseteq (\sum_{i=1}^m U_i)\otimes\FF^{n_2}+\FF^{n_1}\otimes(\sum_{i=1}^m V_i)$ shows the desired $\mincov(\mathcal{A})\leq 2r+2r=4r$.

    We have left to show that indeed we can find $U_1,V_1$ as described above. Let $A\in\mathcal{A}$ be a maximal-rank element, so $\rank(A)=r$. Let $A=\sum_{i=1}^r u_i\otimes v_i$ be a rank-decomposition of $A$. Define $U_1:=\linspan\{u_1,\dots,u_r\}$ and $V_1:=\linspan\{v_1,\dots,v_r\}$. By the fact that it is a rank decomposition we have that $\{u_1,\dots, u_r\}$ and $\{v_1,\dots,v_r\}$ are both linearly independent sets of vectors, showing $\dim(U_1)=\dim(V_1)=r$. Complete them to bases $u_1,\dots,u_{n_1}$ and $v_1,\dots,v_{n_2}$ of $\FF^{n_1}$ and $\FF^{n_2}$ respectively. Define $U_1^\perp:=\linspan\{u_{r+1},\dots,u_{n_1}\}$ and $V_1^\perp:=\linspan\{v_{r+1},\dots,v_{n_2}\}$ and denote the projectors onto them with respect to these bases by $\Pi_{U_1^\perp}$ and $\Pi_{V_1^\perp}$ respectively. Define the restricted subspace $\mathcal{A}|_{U_1^\perp\times V_1^\perp}:=\{(\Pi_{U_1^\perp}\otimes\Pi_{V_1^\perp})(B):B\in\mathcal{A}\}$. 
    Let $B\in\mathcal{A}$ be any element in $\mathcal{A}$. Denote its restriction by $B|_{U_1^\perp\times V_1^\perp}$ and denote the rank of this restriction by $s:=\rank(B|_{U_1^\perp\times V_1^\perp})$. We want to show $s\leq\frac{r}{2}$. Write the restriction as a rank-decomposition $B|_{U_1^\perp\times V_1^\perp}=\sum_{i=1}^s x_i\otimes y_i$ for elements $x_i\in U_1^\perp$ and $y_i\in V_1^\perp$. Then $\{u_1,\dots,u_r,x_1,\dots,x_s\}$ and $\{v_1,\dots,v_r,y_1,\dots,y_s\}$ are both linearly independent. 
    Thus, to write the unrestriced $B$ we only need to introduce tensor products of either elements from $U_1$ with $\ell^{(2)}_i\in\FF^{n_2}$ or tensor products of $\ell^{(1)}_i\in\FF^{n_1}$ with elements from~$V_1$, that is: 
    $B=\sum_{i=1}^s x_i\otimes y_i+\sum_{i=1}^r( u_i\otimes \ell^{(2)}_i+\ell^{(1)}_i\otimes v_i)$. Decompose $\ell^{(2)}_i=\tilde{v}_i+\tilde{y}_i+\tilde{\ell}^{(2)}_i$ where $\tilde{v}_i\in V_1$, $\tilde{y}_i\in \linspan\{y_1,\dots,y_s\}$ and $\tilde{\ell}^{(2)}_i$ is independent of both subspaces (or $0$). Do the same for the $\ell^{(1)}_i$'s. Denote $\tilde{y}_i=\sum_{j=1}^s b_{i,j}y_j$ and $\tilde{x}_i=\sum_{j=1}^s a_{i,j}x_j$ for $a_{i,j},b_{i,j}\in\FF$, and define ${\ell'}^{(1)}_j:=\sum_{i=1}^r b_{i,j}u_i\in U_1$ and similarly ${\ell'}^{(2)}_j:=\sum_{i=1}^r a_{i,j}v_i\in V_1$. Then 
    \[
    B=\underbrace{\sum_{i=1}^s (x_i+{\ell'}^{(1)}_i)\otimes(y_i+{\ell'}^{(2)}_i)}_{=:C}
    \,+\, \underbrace{\sum_{i=1}^r(u_i\otimes(\tilde{v}_i+\tilde{\ell}^{(2)}_i)+(\tilde{u}_i+\tilde{\ell}^{(1)}_i)\otimes v_i)-\sum_{i=1}^s({\ell'}^{(1)}_i\otimes{\ell'}^{(2)}_i)}_{=:D}.
    \]
    Note that $\{x_i+{\ell'}^{(1)}_i\}_{i=1}^s\cup\{u_i\}_{i=1}^r\cup\{\tilde{\ell}^{(1)}_i:i\in[r],\tilde{\ell}^{(1)}_i\neq0\}$ is linearly independent and so is $\{y_i+{\ell'}^{(2)}_i\}_{i=1}^s\cup\{v_i\}_{i=1}^r\cup\{\tilde{\ell}^{(2)}_i:i\in[r],\tilde{\ell}^{(2)}_i\neq0\}$. In particular we have $\rank(C)=s$ and $B=C\oplus D$. 
    Then $s+\rank(D)=\rank(C)+\rank(D)=\rank(B)\leq r$ so $\rank(D)\leq r-s$. Recall that $A\in U_1\otimes V_1$ so $C\in \linspan\{x_i+{\ell'}^{(1)}_i\}_{i=1}^s\otimes \linspan\{y_i+{\ell'}^{(2)}_i\}_{i=1}^s$ and $D+A$ are supported on linearly independent vectors (in both factors), implying the rank is additive on them:
    $r\geq\rank(B+A)=\rank(C+D+A)=\rank(C)+\rank(D+A)\geq s+(\rank(A)-\rank(D))\geq s+(r-(r-s))=2s$ where the first inequality is by $A+B\in\mathcal{A}$, the second is generally true for any two matrices and the third is the rank of $A$ and the inequality for the rank of $D$ we established earlier.
\end{proof}

From the above follows immediately the corresponding inequalities for the parameters $\subrank_\ell$ and $\slicerank_\ell$.  %

\begin{lemma}\label{th:sr-q-2}
   Let $T \in \FF^{n_1} \otimes \FF^{n_2} \otimes \FF^{n_3}$. Suppose $|\FF| > \subrank_\ell(T)$. Then $\slicerank_\ell(T) \leq 2\subrank_\ell(T)$ for every $\ell \in [3]$.
\end{lemma}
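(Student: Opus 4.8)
The plan is to deduce this directly from the matrix-subspace statement \autoref{th:sr-q-2-matrix}, since by definition $\subrank_i(T)$ and $\slicerank_i(T)$ are just $\maxrank$ and $\mincov$ of the same object. Concretely, fix $i \in [3]$ and let $\mathcal{A}_i \subseteq \FF^{n_j} \otimes \FF^{n_k}$ be the $i$-slice span of $T$, so that $\subrank_i(T) = \maxrank(\mathcal{A}_i)$ and $\slicerank_i(T) = \mincov(\mathcal{A}_i)$ by definition. The hypothesis $|\FF| > \subrank_i(T)$ is precisely the hypothesis $|\FF| > \maxrank(\mathcal{A}_i)$ needed to invoke \autoref{th:sr-q-2-matrix}.

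The single step is then: apply \autoref{th:sr-q-2-matrix} to $\mathcal{A} = \mathcal{A}_i$ to obtain $\mincov(\mathcal{A}_i) \leq 2\,\maxrank(\mathcal{A}_i)$, which rewrites as $\slicerank_i(T) \leq 2\,\subrank_i(T)$. Since $i$ was arbitrary, this holds for every $i \in [3]$.

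There is essentially no obstacle here — the content of the lemma is entirely carried by \autoref{th:sr-q-2-matrix} (Flanders' theorem). The only point worth a sentence of care is making the translation of notation explicit, i.e.\ recalling that the definitions of $\subrank_i$ and $\slicerank_i$ in \autoref{subsec:flanders} and the surrounding text are exactly $\maxrank(\mathcal{A}_i)$ and $\mincov(\mathcal{A}_i)$, so that both the conclusion and the field-size hypothesis match up verbatim. (If one instead wanted to prove \autoref{th:sr-q-2-matrix} itself from scratch, the real work would be in Flanders' argument — peeling off a maximal-rank element $A$, using its column and row spaces $U_1, V_1$ to restrict, and arguing by induction that the max-rank halves on the complement — but since \autoref{th:sr-q-2-matrix} is available, nothing of the sort is needed.)
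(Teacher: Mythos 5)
Your proposal is correct and matches the paper exactly: the paper derives \autoref{th:sr-q-2} immediately from the matrix-subspace Flanders bound \autoref{th:sr-q-2-matrix}, using precisely the identifications $\subrank_i(T)=\maxrank(\mathcal{A}_i)$ and $\slicerank_i(T)=\mincov(\mathcal{A}_i)$ that you spell out. Nothing further is needed.
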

\begin{lemma}\label{thm:sri_leq_4Qi}
    Let $T \in \FF^{n_1} \otimes \FF^{n_2} \otimes \FF^{n_3}$. Then $\slicerank_\ell(T) \leq 4 \subrank_\ell(T)$ for every $\ell \in [3]$.
\end{lemma}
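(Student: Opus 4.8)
The plan is to deduce this immediately from the matrix-level statement \autoref{thm:sri_leq_4Qi-matrix}, exactly as \autoref{th:Qi-SRi} was deduced from \autoref{th:Qi-SRi-matrix}. Recall that for a tensor $T$ and direction $i\in[3]$, the parameters in question are by definition the corresponding parameters of the $i$-slice span: $\subrank_i(T)=\maxrank(\mathcal{A}_i)$ and $\slicerank_i(T)=\mincov(\mathcal{A}_i)$, where $\mathcal{A}_i\subseteq\FF^{n_j}\otimes\FF^{n_k}$ is the span of the $i$-slices of $T$.

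Concretely, I would argue as follows. Fix $i\in[3]$ and let $\mathcal{A}_i$ be the $i$-slice span of $T$. Applying \autoref{thm:sri_leq_4Qi-matrix} to the matrix subspace $\mathcal{A}_i$ gives $\mincov(\mathcal{A}_i)\leq 4\,\maxrank(\mathcal{A}_i)$. Substituting the definitions $\slicerank_i(T)=\mincov(\mathcal{A}_i)$ and $\subrank_i(T)=\maxrank(\mathcal{A}_i)$ yields $\slicerank_i(T)\leq 4\,\subrank_i(T)$, as claimed. Since $i$ was arbitrary, this holds for every $i\in[3]$.

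There is essentially no obstacle here: all the work has already been done in \autoref{thm:sri_leq_4Qi-matrix} (the Haramaty--Shpilka Flanders-type bound), and this lemma is purely a matter of unwinding the definitions that relate the tensor parameters $\subrank_i,\slicerank_i$ to the matrix parameters $\maxrank,\mincov$ of the slice span. The only thing to be careful about is that no field-size hypothesis is needed, in contrast to \autoref{th:sr-q-2}, because \autoref{thm:sri_leq_4Qi-matrix} itself holds over an arbitrary field.

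\begin{proof}
Fix $i \in [3]$ and let $\mathcal{A}_i$ denote the $i$-slice span of $T$. By definition, $\subrank_i(T) = \maxrank(\mathcal{A}_i)$ and $\slicerank_i(T) = \mincov(\mathcal{A}_i)$. Applying \autoref{thm:sri_leq_4Qi-matrix} to the matrix subspace $\mathcal{A}_i$ gives $\mincov(\mathcal{A}_i) \leq 4\,\maxrank(\mathcal{A}_i)$, that is, $\slicerank_i(T) \leq 4\,\subrank_i(T)$. Since $i \in [3]$ was arbitrary, the claim follows.
\end{proof}
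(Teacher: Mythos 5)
Your proposal is correct and matches the paper exactly: the paper states that \autoref{thm:sri_leq_4Qi} follows immediately from the matrix-space bound \autoref{thm:sri_leq_4Qi-matrix} via the definitions $\subrank_i(T)=\maxrank(\mathcal{A}_i)$ and $\slicerank_i(T)=\mincov(\mathcal{A}_i)$, which is precisely your argument. Your remark that no field-size hypothesis is needed (unlike \autoref{th:sr-q-2}) is also consistent with the paper.
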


\subsection{Lower bound on asymptotic subrank in terms of slice rank}\label{sec:gap}

In previous sections we have proved lower bounds on the asymptotic subrank of concise tensors that only depend on the dimensions of the tensor.
Here we shift gears, and use some of the methods that we have developed so far, including the Flanders-type bounds of \autoref{subsec:flanders}, to prove a lower bound on the asymptotic subrank in terms of the slice rank. In fact, we will prove a lower bound on the border subrank in terms of slice rank. As a consequence, we obtain a bound on the gap between asymptotic subrank and asymptotic slice rank.

\begin{lemma}\label{lem:b-subr-sr}
    Let $T \in \FF^{n_1} \otimes \FF^{n_2} \otimes \FF^{n_3}$. Then %
    $\bordersubrank(T^{\boxtimes 3})\geq
    \frac{3}{64}\slicerank(T)^2$.
\end{lemma}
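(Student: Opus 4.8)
The plan is to relate the slice rank of $T$ to one of the max-ranks $\subrank_i(T)$ via the Flanders-type bounds, then feed that max-rank information into the matrix-multiplication machinery of \autoref{subsec:lb-mamu} to produce a border-subrank lower bound on $T^{\boxtimes 3}$. First I would recall that $\slicerank(T) \leq 3\min_i \slicerank_i(T)$: indeed any global slice-rank decomposition can be assembled from the $j$-slices in each of the three directions, so $\slicerank(T) \leq \slicerank_i(T) + (\text{terms from the other directions})$; more carefully, since $\slicerank_i(T) = \mincov(\mathcal{A}_i)$ controls a decomposition of $T$ into $\slicerank_i(T)$ slice-rank-one tensors using only the two directions other than $i$, we get $\slicerank(T) \leq \slicerank_i(T)$ for each $i$, hence $\slicerank(T) \leq \min_i \slicerank_i(T)$. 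Combining with \autoref{thm:sri_leq_4Qi} ($\slicerank_i(T) \leq 4\subrank_i(T)$) gives $\subrank_i(T) \geq \tfrac14 \slicerank(T)$ for \emph{every} $i \in [3]$. In particular all three max-ranks $\subrank_1(T), \subrank_2(T), \subrank_3(T)$ are at least $\tfrac14\slicerank(T)$.

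Next I would exploit that the border subrank of a threefold Kronecker power dominates the asymptotic subrank of the associated matrix multiplication tensor. By \autoref{lem:Qi-mamu} we have $T \geq \langle \subrank_2(T),1,1\rangle$, $T \geq \langle 1,\subrank_3(T),1\rangle$ and $T \geq \langle 1,1,\subrank_1(T)\rangle$, and multiplying these (as in \autoref{lem:cube-mamu}) yields $T^{\boxtimes 3} \geq \langle \subrank_2(T), \subrank_3(T), \subrank_1(T)\rangle$. Writing $q = \tfrac14 \slicerank(T)$, so that $\subrank_i(T) \geq q$ for all $i$, restriction-monotonicity gives $T^{\boxtimes 3} \geq \langle q,q,q\rangle$. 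Now I would invoke the fact that the border subrank of the matrix multiplication tensor $\langle q,q,q\rangle$ is large: since $\bordersubrank$ is monotone under restriction and sub-multiplicative in the right sense, and $\langle q,q,q\rangle = \langle 1,1,1\rangle^{\boxtimes \cdot}$ does not directly help, instead I would use $\bordersubrank(T^{\boxtimes 3}) \geq \bordersubrank(\langle q,q,q\rangle)$, and the classical value $\bordersubrank(\langle q,q,q\rangle) \geq \tfrac{3}{4}q^2$ (a Coppersmith--Winograd-type / Schönhage-type border-subrank estimate for matrix multiplication; this is where I would cite the standard bound, e.g.\ via \cite{strassen1988asymptotic} or Schönhage's $\tau$-theorem applied at the border-subrank level). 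Substituting $q = \tfrac14\slicerank(T)$ gives $\bordersubrank(T^{\boxtimes 3}) \geq \tfrac34 \cdot \tfrac{1}{16}\slicerank(T)^2 = \tfrac{3}{64}\slicerank(T)^2$, as claimed.

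The main obstacle is pinning down the precise border-subrank lower bound for $\langle q,q,q\rangle$ with leading constant $\tfrac34$: I need the statement that $\bordersubrank(\langle n,n,n\rangle) \geq \tfrac34 n^2$ (equivalently that the \emph{border} — not just asymptotic — subrank of matrix multiplication already captures most of the $\asympsubrank(\langle n,n,n\rangle) = n^2$ bound from \autoref{lem:asympsub-min}). One route is to use that $\bordersubrank$ is super-multiplicative under $\boxtimes$ together with a base-case border-subrank computation for a small matrix multiplication tensor (Strassen's degeneration of $\langle 2,2,2\rangle$, or Coppersmith--Winograd's identity-rich tensors), and then take a power to reach exponent $2$ with constant converging to the claimed $\tfrac34$; if only a weaker base case is available one would get a smaller constant, so I must be careful to choose a base case strong enough to yield exactly $\tfrac{3}{64}$ after the $\tfrac14$-loss from the Flanders step. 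A secondary technical point is making sure the field-size hypothesis in \autoref{thm:sri_leq_4Qi} is not needed (it is not — that version is field-agnostic), and that all the restrictions $T \geq \langle \subrank_i,\dots\rangle$ and the border-subrank monotonicity hold over an arbitrary field, which they do. Everything else is routine bookkeeping with the inequalities already established in the paper.
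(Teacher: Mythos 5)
Your proposal is essentially the paper's proof: bound $\subrank_i(T)\geq\tfrac14\slicerank(T)$ via \autoref{thm:sri_leq_4Qi} and $\slicerank(T)\leq\slicerank_i(T)$, reduce $T^{\boxtimes 3}\geq\langle\subrank_2(T),\subrank_3(T),\subrank_1(T)\rangle$ via \autoref{lem:cube-mamu}, and invoke a $\tfrac34\,eh$ lower bound on the border subrank of matrix multiplication. The one ingredient you leave as a citation is exactly what the paper uses, namely Strassen's Theorem~6.6 in \cite{MR882307}, which gives $\bordersubrank(\langle e,h,\ell\rangle)\geq eh-\lfloor(e+h-\ell)^2/4\rfloor\geq\tfrac34\,eh$ (so the constant $\tfrac{3}{64}$ is fine); note that your fallback of powering a small base case such as $\langle 2,2,2\rangle$ would only give exponent $\log_2 3$ and could not recover this constant, and that one should work directly with the integers $\subrank_i(T)$ rather than the possibly non-integral $q=\tfrac14\slicerank(T)$.
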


\begin{proof}
    By \autoref{lem:cube-mamu} we have $T^{\boxtimes 3}\geq\langle\subrank_{1}(T),\subrank_2(T),\subrank_3(T) \rangle$.
    Strassen~\cite[Theorem~6.6]{MR882307} proved  for any $e\leq h\leq \ell$ that
    \[
    \bordersubrank(\langle e,h,\ell\rangle)\geq \begin{cases}
    eh-\lfloor\frac{(e+h-\ell)^2}{4}\rfloor &e+h\geq\ell\\
    eh &\textnormal{otherwise}.
    \end{cases}
    \]
    In both cases we have $\bordersubrank(\langle e,h,\ell\rangle)\geq \frac{3}{4}eh$ (the worst case being $\ell=h$).
    Combining these we get 
        \[\bordersubrank(T^{\boxtimes 3})\geq \tfrac34\subrank_1(T)\cdot\subrank_2(T)
        \geq\tfrac34(\tfrac14\slicerank_1(T))\cdot(\tfrac14\slicerank_2(T))\geq \tfrac34(\tfrac14\slicerank(T))^2
        \]
    for which we used \autoref{thm:sri_leq_4Qi} and \autoref{lem:ranks-vs-ranksi}.
\end{proof}

%
%
Unlike rank and subrank, slice rank is not sub- or super-multiplicative under Kronecker product, so Fekete's lemma cannot be used to guarantee its asymptotic version's existence. 
To state the following theorem, let $\asympsupslicerank(T):=\limsup_{m\to\infty} \slicerank(T^{\boxtimes m})^{1/m}$. Of course, this is just $\asympslicerank(T)$ if the limit exists.

\begin{theorem}
\label{th:q-sr-bound}
Let $T \in \FF^{n_1} \otimes \FF^{n_2} \otimes \FF^{n_3}$. Then
\[
\asympsupslicerank(T) \geq \asympsubrank(T) \geq \asympsupslicerank(T)^{2/3}.
\]
\end{theorem}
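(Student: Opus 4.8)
The statement combines two inequalities. The left inequality $\asympsupslicerank(T) \geq \asympsubrank(T)$ is the easy direction: since $\subrank(S) \leq \slicerank(S)$ for every tensor $S$ (Tao's inequality, recorded in the preliminaries), we have $\subrank(T^{\boxtimes m}) \leq \slicerank(T^{\boxtimes m})$ for all $m$, hence $\subrank(T^{\boxtimes m})^{1/m} \leq \slicerank(T^{\boxtimes m})^{1/m}$; taking $\lim$ on the left (which exists by Fekete) and $\limsup$ on the right yields $\asympsubrank(T) \leq \asympsupslicerank(T)$. The real content is the right inequality $\asympsubrank(T) \geq \asympsupslicerank(T)^{2/3}$, and the plan is to derive it from \autoref{lem:b-subr-sr} together with \autoref{lem:border-asymp-subrank} by a standard power-and-limit argument.

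\textbf{Key steps for the right inequality.} First I would fix $m$ and apply \autoref{lem:b-subr-sr} to the tensor $T^{\boxtimes m}$ in place of $T$, giving
\[
\bordersubrank\!\big((T^{\boxtimes m})^{\boxtimes 3}\big) \;=\; \bordersubrank\!\big(T^{\boxtimes 3m}\big) \;\geq\; \tfrac{3}{64}\,\slicerank(T^{\boxtimes m})^2.
\]
Next, combine this with \autoref{lem:border-asymp-subrank}, which states $\bordersubrank(S) \leq \asympsubrank(S)$ for every $S$, applied to $S = T^{\boxtimes 3m}$; using multiplicativity of asymptotic subrank under Kronecker product, $\asympsubrank(T^{\boxtimes 3m}) = \asympsubrank(T)^{3m}$, we obtain
\[
\asympsubrank(T)^{3m} \;\geq\; \tfrac{3}{64}\,\slicerank(T^{\boxtimes m})^2,
\]
that is, $\asympsubrank(T)^{3/2} \geq (3/64)^{1/(2m)}\,\slicerank(T^{\boxtimes m})^{1/m}$. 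Finally, take $\limsup_{m\to\infty}$ of both sides: the constant factor $(3/64)^{1/(2m)} \to 1$, and the right side tends to $\asympsupslicerank(T)$ by definition, so $\asympsubrank(T)^{3/2} \geq \asympsupslicerank(T)$, which rearranges to $\asympsubrank(T) \geq \asympsupslicerank(T)^{2/3}$.

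\textbf{Expected main obstacle.} With \autoref{lem:b-subr-sr}, \autoref{lem:border-asymp-subrank}, and the multiplicativity of $\asympsubrank$ already available, there is no serious obstacle left in this final step — it is purely a bookkeeping argument about exponents and limits. The only points requiring a little care are: (i) using $\limsup$ rather than $\lim$ throughout, since $\slicerank(T^{\boxtimes n})^{1/n}$ is not known to converge over a general field, so one must make sure the passage to the limit is done on the correct side of each inequality (one takes $\limsup$ over the subsequence indexed by $m$, and this is legitimate because every term of the $m$-subsequence is an $m$-th root, while the full $\limsup$ over all $n$ equals the $\limsup$ over any cofinal set such as the multiples of $3$, so no value is lost); and (ii) confirming that \autoref{lem:b-subr-sr} indeed applies to $T^{\boxtimes m}$ — it does, since it is stated for an arbitrary tensor in $\FF^{n_1} \otimes \FF^{n_2} \otimes \FF^{n_3}$ with no side conditions. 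Hence the proof is short once the lemmas are in hand; the genuine mathematical work lies entirely in \autoref{lem:b-subr-sr}, which in turn rests on the Flanders-type bound \autoref{thm:sri_leq_4Qi} and Strassen's border-subrank estimate for matrix multiplication tensors.
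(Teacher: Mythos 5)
Your proof is correct and takes essentially the same route as the paper's: both directions come from Tao's inequality $\subrank\le\slicerank$ on one side, and on the other from applying \autoref{lem:b-subr-sr} to $T^{\boxtimes m}$, combining with \autoref{lem:border-asymp-subrank} and the identity $\asympsubrank(T^{\boxtimes k})=\asympsubrank(T)^{k}$, then taking $m$th roots and $\limsup_{m\to\infty}$ so the constant $(3/64)^{1/(2m)}$ disappears. One small wording caution: general ``multiplicativity of $\asympsubrank$ under Kronecker product'' is not known, but you only invoke the power identity $\asympsubrank(T^{\boxtimes k})=\asympsubrank(T)^{k}$, which is immediate from the definition and is all that is needed.
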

\begin{proof}
The first inequality follows directly from the fact that slice rank upper bounds subrank. For the second inequality, we have from \autoref{lem:border-asymp-subrank} and \autoref{lem:b-subr-sr} that $\asympsubrank(T) \geq \bordersubrank(T^{\boxtimes 3})^{1/3} \geq (\tfrac{3}{64}\slicerank(T)^2)^{1/3}$. Now replace $T$ by $T^{\boxtimes m}$, take the $m$th root on both sides and let $m$ go to infinity to get the claim.
\end{proof}

\begin{remark}\label{th:a-subr-sr}
 The constant in \autoref{lem:b-subr-sr} can be slightly improved if we replace the border subrank by the asymptotic subrank.
 Namely, for any $3$-tensor $T$, we have $\asympsubrank(T) \geq (\frac{1}{4} \slicerank(T))^{2/3}$.
 The proof is as follows.
 We have
    $\asympsubrank(T^{\boxtimes 3}) \geq \min_{{\ell_1},{\ell_2}} \subrank_{\ell_1}(T) \subrank_{\ell_2}(T)$ (\autoref{lem:asympsub-min}). Combining with $\subrank_\ell(T) \geq \tfrac14 \slicerank_\ell(T)$ (\autoref{thm:sri_leq_4Qi}) and $\slicerank_\ell(T) \geq \slicerank(T)$ (\autoref{lem:ranks-vs-ranksi}) gives
    \[
    \asympsubrank(T)^3 = \asympsubrank(T^{\boxtimes 3}) \geq \min_{{\ell_1},{\ell_2}} (\tfrac14\slicerank_{\ell_1}(T)) (\tfrac14\slicerank_{\ell_2}(T)) \geq (\tfrac14\slicerank(T))^2.
    \]
    Taking the cube root on both sides gives the claim. Another situation in which we can improve the constant is when the base field $\FF$ is large enough, which allows using $\subrank_\ell(T) \geq \tfrac12 \slicerank_\ell(T)$ (\autoref{th:sr-q-2}) instead of $\subrank_\ell(T) \geq \tfrac14 \slicerank_\ell(T)$ (\autoref{thm:sri_leq_4Qi}) in the above proofs, leading to a stronger bound.
\end{remark}

\section*{Acknowledgments} 
We thank the Centre de Recherches Mathématiques Montréal, where this work was initiated. 
We also thank Jurij Volcic and Vladimir Lysikov for helpful and stimulating discussions. We thank Koen Hoeberechts for useful comments. MC thanks the National Center for Competence in Research SwissMAP of the Swiss National Science Foundation and the Section of Mathematics at the University of Geneva for their hospitality. IL thanks the QMATH Center at the University of Copenhagen for the hospitality during his stay.




\bibliographystyle{alphaurl}
\bibliography{refs}


\begin{dajauthors}
\begin{authorinfo}[briet]
Jop Briët\\
Centrum Wiskunde \& Informatica\\
Amsterdam, The Netherlands\\
j\imagedot{}briet\imageat{}cwi\imagedot{}nl\\
\end{authorinfo}
\begin{authorinfo}[christandl]
Matthias Christandl\\
University of Copenhagen\\
Copenhagen, Denmark\\
christandl\imageat{}math\imagedot{}ku\imagedot{}dk
\end{authorinfo}
\begin{authorinfo}[leigh]
  Itai Leigh\\
  Tel Aviv University\\
  Tel Aviv, Israel\\
  itai\imagedot{}leigh\imageat{}mail\imagedot{}huji\imagedot{}ac\imagedot{}il
\end{authorinfo}
\begin{authorinfo}[shpilka]
  Amir Shpilka\\
  Tel Aviv University\\
  Tel Aviv, Israel\\
  shpilka\imageat{}tauex\imagedot{}tau\imagedot{}ac\imagedot{}il
\end{authorinfo}
\begin{authorinfo}[zuiddam]
  Jeroen Zuiddam\\
  University of Amsterdam\\
  Amsterdam, The Netherlands\\
  j\imagedot{}zuiddam\imageat{}uva\imagedot{}nl
\end{authorinfo}

\end{dajauthors}

\end{document}